\documentclass[twoside, 12pt, reqno]{amsart}

\usepackage{preamble}
\usepackage{lipsum}

\makeatletter
\def\@captionheadfont{\normalfont}
\def\@captionfont{\normalfont}
\makeatother

\newif\ifshowcol
\showcolfalse 
\newcommand*{\bypierre}[2][red]{{\ifshowcol\color{#1}\fi#2}}

\allowdisplaybreaks

\begin{document}
    \title[Honeycomb Schrödinger operators with incommensurate line defects]{%
        Continuum honeycomb Schrödinger operators with incommensurate line defects
    }

    \author{P. Amenoagbadji}
    \address{Université Paris-Saclay, CNRS, Laboratoire de mathématiques d'Orsay, 91405, Orsay, France}
    \email{pierre.amenoagbadji.math@gmail.com}
    %
    %
    \author{M. I. Weinstein}
    \address{Department of Applied Physics and Applied Mathematics and Department of Mathematics, Columbia University, New York, NY, USA}
    \email{miw2103@columbia.edu}

    \begin{abstract}
        We study wave propagation in 2D honeycomb structures with a non-commensu\-rate or ``irrational'' line defect or edge. Our model is a Schrödinger operator which interpolates, across the edge, between two distinct bulk (asymptotic) Hamiltonians with a common spectral gap about the ``Dirac point'' of an unperturbed honeycomb operator. 
        We seek \emph{edge states}, eigenstates that are bounded and oscillatory parallel to the edge, and decaying in the transverse direction. For non-commensurate edges, the rigorous definition of these states is nontrivial due to the lack of translation invariance along the edge. To address this, we exploit quasiperiodicity along the edge by expressing the Hamiltonian as the restriction of a 3D (degenerate elliptic) Hamiltonian describing a 3D medium with a 2D interface within which there is periodicity. 
        Via multiscale analysis, we construct approximate edge states in this 3D setting and obtain by restriction 2D edge states which are quasiperiodic along the irrational edge. These edge states are seeded by eigenfunctions of an effective Dirac operator, which has an \emph{infinite} block-diagonal structure due to the non-commensurate geometry. A consequence is that infinitely many edge state eigenpairs arise, whose energies are dense in the perturbed bulk spectral gap.
        In a forthcoming paper, we rigorously construct these gap-filling edge states under a Diophantine condition. The main result here is a key tool in this construction: a resolvent expansion for the 3D Hamiltonian, whose leading term is the resolvent of the block-diagonal Dirac operator. The validity of this expansion requires an \emph{omnidirectional} non-resonance (\emph{no-fold}) condition on the dispersion functions of the unperturbed honeycomb Hamiltonian. This condition is satisfied in the strong binding regime. In contrast with earlier works on commensurate edges, the omnidirectional condition is independent of the edge.
    \end{abstract}

    \maketitle
    
    \setcounter{tocdepth}{1}
    \tableofcontents
    \setcounter{tocdepth}{2}

    \section{Introduction}
    \noindent
    Consider a continuous periodic Schrödinger operator $\calH_0$ in $\R^2$, with two band dispersion surfaces intersecting conically at a \emph{Dirac point} of energy $E_D$. In this article, we study a weak, slowly varying domain-wall perturbation $\calH^\delta_\EDGE$ of $\calH^0$, which smoothly interpolates between two distinct asymptotic (bulk) Hamiltonians across a \emph{line defect} or \emph{edge}, $\R\, \vvh_1$; see Figure \ref{fig:soft_junction}. 

    \begin{figure}[ht!]
        \makebox[\textwidth][c]{
            \includegraphics[page=7]{figures/tikzpictures.pdf}
        }
        \caption{The Hamiltonian studied in this paper models a smooth domain wall transition between two distinct bulk honeycomb Hamiltonians, across an \emph{edge}. If the edge is aligned with an element of the lattice $\Lambda = \Z \vvv_1 + \Z \vvv_2$, then it is said to be \emph{commensurate} (left panel); otherwise, it is \emph{non-commensurate} (right panel). \label{fig:soft_junction}}
    \end{figure}

    We call the edge \emph{rational} or \emph{commensurate} when it is oriented in the direction of a period lattice vector. The edge Hamiltonian $\calH^\delta_\EDGE$ is then translation-invariant along the edge, and thus can be studied using Floquet-Bloch theory. It is shown in \cite{fefferman2016edge,lee2019elliptic,drouot2019characterization,drouot2020edge} that $\calH^\delta_\EDGE$ has \emph{edge states} with energies $E$ in an $\calO (\delta)$ neighborhood of $E_D$: these are nontrivial solutions of the equation $\calH^\delta_\EDGE\, \Psi = E\, \Psi$ in $\R^2$ that are pseudo-periodic along the edge and decay in the transverse direction.

    In this article, we initiate the study of edge states when the edge is \underline{not} parallel to any period lattice vector. In this case, the edge is said to be \emph{irrational} or \emph{non-commensurate}. The main difficulty is a complete lack of translation invariance of the two-dimensional medium, which prevents a direct use of Floquet-Bloch theory.

    We propose a framework for the rigorous definition of edge states in the case of irrational edges. Our approach relies on the observation that $\calH^\delta_\EDGE$ is \emph{quasiperiodic} along the edge (see Remark \ref{rmk:quasiperiodic_nature_domain_wall_potential}). More precisely, it is realized as the restriction to a two-dimensional plane of an \emph{augmented} Hamiltonian $\calH^\delta_\AUG$ on $\R^3$. This augmented operator is translation-invariant within a planar interface, but it is degenerate elliptic. This observation has been used in the homogenization of periodic elliptic boundary value problems in presence of a sharp termination \cite{gerard2011homogenization,gerard2012homogenization} or an interface \cite{blanc2015local}. It has also been exploited in \cite{amenoagbadji2025time} to solve the Helmholtz equation numerically in a similar setting.

    Our main result, Theorem \ref{thm:resolvent_expansion}, provides, as $\delta \to 0^+$, an asymptotic resolvent expansion near the Dirac energy $E_D$ for the augmented Hamiltonian $\smash{\calH^\delta_\AUG}$ acting on an appropriate space of pseudo-periodic functions. The leading-order term in this expansion involves a countably \emph{infinite} family of effective one-dimensional Dirac operators. This contrasts with analogous results in \cite{drouot2019characterization,drouot2020edge} for rational edges, where only one or two effective Dirac operators emerge from the expansion.

    The resolvent expansion stated in Theorem \ref{thm:resolvent_expansion} plays a key role in forthcoming work \cite{amenoagbadji2026dense}, where we complete the program discussed above to construct genuine quasiperiodic edge states for irrational edges. Indeed, in \cite{amenoagbadji2026dense}, assuming a Diophantine condition on the irrational parameter which sets the orientation of the edge, we show that $\calH^\delta_\AUG$, acting on an appropriate space, admits a dense set of generalized eigenvalues that are dense in a neighborhood of the Dirac energy $E_D$. Moreover, the associated eigenfunctions can be restricted to the original two-dimensional hyperplane, hence giving rise to two-dimensional edge states with quasiperiodic behavior along the edge.

    \begin{figure}[ht!]
        \makebox[\textwidth][c]{
            \includegraphics[page=1]{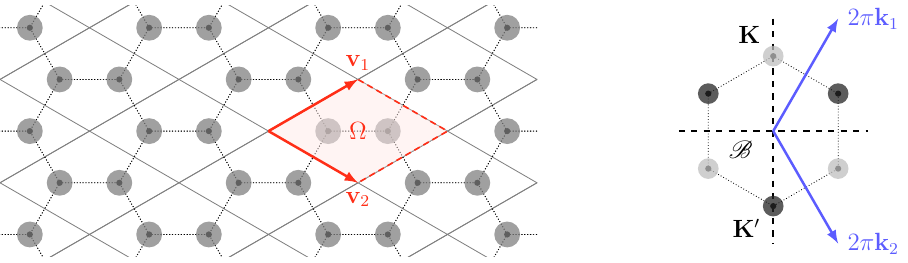}
        }
        \caption{Left: The equilateral triangular lattice $\Lambda = \Z \vvv_1 + \Z \vvv_2$ and the periodicity cell $\Omega$. $\mathbb R^2$ is the union of lattice translates of $\Omega$ and within this cell, are two lattice points. The honeycomb lattice is the union of the two triangular lattices, based at each of these points. Right: The dual lattice $\Lambda^* = \Z\vts (2\pi \kv_1) + \Z\vts (2\pi \kv_2)$ and the Brillouin zone $\scrB$, whose vertices are the high-symmetry quasi-momenta of $\Kv$ and $\Kv'$--type. \label{fig:lattice_brillouin}}
    \end{figure}
    
    \subsection{Mathematical framework}
    Before stating the main result, we briefly describe the setup. Let $\Lambda = \Z \vvv_1 + \Z \vvv_2$ and $\Lambda^* := \Z\vts (2\pi \kv_1) + \Z\vts (2\pi \kv_2)$ denote, respectively, the equilateral triangular lattice and its dual, shown in Figure \ref{fig:lattice_brillouin}, with $\kv_n \cdot \vvv_l = \delta_{nl}$. The Brillouin zone $\scrB$, a particular choice of a fundamental cell for $\Lambda^*$, is a regular hexagon shown in Figure \ref{fig:lattice_brillouin}. Its vertices consist of the points $\Kv$, $\Kv' = -\Kv$, and their images under rotations by integer multiples of $2\pi / 3$.

    Let $\calH^0 = -\Delta + V(\xv)$, where $V (\xv)$ is real-valued, bounded and $\Lambda$--periodic. For every quasi-momentum $\kv \in \scrB$, the eigenvalue problem $(\calH^0 - E)\, u = 0$ with $u(\xv + \vvv) = \euler^{\icplx\vts \kv \cdot \vvv}\, u(\xv)$ ($\vvv \in \Lambda$) has countably many eigenvalues $E_1 (\kv) \leq E_2 (\kv) \leq \dots$. The maps $\kv \in \scrB \mapsto E_b (\kv)$ are \emph{dispersion functions}; their graphs are \emph{dispersion surfaces}, and their ranges, $E_b (\scrB)$, are \emph{bands}. The union of these bands over $b \geq 1$ is the $L^2 (\R^2)$--spectrum of $\calH^0$; see Section \ref{sec:FB_states}.

    A \emph{honeycomb lattice potential} $V(\xv)$ is a smooth, real-valued function which is $\Lambda$--periodic, even, and invariant under clockwise rotation by $2 \pi/3$. For generic honeycomb lattice potentials, the Schrödinger operator $\calH_0 = -\Delta + V(\xv)$ has \emph{Dirac points} in its band structure. These are quasi-momentum/energy pairs, $(\Kv_\star, E_D)$, at which two dispersion surfaces touch conically over  the vertices $\Kv_\star$ of the Brillouin zone $\scrB$ \cite{colin1991singularites,grushin2009multiparameter,fefferman2012honeycomb,fefferman2017topologically,berkolaiko2018symmetry,lee2019elliptic,drouot2021ubiquity,cassier2021high,drouot2024band}; see Section \ref{sec:dirac_points}. The existence of Dirac points is rooted in the inversion, complex conjugation, and rotational symmetries of $\calH^0$. Concerning their stability, roughly speaking, Dirac points persist under small perturbations of $\calH^0$ that break rotational invariance (the dispersion cones merely translate and distort), but breaking either inversion or complex conjugation symmetry generically opens a local energy gap around the Dirac point. 

    One such perturbed operator is $\calH^\delta = \calH^0 + \delta\, \nabla \cdot a(\xv)\, \sigma_2\, \nabla$, where $\sigma_2$ is the second Pauli matrix, and $a(\xv)$ is smooth, $\Lambda$--periodic, even, and real-valued; see Section \ref{sec:motivation} for the physical interpretation of this perturbation. Because it breaks complex conjugation symmetry, for small enough nonzero $\delta$, the perturbation induced by $a(\xv)\, \sigma_2$ opens a local gap around the Dirac energy, $E_D$, provided that a generic nondegeneracy condition is satisfied (Section \ref{sec:conjugation_breaking_perturbations}).

    We then introduce a slow interpolation between the bulk operators $\calH^{-\delta}$ and $\calH^{+\delta}$, via a smooth \emph{domain wall function}, $\kappa (\zeta)$, which tends to $\pm 1$ as $\zeta \to \pm \infty$. The interpolation takes place in a direction transverse to an edge $\R\, \vvh_1$, where $\vvh_1 = \vvv_1 + r\vts \vvv_2$ for $r \in \R$. The corresponding operator is given by
    \begin{equation*}
        \calH^\delta_\EDGE = -\nabla \cdot A^\delta (\xv)\, \nabla + V (\xv) \ \ \textnormal{on $L^2 (\R^2)$}, \qquad A^\delta (\xv) = I - \delta\, \kappa (\delta\, \kvh_2 \cdot \xv)\, a(\xv)\, \sigma_2,
    \end{equation*}
    where $\kvh_2 := -r\vts \kv_1 + \kv_2$ is such that $\kvh_2 \cdot \vvh_1 = 0$. We call the edge $\R\, \vvh_1$ rational (commensurate) if $r \in \Q$; see Figure \ref{fig:soft_junction}, left panel. Otherwise, we say that the edge $\R\, \vvh_1$ is irrational (non-commensurate); see Figure \ref{fig:soft_junction}, right panel.

    Formally, an \emph{edge state} with energy $E \in \R$ is a nontrivial solution of the eigenvalue problem $\calH^\delta_\EDGE\, \Psi = E\, \Psi$ that decays as $|\kvh_2 \cdot \xv| \to \infty$ and exhibits ``\emph{plane wave-like}'' behavior along the edge $\R\, \vvh_1$. The precise nature of this plane wave-like behavior is determined by the invariance properties of $\calH^\delta_\EDGE$ in the $\vvh_1$--direction. In fact, if $r = b_1 / a_1$, where $a_1, b_1 \in \Z$ are coprime, then $\calH^\delta_\EDGE$ is invariant under the translation $\xv \mapsto \xv + a_1\vts \vvh_1$. Associated to this translation invariance is a parallel quasi-momentum $\kpar \in [-\pi, \pi)$, and the plane wave-like behavior corresponds to \emph{pseudo-periodicity}: $\Psi (\xv + a_1\vts \vvh_1) = \euler^{\vts \icplx\, \kpar}\, \Psi (\xv)$. Such solutions are called \emph{edge states with parallel quasi-momentum $\kpar$}.

    In contrast, little seems to be known for irrational edges. Since $\calH^\delta_\EDGE$ is not translation-invariant along the edge, the appropriate notion of the plane wave-like behavior of edge states requires a careful formulation. An important unanswered question is the following:
    \begin{question}
        For irrational edges, what are the detailed properties of the generalized states associated with the $L^2 (\R^2)$--spectrum of $\calH^\delta_\EDGE$ near the Dirac energy $E_D$, in particular their character along the edge and their decay behavior in the transverse direction?
    \end{question}
    We address this question in the present article and in the follow-up \cite{amenoagbadji2026dense}. Here, we introduce a formal definition of edge states in terms of an \emph{augmented}, three-dimensional Hamiltonian. Further, we prove in Theorem \ref{thm:resolvent_expansion} that the resolvent of the augmented Hamiltonian, near the Dirac energy, is governed by a countable family of one-dimensional effective Dirac operators. In \cite{amenoagbadji2026dense}, under a Diophantine condition on $r$, we use this resolvent expansion to answer the above question.

    \subsection{Main contributions of the present article}
    The goal of this article is to introduce and study the augmented operator, $\calH^\delta_\AUG$, and the associated eigenvalue problem used to define edge states for irrational edges. Our main result, Theorem \ref{thm:resolvent_expansion}, gives an expansion, near $E_D$, of the resolvent of $\smash{\calH^\delta_\AUG}$ on an appropriate space. By itself, Theorem \ref{thm:resolvent_expansion} does not prove the existence of edge states for $\calH^\delta_\EDGE$. Its main role is to identify the effective Dirac operators governing the augmented eigenvalue problem and obtain bounds that will be used to construct edge states for $\calH^\delta_\EDGE$ in \cite{amenoagbadji2026dense}. Besides, Theorem \ref{thm:resolvent_expansion} generalizes the resolvent expansions obtained in \cite{drouot2019characterization,drouot2020edge} for rational edges, and has implications for functional calculus; see Appendices \ref{sec:rational_matters} and \ref{sec:functional_calculus}.

    \medskip\noindent
    \textbf{Edge states for irrational edges.} %
    Although $\calH^\delta_\EDGE$ is not translation-invariant along irrational edges, we show in Section \ref{sec:quasiperiodicity_irrational_edge} that it extends to an \emph{augmented} Schrödinger operator, given for $(\xv, s) \in \R^2 \times \R$ by
    \begin{equation*}
        \calH^\delta_\AUG := -\nabla_\xv \cdot A^\delta_\AUG (\xv, s)\, \nabla_\xv + V (\xv)\ \ \textnormal{on $L^2 (\R^3)$},
    \end{equation*}
    where the coefficient $A^\delta_\AUG$, defined by \eqref{eq:def_augmented_potential}, satisfies $A^\delta_\AUG (\xv, s)|_{s = 0} = A^\delta (\xv)$. The augmented operator $\calH^\delta_\AUG$ also interpolates, via the domain wall function $\kappa (\cdot)$, between bulk augmented operators across a two-dimensional interface $\Gamma$ that contains the edge $\R\, \vvh_1 \times \{0\}$. Furthermore, \emph{this interface is spanned by two linearly independent vectors $\av_1, \av_2 \in \R^3$, along which $\calH^\delta_\AUG$ is translation-invariant}; see \eqref{eq:augmented_potential_property}.

    Accordingly, in Section \ref{sec:formulation_eigenvalue_problem_irrational_edges}, we formally define edge states as functions that can be written in the form $\Psi (\xv) = \Psi_\AUG (\xv, s)|_{s=0}$,\ where $\Psi_\AUG$ is a nontrivial solution of
    \begin{equation}\label{eq:augmented_evp}
        \left\{
            \begin{array}{l}
            \displaystyle
            \calH^\delta_\AUG\, \Psi_\AUG(\xv, s) = E\, \Psi_\AUG(\xv, s),
            \rets
            \Psi_\AUG(\cdot + \av_1) = \euler^{\vts \icplx\vts \kpar}\, \Psi_\AUG (\cdot), \quad \Psi_\AUG(\cdot + \av_2) = \euler^{\vts \icplx\vts \kperp}\, \Psi_\AUG (\cdot),
            \rets
            \Psi_\AUG(\xv, s) \to 0, \quad |s| \to \infty.
            \end{array}
        \right.
    \end{equation}
    In other words, $\Psi_\AUG$ is pseudo-periodic within the interface $\Gamma$ and decays in the direction transverse to $\Gamma$. Since $\calH^\delta_\AUG$ has no derivatives in $s$, multiplication by $\euler^{-\icplx\vts \tau\, s}$, for any $\tau \in \R$, changes $(\kpar, \kperp)$ into $(\kpar - \tau\, a_{1, s}, \kperp - \tau\, a_{2, s})$ where $a_{1, s}$ and $a_{2, s}$ are the respective $s$--components of $\av_1$ and $\av_2$; see the discussion before \eqref{eq:evp_irrational_edges}. Since $a_{2, s} \neq 0$, we may take $\kperp$ to be equal to zero without loss of generality. A solution of \eqref{eq:augmented_evp} is called an \emph{augmented edge state with parallel quasi-momentum $\kpar \in \R$}.
    
    For the rigorous formulation of \eqref{eq:augmented_evp} in Section \ref{sec:formulation_augmented_EVP_Sobolev_space}, we work in the space $L^2_{\kpar} (\cylaug)$ defined by \eqref{eq:def_L2aug} and equipped with the $L^2 (\cylaug)$--norm, where $\cylaug$ is the cylinder $\R^3 / (\Z\vts \av_1 + \Z\vts \av_2)$. Thus, \emph{augmented edge states with parallel quasi-momentum $\kpar \in \R$ are eigenfunctions of the operator $\calH^\delta_{\AUG, \kpar} = -\nabla_\xv \cdot A^\delta_\AUG (\xv, s)\, \nabla_\xv + V(\xv)$ acting on $L^2_{\kpar} (\cylaug)$}.
    
    Formally, if $\calH^\delta_{\AUG, \kpar}$ admits an eigenfunction that is continuous in $s$, and if its restriction to $s = 0$ is not identically zero, then this restriction is a two-dimensional edge state which is plane wave-like along the edge, in the sense that its restriction to $\R\, \vvh_1$ is a \emph{quasiperiodic} function; see also Remark \ref{rmk:quasiperiodic_nature_domain_wall_potential}. \bypierre{The rigorous construction of eigenfunctions continuous with respect to $s$ with nonzero restrictions to $s = 0$ is implemented in \cite{amenoagbadji2026dense}.}

    \medskip\noindent
    \textbf{Multiscale construction of approximate augmented edge states.} %
    Since we are interested in edge states that bifurcate from the Dirac point, we focus for simplicity on $\kpar = \Kv \cdot \vvh_1$. In the definition \eqref{eq:def_augmented_potential} of $A^\delta_\AUG$, for small $\delta > 0$, the length scale of the domain-wall transition across the edge is large compared with the scale of periodicity. Section \ref{sec:ansatz_and_multi_scale_analysis} takes advantage of this scale separation: using multiscale analysis, we construct a family of approximate eigenpairs of $\calH^\delta_{\AUG, \kpar = \Kv \cdot \vvh_1}$, with index set $\bbL (\delta) \subset \bbL := \{\Kv, \Kv'\} \times \Z$. For every $\iiv \in \bbL (\delta)$, the approximate eigenvalue is located in an $\calO (\delta)$ neighborhood of the Dirac energy $E_D$, and the approximate eigenfunction has the structure of a wave packet, whose slowly varying envelope is an eigenfunction of an effective Dirac operator $\calD^{\vts\delta}_\iiv$ defined on $H^1 (\R; \C^2)$.

    While this multiscale procedure was used in previous works for rational edges, a novel aspect of this work, caused by the non-ellipticity of $\calH^\delta_{\AUG, \kpar = \Kv \cdot \vvh_1}$, is the emergence of a countably \emph{infinite family of pairwise distinct effective Dirac operators}, and the implied \emph{countably infinite} family of approximate augmented edge states (indexed by $\bbL(\delta)$). For irrational edges, each of these states, restricted to $s = 0$, leads to a distinct, $\iiv$--dependent, two-dimensional approximate edge state for $\calH^\delta_\EDGE$; see Section \ref{sec:recap_multiscale}.

    \medskip\noindent
    \textbf{Resolvent expansion.} %
    \bypierre{Fix $\kpar = \Kv \cdot \vvh_1$ and $\iiv \in \bbL (\delta)$, and consider an approximate, $\iiv$--dependent eigenpair $(E_D + \delta\vts z_0, \Psi_{\AUG, 0} (\xv, s))$ of $\calH^\delta_{\AUG, \kpar}$, constructed via multiscale analysis. For this approximate eigenpair to yield a \emph{genuine} eigenpair $(E_D + \delta\vts z_0 + \delta^2\vts \tau, \Psi_{\AUG, 0} + \delta\, U)$, the corrector pair $(\tau, U (\xv, s))$ is required to satisfy an equation of the form
    \begin{equation}\label{eq:corrector_equation}
        \bigl(\vts \calH^\delta_{\AUG, \kpar} - E_D - \delta\vts z_0 - \delta^2\vts \tau \vts\bigr)\, U (\xv, s) = F (\xv, s),
    \end{equation}
    where the right-hand side depends on $E_D$, $\delta$, $z_0$, $\tau$, and $\Psi_{\AUG, 0}$. A critical step for obtaining the corrector pair is therefore to understand the behavior of the resolvent of $\calH^\delta_{\AUG, \kpar = \Kv \cdot \vvh_1}$ near the Dirac energy $E_D$. This motivates the main result of this article, Theorem \ref{thm:resolvent_expansion}, summarized as follows. For $\delta > 0$ small enough, there exist
    \begin{itemize}
        \item a countable subset $\bbL (\delta^{3/4}) \subset \bbL := \{\Kv, \Kv'\} \times \Z$,
        \item an operator, $\calJ_\delta$, bounded from $L^2_{\kpar = \Kv \cdot \vvh_1} (\cylaug)$ to $\ell^2 (\bbL; L^2(\R; \C^2))$, the space of square-summable sequences indexed by $\bbL$ with values in $L^2(\R; \C^2)$,
        \item a family of Dirac operators $\calD^{\vts \delta}_\iiv: H^1 (\R; \C^2) \to L^2 (\R; \C^2)$, $\iiv \in \bbL$,
    \end{itemize}
    such that the following expansion holds in operator norm as $\delta \to 0^+$, uniformly in $z$ varying in a compact subset of $\C \setminus \R$:
    \begin{equation}\label{eq:resolvent_expansion_intro}
        \bigg(\vts \frac{\calH^\delta_{\AUG, \kpar = \Kv \cdot \vvh_1} - E_D}{\delta} - z \vts\bigg)^{-1}\ =\  \calJ^*_\delta\; \mathds{1}_{\bbL (\delta^{3/4})}\; \big(\vts \calD^{\vts \delta} - z \vts\big)^{-1}\; \mathds{1}_{\bbL (\delta^{3/4})}\; \calJ^{}_\delta\ +\ \calO (\delta^{1/4}).
    \end{equation}
    Here, for $\bbL' \subseteq \bbL$, $\mathds{1}_{\bbL'}$ is the projection operator $\operatorname{diag} (\mathds{1}_{\iiv \in \bbL'})_{\iiv \in \bbL}$ bounded in $\ell^2 (\bbL; L^2(\R; \C^2))$, and $\calD^{\vts \delta}$ is the block-diagonal Dirac operator:
    \begin{equation*}
        \calD^{\vts \delta} = %
        \begin{pmatrix}
            \ddots & & (0) 
            \\
            & \calD^{\vts \delta}_\iiv & 
            \\
            (0) & & \ddots
        \end{pmatrix}_{\iiv \in \bbL}: \ell^2 (\bbL; H^1(\R; \C^2)) \to \ell^2 (\bbL; L^2(\R; \C^2)).
    \end{equation*}%
    The choice of the $3/4$-exponent in $\bbL (\delta^{3/4})$ is mainly technical; see Remark \ref{rmk:technical_exponent}.} 
    
    In the case of rational edges, analogous resolvent expansions were established in \cite{drouot2019characterization,drouot2020edge} for the two-dimensional Schrödinger operator $\calH^\delta_\EDGE$, acting on a suitable space. Equation \eqref{eq:resolvent_expansion_intro}, though formulated in terms of a three-dimensional operator, provides a sense in which these resolvent expansions  generalize to the case of irrational edges. In fact, Theorem \ref{thm:resolvent_expansion_rational} shows that \eqref{eq:resolvent_expansion_intro} is consistent with the results in \cite{drouot2019characterization,drouot2020edge} when $r$ is rational. 

    We discuss applications of the resolvent expansion in Theorem \ref{thm:resolvent_expansion}. It can be combined with the Helffer-Sjöstrand formula to estimate operators of the form $w(\vts\delta^{-1}\, (\calH^\delta_{\AUG, \Kv \cdot \vvh_1} - E_D) \vts)$, with $w \in \scrC^\infty_0 (\R)$; see Corollary \ref{cor:functional_calculus_estimates}. Thus, with appropriate choices of $w$, Corollary \ref{cor:functional_calculus_estimates} enables us to estimate smoothed spectral projections of $\calH^\delta_{\AUG, \Kv \cdot \vvh_1}$ near $E_D$, or to derive finite but large time effective dynamics of $\exp(-\icplx\, t\, \calH^\delta_{\AUG, \Kv \cdot \vvh_1})$, for wave packet initial data which are spectrally localized near the Dirac point.

    A key assumption under which we derive \eqref{eq:resolvent_expansion_intro} is an {\it omnidirectional no-fold} condition, which we introduce and motivate in Section \ref{sec:no_fold_condition}. The validity of this condition in the strong binding regime has been established in \cite{fefferman2018honeycomb}.

    \subsection{Further discussion and future directions}
    The spectrum of the block-diagonal effective Hamiltonian, $\calD^{\vts \delta}$, consists of an absolutely continuous part of the form $\R \setminus (-\theta_\GAP, \theta_\GAP)$, for $\theta_\GAP > 0$, and a pure point part. \bypierre{These spectral parts overlap for irrational edges: we prove that} the point spectrum of $\calD^{\vts \delta}$ is dense, and that a subset of its eigenvalues densely fill $(-\theta_\GAP, \theta_\GAP)$; its other eigenvalues are embedded in the absolutely continuous spectrum; see Figure \ref{fig:spectrum_block_diagonal_operator}. %
    Two questions arise:
    \begin{itemize}
        \item Do the eigenvalues of $\calD^{\vts \delta}$ in $(-\theta_\GAP, \theta_\GAP)$ seed genuine eigenvalues of $\calH^\delta_{\AUG, \Kv \cdot \vvh_1}$?
        \item How do the eigenvalues of $\calD^{\vts \delta}$ embedded in its absolutely continuous spectrum manifest in the spectral properties of $\calH^\delta_{\AUG, \Kv \cdot \vvh_1}$\bypierre{, and subsequently of $\calH^\delta_\EDGE$}?
    \end{itemize}
    What makes these questions challenging is that the eigenvalues of $\calD^{\vts \delta}$, whether embedded \bypierre{in the absolutely continuous spectrum or not, are not ``\emph{well-separated from one another}'', in the sense that every neighborhood of an eigenvalue contains (countably many) other eigenvalues of $\calD^{\vts \delta}$}. Therefore, general arguments (based on quasi-modes) for gapped self-adjoint operators \cite{drouot2019characterization,drouot2020edge,drouot2020defect} do not apply. What makes the first question nontrivial is that dense point spectrum is unstable under general perturbations; see \cite{donoghue1965perturbation,carey1976unitary,simon1986singular,howland1987perturbation}. Proving that the dense eigenvalues of $\calD^\delta$ seed eigenvalues of $\calH^\delta_{\AUG, \Kv \cdot \vvh_1}$ is therefore a delicate task.
    
    For the second question, while ``well-separated'' eigenvalues embedded in the absolutely continuous spectrum are known under certain conditions to perturb to scattering resonances \cite{friedrichs1948perturbation,howland1968perturbation,gohberg1971operator,soffer1998time,costin2001resonance} (see also \cite[Section XII.6]{simon1978methodsvol4} and \cite[Chapter 16]{hislop2012introduction}), \bypierre{the behavior of a dense set of embedded eigenvalues under perturbation remains an open question.}

    On the other hand, we provide an affirmative answer to the first question in \cite{amenoagbadji2026dense}. Specifically, we prove that if the irrational edge parameter, $r$, satisfies a \emph{Diophantine} condition, then the spectrum of $\calH^\delta_{\AUG, \Kv \cdot \vvh_1}$, near $E_D$, consists of a \emph{dense} set of eigenvalues seeded by the eigenvalues of $\calD^{\vts \delta}$. In \cite{amenoagbadji2026dense}, these eigenvalues and the associated eigenfunctions are constructed explicitly by solving the corrector equation \eqref{eq:corrector_equation} similarly to \cite{fefferman2016edge,fefferman2017topologically}, rather than using quasi-mode arguments \cite{drouot2019characterization,drouot2020edge}. 
    
    \bypierre{Further, we prove in \cite{amenoagbadji2026dense} that these eigenfunctions are smooth in $s$. This does not follow from any regularity property of the degenerate operator $\calH^\delta_{\AUG, \Kv \cdot \vvh_1}$. Instead, our approach to smoothness in $s$ is motivated by differentiating \eqref{eq:augmented_evp} with respect to $s$. This procedure is rigorously implemented in \cite{amenoagbadji2026dense} by bounding the weighted resolvent $\langle D_s \rangle^m\, \mathfrak{R} (E_D)\, \langle D_s \rangle^{-m}$, where $\mathfrak{R} (E_D)$ denotes the resolvent of $\calH^\delta_{\AUG, \Kv \cdot \vvh_1}$ at $E_D$ reduced on an appropriate space.} As a consequence, the eigenfunctions, restricted to $s = 0$, yield genuine two-dimensional edge states. This result indicates that the spectrum of $\calH^\delta_\EDGE$ near $E_D$ corresponds to states that are quasiperiodic along the edge and decay in the direction transverse to the edge.

    \subsection{Motivation and previous work}\label{sec:motivation}~
    The study of honeycomb materials in condensed matter physics was sparked by the experimental isolation of graphene, a crystalline two-dimensional hexagonal arrangement of carbon atoms extending to a macroscale \cite{geim2007rise,novoselov2011nobel}. The remarkable properties of quantum electronic waves in graphene have catalyzed investigations of wave propagation in diverse engineered two-dimensional structures with  hexagonal and other novel symmetries. These so-called \emph{artificial graphene} systems have been studied and realized experimentally, for instance, in electronics \cite{singha2011two}, photonics \cite{peleg2007conical,khanikaev2013photonic,rechtsman2013topological,rechtsman2013photonic,bellec2013tight,wu2015scheme,ozawa2019topological,guglielmon2021landau,barsukova2024direct}, acoustics \cite{torrent2012acoustic, khanikaev2015topologically, mousavi2015topologically, yang2015topological}, and mechanics \cite{nash2015topological}; see also the survey \cite{shah2024colloquium} for further references.

    Many of the interesting properties of graphene-like materials are related to the presence of \emph{Dirac points} in their Floquet-Bloch band structure (Section \ref{sec:FB_theory}), which encodes the spectral properties of an underlying periodic Hamiltonian, and hence wave propagation properties of the system. \bypierre{The existence of Dirac points for the Hamiltonian $\calH^0$ was established in} \cite{fefferman2012honeycomb,fefferman2017topologically}; %
    see also \cite{berkolaiko2018symmetry} for a group representation perspective, and \cite{drouot2021ubiquity} for a discussion of why conical degeneracies are the generic type of degeneracy in Hermitian Hamiltonians.
    A manifestation of this phenomenon is the emergence of conical degeneracies when a $\Z^2$--lattice potential with quadratic degeneracies is slightly deformed; see \cite{keller2018spectral,chaban2025instability}.
    The existence of Dirac points was studied as well for $\|V\|_\infty$ small \cite{grushin2009multiparameter, colin1991singularites}, when $V$ is a superposition of Dirac delta distributions \cite{lee2016dirac}, and in the strong binding regime \cite{fefferman2018honeycomb}. Similar results hold for divergence-form elliptic operators \cite{ammari2020honeycomb,cassier2021high,lee2019elliptic,li2024interface}, quantum graph Hamiltonians \cite{kuchment2006spectra}, and continuum Laplacians on graph-like domains \cite{delourme2024guided}, all exhibiting honeycomb symmetries. A general framework for the study of band dispersion degeneracies in $n$-dimensional lattices can be found in \cite{drouot2024band}.

    The continuum Schrödinger Hamiltonians $\calH^\delta_\EDGE$ studied in this article were introduced in \cite{fefferman2016edge}, building on ideas from \cite{haldane2008possible,raghu2008analogs}, about the realization of quantum Hall-like edge states in photonic crystals. Further mathematical investigations appear in \cite{drouot2019characterization,lee2019elliptic,drouot2020edge}, and related work on square and deformed square lattices can be found in \cite{chaban2025instability,chaban2025edge}. 

    For rational edges, edge states can be characterized as solutions of a family of Floquet-Bloch eigenvalue problems depending on a parallel quasi-momentum, $\kpar$. The corresponding eigenvalues, represented in terms of $\kpar$, form a collection of \emph{edge state curves}. An algebraic count of the number of gap traversing edge state curves (spectral flow) is equal to the difference of bulk Chern numbers, integer-valued topological indices determined by the asymptotic \underline{bulk} band structures. In cases where complex conjugation symmetry is broken, this difference may be non-zero, ensuring the existence of gap traversing edge state curves. Spectral flow is also related to a notion of quantum conductivity. %
    The corresponding edge states are said to be {\it topologically protected} since a relatively compact perturbation of the Hamiltonian (even large) does  not change topological indices which depend only on the bulk. This is a manifestation of  the bulk-edge correspondence principle; see, for example, \cite{hatsugai1993chern,elbau2002equality,elgart2005equality,hasan2010colloquium,graf2013bulk,drouot2024topological,drouot2024bulk} for discrete models and \cite{bal2019continuous,drouot2020edge,gontier2020edge,drouot2021microlocal,drouot2021bulk,shapiro2022tight,quinn2024approximations,bal2022topological,bal2023topological} for continuous Hamiltonians.

    For $\delta$ small, the bifurcation of edge states from the Dirac point has been established in \cite{fefferman2016edge,lee2019elliptic,drouot2019characterization,drouot2020edge} using a Lyapunov-Schmidt/Schur complement reduction procedure. %
    Sharp junctions, that is, discontinuous transitions between bulk media (realized, for instance, by taking $\kappa (\zeta) = \sign (\zeta)$), were investigated in \cite{li2024interface} using an integral equation formulation of the edge state problem and a generalized Rouché theorem. Similar bifurcation results were derived for one-dimensional systems \cite{fefferman2014topologically,fefferman2017topologically,drouot2020defect}, periodic waveguides \cite{qiu2023mathematical}, and square lattice potentials \cite{chaban2025edge,qiu2024mathematical}; we also refer to \cite{fefferman2020continuum,delourme2024guided} for related results in different perturbation settings (strong binding regime and thin graph-like domains), and to \cite{fefferman2022discrete,fefferman2024discrete,graf2013bulk,gomez2025edge} for non-perturbative results in discrete models. 

    The case of irrational edges, on the other hand, seems to be far less well understood, due to the lack of translation invariance of $\calH^\delta_\EDGE$ along the edge. The results in \cite{hempel2011spectral,hempel2012dislocation,gontier2021spectral} for models of periodic media with sharp terminations or junctions between two periodic media suggest that \emph{all} gaps in the $L^2(\R^2)$--spectrum of the bulk operators $\calH^{\pm \delta}$ are filled with ``edge spectrum''. These papers do not provide information about the type of spectrum filling bulk gaps or the nature of the corresponding modes. Results in \cite{bols2022absolutely,drouot2025edge} for a discrete variant of $\calH^\delta_\EDGE$ suggest that the gaps are filled with absolutely continuous spectrum, although there could be other spectral types. \bypierre{It is worth noting that \cite{bols2022absolutely,drouot2025edge} do not rely explicitly on the irrational nature of the edge.}

    The interpretation of $\calH^\delta_\EDGE$ as the restriction of a higher-dimensional operator, $\calH^\delta_\AUG$, is the continuum analogue of the \emph{cut-and-project} method \cite{duneau1985quasiperiodic,kalugin19850,elser1986diffraction}, originally introduced to model quasicrystals \cite{shechtman1984metallic,de1981algebraic,kramer1984periodic}. \bypierre{The idea of introducing and studying an augmented operator such as $\calH^\delta_\AUG$ already appears in Kozlov's work \cite{kozlov1979averaging} and has been used for the homogenization of quasicrystals and continuum quasiperiodic structures \cite{bouchitte2010homogenization,wellander2019homogenization,ferreira2021homogenization}}. Numerical studies of electronic, electromagnetic, or water wave propagation in non-commensurate systems based on this approach appear in \cite{rodriguez2008computation,wilkening2021quasi,amenoagbadji2023wave,nicholls2025method,wang2025convergence,nicholls2025analyticity,jiang2025projection}. \bypierre{To the best of our knowledge, however, the spectral analysis of degenerate elliptic operators such as $\calH^\delta_\AUG$ has received little attention}.

    We finish by noting that the spectral properties of quasiperiodic systems are known to be very subtle. A well-known example which reflects this complexity is given by the almost Mathieu operator $-\Delta_\Z + 2 \lambda \cos (2\pi\vts \alpha\vts n + x)$, which models an electron in a magnetic field with constant flux $\alpha \in \R$ on a lattice. The spectrum of this operator, for $\lambda \neq 0$ and $\alpha$ irrational, was conjectured to be a Cantor set (that is, a compact set with no isolated points and with empty interior) \cite{azbel1964energy}. This conjecture was proven in \cite{avila2009ten}. For a similar result for a graphene model, see \cite{becker2019cantor}. Related contributions can be found, for instance, in \cite{grempel1982localization,bellissard1983localization,simon1985almost,casdagli1986symbolic,suto1987spectrum,eliasson1997discrete,jitomirskaya1999metal,bourgain2000nonperturbative,bourgain2002absolutely,klein2005anderson,avila2006reducibility,damanik2016fibonacci,avila2017spectral,jitomirskaya2018all} for discrete models, and in \cite{dinaburg1975one,sarnak1982spectral,moser1984extension,sinai1987anderson,frohlich1990localization,eliasson1992floquet} for continuous models.

    \subsection{Outline}
    In Section \ref{sec:FB_theory}, we review Floquet-Bloch theory for two-dimensional Schrödinger operators with periodic potentials, introduce the notion of honeycomb lattice potentials, and recall existing results on the existence of Dirac points of $\calH^0$ \bypierre{and their stability under symmetry-breaking perturbations. We also introduce the key assumptions of the article, including the omnidirectional no-fold condition; see Section \ref{sec:no_fold_condition}}. %
    \bypierre{In Sections \ref{sec:dw_modulated_operator} and \ref{sec:irrational_edges}, we present the construction of the edge operator, $\calH^\delta_\EDGE$, for an irrational edge orientation, $r \in \R$. In Section \ref{sec:quasiperiodicity_irrational_edge}, we highlight the quasiperiodic nature of $\calH^\delta_\EDGE$ along the edge. This allows us to propose, in Section \ref{sec:formulation_eigenvalue_problem_irrational_edges},} a notion of edge states for irrational edges: these states are defined as restrictions of three-dimensional augmented edge states. %
    In Section \ref{sec:ansatz_and_multi_scale_analysis}, we describe the multiscale asymptotic expansion procedure used to formally construct augmented edge states. These approximate edge states arise from the point eigenvalues of the block-diagonal effective Dirac operator $\calD^{\vts \delta}$. The spectral properties of $\calD^{\vts \delta}$ are also discussed; see Section \ref{sec:pties_effective_Dirac_operator}.
    In Section \ref{sec:formulation_augmented_EVP_Sobolev_space}, we introduce an appropriate Sobolev space framework for the rigorous study of the augmented edge state eigenvalue problem. We then state and comment on the main result, Theorem \ref{thm:resolvent_expansion}, in Section \ref{sec:main_results}, before providing the strategy for its proof in Section \ref{sec:strategy_proof}. 
    \bypierre{In Section \ref{sec:spectrum_unperturbed_augmented_operator}, in preparation for the proof of Theorem \ref{thm:resolvent_expansion}, we study the spectral properties of the unperturbed augmented operator, $\calH^0_{\AUG, \Kv \cdot \vvh_1}$, and} we introduce the notion of edge quasi-momentum slices, parameterizing those Floquet-Bloch modes that participate in the construction of augmented edge states. Two useful tools for the analysis in this section are the Fourier transform with respect to the augmented variable $s$, and a Wrapping Lemma \ref{lem:wrapping_lemma} to characterize neighborhoods of high-symmetry quasi-momenta.
    In Sections \ref{sec:proof_Theorems_A_B} and \ref{sec:asymptotic_expansions}, we present the proof of Theorem \ref{thm:resolvent_expansion} using a Schur complement reduction strategy. %
    Appendix \ref{sec:rational_matters} shows how the resolvent expansion in Theorem \ref{thm:resolvent_expansion} reduces to known results for rational edges; see Theorem \ref{thm:resolvent_expansion_rational}.
    Appendix \ref{sec:functional_calculus} presents an application of the resolvent expansion in Theorem \ref{thm:resolvent_expansion} by expanding smooth functions of the centered and rescaled operator $\delta^{-1}\, (\calH^\delta_{\AUG, \Kv \cdot \vvh_1} - E_D)$; see Corollary \ref{cor:functional_calculus_estimates}.
    \bypierre{Finally, in Appendix \ref{sec:directional_Fourier}, we prove the properties of the partial Fourier transform in $s$ that were stated in Section \ref{sec:Fourier_analysis_L2aug}.} 

    \subsection*{Notation and conventions}
    \begingroup\small
    \begin{itemize}[leftmargin=*]
        \item $\N_0 := \{0,1,2,...\}$ and $\N = \{1,2,...\}$ denote the sets of non-negative and positive integers, respectively.
        \item Given $d \geq 1$ and $\xv = (x_1, \dots, x_d), \yv = (y_1, \dots, y_d) \in \R^d$, let $\xv \cdot \yv := x_1 y_1 + \dots + x_d y_d$ and $|\xv|^2 := \xv \cdot \xv$. For $d = 2$, we define $\xv \wedge \yv = x_1 y_2 - x_2 y_1$.
        \item Given two Hilbert spaces $(\scrX, \|\cdot\|_\scrX)$, $(\scrY, \|\cdot\|_\scrY)$, we call $\scrL (\scrX, \scrY)$ the space of \emph{bounded} linear operators from $\scrX$ to $\scrY$, with the norm $\smash{\displaystyle \| T \|_{\scrL (\scrX, \scrY)} :=\!\!\!\!\! \sup_{x \in \scrX, \|x\|_\scrX = 1}\!\!\! \|T\, x\|_\scrY}$ for any $T \in \scrL (\scrX, \scrY)$. We set $\scrL(\scrX) := \scrL(\scrX, \scrX)$. 
        \item Given a measurable set $S \subset \R^d$ and a Hilbert space $(\scrX, \|\cdot\|_\scrX)$, we denote by $L^2(S; \scrX)$ the Hilbert space of strongly measurable functions $u: S \mapsto \scrX$ such that $\int_S \|u(t)\|^2_\scrX\, dt < \infty$. Similarly, for any discrete set $\bbI$, we define the space
        \begin{equation*}
           \ell^2(\bbI; \scrX) := \bigg\{ (g_\iiv)_{\iiv \in \bbI} \in \scrX^{\bbI}\ \ \bigg|\ \ \sum_{\iiv \in \bbI} \|g_\iiv\|^2_\scrX < \infty  \bigg\},
        \end{equation*}
        equipped with its natural inner product.
        \item\label{item:notation_lesssim_simeq} Given $a, b \geq 0$, we write $a \lesssim b$ if there exists a constant $C > 0$ such that $a \leq C\vts b$. We write $a \simeq b$ if $a \lesssim b$ and $b \lesssim a$.
        \item If $u_\veps \in \scrX$ and $f: \R_+ \to \R_+$, we write $u_\veps = \calO_\scrX (f(\veps))$ if there exist constants $\veps_0, C > 0$ such that $\|u_\veps\|_\scrX \leq C\, f(\veps)$ for any $\veps \in (0, \veps_0)$.
        \item\label{item:notation_otimes} Given a Hilbert space $(\scrX, \|\cdot\|_{_\scrX})$ and $\varphi \in \scrX$, we denote by $\varphi \otimes \varphi$ the rank-one operator defined by $(\varphi \otimes \varphi)\, u := \langle \varphi,\, u \rangle_{_\scrX}\, \varphi$ for any $u \in \scrX$. Note that $\varphi \otimes \varphi$ is an orthogonal projector if $\|\varphi\|_\scrX = 1$.
        \item The domain of an unbounded operator $\calH$ is $\DOM (\calH)$, and its spectrum is $\spec(\calH)$.
        \item Following the convention in \cite{graham1994concrete}, $\modulo{x} = x - \lfloor x \rfloor \in [0, 1)$ is the fractional part of $x \in \R$. Here, $\lfloor x\rfloor \in (x-1, x]$ denotes the greatest integer less than or equal to $x$.
        \item $\Lambda := \Z \vvv_1 + \Z \vvv_2$ and $\Lambda^* := \Z\vts (2\pi \kv_1) + \Z\vts (2\pi \kv_2)$ denote the equilateral triangular lattice and its dual respectively, with
        \begin{equation}
            \label{eq:honeycomb_vector_quasimomenta}
            \vvv_1 := \varsigma \begin{pmatrix}\displaystyle \sqrt{3}/2 \rets \displaystyle 1/2 \end{pmatrix}, \ %
            \vvv_2 := \varsigma \begin{pmatrix}\displaystyle \sqrt{3}/2 \rets \displaystyle-1/2 \end{pmatrix}, \ \ %
            \kv_1 := \varsigma \begin{pmatrix} \displaystyle 1/2 \rets \displaystyle \sqrt{3}/2 \end{pmatrix}, \ %
            \kv_2 := \varsigma \begin{pmatrix} \displaystyle 1/2 \rets \displaystyle -\sqrt{3}/2 \end{pmatrix},%
        \end{equation}
        and where $\smash{\varsigma = \sqrt{2 / \sqrt{3}}}$ is chosen so that $|\vvv_1 \wedge \vvv_2| = 1$ and $\kv_n \cdot \vvv_l = \delta_{n l}$ for any $n, l \in \{1, 2\}$.
        \item $\Kv := 2\pi\, (\kv_1 - \kv_2) / 3$ and $\Kv' := -\Kv = 2\pi\, (- \kv_1 + \kv_2) / 3$ are high-symmetry quasi-momenta, which generate the union of dual sublattices $\bbK := (\Kv + \Lambda^*) \cup (\Kv' + \Lambda^*) \subset \R^2$. 
        \item A generic element of $\bbL := \{\Kv, \Kv'\} \times \Z$ is denoted by $\iiv = (\Kv_\iiv, m_\iiv)$. We refer to generic subsets of $\bbL$ as $\bbL' \subseteq \bbL$.
        \item The Pauli matrices are
        \begin{equation}\label{eq:def_pauli_matrices}
            \displaystyle
            \sigma_1 := \begin{pmatrix} 0 & 1 \\ 1 & 0 \end{pmatrix}, \quad %
            \sigma_2 := \begin{pmatrix} 0 & -\icplx \\ \icplx & 0 \end{pmatrix}, \quad %
            \sigma_3 := \begin{pmatrix} 1 & 0 \\ 0 & -1 \end{pmatrix}. %
        \end{equation}
        \item We consider the unitary Fourier transform and its inverse given for $f, g \in L^2(\R)$ by
        \begin{equation}\label{eq:def_unitary_Fourier_transform}
            \calF f (\lvar) := \frac{1}{\sqrt{2\pi}} \int_\R f(\zeta)\, \euler^{-\icplx\vts \lvar\vts \zeta}\, d\zeta, \quad (\calF^* g)(\zeta) = \frac{1}{\sqrt{2\pi}} \int_\R g(\lvar)\, \euler^{\vts \icplx\vts \lvar\vts \zeta }\, d\lvar.
        \end{equation}
        We recall the Plancherel identity $\|\calF f\|_{L^2(\R)} = \|f\|_{L^2(\R)}$.
        \item 
        Fourier multiplier, $h(D_\zeta)$: Let $h \in L^\infty (\R)$. Define 
        \begin{equation}\label{eq:def_Fourier_multiplier}
            D_\zeta := - \icplx\, \partial_\zeta,
        \end{equation}
        and, using  \eqref{eq:def_unitary_Fourier_transform}, the bounded linear operator on $L^2(\R)$
        \begin{equation}\label{eq:f-multiplier}
            h (D_\zeta)f(\zeta) = (\calF^*\, h\, \calF) f(\zeta) = \frac{1}{\sqrt{2\pi}} \int_\R h(\lvar)\, \widehat{f}(\lvar)\, \euler^{\vts \icplx\vts \lvar\vts \zeta }\, d\lvar, \textWHERE \widehat{f} (\lvar) := \calF\, f (\lvar).
        \end{equation}
        \item Let $\lvar\mapsto\chi(\lvar)$ denote the indicator function of the interval $[-1, 1]$.  We  shall frequently spectrally localize on a set of size $\delta^\nu$ using the operator $\chi(\delta^{-\nu}D_\zeta)$ on $L^2(\R)$.
    \end{itemize}
    \endgroup

    \subsection*{Acknowledgements}
    The authors were supported in part by NSF grants DMS-1908657, DMS-1937254, DMS-2510769  and Simons Foundation Math + X Investigator Award {\# 376319} (MIW). Part of this research was carried out during the 2023-24 academic year, when MIW was a Visiting Member in the School of Mathematics - Institute of Advanced Study, Princeton, supported by the Charles Simonyi Endowment, and a Visiting Fellow in the Department of Mathematics at Princeton University. The authors thank G. Bal and A. Sagiv for stimulating discussions, and A. Drouot for suggesting the Helffer-Sjöstrand formula used in Appendix \ref{sec:functional_calculus}. 

    \section{Floquet-Bloch theory, honeycomb lattice potentials, and Dirac points}\label{sec:FB_theory}
    \noindent
    We first review Floquet-Bloch theory; more detailed presentations can be found in \cite{kuchment2012floquet,kuchment2016overview,kuchment2001mathematics,lechleiter2017floquet,gerard1998mourre} or \cite[Chapter XIII]{simon1978methodsvol4}. 

    \subsection{Fourier analysis in \texorpdfstring{$L^2(\R^2 / \Lambda)$}{L2(R/Λ)}}
    Let $\{\vvv_1, \vvv_2\}$ be a linearly independent set in $\R^2$. We introduce:
    \begin{itemize}
    \item the \emph{lattice} $\Lambda = \Z \vvv_1 + \Z \vvv_2$;
    \item the \emph{fundamental unit cell} $\Omega := \left\{ \yrm_1 \vvv_1 + \yrm_2 \vvv_2 \ /\ \yrm_1, \yrm_2 \in (0, 1) \right\}$;
    \item the \emph{dual lattice} $\Lambda^* := \Z\vts (2\pi \kv_1) + \Z\vts (2\pi \kv_2)$, where $\kv_n \cdot \vvv_l = \delta_{n l}$ for $n, l \in \{1, 2\}$;
    \item the \emph{first Brillouin zone} $\scrB$, a fundamental unit cell defined as the set of points in $\R^2$ that are closer to the origin than to any other point in the dual lattice $\Lambda^*$.
    \end{itemize}

    \vspace{1\baselineskip} \noindent  
    Given $\kv \in \R^2$, let $L^2_\kv$ be the space of \emph{$\kv$--pseudo-periodic functions with respect to $\Lambda$}:
    \begin{equation}\label{eq:k_pseudo_periodic_space}
    \displaystyle
    L^2_\kv := \big\{\vts f \in L^2_{\LOC}(\R^2)\ \ \big|\ \ f(\xv + \vvv) = \euler^{\vts \icplx\vts \kv \cdot \vvv}\, f(\xv) \textFOR \xv \in \R^2, \vvv \in \Lambda \vts\big\}.
    \end{equation}
    Note that $f$ is $\kv$--pseudo-periodic with respect to $\Lambda$ if and only if $\euler^{\vts- \icplx \vts \kv \cdot \xv}\, f$ is $\Lambda$--periodic. In particular, $L^2_\kv = L^2(\R^2 / \Lambda)$ for $\kv = 0$.  We equip $L^2_\kv$ with the inner product and the norm
    \begin{equation*}
    \langle g, f \rangle_{L^2_\kv} := \int_\Omega \overline{g(\xv)}\, f(\xv)\, d\xv, \quad \|f\|^2_{L^2_\kv} := \langle f, f \rangle_{L^2_\kv}.
    \end{equation*}
    Similarly, one defines the Sobolev space
    \begin{equation*}
    \spforall q \in \N_0, \quad H^q_\kv := \big\{\vts f \in L^2_\kv\ \ \big|\ \ \euler^{\vts- \icplx \vts \kv \cdot \xv} f \in H^q (\R^2 / \Lambda) \vts\big\},
    \end{equation*}
    equipped with the $H^q(\Omega)$--norm. Note that $H^{q = 0}_\kv = L^2_\kv$.

    Any $f \in L^2(\R^2 / \Lambda)$ can be expanded in Fourier series:
    \begin{equation*}
    f(\xv) = \sum_{\kv \in \Lambda^*} \widehat{f}_\kv\, \euler^{\vts \icplx\vts \kv \cdot \xv}, \textWITH \widehat{f}_\kv := \frac{1}{|\Omega|} \int_\Omega \euler^{- \icplx\vts \kv \cdot \yv}\, f(\yv)\, d\yv.
    \end{equation*}
    Moreover, the Sobolev regularity of $f \in L^2(\R^2 / \Lambda)$ is related to the decay rate of its Fourier coefficients \cite[Chapter 4.3]{taylor2023partial}:
    \begin{equation}\label{eq:link_Sobolev_regularity_decay_Fourier_coeffs}
    \spforall q \in \N_0, \quad \spforall f \in H^q (\R^2 / \Lambda), \quad \|f\|^2_{H^q (\R^2 / \Lambda)} \hyperref[item:notation_lesssim_simeq]{\simeq} \sum_{\kv \in \Lambda^*} (1 + |\kv|^2)^q\, |\widehat{f}_\kv|^2.
    \end{equation}

    \subsection{Floquet-Bloch states}\label{sec:FB_states}
    Given $V \in \scrC^\infty(\R^2 / \Lambda; \R)$, a real-valued, smooth, and $\Lambda$--periodic potential, we consider the self-adjoint Schrödinger operator:
    \begin{align*}
    \displaystyle
    \calH^0 &:= -\Delta + V(\xv), & \DOM(\calH^0) &:= H^2 (\R^2).
    \\
    \intertext{In addition, given $\kv \in \R^2$, define:}
    \displaystyle
    \calH^0_\kv &:= -\Delta + V(\xv), & \DOM(\calH^0_\kv) &:= H^2_\kv.
    \end{align*}
    The spectral properties of $\calH^0$ can be studied using the eigenpairs of $\calH^0_\kv$, for $\kv \in \R^2$, which correspond to the nontrivial solutions of the eigenvalue problem
    \begin{equation}\label{eq:evp_cell}
    \displaystyle
    \left\{
    \begin{array}{r@{\ =\ }l@{\quad}l}
        \calH^0_\kv\, \Phi(\xv; \kv) & E(\kv)\, \Phi(\xv; \kv), & \xv \in \R^2,
        \rets
        \Phi(\xv + \vvv; \kv) & \euler^{\vts \icplx\vts \kv \cdot \vvv}\, \Phi(\xv; \kv), & \xv \in \R^2,\ \ \spforall \vvv \in \Lambda.
    \end{array}
    \right.
    \end{equation}
    The $\kv$--pseudo-periodicity condition in \eqref{eq:evp_cell} is invariant under the translation $\kv \mapsto \kv + \kv'$ for any $\kv' \in \Lambda^*$, so that the study of \eqref{eq:evp_cell} can be restricted to $\kv$ varying in the Brillouin zone $\scrB$. A solution of \eqref{eq:evp_cell} is called a \emph{Floquet-Bloch eigenpair} with quasi-momentum $\kv$. 

    For each fixed $\kv$, \bypierre{the operator $\calH^0_\kv$ is self-adjoint, bounded from below, and with compact resolvent; hence} its spectrum is a countable set of real eigenvalues, listed with multiplicity:
    \begin{equation*}
    \displaystyle
    E_1 (\kv) \leq E_2 (\kv) \leq E_3 (\kv) \leq \dots \leq E_b(\kv) \leq \dots; \quad \lim_{b \to +\infty} E_b(\kv) = +\infty,
    \end{equation*}
    with eigenfunctions $\Phi_b(\xv; \kv)$, $b \geq 1$. For a fixed $\kv \in \scrB$, $\{\xv \mapsto \Phi_b(\xv; \kv)\}_{b \geq 1}$ can be taken to be an orthonormal basis for $L^2_\kv$. By elliptic regularity, the smoothness of $V(\xv)$ implies that the functions $\xv \mapsto \Phi_b (\xv; \kv)$ are smooth, uniformly in $\kv \in \scrB$. Moreover, similarly to \eqref{eq:link_Sobolev_regularity_decay_Fourier_coeffs}, 
    \begin{equation}\label{eq:link_Sobolev_regularity_decay_FB_coeffs}
    \spforall (\kv, q) \in \scrB \times \N_0, \quad \spforall f \in H^q_\kv, \quad  \|f\|^2_{H^q_\kv} \hyperref[item:notation_lesssim_simeq]{\simeq} \sum_{b \geq 1} (1 + b^2)^{q/2}\, \big| \lla\vts \Phi_b (\cdot\,; \kv), f \vts\rra_{L^2_\kv} \big|^2,
    \end{equation}
    \emph{where the implied constants can be chosen to be independent in $\kv$}.

    The sets $\left\{(\kv, E_b(\kv)),\ \kv \in \scrB\right\} \subset \scrB \times \R$, for $b \geq 1$, are called \emph{dispersion surfaces}, and the mappings $\kv \in \scrB \mapsto E_b(\kv)$, \emph{dispersion functions}. These functions are Lipschitz continuous; see \cite{kuchment2012floquet,avron1978analytic,kuchment2016overview} and \cite[Appendix A]{fefferman2014wave}. For $b \geq 1$, as $\kv$ varies over $\scrB$, each function $E_b(\kv)$ ranges over a closed interval called the \emph{$b$--th band}. The union of these bands over $b \geq 1$ forms the $L^2(\R^2)$--spectrum of $\calH^0$. Moreover, $\{\xv \mapsto \Phi_b(\xv; \kv)\}_{b \geq 1, \kv \in \scrB}$ is complete in $L^2(\R^2)$:
    \begin{equation*}
        \displaystyle
        \spforall f \in L^2(\R^2), \quad f(\xv) = \sum_{b \geq 1} \int_{\scrB} \lla\vts \Phi_b (\cdot\,; \kv), f \vts\rra_{L^2(\R^2)}\; \Phi_b (\xv; \kv)\, d\vts\kv, \quad \xv \in \R^2,
    \end{equation*}
    \bypierre{where $\kv \in \scrB \mapsto \lla\vts \Phi_b (\cdot\,; \kv), f \vts\rra_{L^2(\R^2)}$ is understood in the $L^2$-sense rather than pointwise.}

    \subsection{Honeycomb lattice potentials}
    We now apply the results of  Section \ref{sec:FB_states}  to the equilateral triangular lattice $\Lambda := \Z \vvv_1 + \Z \vvv_2$ and its dual $\Lambda^* := \Z\vts (2\pi \kv_1) + \Z\vts (2\pi \kv_2)$, where $(\vvv_1, \vvv_2, \kv_1, \kv_2)$ are defined by \eqref{eq:honeycomb_vector_quasimomenta} and shown in Figure \ref{fig:lattice_brillouin}. The Brillouin zone $\scrB$ is a closed regular hexagon shown in Figure \ref{fig:lattice_brillouin} (right panel). Consider its vertices $\Kv$ and $\Kv'$ defined by
    \begin{equation*}
    \displaystyle
    \Kv := \frac{2\pi}{3} (\kv_1 - \kv_2), \quad \Kv' := -\Kv = \frac{2\pi}{3} (-\kv_1 + \kv_2).
    \end{equation*}
    The vertices of $\scrB$ can be generated from $\Kv$, $\Kv'$ using the clockwise rotation matrix by $2\pi/3$:
    \begin{equation}\label{eq:rotation_matrix_2pi_3}
    \displaystyle
    R = \begin{pmatrix}
        \displaystyle-1/2 & \displaystyle\sqrt{3}/2
        \rets
        \displaystyle-\sqrt{3}/2 & \displaystyle-1/2
    \end{pmatrix}.
    \end{equation}
    These  vertices are classified into two categories: 
    \begin{itemize}
    \item $\Kv$ type-points: $\Kv$, $R\, \Kv = \Kv + 2\pi\vts \kv_2$, and $R^2\, \Kv = \Kv - 2\pi\vts \kv_1$;
    \item $\Kv'$ type-points: $\Kv'$, $R\, \Kv' = \Kv' - 2\pi\vts \kv_2$, and $R^2\, \Kv' = \Kv' + 2\pi\vts \kv_1$.
    \end{itemize}
    
    Consider the following symmetry operators: complex-conjugation, inversion  and rotation by $2\pi/3$:
    \[
    \rets
    (\vts\calC f\vts)\vts (\xv) := \overline{f(\xv)}, \quad (\vts\calP f\vts)\vts (\xv) := f(-\xv),  \textAND \calR f(\xv) := f(R^* \xv),
    \]
    where $R$ is given by \eqref{eq:rotation_matrix_2pi_3}, and $R^* = R^\top= R^{-1}$.

    \begin{definition}[Definition 2.1, \cite{fefferman2012honeycomb}]\label{defi:honeycomb_lattice_potential}
        A function $V \in \scrC^\infty (\R^2)$ is said to be a \emph{honeycomb lattice potential} if $V$ is real-valued, $\Lambda$--periodic and, with respect to some origin of coordinates which we may take to be $\xv=0$, $V(\xv)$ is also  even, and $2\pi/3$--rotationally invariant:
        \begin{gather*}
            [\calC, V(\xv)] = 0, \quad [\calP, V(\xv)] = 0, \textAND [\calR, V(\xv)] = 0.
        \end{gather*}
    \end{definition}

    \noindent
    Due to the $\calR$--invariance of a honeycomb lattice potential $V$, the operator $\calH^0_{\Kv_\star} = -\Delta + V(\xv)$, acting on $L^2_{\Kv_\star}$, commutes with $\calR$ \bypierre{when} $\Kv_\star$ is a vertex of the Brillouin zone $\scrB$ modulo $\Lambda^*$--translations; see \cite[Proposition 2.2]{fefferman2012honeycomb}. For this reason, the vertices of $\scrB$ are called \emph{high-symmetry quasi-momenta}. \bypierre{The set of all high-symmetry quasi-momenta is the union of sublattices generated by the vertices of $\scrB$ and their translates along $\Lambda^*$:
    \begin{equation*}
        \bbK := (\Kv + \Lambda^*) \cup (\Kv' + \Lambda^*).
    \end{equation*}
    If $\Kv_\star \in \bbK$, then} the $L^2_{\Kv_\star}$--spectrum of $\calR$ consists of the cube roots of unity $1$, $\tau := \euler^{2\vts \icplx\vts \pi/3}$, and $\overline{\tau}$, and hence $L^2_{\Kv_\star}$ can be decomposed in terms of the eigenspaces of $\calR$:
    \begin{align*}
    \displaystyle
    L^2_{\Kv_\star} &\phantom{:}= L^2_{\Kv_\star, 1} \oplus L^2_{\Kv_\star, \tau} \oplus L^2_{\Kv_\star, \overline{\tau}},
    \intertext{where} 
    L^2_{\Kv_\star, z} &:= \big\{\vts f \in L^2_{\Kv_\star}\ /\ \ \calR f = z f \vts\big\}, \quad z \in \C.
    \end{align*}

    \subsection{Dirac points}\label{sec:dirac_points}
    A {\it Dirac point} is a quasi-momentum/energy pair $(\Kv_\star, E_D)$ for which the following holds: there exist an integer $b_\star \geq 1$ and Floquet-Bloch eigenpair mappings
    \begin{equation*}
    \displaystyle
    \kv \mapsto \big(E_{b_\star}(\kv), \Phi_{b_\star}(\xv; \kv)\big) \textAND \kv \mapsto \big(E_{b_\star + 1}(\kv), \Phi_{b_\star + 1}(\xv; \kv)\big),
    \end{equation*}    
    such that the dispersion functions $E_-: \kv \mapsto E_{b_\star}(\kv)$ and $E_+: \kv \mapsto E_{b_\star + 1}(\kv)$ touch conically at $(\Kv_\star, E_D)$.  More precisely, there exist $\upsilon_D > 0$, constants $C,\; \veps_0 > 0$, and functions $e_\pm (\kv)$, defined for $|\kv - \Kv_\star| < \veps_0$ with $|e_\pm (\kv)| \leq C |\kv - \Kv_\star|$, such that
    \begin{equation}\label{eq:conical_behavior_Dirac}
    \displaystyle
    \begin{array}{r@{\ =\ }l}
        E_-(\kv) - E_D & - \upsilon_D\, |\kv - \Kv_\star|\,(1 + e_-(\kv)),
        \rets
        E_+(\kv) - E_D & + \upsilon_D\, |\kv - \Kv_\star|\,(1 + e_+(\kv)).
    \end{array}
    \end{equation}
    The number $\upsilon_D$ is called the \emph{Dirac velocity}.

    The following result provides sufficient conditions for the existence of Dirac points at the vertices of the Brillouin zone $\scrB$; see \cite{fefferman2012honeycomb}.
    \begin{proposition} \label{prop:sufficient_conditions_dirac_point}
    Let $\Kv_\star$ be a vertex of the Brillouin zone $\scrB$, and let $E_D \in \R$. 

    A sufficient condition for the  quasi-momentum/energy pair $(\Kv_\star, E_D)$ to be a \emph{Dirac point} of $\calH^0 = -\Delta + V(\xv)$  is the existence of an integer $b_\star \geq 1$ and Floquet-Bloch eigenpairs mappings
    \begin{equation*}
        \displaystyle
        \kv \mapsto \big(E_{b_\star}(\kv), \Phi_{b_\star}(\xv; \kv)\big) \textAND \kv \mapsto \big(E_{b_\star + 1}(\kv), \Phi_{b_\star + 1}(\xv; \kv)\big),
    \end{equation*}    
    such that the following conditions are satisfied:
    \begin{enumerate}[label={$(\roman*).$}, ref={$(\roman*)$}]
        \item $E_D = E_{b_\star}(\Kv_\star) = E_{b_\star + 1}(\Kv_\star)$ is a double eigenvalue of $\calH^0_{\Kv_\star}$;
        \item\label{item:dirac_point_item_2} There is a
        \emph{Dirac eigenbasis}, consisting of functions
        \begin{equation*}
        \displaystyle
        \Phi^{\Kv_\star}_1 \in L^2_{\Kv_\star, \tau} \textAND \Phi^{\Kv_\star}_2 = \calP\, \calC\, \Phi^{\Kv_\star}_1 \in L^2_{\Kv_\star, \overline{\tau}},
        \end{equation*}
        with $\langle \Phi^{\Kv_\star}_j, \Phi^{\Kv_\star}_l \rangle_{L^2_{\Kv_\star}} = \delta_{jl}$ for $j, l \in \{1, 2\}$, and
        \begin{equation*}
        \Ker\, (\calH^0_{\Kv_\star} -  E_D) = \vect \big\{\Phi^{\Kv_\star}_1, \Phi^{\Kv_\star}_2 \big\};
        \end{equation*}
        \item\label{item:dirac_point_item_3} One has
        \begin{equation}\label{eq:component_phi_1_phi_2}
        \displaystyle
        \upsilon^{\Kv_\star} := \frac{1}{2}\, \overline{\big\langle \Phi^{\Kv_\star}_1, - 2 \icplx\vts \nabla \Phi^{\Kv_\star}_2 \big\rangle_{L^2_{\Kv_\star}}} \cdot \begin{pmatrix} 1 \\ \icplx \end{pmatrix} \neq 0.
        \end{equation} 
    \end{enumerate}
    If the above conditions are satisfied, then \eqref{eq:conical_behavior_Dirac} holds with $\upsilon_D := |\upsilon^{\Kv_\star}|$. Moreover, if \eqref{eq:component_phi_1_phi_2} holds, then there exists $\theta \in \R$ such that by replacing $\Phi^{\Kv_\star}_1$ with $\euler^{\vts \icplx\, \theta}\, \Phi^{\Kv_\star}_1$, and $\Phi^{\Kv_\star}_2$ with $\euler^{- \icplx\, \theta}\, \Phi^{\Kv_\star}_2$, we obtain an equivalent Dirac basis that satisfies Condition \ref{item:dirac_point_item_2}, and such that $\upsilon^{\Kv_\star} \equiv \upsilon^{\Kv_\star} (\theta)$ is a real number, with $\upsilon^{\Kv} = \upsilon_D > 0$ and $\upsilon^{\Kv'} = -\upsilon_D < 0$.
    \end{proposition}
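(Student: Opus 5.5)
The plan is the standard Lyapunov--Schmidt (Schur complement) reduction near $\Kv_\star$, following \cite{fefferman2012honeycomb}. For $\kv = \Kv_\star + \kappa$ with $|\kappa|$ small, write $\Phi(\xv;\kv) = \euler^{\icplx \kappa\cdot\xv}\,\phi(\xv)$ with $\phi \in H^2_{\Kv_\star}$. The eigenvalue problem \eqref{eq:evp_cell} becomes
\begin{equation*}
    \big(\calH^0_{\Kv_\star} - E_D\big)\phi - 2\icplx\vts\kappa\cdot\nabla\phi + |\kappa|^2\phi = \mu\,\phi, \qquad \mu := E(\kv) - E_D .
\end{equation*}
Let $P$ be the orthogonal projection in $L^2_{\Kv_\star}$ onto $\Ker(\calH^0_{\Kv_\star} - E_D) = \vect\{\Phi^{\Kv_\star}_1, \Phi^{\Kv_\star}_2\}$ given by (i)--(ii), and set $Q = I - P$. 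We decompose $\phi = \alpha_1\Phi^{\Kv_\star}_1 + \alpha_2\Phi^{\Kv_\star}_2 + \phi^\perp$ with $\phi^\perp \in QH^2_{\Kv_\star}$. Since $E_D$ is an isolated double eigenvalue of the self-adjoint operator $\calH^0_{\Kv_\star}$, which has compact resolvent, the operator $Q(\calH^0_{\Kv_\star} - E_D)Q$ is invertible on $QL^2_{\Kv_\star}$. A Neumann series then solves the $Q$-equation as $\phi^\perp = \calO(|\kappa|)(|\alpha_1| + |\alpha_2|)$, uniformly for $|\kappa|, |\mu| < \veps_0$, with analytic dependence on $(\kappa,\mu)$. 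Substituting back gives a $2\times 2$ system $\mathcal M(\kappa,\mu)\,\alpha = 0$, where
\begin{equation*}
    \mathcal M(\kappa,\mu) = -\mu\, I + \Big[\big\langle \Phi^{\Kv_\star}_j, -2\icplx\vts\kappa\cdot\nabla\Phi^{\Kv_\star}_l\big\rangle\Big]_{j,l} + \calO(|\kappa|^2),
\end{equation*}
and the $\calO(|\kappa|^2)$ remainder is also $\calO(|\kappa|\,|\mu|)$ from the second-order Schur term. The eigenvalues near $E_D$ are exactly the $\mu$ with $\det\mathcal M(\kappa,\mu) = 0$.

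The key step is to compute the leading matrix using the symmetries. Because $\Phi^{\Kv_\star}_1 \in L^2_{\Kv_\star,\tau}$ and $\Phi^{\Kv_\star}_2 \in L^2_{\Kv_\star,\overline\tau}$, and $\calR$ commutes with the rotated gradient, i.e.\ $\calR\nabla = R\nabla\calR$ up to the rotation of vector components, the diagonal entries $\langle \Phi^{\Kv_\star}_j, \nabla\Phi^{\Kv_\star}_j\rangle$ are vectors fixed by $R$. They therefore vanish, since $R$ has no fixed vectors other than $0$. For the off-diagonal entry, the vector $w := \langle \Phi^{\Kv_\star}_1, -2\icplx\nabla\Phi^{\Kv_\star}_2\rangle$ transforms as $R w = \tau^{\pm 1} w$, so $w$ is proportional to $(1,\pm\icplx)$. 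Hence $\kappa\cdot w$ is $\overline{\upsilon^{\Kv_\star}}\,(\kappa_1 \pm \icplx\kappa_2)$ up to conventions, with $\upsilon^{\Kv_\star}$ as in \eqref{eq:component_phi_1_phi_2}. The $\calP\calC$ relation $\Phi^{\Kv_\star}_2 = \calP\calC\Phi^{\Kv_\star}_1$ together with self-adjointness makes the $(2,1)$ entry the complex conjugate of the $(1,2)$ entry. The leading matrix is therefore $\begin{pmatrix} 0 & \overline{\upsilon^{\Kv_\star}}\,\overline{(\kappa_1+\icplx\kappa_2)} \\ \upsilon^{\Kv_\star}(\kappa_1+\icplx\kappa_2) & 0\end{pmatrix}$, whose eigenvalues are $\pm|\upsilon^{\Kv_\star}|\,|\kappa|$.

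From here, $\det\mathcal M = \mu^2 - |\upsilon^{\Kv_\star}|^2|\kappa|^2 + \calO(|\kappa|^3 + |\mu|\,|\kappa|^2 + |\mu|^2|\kappa|)$. Rescale by $\mu = |\kappa|\,\nu$ and divide by $|\kappa|^2$. This gives $\nu^2 - \upsilon_D^2 + \calO(|\kappa|) = 0$, and since $\upsilon_D \neq 0$ by (iii), the two roots $\nu = \pm\upsilon_D(1 + e_\pm)$ satisfy $|e_\pm| \lesssim |\kappa|$. This is \eqref{eq:conical_behavior_Dirac}. Eigenvalue continuity, via the min-max principle, identifies these two roots with $E_{b_\star}$ and $E_{b_\star+1}$, and shows that no other band comes near $E_D$.

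For the phase normalization, replacing $\Phi^{\Kv_\star}_1$ by $\euler^{\icplx\theta}\Phi^{\Kv_\star}_1$ and $\Phi^{\Kv_\star}_2$ by $\euler^{-\icplx\theta}\Phi^{\Kv_\star}_2$ preserves (ii) and multiplies $\upsilon^{\Kv_\star}$ by $\euler^{2\icplx\theta}$, so $\theta$ can be chosen to make it real. For the sign, I would use the relation $\Kv' = -\Kv$ and choose $\Phi^{\Kv'}_j$ via $\calC$, noting that $\calC$ maps $L^2_{\Kv,\tau}$ to $L^2_{\Kv',\overline\tau}$. Complex conjugation flips the sign of $-2\icplx\nabla$ in the pairing, which gives $\upsilon^{\Kv'} = -\upsilon^{\Kv}$. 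The main obstacle is the symmetry bookkeeping in the second paragraph: getting the exact pairing $(1,\icplx)$ versus $(1,-\icplx)$ consistent with the definition \eqref{eq:component_phi_1_phi_2}, and hence the sign relation between $\Kv$ and $\Kv'$. The analytic part, by contrast, is routine.
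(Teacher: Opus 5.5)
Your proposal is correct and is essentially the argument the paper relies on. The paper gives no proof of its own here but cites \cite{fefferman2012honeycomb}, whose proof is this same Lyapunov--Schmidt reduction to a $2\times 2$ system, with the leading matrix evaluated through the rotational symmetry and the conical law read off from the resulting determinant; the bookkeeping you flagged resolves cleanly: $\langle \calR f,\nabla \calR g\rangle = R\,\langle f,\nabla g\rangle$ gives $R w=\overline{\tau}^{\,2}w=\tau w$, and $R\,(1,\icplx)^\transp=\tau\,(1,\icplx)^\transp$, so $w=\overline{\upsilon^{\Kv_\star}}\,(1,\icplx)^\transp$ and the $(1,2)$ entry is $\overline{\upsilon^{\Kv_\star}}\,(\kappa_1+\icplx\kappa_2)$ (your display swaps $\kappa_1+\icplx\kappa_2$ with its conjugate, which is harmless for the eigenvalues $\pm|\upsilon^{\Kv_\star}|\,|\kappa|$), while the sign convention $\upsilon^{\Kv}>0>\upsilon^{\Kv'}$ needs only the phase freedom $\upsilon^{\Kv_\star}\mapsto\euler^{2\icplx\theta}\,\upsilon^{\Kv_\star}$ and is compatible with the basis $\{\calP\Phi^{\Kv}_1,\calP\Phi^{\Kv}_2\}$ at $\Kv'$, for which $\upsilon^{\Kv'}=-\upsilon^{\Kv}$.
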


    \begin{remark}\label{rmk:Dirac_point_K_sufficient}
        \bypierre{Suppose that $\calH^0 = -\Delta + V(\xv)$ has a Dirac point $(\Kv, E_D)$. Since $\calH^0_\Kv = \calH^0_{R \Kv} = \calH^0_{R^2 \Kv}$, it follows that $(R \Kv, E_D)$ and $(R^2 \Kv, E_D)$ are also Dirac points. Moreover, let $\{\Phi^{\Kv}_1, \Phi^{\Kv}_2\}$ denote the associated eigenbasis, in the sense of Proposition \ref{prop:sufficient_conditions_dirac_point}. Since $V$ is even, $(-\Kv, E_D) = (\Kv', E_D)$ is also a Dirac point with associated eigenbasis $\{\calP\, \Phi^{\Kv}_1, \calP\, \Phi^{\Kv}_2\}$.}
    \end{remark}

    \vspace{1\baselineskip} \noindent
    The existence of Dirac points is shown in \cite{fefferman2012honeycomb} for generic honeycomb lattice potentials. 
    \begin{theorem}\label{thm:existence_Dirac_points}
        Let $V$ be a honeycomb lattice potential (see Definition \ref{defi:honeycomb_lattice_potential}) such that
        \begin{equation*}
            \displaystyle
            \frac{1}{|\Omega|}\int_{\Omega} V(\xv)\, \euler^{\vts -\icplx\vts 2\pi\vts (\kv_1 + \kv_2) \cdot \xv}\; d\xv \neq 0.
        \end{equation*} 
        Then there exists a \bypierre{countable and closed set $S \subset \R$} such that for any $\mu \in \R \setminus S$, the operator $-\Delta + \mu\vts V(\xv)$ satisfies the conditions in Proposition \ref{prop:sufficient_conditions_dirac_point}. As a consequence, $-\Delta + \mu\vts V(\xv)$ admits a Dirac point $(\Kv_\star, E_D)$ for $\Kv_\star \in \{\Kv, \Kv'\}$.
    \end{theorem}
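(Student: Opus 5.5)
This is the Fefferman–Weinstein result, and I will follow their continuation argument in the coupling parameter $\mu$. By Proposition \ref{prop:sufficient_conditions_dirac_point}, it suffices to produce, for all $\mu$ outside a countable closed set, a double eigenvalue $E_D(\mu)$ of $\calH^{(\mu)}_{\Kv} := -\Delta + \mu V$ on $L^2_{\Kv}$ that satisfies conditions (i)--(iii).

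The natural vehicle is the $\tau$-eigenspace. Since $[\calR, V] = 0$ at high-symmetry quasi-momenta, $\calH^{(\mu)}_\Kv$ commutes with $\calR$ and therefore preserves $L^2_{\Kv,\tau}$. Moreover, $\calP\calC$ commutes with $\calH^{(\mu)}_\Kv$, since $V$ is real and even, and it maps $L^2_{\Kv,\tau}$ anti-unitarily onto $L^2_{\Kv,\bar\tau}$. Hence any $L^2_{\Kv,\tau}$-eigenvalue $E$ with eigenvector $\Phi_1$ is automatically at least doubly degenerate, with partner $\Phi_2 = \calP\calC\Phi_1 \in L^2_{\Kv,\bar\tau}$. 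The task therefore reduces to three things: (a) $E$ is simple in $L^2_{\Kv,\tau}$; (b) $E$ is not an eigenvalue in $L^2_{\Kv,1}$, so that the kernel is exactly two-dimensional; (c) $\upsilon^\Kv(\mu) \neq 0$.

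I would first establish all three for small $|\mu|$ by perturbation from $\mu = 0$. At $\mu = 0$ the lowest $L^2_\Kv$ eigenvalue $|\Kv|^2$ is triply degenerate, with the plane waves $e^{\icplx \Kv\cdot\xv}$, $e^{\icplx R\Kv\cdot\xv}$, $e^{\icplx R^2\Kv\cdot\xv}$. Their symmetrized combinations give exactly one eigenfunction in each of $L^2_{\Kv,1}$, $L^2_{\Kv,\tau}$, $L^2_{\Kv,\bar\tau}$. First-order degenerate perturbation theory splits this triple by $\mu\bigl(\widehat V_{0,0}+2\widehat V_{1,1}\bigr)$ on the $1$-sector and by $\mu\bigl(\widehat V_{0,0}-\widehat V_{1,1}\bigr)$ on the $\tau,\bar\tau$ sectors, where $\widehat V_{1,1}$ is precisely the Fourier coefficient assumed nonzero. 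So for $0<|\mu|$ small, (a) and (b) hold. Condition (c) holds as well, because at $\mu=0$ one computes directly $\upsilon^\Kv = |\Kv| \neq 0$ from the plane waves, and this quantity depends continuously on $\mu$.

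The global step is to extend to all $\mu$. I would show that the quantities governing (a)--(c) are real-analytic in $\mu$. Concretely, I would use the $\tau$-sector eigenvalue branch $E(\mu)$ together with the determinant-type functions encoding coincidence with $L^2_{\Kv,1}$ eigenvalues and with other $\tau$-eigenvalues. These are built from regularized (Fredholm) determinants of $I - \mu\,V\,(-\Delta - E)^{-1}$ restricted to each symmetry sector, and the velocity $\upsilon^\Kv(\mu)$ is handled in the same way. Along the analytic continuation of the branch, crossings of eigenvalue branches are allowed; the point is to work with analytic branches (Rellich) rather than ordered eigenvalues.

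Each obstruction function is real-analytic on $\R$ and not identically zero, by the small-$\mu$ step. Its zero set is therefore discrete, and the union over all obstructions is countable and closed. This gives the exceptional set $S$, with $0 \in S$ since the triple degeneracy occurs there. The main obstacle is making the analytic continuation of $E(\mu)$ and $\Phi_1(\mu)$ global in $\mu$, rather than merely local, while keeping track of which branch is tracked. I would handle this with Kato–Rellich analytic perturbation theory for the self-adjoint holomorphic family $\mu \mapsto \calH^{(\mu)}_\Kv|_{L^2_{\Kv,\tau}}$. That theory yields globally analytic eigenvalue and eigenvector branches on $\R$, with crossings occurring only at discrete $\mu$.

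Finally, Remark \ref{rmk:Dirac_point_K_sufficient} transfers the result to $\Kv'$. The phase normalization of the Dirac basis is the one already given in Proposition \ref{prop:sufficient_conditions_dirac_point}.
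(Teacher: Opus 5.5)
The paper does not prove this result; it quotes it from \cite{fefferman2012honeycomb}, and your outline follows the same two-step scheme as that proof. First, you perturb off the triply degenerate level $|\Kv|^2$ at zero coupling. There the hypothesis on the Fourier coefficient separates the $L^2_{\Kv,1}$ level from the $L^2_{\Kv,\tau}$, $L^2_{\Kv,\bar\tau}$ levels at first order, and $\upsilon^{\Kv}$ is computed from plane waves. Second, you use real-analyticity in the coupling, so that each failure of (a)--(c) lies in the zero set of an analytic function that is not identically zero.

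The difference is the continuation tool. Fefferman and Weinstein build the continued eigenvalue branch by hand, from an entire renormalized Fredholm determinant whose real zeros are the $L^2_{\Kv,\tau}$-eigenvalues, together with the local structure of its real zero set. You instead invoke Rellich's theorem \cite[Theorem VII.3.9]{kato1995perturbation} for the self-adjoint holomorphic family $-\Delta+\mu V$ of type (A) with compact resolvent on each symmetry sector. That is cleaner. It also gives you an analytic eigenvector branch, which is exactly what makes $\mu\mapsto\upsilon^{\Kv}(\mu)$ analytic.

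With Rellich, the determinant functions in your plan are redundant. Let $E_n$ and $E^{1}_m$ be the Rellich branches on $L^2_{\Kv,\tau}$ and $L^2_{\Kv,1}$, and let $E$ be the branch through $|\Kv|^2$. The obstructions are then simply $E-E_n$, $E-E^{1}_m$ and $\upsilon^{\Kv}$. Non-vanishing of $E-E_n$ is already visible at $\mu=0$, since $|\Kv|^2$ is simple within $L^2_{\Kv,\tau}$.

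One step you assert without justification: a countable union of discrete zero sets is closed only if it is locally finite. Here local finiteness holds because $|E_n'(\mu)|\le\|V\|_\infty$ along analytic branches (Feynman--Hellmann), and likewise for $E^1_m$, while $E_n(0),E^1_m(0)\to\infty$. So on each bounded $\mu$-interval only finitely many branches can meet $E$, and $S$ is in fact discrete.
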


    \noindent
    Following Theorem \ref{thm:existence_Dirac_points} and Remark \ref{rmk:Dirac_point_K_sufficient}, we make the following assumption:
    
    \begin{assumption}[Dirac point]\label{hyp:Dirac_point}
        $V$ is a honeycomb lattice potential in the sense of Definition \ref{defi:honeycomb_lattice_potential}, such that $\calH^0 = -\Delta + V(\xv)$ has a Dirac point $(\Kv_\star, E_D) \in \{\Kv, \Kv'\} \times \R$, and satisfies the conditions in Proposition \ref{prop:sufficient_conditions_dirac_point}. Moreover, following the last part of Proposition \ref{prop:sufficient_conditions_dirac_point}, we always choose the Dirac eigenbasis so that $\upsilon^{\Kv} = -\upsilon^{\Kv'} = \upsilon_D > 0$.
    \end{assumption}

    \subsection{The omnidirectional no-fold condition}\label{sec:no_fold_condition}
    In addition to Assumption \ref{hyp:Dirac_point}, we require the following, so-called \emph{no-fold condition}:
    \begin{assumption}\label{hyp:omnidirectional_no_fold}
        $\calH^0$ satisfies the omnidirectional no-fold condition, that is,
        \begin{equation}\label{eq:omnidirectional_no_fold}
            \displaystyle
            \textnormal{if $\kv \in \R^2$ is such that $E_b (\kv) = E_D$ for $b \geq 1$, then $\kv \in \bbK := (\Kv + \Lambda^*) \cup (\Kv' + \Lambda^*)$.}
        \end{equation}
    \end{assumption}

    \noindent
    \bypierre{Assumption \ref{hyp:omnidirectional_no_fold} ensures that the dispersion functions that attain the Dirac energy, $E_D$, do so at high-symmetry quasi-momenta only. We shall prove in Lemma \ref{lem:dispersion_curves_bounded_away_from_Dirac} that under this assumption, the dispersion functions $E_b (\kv)$ are bounded away from $E_D$ whenever $\kv$ is located outside a neighborhood of $\bbK$.}

    \bypierre{A similar, but weaker assumption was introduced in previous work \cite{fefferman2016edge,lee2019elliptic,drouot2019characterization,drouot2020edge} for rational edges.} The above omnidirectional no-fold condition allows to study irrational edges. It can be verified explicitly that Assumption \ref{hyp:omnidirectional_no_fold} holds for the two-band tight-binding model of graphene, as well as for the honeycomb Schrödinger operator $-\Delta + g^2\, V(\xv)$ in the {\it strong binding regime} ($g \gg 1$). Indeed, in the latter setting it was shown in \cite[Theorem 6.1]{fefferman2018honeycomb} that the lowest two bands, after appropriate scaling, converge uniformly to those of the two-band tight-binding model.

    \subsection{\texorpdfstring{$\calC$}{C}-breaking perturbations and instability of Dirac points}
    \label{sec:conjugation_breaking_perturbations}
    Let $a \in \scrC^\infty (\R^2 / \Lambda; \R)$ be smooth, $\Lambda$--periodic, even, and real-valued, so that $[\calC, a(\xv)] = 0$ and $[\calP, a(\xv)] = 0$. We define
    \begin{equation*}
        \displaystyle
        \calH^\delta := -\nabla \cdot \big(\vts \Id - \delta\, a(\xv)\, \hyperref[eq:def_pauli_matrices]{\sigma_2} \vts\big)\, \nabla + V(\xv), \qquad \DOM(\calH^\delta) := H^2(\R^2).
    \end{equation*}
    Since $\calP\, a\, \sigma_2 = a\, \sigma_2\, \calP$ but $\calC\, a\, \sigma_2 = -a\, \sigma_2\, \calC$, the perturbation induced by $a (\xv)\, \sigma_2$ preserves parity symmetry, while conjugation symmetry is broken: $[\calH^\delta, \calP] = 0$ and $[\calH^\delta, \calC] \neq 0$. Moreover, because $\calC\, \calH^\delta = \calH^{-\delta}\, \calC$, the operators $\calH^\delta$ and $\calH^{-\delta}$ share the same spectrum. 

    In what follows, we shall also assume that
    \begin{assumption}[Nondegeneracy]\label{hyp:non_degeneracy}
    The function $a$ is smooth, real-valued, $\Lambda$--periodic, and even. Moreover,
    \begin{equation}\label{eq:non_degeneracy}
        \vartheta^{\Kv_\star} := \lla\vts \Phi^{\Kv_\star}_1, \nabla \cdot \bigl(\vts a(\xv)\, \sigma_2\, \nabla \Phi^{\Kv_\star}_1 \vts\bigr) \vts\rra_{L^2_{\Kv_\star}} \neq 0,
    \end{equation}
    where $\{\Phi^{\Kv_\star}_1, \Phi^{\Kv_\star}_2\}$ is the Dirac eigenbasis in Part \ref{item:dirac_point_item_2} of Proposition \ref{prop:sufficient_conditions_dirac_point}.
    \end{assumption}

    \noindent
    \bypierre{Note that, since $a(-\xv) = a (\xv)$ and $\Phi^{\Kv'}_j (\xv) = \Phi^{\Kv}_j (-\xv)$ (Remark \ref{rmk:Dirac_point_K_sufficient}), we have $\vartheta^{\Kv} = \vartheta^{\Kv'}$.}

    Under the Dirac point Assumption \ref{hyp:Dirac_point} and the nondegeneracy Assumption \ref{hyp:non_degeneracy}, for a small enough $\delta \in \R \setminus \{0\}$, the $\calC$-breaking perturbation opens a \emph{local} gap in the neighborhood of the Dirac point $(\Kv_\star, E_D)$ (\cite[Section 5]{lee2019elliptic}). More precisely, for all $\kv \in \scrB$ define:
    \begin{equation*}
        \calH^\delta_\kv := -\nabla \cdot \big(\vts \Id - \delta\, a(\xv)\, \sigma_2 \vts\big)\, \nabla + V(\xv), \qquad \DOM(\calH^\delta_\kv) := H^2_\kv\ .
    \end{equation*}
    \bypierre{Then, there exist constants $\veps_0, \delta_0, C > 0$ and coefficients $e^{\delta, \veps}_\pm \in \R$ defined for $0 < \delta < \delta_0$ and $0 < \veps < \veps_0$ with $|e^{\delta, \veps}_\pm| \leq C\, \delta^2$ such that, for any $|\kv - \Kv_\star| \leq \veps$,
    \begin{equation*}
        \spec ( \calH^\delta_{\kv} ) \cap \bigl(E_D - |\delta\, \vartheta^{\Kv_\star}| + e^{\delta, \veps}_-, E_D + |\delta\, \vartheta^{\Kv_\star}| + e^{\delta, \veps}_+\bigr) = \emptyset,
    \end{equation*}
    Furthermore, if the omnidirectional no-fold condition \eqref{eq:omnidirectional_no_fold} is satisfied, then the local gap opened by the $\calC$-breaking perturbation is actually a \emph{global}, spectral gap around $E_D$: In fact, there exist $\delta_0, C > 0$ and coefficients $e^\delta_\pm$ defined for $0 < \delta < \delta_0$ with $|e^\delta_\pm| \leq C\, \delta^2$, such that 
    \begin{equation*}
        \spec ( \calH^\delta ) \cap \bigl(E_D - |\delta\, \vartheta^{\Kv_\star}| + e^\delta_-, E_D + |\delta\, \vartheta^{\Kv_\star}| + e^\delta_+\bigr) = \emptyset,
    \end{equation*}
    The same holds for $\calH^{-\delta}_\kv$ and $\calH^{-\delta}$ respectively.}

    \section{Honeycomb structures with a non-commensurate line defect}\label{sec:honeycomb_structures_line_defect}
    \subsection{Domain wall-modulated Schrödinger operator}\label{sec:dw_modulated_operator}
    We construct a class of operators that spatially interpolate smoothly and slowly between $\calH^\delta$ and $\calH^{-\delta}$ in the direction transverse to an edge $\R\, \vvh_1$, with $\vvh_1 \in \R^2 \setminus \{0\}$. The interpolation is realized via a \emph{domain wall function}.
    \begin{definition}\label{defi:dw}
        A \emph{domain wall function} is a function $\kappa \in \scrC^\infty (\R)$ such that $\kappa' \in L^\infty(\R)$ and $\displaystyle \lim_{\zeta \to \pm \infty} \kappa(\zeta) = \pm 1$.
    \end{definition}
    \noindent
    Let $\kvh_2 \in \R^2 \setminus \{0\}$ be such that $\kvh_2 \cdot \vvh_1 = 0$. Given a domain wall function $\kappa$, in the sense of Definition \ref{defi:dw}, we consider the Schrödinger operator: 
    \begin{equation}\label{eq:def_dw_coefficient}
        \displaystyle
        \begin{array}{r@{\ :=\ }l}
            \calH^\delta_\EDGE & -\nabla \cdot A^\delta (\xv)\, \nabla + V(\xv), \qquad \DOM (\calH^\delta_\EDGE) = H^2 (\R^2),
            \ret
            \textWITH A^\delta (\xv) & \Id - \delta\, \kappa (\delta \kvh_2 \cdot \xv)\, a(\xv)\, \sigma_2.
        \end{array}
    \end{equation}
    Since $\kappa (\zeta) \to \pm 1$ as $\zeta \to \pm \infty$, $\calH^\delta_\EDGE$ asymptotically behaves as $\calH^{\pm \delta}$ far from the edge $\R\, \vvh_1$. We refer to Figure \ref{fig:soft_junction} for a schematic representation of $\calH^\delta_\EDGE$.

    \subsection{Irrational edges}\label{sec:irrational_edges}
    The edge vector $\vvh_1$ can be written as $\vvh_1 = a_1\vts \vvv_1 + b_1\vts \vvv_2$ where $a_1, b_1 \in \R$. Since $\vvh_1 \neq 0$, $a_1$ and $b_1$ cannot both vanish, and  because the roles of $\vvv_1$ and $\vvv_2$ are interchangeable, we can assume, without any loss of generality, that $a_1 \neq 0$. \bypierre{Further, we may normalize $\vvh_1$ by choosing $a_1 = 1$, and set}
    \begin{equation} 
        \vvh_1 := \vvv_1 + r \vvv_2,\quad  \textrm{with}\quad r \in \R.\label{eq:v1-def}
    \end{equation}
    If $r$ is rational, then we call $\R\, \vvh_1$ a \emph{commensurate} or \emph{rational} edge. Two classical examples include the \emph{zigzag edge} ($r = 0$) and the \emph{armchair} edge ($r = 1$). If $r \in \R \setminus \Q$, then we call $\R\, \vvh_1$ a \emph{non-commensurate} or \emph{irrational} edge.

    We may complete $\vvh_1$ with the vectors $\vvh_2$, $\kvh_1$, and $\kvh_2$, to form a Bravais lattice $\Z\vts \vvh_1 + \Z\vts \vvh_2$, associated to the edge, and its dual, $\Z\vts (2\pi \kvh_1) + \Z\vts (2\pi \kvh_2)$:
    \begin{equation}\label{eq:def_edge_vectors}
        \displaystyle
        \vvh_1 := \vvv_1 + r \vvv_2, \quad \vvh_2 := \vvv_2 \textAND \kvh_1 := \kv_1, \quad \kvh_2 := -r \kv_1 + \kv_2. 
    \end{equation}
    These vectors satisfy
    \[ \kvh_n \cdot \vvh_l = \delta_{n l},\quad \textrm{for}\quad  n, l = 1, 2.\]
    However, note that $\Z\vts \vvh_1 + \Z\vts \vvh_2 \neq \Lambda$ and $\Z\vts (2\pi \kvh_1) + \Z\vts (2\pi \kvh_2) \neq \Lambda^*$ for irrational $r$. 

    \subsection{Quasiperiodicity along irrational edges}\label{sec:quasiperiodicity_irrational_edge}
    We assume that $(\vvh_1, \vvh_2, \kvh_1, \kvh_2)$ are given by \eqref{eq:def_edge_vectors} with a slope $r \in \R$. If $r$ is irrational, then the coefficients $(A^\delta, V)$ of the domain wall-modulated potential $\calH^\delta_\EDGE$ introduced in \eqref{eq:def_dw_coefficient} are not periodic in the $\vvh_1$--direction. However, in the spirit of \cite{gerard2011homogenization,blanc2015local}, we show that periodicity can be recovered by extending $(A^\delta, V)$ into higher dimensions. 

    More precisely, $A^\delta$ can be expressed as the restriction of an  extension  $A^\delta_\AUG$, defined in a $3$--dimensional space, which is periodic within a $2$--dimensional plane containing the edge $\R\, \vvh_1$, and similarly for the potential $V$. This is closely related to the \emph{quasiperiodic} behavior of $(A^\delta, V)$ in the $\vvh_1$--direction, as explained in Remark \ref{rmk:quasiperiodic_nature_domain_wall_potential} below.

    We introduce the following extensions of $V$ and $A^\delta$ to functions on a three-dimensional domain:
    \begin{align}
    \spforall (\xv, s) \in \R^2 \times \R, \quad V_\AUG (\xv, s) &:= V(\xv), \nonumber
    \retss 
    A^\delta_\AUG (\xv, s) &:= \Id - \delta\, \kappa\big(\delta \kvh_2 \cdot (\xv + s \vvv_2)\big)\, a(\xv)\, \sigma_2.\label{eq:def_augmented_potential}
    \end{align}
    Then, $A^\delta$ is the restriction of $A^\delta_\AUG$ to the plane $\R^2 \times \{0\}$, {\it i.e.} $A^\delta(\xv) = A^\delta_\AUG (\xv, 0)$. We refer to $A^\delta_\AUG$ as an \emph{augmented} coefficient or an \emph{extension\;/\;lift} of $A^\delta$, and similarly for $V_\AUG$.

    From its expression, $A^\delta_\AUG$ interpolates smoothly and (for $\delta$ small) slowly, via the domain wall $\kappa$, between the functions $(\xv, s) \mapsto \Id + \delta\, a(\xv)\, \sigma_2$ and $(\xv, s) \mapsto \Id - \delta\, a(\xv)\, \sigma_2$, in the direction transverse to the two-dimensional interface
    \begin{equation}
    \Gamma := \left\{(\xv, s) \in \R^2 \times \R \ /\ \kvh_2 \cdot (\xv + s \vvv_2) = 0 \right\}.\label{eq:interface_Gamma}
    \end{equation}
    Introduce the three-dimensional vectors:
    \begin{equation}\label{eq:3D_lattice_vectors}
    \av_1 := \begin{pmatrix}\vvv_1 \\ r\end{pmatrix}, \quad \av_2 := \begin{pmatrix}\vvv_2 \\ -1\end{pmatrix}, \quad \av_\perp := \begin{pmatrix}0_2 \\ 1\end{pmatrix} \ .
    \end{equation}
    The set $\{\av_1, \av_2\}$ spans $\Gamma$, while the vector $\av_\perp$ is transverse to $\Gamma$.

    Our interest in $A^\delta_\AUG$ lies in its \emph{translation invariance} within the interface $\Gamma$. In contrast to $A^\delta$ which is not necessarily translation-invariant along the edge $\R\, \vvh_1$, the relations \eqref{eq:def_edge_vectors} and the $\vvv_1$--periodicity of $a(\cdot)$ imply
    \begin{subequations}\label{eq:augmented_potential_property}
        \begin{equation}\label{eq:augmented_potential_property_1}
            \displaystyle 
            \spforall (\xv, s) \in \R^2 \times \R, \quad A^\delta_\AUG (\xv + \vvv_1, s + r) = A^\delta_\AUG (\xv, s).
        \end{equation}
        On the other hand, the $\vvv_2$--periodicity of $a(\cdot)$ implies that
        \begin{equation}\label{eq:augmented_potential_property_2}
            \displaystyle 
            \spforall (\xv, s) \in \R^2 \times \R, \quad A^\delta_\AUG (\xv + \vvv_2, s - 1) = A^\delta_\AUG (\xv, s).
        \end{equation}
    \end{subequations}
    In other words, $A^\delta_\AUG$ is invariant under translations by the vectors $\av_1, \av_2 \in \R^3$, which span the interface $\Gamma$. The relations (\ref{eq:augmented_potential_property_1}, \ref{eq:augmented_potential_property_2}) are also satisfied by $V_\AUG$.

    If $r = b_1 / a_1$ is rational, then (\ref{eq:augmented_potential_property_1}, \ref{eq:augmented_potential_property_2}) imply that $\xv \mapsto A^\delta_\AUG (\xv, s)$ is $\Z(a_1\vts \vvh_1)$--periodic for any $s \in \R$, in particular for $s = 0$. This is expected for rational edges. For an irrational $r$, (\ref{eq:augmented_potential_property_1}, \ref{eq:augmented_potential_property_2}) encode the more complex, \emph{quasiperiodic}, behavior of $A^\delta$ along $\R\, \vvh_1$.

    \begin{remark}\label{rmk:quasiperiodic_nature_domain_wall_potential}
    \begin{enumerate}[label={$(\alph*).$}, ref={\theremark.$(\alph*)$}, wide=0pt]
        \item\label{rmk:quasiperiodic_nature_domain_wall_potential_a} One can interpret $\{\xv \mapsto\!\vts A^\delta_\AUG (\xv, s), s \in \R\}$ as a family of two-dimensional dislocations of $A^\delta$ in the $\vvh_2$--direction. In fact, for $s \in \R$ fixed, assuming for simplicity that $\kappa (\zeta)={\rm sgn}(\zeta)$, $\xv \mapsto A^\delta_\AUG (\xv, s)$ models a sharp junction, in the direction transverse to the interface $\R\, \vvh_1 - s\vts \vvh_2$, between media modeled by $\xv \mapsto \Id + \delta\, a(\xv)\, \sigma_2$ and $\xv \mapsto \Id - \delta\, a(\xv)\, \sigma_2$. Dislocations may be studied independently from one another if $r$ is rational, but we see from (\ref{eq:augmented_potential_property_1}, \ref{eq:augmented_potential_property_2}) that these dislocations are all coupled if $r$ is irrational.

        \item The existence of a function $A^\delta_\AUG$ with the translation-invariance properties (\ref{eq:augmented_potential_property_1}, \ref{eq:augmented_potential_property_2}) and such that $A^\delta(\xv) = A^\delta_\AUG(\xv, 0)$ is closely related to the quasiperiodicity of $A^\delta$ along $\R\, \vvh_1$. More precisely, a function $q \in \scrC^0(\R)$ is said to be \emph{quasiperiodic} of order $n \geq 1$ if there exist a $\Z^n$--periodic function $Q  \in \scrC^0(\R^n)$ and a vector $\bs{\theta} = (\theta_1, \dots, \theta_n) \in \R^n$ with $\theta_j \neq 0$, such that
        \begin{equation}
        \label{eq:def_quasiperiodic_function}
        \spforall y \in \R, \quad q(y) = Q (y\, \bs{\theta}) := Q (y\vts \theta_1, \dots, y\vts \theta_n).
        \end{equation}
        Examples of quasiperiodic functions include finite sums and products of periodic functions. For a detailed study of quasiperiodic (and more generally of almost periodic) functions, we refer to the monographs \cite{besicovitch1954almost,bohr2018almost,levitan1982almost,fink2006almost}; see also \cite{moser1966theory}.

        \hspace{15pt}
        We claim that for a fixed $\zeta \in \R$, the function $q_\zeta: y \in \R \mapsto A^\delta (y\, \vvh_1 + \zeta\, \vvh_2)$, which represents $A^\delta$ in the $\vvh_1$--direction, is quasiperiodic of order $2$ in the sense of \eqref{eq:def_quasiperiodic_function}. In fact, from $\vvh_1 = \vvv_1 + r \vvv_2$ and $\vvh_2 = \vvv_2$, it follows that
        \begin{equation*}
        q_\zeta (y) = \Id - \delta\, \kappa (\delta\vts \zeta)\, a \big(y\, \vvv_1 + (y\vts r + \zeta)\, \vvv_2\big)\, \sigma_2, \quad \spforall y \in \R.
        \end{equation*}
        In particular, with $\bs{\theta} := (1, r)$, we have $q_\zeta (y) = Q_\zeta (y\, \bs{\theta}) = Q_\zeta (y, y\vts r)$, where
        \begin{equation*}
        Q_\zeta(\tau_1, \tau_2) := \Id - \delta\, \kappa (\delta\vts \zeta)\, a \big(\tau_1\, \vvv_1 + (\tau_2 + \zeta)\, \vvv_2\big)\, \sigma_2, \quad \spforall (\tau_1, \tau_2) \in \R^2.
        \end{equation*}
        Since $a(\cdot)$ is $\Lambda$--periodic, it follows that $Q_\zeta$ is $\Z^2$--periodic.  Hence, $q_\zeta$ is  quasiperiodic in the sense of \eqref{eq:def_quasiperiodic_function}. The link between the augmented functions $(\tau_1, \tau_2, \zeta) \mapsto Q_\zeta(\tau_1, \tau_2)$ and $(\xv, s) \mapsto A^\delta_\AUG (\xv, s)$ follows from the change of variables 
        \begin{equation*}
        (\tau_1, \tau_2, \zeta) \mapsto (\xv, s) := (\tau_1\, \vvv_1 + (\tau_2 + \zeta)\, \vvv_2,\, r\vts \tau_1 - \tau_2).
        \end{equation*}
    \end{enumerate}   
    \end{remark}

    \subsection{Formulation of the eigenvalue problem for irrational edges}\label{sec:formulation_eigenvalue_problem_irrational_edges}
    Edge states are nontrivial solutions of the eigenvalue equation $\calH^\delta_\EDGE\, \Psi = E\, \Psi$ that are oscillatory (plane wave-like) along the edge $\R\, \vvh_1$ and decaying in the transverse direction. We now specify the behavior along $\R\, \vvh_1$ of edge states in the irrational case, by exploiting the quasiperiodic nature of $(A^\delta, V)$. The relations $A^\delta (\xv) = A^\delta_\AUG (\xv, 0)$ and $V (\xv) = V_\AUG (\xv, 0)$ suggest looking for edge states in the form 
    \begin{equation}\label{eq:edge_states_ansatz}
    \displaystyle
    \Psi(\xv) = \Psi_\AUG (\xv, s)\big|_{s=0}, \textFOR \xv \in \R^2,
    \end{equation}
    where $\Psi_\AUG (\xv, s)$ satisfies the degenerate elliptic equation
    \begin{equation*}
    \big( -\nabla_\xv \cdot A^\delta_\AUG (\xv, s)\, \nabla_\xv + V_\AUG (\xv, s) \vts\big)\, \Psi_\AUG(\xv, s) = E\, \Psi_\AUG(\xv, s), \quad (\xv, s) \in \R^2 \times \R.
    \end{equation*}
    Moreover, we require decay in any direction transverse to the interface $\Gamma$ defined by \eqref{eq:interface_Gamma}. For instance, choosing the vector $\av_\perp$ defined by \eqref{eq:3D_lattice_vectors}, we may impose:
    \begin{equation*}
        \Psi_\AUG(\xv, s) \to 0, \quad |s| \to +\infty.
    \end{equation*}
    Finally, associated with the periodicity properties (\ref{eq:augmented_potential_property_1}, \ref{eq:augmented_potential_property_2}) of $(A^\delta_\AUG, V_\AUG)$ along the two-dimensional plane $\Gamma$, we introduce,  for parallel quasi-momentum pairs,  $(\kpar, \kperp) \in [-\pi, \pi)^2$ the pseudo-periodic boundary conditions within $\Gamma$:
    \begin{equation*}
    \displaystyle
    \begin{array}{r@{\ =\ }l@{\quad}l}
        \Psi_\AUG(\xv + \vvv_1, s + r) & \euler^{\vts \icplx\vts \kpar}\, \Psi_\AUG(\xv, s), & (\xv, s) \in \R^2 \times \R,
        \rets
        \Psi_\AUG(\xv + \vvv_2, s - 1) & \euler^{\vts \icplx\vts \kperp}\, \Psi_\AUG(\xv, s), & (\xv, s) \in \R^2 \times \R,
    \end{array}
    \end{equation*}
    which describe the oscillatory, or more precisely, the quasiperiodic behavior of $\Psi$ along $\R\, \vvh_1$. 

    {\it Note that for $E \in \R$ and for a fixed pair $(\kpar, \kperp) \in [-\pi, \pi)^2$, the above equations are satisfied by $\Psi_\AUG$ if and only if ${(\xv, s) \mapsto \euler^{\vts \icplx\vts \kperp\vts s}\, \Psi_\AUG (\xv, s)}$ satisfies the same equations for the quasi-momenta $(\kpar + r\vts \kperp, 0)$. Therefore, we may take $\kperp = 0$ with no loss of generality.} %

    This leads to the \emph{augmented edge state eigenvalue problem with parallel quasi-momentum $\kpar \in [-\pi, \pi)$}:
    \begin{equation}\label{eq:evp_irrational_edges}
        \left\{
            \begin{array}{r@{\ }l}
            \displaystyle
            \big( -\nabla_\xv \cdot A^\delta_\AUG (\xv, s)\, \nabla_\xv + V_\AUG (\xv, s) \vts\big)\, &\Psi_\AUG(\xv, s) = E\, \Psi_\AUG(\xv, s),
            \rets
            \Psi_\AUG(\xv + \vvv_1, s + r) &= \euler^{\vts \icplx\vts \kpar}\, \Psi_\AUG(\xv, s),
            \rets
            \Psi_\AUG(\xv + \vvv_2, s - 1) &= \Psi_\AUG(\xv, s),
            \rets
            \Psi_\AUG(\xv, s) \to 0, \quad{} & |s| \to +\infty. 
            \end{array}
        \right.
    \end{equation}
    Formally, if an augmented edge state, $\Psi_\AUG(\xv, s)$, exists and is continuous with respect to the augmented variable $s$, then its restriction $\Psi(\xv) := \Psi_\AUG (\xv, 0)$ is formally an edge state. \bypierre{First, such a restriction cannot vanish: in fact, note that
    \begin{equation*}
        \spforall n_1, n_2 \in \Z, \quad \Psi (\xv + n_1\vts \vvv_1 + n_2\vts \vvv_2) = \Psi_\AUG (\xv + n_1\vts \vvv_1 + n_2\vts \vvv_2, 0) = \euler^{\vts \icplx\vts n_1\vts \kpar}\, \Psi_\AUG (\xv, n_2 - n_1\vts r),
    \end{equation*}
    where we used the boundary conditions in \eqref{eq:evp_irrational_edges}. Therefore, if $\Psi$ was zero, then we would obtain $\Psi_\AUG (\xv, n_2 - n_1\vts r) = 0$ for $n_1, n_2 \in \Z$, which implies $\Psi_\AUG (\xv, s) = 0$ for $(\xv, s) \in \R^2 \times \R$, by density of $\Z + r\vts \Z$ in $\R$ for irrational $r$ and by continuity of $\Psi_\AUG$ in $s$. This contradicts the assumption that $\Psi_\AUG \neq 0$. Moreover, $(\calH^\delta_\EDGE - E)\, \Psi = 0$ and $\xv \mapsto \euler^{-\icplx\vts \kpar\vts (\kvh_1 \cdot \xv)}\, \Psi (\xv)$ is quasiperiodic along the edge $\R\, \vvh_1$ (Remark \ref{rmk:quasiperiodic_nature_domain_wall_potential}). Finally, $\Psi$ decays in the transverse ($\vvh_2 = \vvv_2$) direction: In fact, writing $\zeta = t + n$ where $t \in [0, 1)$ and $n = \lfloor \zeta \rfloor \in \Z$, the second of the boundary conditions in \eqref{eq:evp_irrational_edges} implies $\Psi(\xv + \zeta\vts \vvh_2) = \Psi_\AUG(\xv + (t + n)\vts \vvv_2, 0) = \Psi_\AUG(\xv + t\vts \vvv_2, n) $. Therefore, if the decay property in \eqref{eq:evp_irrational_edges} is uniform in $\xv$, then for $\yrm_1 \in \R$, 
    \begin{equation*}
        |\Psi (\yrm_1\vts \vvh_1 + \zeta\vts \vvh_2)| \leq \sup_{t \in [0, 1)} |\Psi_\AUG (\yrm_1\vts \vvh_1 + t\vts \vvh_2, n)| \to 0 \quad \textnormal{as $|\zeta| \to +\infty$}.
    \end{equation*}}

    \begin{remark}
    \label{rmk:elliptically_degenerate_operator}
    An important feature of the augmented eigenvalue problem \eqref{eq:evp_irrational_edges} is that the principal part of the operator $-\nabla_\xv \cdot A^\delta_\AUG\, \nabla_\xv + V_\AUG$ is degenerate elliptic; it does not involve any derivative with respect to $s$, in contrast to a classical Schrödinger operator, such as: $-\nabla_\xv \cdot A^\delta_\AUG\, \nabla_\xv - \partial^2_s + V_\AUG$.%
    \end{remark}

    \section{Multiscale expansion and effective Dirac operators}\label{sec:ansatz_and_multi_scale_analysis}
    \noindent
    In this section, we extend the procedure in \cite{fefferman2016edge,fefferman2017topologically,lee2019elliptic,drouot2020edge} and present a formal multiscale expansion of solutions of the augmented eigenvalue problem \eqref{eq:evp_irrational_edges}. The approximate edge states emerging from our asymptotic analysis suggest that, for irrational edges, there are infinitely many edge state curves bifurcating from the Dirac point  for $0<\delta\ll 1$. In contrast to earlier studies of edge states for rational edges, we find that these bifurcations are controlled by an \emph{infinite family} of Dirac effective operators. The multiple scale expansion is summarized in Section \ref{sec:recap_multiscale}.

    \subsection{The ansatz}
    We are interested in constructing approximate solutions $\Psi_\AUG$ to the augmented eigenvalue problem \eqref{eq:evp_irrational_edges} which are of wave packet type, and in a sense, spectrally concentrated near $\Kv$. Thus, we take
    \begin{equation*}
    \kpar = \Kv \cdot \vvh_1 + \delta \mu = (\Kv + \delta \mu \kvh_1) \cdot \vvh_1, \textWITH (\mu, \delta) \in \R \times \R_+, \quad 0 < \delta \ll 1.
    \end{equation*}
    The multiple scale structure of $A^\delta_\AUG$ (see \eqref{eq:def_augmented_potential}) suggests that we seek $\Psi_\AUG$ as a function of decoupled variables: a pair of \emph{fast} variables $(\xv, s)$ associated to the augmented bulk honeycomb medium, and a \emph{slow} variable $\zeta = \delta \kvh_2 \cdot (\xv + s \vvv_2)$ associated with the slow variation of the domain wall. Thus we write
    \[
    \Psi_\AUG (\xv, s) = \psi_\AUG (\xv, s, \zeta),\quad \zeta = \delta \kvh_2 \cdot (\xv + s \vvv_2).
    \] 
    Substituting this ansatz into the eigenvalue problem  \eqref{eq:evp_irrational_edges}, and applying the derivation rule
    \begin{equation*}
    \nabla_\xv \Psi_\AUG (\xv, s) = \big(\vts \nabla_\xv + \icplx\vts \delta\vts \kvh_2\, D_\zeta\vts \big)\, \psi_\AUG (\xv, s, \zeta ) \Big|_{\zeta = \delta \kvh_2 \cdot (\xv + s \vvv_2)}, \textWHERE\  D_\zeta := -\icplx\, \partial_\zeta\ ,
    \end{equation*}
    we find that $\psi_\AUG:(\xv, s, \zeta) \in \R^2 \times \R \times \R\mapsto\mathbb C$ satisfies an eigenvalue problem  in the extended set of variables:
    \begin{subequations}\label{eq:evp_decoupled_scales}
        \begin{empheq}[left={\!\!\empheqlbrace }]{align}
            \displaystyle\Big[\! -\big(\vts \nabla_\xv + \icplx\vts \delta\vts \kvh_2\vts D_\zeta\vts \big) \cdot \big(\vts \Id - \delta\vts \kappa(\zeta)\vts a(\xv)\vts \sigma_2 \vts\big)& \big(\vts \nabla_\xv + \icplx\vts \delta\vts \kvh_2\vts D_\zeta\vts \big) + V(\xv) - E \Big]\, \psi_\AUG = 0, \label{eq:evp_decoupled_scales_1}
            \\
            \psi_\AUG(\xv + \vvv_1, s + r, \zeta) &= \euler^{\vts \icplx\vts (\Kv + \delta \mu \kvh_1) \cdot \vvh_1}\, \psi_\AUG (\xv, s, \zeta),\label{eq:evp_decoupled_scales_2}
            \retss
            \psi_\AUG(\xv + \vvv_2, s - 1, \zeta) &= \psi_\AUG (\xv, s, \zeta),\label{eq:evp_decoupled_scales_3}
            \retss
            \psi_\AUG(\xv, s, \zeta) & \to 0, \quad |\zeta| \to +\infty.
        \end{empheq}
    \end{subequations}
    If $\psi_\AUG(\xv,s,\zeta)$ is a solution of \eqref{eq:evp_decoupled_scales}, then $\smash{\psi_\AUG(\xv,s,\zeta) \big|_{\zeta = \delta \kvh_2 \cdot (\xv + s \vvv_2)}}$ is a solution of the eigenvalue problem \eqref{eq:evp_irrational_edges}.

    Let $\Kv_\star \in \{\Kv, \Kv'\}$. Since our goal is to construct $\psi_\AUG$ perturbatively around the Dirac point $(\Kv_\star, E_D)$, we look for a solution of the form
    \begin{equation}\label{eq:ansatz_solution_decoupled_variables}
    \psi_\AUG (\xv, s, \zeta) = \exp \Big[\vts\icplx\vts \big(\vts \delta\, (\mu + \gammatilde)\, \kvh_1 \cdot \xv + (\lvar + \Kv \cdot \vvv_2)\vts s \vts\big)\vts\Big]\, \psi (\xv, \zeta),
    \end{equation} 
    where
    \begin{equation}
    \xv \mapsto \psi(\xv, \zeta) \in L^2_{\Kv_\star}, \quad \zeta \mapsto \psi (\xv, \zeta) \in L^2(\R), \quad |\gammatilde| \leq 1, \quad \lvar \in \R.\label{eq:psi_x_zeta}
    \end{equation}

    \begin{remark}[On the Ansatz \eqref{eq:ansatz_solution_decoupled_variables}
    for $\psi_\AUG (\xv, s, \zeta)$]
    Note  that $\xv\mapsto \psi_\AUG(\xv,s,\zeta)$ is formally a wave packet of bandwidth $\delta$; it is spectrally localized to an $\calO(\delta)$--neighborhood of the Dirac point ($\Kv_\star,E_D)$. Here, the exponential dependence in $s$ is motivated by the fact that the eigenvalue problem \eqref{eq:evp_decoupled_scales_1} is invariant under translations in the $s$ variable. Finally the parameters $\tilde\gamma$ and $\lambda$ will be chosen so that the conditions \eqref{eq:evp_decoupled_scales_2}--\eqref{eq:evp_decoupled_scales_3} and 
    \eqref{eq:psi_x_zeta} are consistent. Just below, we'll impose this consistency and find that  $\delta\tilde\gamma$ and $\lambda$ are ``quantized''; see the relations \eqref{eq:gamma_Iandlambda_I} below.
    We'll then, in Section \ref{sec:multi_scale_procedure}, formulate the PDE problem for $\psi(\xv,\zeta)$ and, for admissible $\tilde\gamma$ and $\lambda$, derive an asymptotic expansion of its solution in powers of $\delta$.
    \end{remark}

    \noindent
    Let us now determine the pairs $(\gammatilde, \lvar) \in \R \times \R$ for which the pseudo-periodic boundary conditions in \eqref{eq:evp_decoupled_scales} and \eqref{eq:psi_x_zeta} satisfied by $\psi$ and $\psi_\AUG$ are compatible. Combining \eqref{eq:evp_decoupled_scales_2} with the relation $\psi (\xv + \vvv_1, \zeta) = \euler^{\vts \icplx\vts \Kv_\star \cdot \vvv_1}\, \psi(\xv, \zeta)$, and \eqref{eq:evp_decoupled_scales_3} with the relation $\psi (\xv + \vvv_2, \zeta) = \euler^{\vts \icplx\vts \Kv_\star \cdot \vvv_2}\, \psi(\xv, \zeta)$, we obtain:
    \begin{equation}\label{eq:corrector_compatibility_relations}
    \left\{
        \begin{array}{r@{\quad}l}
        \Kv \cdot \vvh_1 = \delta\, \gammatilde + (\lvar + \Kv \cdot \vvv_2)\vts r + \Kv_\star \cdot \vvv_1 + 2\pi n & n \in \Z,
        \retss
        0 = - \lvar - \Kv \cdot \vvv_2 + \Kv_\star \cdot \vvv_2 + 2\pi m & m \in \Z.
        \end{array}
    \right.
    \end{equation}
    Here, we have also used that $\kvh_1 \cdot \vvv_1 = \kv_1 \cdot \vvv_1 = 1$ and $\kvh_1 \cdot \vvv_2 = \kv_1 \cdot \vvv_2 = 0$. %
    The second equation in \eqref{eq:corrector_compatibility_relations} determines $\lvar$. Substituting this expression into the first equation, and using the relation $\vvv_1 + r \vvv_2 = \vvh_1$, we get
    \begin{equation}\label{eq:corrector_second_relation}
    \delta\, \gammatilde = (\Kv - \Kv_\star) \cdot \vvh_1  - 2\pi (n + m r), \quad (m, n) \in \Z^2.
    \end{equation}
    We seek integers $m,n$ and $\delta\, \gammatilde \in [-\pi,\pi)$ satisfying  \eqref{eq:corrector_second_relation}. The latter constraint implies
    \begin{equation*}
    \begin{alignedat}{3}
        \qquad&   \delta\, \gammatilde\ +\pi = (\Kv - \Kv_\star) \cdot \vvh_1  - 2\pi (n + m r) + \pi &&\in [0,2\pi)
        \\
        \iff\quad&   
        \frac{\delta\, \gammatilde}{2\pi}\ +\frac{1}{2} = \frac{(\Kv - \Kv_\star)\cdot \vvh_1}{2\pi}   - (n + m r) + \frac{1}{2} &&\in[0,1)
        \\
        \iff\quad&   \frac{\delta\, \gammatilde}{2\pi}\ + \frac{1}{2}= \modulo{ 
        \frac{(\Kv - \Kv_\star) \cdot \vvh_1}{2\pi}  -  m r + \frac{1}{2}} &&\in [0,1)
        \retss
        \iff\quad&  \delta\, \gammatilde  = 2\pi\ \modulo{ 
        \frac{(\Kv - \Kv_\star)\cdot \vvh_1}{2\pi}   -  m r + \frac{1}{2}}   - \pi  &&\in [-\pi,\pi),
    \end{alignedat}
    \end{equation*}
    which is independent of $n$.
    Here, $\modulo{x} := x - \lfloor x \rfloor \in [0, 1)$ is the fractional part of $x \in \R$.

    Let 
    \[ \bbL := \{\Kv, \Kv'\} \times \Z\]
    and choose any $\iiv = (\Kv_\star, m) \in \bbL$. Then, the pair $(\gammatilde_\iiv, \lvar_\iiv)$ given by 
    \begin{subequations}\label{eq:gamma_Iandlambda_I}
        \begin{align}
            \delta\, \gammatilde^\delta_\iiv = \gamma_\iiv &:=  2\pi\ \modulo{\frac{(\Kv - \Kv_\star)\cdot \vvh_1}{2\pi}   -  m r + \frac{1}{2}}   - \pi,
            \retss
            \lvar_\iiv &:= (\Kv_\star - \Kv) \cdot \vvv_2 + 2\pi m,
        \end{align}
    \end{subequations}
    satisfies the two compatibility relations in \eqref{eq:corrector_compatibility_relations}.  Finally, to ensure that $|\gammatilde^\delta_\iiv| = \delta^{-1}\, |\gamma_\iiv| \leq 1$, we consider indices $\iiv$ that belong to $\bbL(\delta)$, where 
    \begin{equation}\label{eq:def_L_veps}
        \spforall \veps > 0, \quad \bbL(\veps) := \big\{\vts \iiv \in \bbL,\ \ |\gamma_\iiv| \leq \veps \vts\big\}.
    \end{equation}
    The objects $\lvar_\iiv$, $\gamma_\iiv$, $\bbL$ and $\bbL(\delta)$ naturally arise and play an important role  in Section \ref{sec:veps_neighborhood}. 

    \begin{remark}[The set $\bbL(\delta)$ for rational and irrational edges]
    \label{remark:bbL-rational_v_irrational}
    As shown in Proposition \ref{prop:veps_neighborhood_K_rational}, the set $\bbL(\delta)$ is quite explicit for $r$ rational and $\delta$ small enough. For instance, $\bbL(\delta) = \{\Kv\} \times \Z$ for $r = 0$, and $\bbL(\delta) = \{\Kv, \Kv'\} \times \Z$ for $r = 1$; see Figure \ref{fig:broken_quasi_momentum_line}. Moreover, for rational edges, the sequence $(\gamma_\iiv)_{\iiv \in \bbL(\delta)}$ takes a finite number of values including $0$, and so $\gamma_\iiv = 0$ for $\iiv \in \bbL(\delta)$, provided that $\delta$ is small enough (Proposition \ref{prop:veps_neighborhood_K_rational}). On the other hand, if $r$ is irrational, $(\gamma_\iiv)_{\iiv \in \bbL}$ is dense in $[-\pi, \pi]$, so one can extract a subsequence which tends to $0$. In particular, $\bbL(\delta)$ is a non-empty, nontrivial, and countable set.
    \end{remark} 

    \medskip\noindent
    In the following, for any fixed index $\iiv \in \bbL(\delta)$, we assume $\psi_\AUG (\xv, s, \zeta)$ to be of the form \eqref{eq:ansatz_solution_decoupled_variables}, where $\gammatilde \equiv \gammatilde^\delta_\iiv$ and $\lvar=\lvar_\iiv$ are given by \eqref{eq:gamma_Iandlambda_I}.

    \subsection{Multiscale expansion}\label{sec:multi_scale_procedure}~
    Fix $\iiv = (\Kv_\star, m) \in \bbL (\delta)$, so that $|\gammatilde^\delta_\iiv| \leq 1$. Define 
    \begin{equation}\label{eq:diff_op_slow_variable}
    \Upsilon^\delta_\iiv (\mu; D_\zeta) := \icplx\, \big(\vts \kvh_2\, D_\zeta + (\vts \mu + \gammatilde^\delta_\iiv \vts)\, \kvh_1 \vts\big) \qquad (\textnormal{with} \quad D_\zeta := -\icplx\, \partial_\zeta).
    \end{equation}
    Substituting the ansatz (\ref{eq:ansatz_solution_decoupled_variables}, \ref{eq:psi_x_zeta}, \ref{eq:gamma_Iandlambda_I}) into \eqref{eq:evp_decoupled_scales}, we find that $(E, \psi) \equiv (E_\iiv, \psi_\iiv)$ satisfies the problem
    \begin{equation*}
    \left\{
        \begin{array}{r@{\ }l}
        \big( -\Delta_\xv + V(\xv) + \delta\, \scrG^{(1)} + \delta^2\, \scrG^{(2)} + \delta^3\, \scrG^{(3)} \big)\,\psi_\iiv(\xv, \zeta) &= E_\iiv\, \psi_\iiv(\xv, \zeta), \quad (\xv, \zeta) \in \R^2 \times \R,
        \ret
        \xv \mapsto \psi_\iiv(\xv, \zeta) \in L^2_{\Kv_\star}, \quad \zeta \mapsto \psi_\iiv(\xv, \zeta) &\in L^2(\R),
        \end{array}
    \right.
    \end{equation*}
    where
    \begin{align*}
    \scrG^{(1)} &:= -\nabla_\xv \cdot \big(\vts 2\, \Upsilon^\delta_\iiv (\mu; D_\zeta) - \kappa (\zeta)\, a(\xv)\, \sigma_2\, \nabla_\xv \vts\big),
        \ret
        \scrG^{(2)} &:= \nabla_\xv \cdot \kappa (\zeta)\, a(\xv)\, \sigma_2\, \Upsilon^\delta_\iiv (\mu; D_\zeta) + \Upsilon^\delta_\iiv (\mu; D_\zeta) \cdot \kappa (\zeta)\, a(\xv)\, \sigma_2\, \nabla_\xv - \Upsilon^\delta_\iiv (\mu; D_\zeta) \cdot \Upsilon^\delta_\iiv (\mu; D_\zeta),
        \ret
        \scrG^{(3)} &:= \Upsilon^\delta_\iiv (\mu; D_\zeta) \cdot \kappa (\zeta)\, a(\xv)\, \sigma_2\, \Upsilon^\delta_\iiv (\mu; D_\zeta).
    \end{align*}
    We seek a multiscale expansion of the solution, as in \cite{fefferman2016edge,fefferman2017topologically,lee2019elliptic,drouot2020edge}. In particular,  we expand $(E_\iiv, \psi_\iiv)$ in powers of $\delta$:
    \begin{equation*}
    \left\{
        \begin{array}{r@{\ =\ }l@{\qquad}r@{\ \equiv\ }l}
        E_\iiv & E_D + \delta\vts E^{(1)} + \delta^2\vts E^{(2)} + \dots; & E^{(n)} & E^{(n)}_\iiv (\mu),
        \rets
        \psi_\iiv & \psi^{(0)} + \delta\, \psi^{(1)} + \delta^2\, \psi^{(2)} + \dots; & \psi^{(n)}(\xv, \zeta) & \psi^{(n)}_\iiv (\xv, \zeta; \mu).
        \end{array}
    \right.
    \end{equation*}
    By substituting this expansion in the PDE satisfied by $\psi_\iiv$, and by equating terms of same order in powers of $\delta$, we obtain a hierarchy of equations satisfied by $\psi^{(n)}$, $n \geq 0$\footnote{In the multiscale expansion, $\gammatilde^\delta_\iiv$ is treated as an $\calO(1)$ term, even though its actual order may be higher.}. Each of these equations is completed with the boundary conditions
    \begin{equation*}
    \displaystyle
    \spforall n \geq 0, \quad \xv \mapsto \psi^{(n)}(\xv, \zeta) \in L^2_{\Kv_\star}, \quad \zeta \mapsto \psi^{(n)}(\xv, \zeta) \in L^2(\R),
    \end{equation*}
    which express $\Kv_\star$--pseudo-periodicity in $\xv$ and decay in $\zeta$ respectively.

    At order $\delta^0$ and for a fixed $\zeta \in \R$, one finds that $\psi^{(0)}(\cdot\,, \zeta)$ satisfies
    \begin{equation*}
    (\calH^0_{\Kv_\star} - E_D)\, \psi^{(0)} (\cdot\,, \zeta) = 0.
    \end{equation*}
    Since the kernel of $\calH^0_{\Kv_\star} - E_D$ is a two-dimensional space spanned by the pair $\{\Phi^{\Kv_\star}_1, \Phi^{\Kv_\star}_2\}$ of eigenfunctions associated to $(\Kv_\star, E_D)$ (Condition \ref{item:dirac_point_item_2} in Proposition \ref{prop:sufficient_conditions_dirac_point}), we obtain
    \begin{align}
    \psi^{(0)}(\xv, \zeta) &= \alpha_1 (\zeta)\, \Phi^{\Kv_\star}_1(\xv) + \alpha_2 (\zeta)\, \Phi^{\Kv_\star}_2(\xv)\nonumber
    \retss
    &= \Phi^{\Kv_\star} (\xv)^\transp \alpha(\zeta), \textWITH \Phi^{\Kv_\star} = (\Phi^{\Kv_\star}_1, \Phi^{\Kv_\star}_2)^\transp, \label{eq:epxression_Psi_0}
    \end{align}
    and where $\alpha = (\alpha_1, \alpha_2)^\transp \in L^2(\R; \C^2)$ remains to be determined.

    At order $\delta^1$, one finds that $\psi^{(1)} (\cdot\,, \zeta)$ satisfies
    \begin{equation}\label{eq:bvp_psi1}
    ( \calH^0_{\Kv_\star} - E_D )\, \psi^{(1)} (\cdot\,, \zeta) = (- \scrG^{(1)} + E^{(1)} )\, \psi^{(0)} (\cdot\,, \zeta).
    \end{equation}
    For $\zeta \in \R$ fixed, \eqref{eq:bvp_psi1} has a solution in $L^2_{\Kv_\star}$ only if its right-hand side is $L^2_{\Kv_\star}$--orthogonal to the Dirac basis functions $\Phi^{\Kv_\star}_l$, that is,
    \begin{equation*}
    \big\langle \Phi^{\Kv_\star}_{l},\, (- \scrG^{(1)} + E^{(1)} )\, \psi^{(0)} (\cdot\,, \zeta) \big\rangle_{L^2_{\Kv_\star}} = 0, \quad l = 1, 2.
    \end{equation*}
    These solvability conditions, combined with the form \eqref{eq:epxression_Psi_0} of $\psi^{(0)}$ and the definition of $\scrG^{(1)}$, lead to the eigenvalue problem: \emph{find $(E^{(1)}, \alpha) \in \C \times L^2(\R; \C^2)$ such that for $l = 1, 2$,}
    \begin{multline}\label{eq:effective_Dirac_1}
    \displaystyle
    \sum_{j = 1,2} \big\langle \Phi^{\Kv_\star}_{l}, -2\vts \icplx\vts \kvh_2 \cdot \nabla \Phi^{\Kv_\star}_j \big\rangle_{L^2_{\Kv_\star}} (D_\zeta\vts \alpha_j)\vts (\zeta) + %
    (\mu + \gammatilde^\delta_\iiv)\vts \sum_{j = 1,2} \big\langle \Phi^{\Kv_\star}_{l}, -2 \icplx \kvh_1 \cdot \nabla \Phi^{\Kv_\star}_j \big\rangle_{L^2_{\Kv_\star}} \alpha_j (\zeta)
    \\
    + \kappa(\zeta) \sum_{j = 1,2} \big\langle \Phi^{\Kv_\star}_l, \nabla \cdot \big( a(\xv)\, \sigma_2\, \nabla \Phi^{\Kv_\star}_j \big) \big\rangle_{L^2_{\Kv_\star}} \alpha_j (\zeta) -%
    E^{(1)}\, \alpha_{l} (\zeta) = 0.
    \end{multline}
    We achieve a simplification of  \eqref{eq:effective_Dirac_1} using the following propositions.

    \begin{proposition}[\texorpdfstring{\cite[Eq. (7.28)--(7.29)]{fefferman2014wave}}{\cite{fefferman2014wave}}]\label{prop:expression_Phi_l_nabla_Phi_j}
    For $\Kv_\star \in \{\Kv, \Kv'\}$ and $\modK = (\modK^{(1)}, \modK^{(2)}) \in \R^2$, let 
    \begin{equation}\label{eq:def_twisted_Pauli_matrix_1}
        \sigma^{\Kv_\star} (\modK)  := \displaystyle
        \begin{pmatrix}
        \big\langle \Phi^{\Kv_\star}_1, -2\icplx\vts \modK \cdot \nabla \Phi^{\Kv_\star}_1 \big\rangle_{L^2_{\Kv_\star}} & %
        \big\langle \Phi^{\Kv_\star}_1, -2\icplx\vts \modK \cdot \nabla \Phi^{\Kv_\star}_2 \big\rangle_{L^2_{\Kv_\star}} \rets %
        \big\langle \Phi^{\Kv_\star}_2, -2\icplx\vts \modK \cdot \nabla \Phi^{\Kv_\star}_1 \big\rangle_{L^2_{\Kv_\star}} & %
        \big\langle \Phi^{\Kv_\star}_2, -2\icplx\vts \modK \cdot \nabla \Phi^{\Kv_\star}_2 \big\rangle_{L^2_{\Kv_\star}}
        \end{pmatrix}.%
    \end{equation}
    Then,  
    \begin{equation}
        \sigma^{\Kv_\star} (\modK) = \upsilon^{\Kv_\star}\, 
        \begin{pmatrix}
        0 & \modK^{(1)} + \icplx\vts \modK^{(2)}
        \\
        \modK^{(1)} - \icplx\vts \modK^{(2)} & 0
        \end{pmatrix} =
        \upsilon^{\Kv_\star}\, \big(\vts \modK^{(1)}\sigma_1 - \modK^{(2)}\sigma_2 \vts\big). \label{eq:def_twisted_Pauli_matrix_2}
    \end{equation}
    Here, $\upsilon^{\Kv} = \upsilon_D$ and $\upsilon^{\Kv'} = -\upsilon_D$, where $\upsilon_D > 0$ is the Dirac velocity \eqref{eq:conical_behavior_Dirac}; see Condition \ref{item:dirac_point_item_3} in Proposition \ref{prop:sufficient_conditions_dirac_point}.
    \end{proposition}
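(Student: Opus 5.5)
The plan is to compute the four entries of $\sigma^{\Kv_\star}(\modK)$ using two facts: the rotational symmetry $\calR$, and the self-adjointness of $-2\icplx\,\modK\cdot\nabla$ on $L^2_{\Kv_\star}$ (valid for real $\modK$). By linearity in $\modK$, write each entry as $\modK\cdot \wv_{lj}$, where $\wv_{lj} := \langle \Phi^{\Kv_\star}_l, -2\icplx\nabla\Phi^{\Kv_\star}_j\rangle_{L^2_{\Kv_\star}}\in\C^2$. So it suffices to determine the four vectors $\wv_{lj}$.

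\emph{Step 1 (rotation covariance).} Since $\calR$ is unitary on $L^2_{\Kv_\star}$ (it preserves $L^2_{\Kv_\star}$ because $\Kv_\star$ is a vertex of $\scrB$), and $\nabla(\calR f)(\xv) = R\,(\nabla f)(R^*\xv)$, we get
\begin{equation*}
\langle \calR\Phi_l, -2\icplx\nabla \calR\Phi_j\rangle_{L^2_{\Kv_\star}} = R\,\wv_{lj}.
\end{equation*}
Using $\calR\Phi^{\Kv_\star}_1=\tau\Phi^{\Kv_\star}_1$ and $\calR\Phi^{\Kv_\star}_2=\overline{\tau}\Phi^{\Kv_\star}_2$, the left-hand side equals $\overline{\tau_l}\,\tau_j\,\wv_{lj}$, with $\tau_1=\tau$ and $\tau_2=\overline\tau$. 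This yields three cases:
\begin{itemize}
\item For $l=j$, $R\wv_{ll}=\wv_{ll}$. Since $R$ (rotation by $2\pi/3$) has no eigenvalue $1$ on $\C^2$, the diagonal entries vanish.
\item For $(l,j)=(1,2)$, $R\wv_{12}=\overline{\tau}^2\wv_{12}=\tau\,\wv_{12}$, so $\wv_{12}$ lies in the $\tau$-eigenline of $R$.
\item For $(l,j)=(2,1)$, $\wv_{21}$ lies in the $\overline\tau$-eigenline.
\end{itemize}
Computing with the explicit matrix \eqref{eq:rotation_matrix_2pi_3}, these eigenlines are spanned by $(1,\icplx)^\transp$ and $(1,-\icplx)^\transp$ respectively. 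I will check which is which directly, since a sign slip here would flip $\sigma_2$ to $-\sigma_2$.

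\emph{Step 2 (hermiticity and identification of the constant).} Self-adjointness of $-2\icplx\,\modK\cdot\nabla$ for real $\modK$ gives $\modK\cdot\wv_{21}=\overline{\modK\cdot\wv_{12}}$ for all real $\modK$, hence $\wv_{21}=\overline{\wv_{12}}$. This is consistent with Step 1, since the two eigenlines are complex conjugates of each other.

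Suppose Step 1 gives $\wv_{12}=c\,(1,\icplx)^\transp$. Then the definition \eqref{eq:component_phi_1_phi_2} reads
\begin{equation*}
\upsilon^{\Kv_\star}=\tfrac12\,\overline{\wv_{12}}\cdot(1,\icplx)^\transp=\tfrac12\,\overline{c}\,(1-\icplx\cdot\icplx)=\overline c .
\end{equation*}
By Assumption \ref{hyp:Dirac_point} (the normalization at the end of Proposition \ref{prop:sufficient_conditions_dirac_point}), $\upsilon^{\Kv_\star}$ is real, so $c=\upsilon^{\Kv_\star}$. It follows that
\begin{equation*}
\sigma^{\Kv_\star}_{12}=\upsilon^{\Kv_\star}(\modK^{(1)}+\icplx\modK^{(2)}), \qquad \sigma^{\Kv_\star}_{21}=\upsilon^{\Kv_\star}(\modK^{(1)}-\icplx\modK^{(2)}),
\end{equation*}
which is the matrix $\upsilon^{\Kv_\star}(\modK^{(1)}\sigma_1-\modK^{(2)}\sigma_2)$.

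If instead Step 1 placed $\wv_{12}$ on the other eigenline, $(1,-\icplx)^\transp$, then \eqref{eq:component_phi_1_phi_2} would give $\upsilon^{\Kv_\star}=0$. This contradicts Condition \ref{item:dirac_point_item_3}, so that case cannot occur. Either way, the eigenline assignment is forced.

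\emph{Step 3 (sign at $\Kv'$).} The statement $\upsilon^{\Kv}=\upsilon_D$ and $\upsilon^{\Kv'}=-\upsilon_D$ is exactly the normalization in Assumption \ref{hyp:Dirac_point}, so nothing further is needed. As a consistency check, Remark \ref{rmk:Dirac_point_K_sufficient} gives $\Phi^{\Kv'}_j=\calP\Phi^{\Kv}_j$, and $\calP$ flips the sign of $\nabla$, so $\wv_{12}$ changes sign, in agreement with $\upsilon^{\Kv'}=-\upsilon^{\Kv}$.

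The main obstacle is the bookkeeping in Step 1: the convention $\calR f=f(R^*\xv)$ together with the clockwise orientation of $R$ fixes which eigenline of $R$ is paired with $\tau$. I will settle this by a direct $2\times2$ computation rather than by appealing to orientation heuristics.
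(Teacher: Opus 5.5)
Your proof is correct. The covariance identity $\langle \calR\Phi_l,-2\icplx\nabla\calR\Phi_j\rangle=R\,w_{lj}$, combined with $\calR\Phi_1=\tau\Phi_1$ and $\calR\Phi_2=\overline{\tau}\Phi_2$, kills the diagonal and places $w_{12}$ in the $\tau$-eigenline of $R$. The direct check $R(1,\icplx)^\transp=\tau(1,\icplx)^\transp$ confirms the assignment: the first entry is $-\tfrac12+\icplx\tfrac{\sqrt3}{2}=\tau$ and the second is $-\tfrac{\sqrt3}{2}-\tfrac{\icplx}{2}=\icplx\tau$. So your fallback via $\upsilon^{\Kv_\star}\neq0$ is not even needed.

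The paper gives no proof of its own and imports the identity from \cite{fefferman2014wave}. There, following \cite{fefferman2012honeycomb}, it is derived from the Fourier-series representation of $\Phi^{\Kv_\star}_1$, whose coefficients are related along orbits of $R$, by summing over these orbits (e.g.\ $I+R+R^2=0$ disposes of the diagonal entries).

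Your argument is the basis-free version of the same symmetry mechanism. It works directly with the intertwining $\nabla(\calR f)=R\,\calR(\nabla f)$ and requires no Fourier bookkeeping. It also makes explicit which input fixes which feature: the rotation eigenvalues give the zero pattern and the directions $(1,\pm\icplx)^\transp$; hermiticity gives $w_{21}=\overline{w_{12}}$; and the reality normalization of Assumption~\ref{hyp:Dirac_point} puts $\upsilon^{\Kv_\star}$ rather than $\overline{\upsilon^{\Kv_\star}}$ in the upper-right entry. The Fourier route, in exchange, yields an explicit lattice-sum expression for $\upsilon^{\Kv_\star}$. That expression is what one needs to check the nondegeneracy $\upsilon^{\Kv_\star}\neq0$ for concrete or generic honeycomb potentials.
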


    \begin{proposition}[\texorpdfstring{\cite[Proposition 4.2]{drouot2020edge}}{}]\label{prop:expression_Phi_l_W_Phi_j}
    For $\Kv_\star \in \{\Kv, \Kv'\}$,
    \begin{equation*}
        \displaystyle
        \begin{pmatrix}
        \big\langle \Phi^{\Kv_\star}_1, \nabla \cdot \big( a(\xv)\, \sigma_2\, \nabla \Phi^{\Kv_\star}_1 \big) \big\rangle_{L^2_{\Kv_\star}} & %
        \big\langle \Phi^{\Kv_\star}_1, \nabla \cdot \big( a(\xv)\, \sigma_2\, \nabla \Phi^{\Kv_\star}_2 \big) \big\rangle_{L^2_{\Kv_\star}} \rets %
        \big\langle \Phi^{\Kv_\star}_2, \nabla \cdot \big( a(\xv)\, \sigma_2\, \nabla \Phi^{\Kv_\star}_1 \big) \big\rangle_{L^2_{\Kv_\star}} & %
        \big\langle \Phi^{\Kv_\star}_2, \nabla \cdot \big( a(\xv)\, \sigma_2\, \nabla \Phi^{\Kv_\star}_2 \big) \big\rangle_{L^2_{\Kv_\star}} %
        \end{pmatrix} %
        = \vartheta^{\Kv_\star}\, \sigma_3,
    \end{equation*}
    with $\vartheta^{\Kv'} = \vartheta^{\Kv} := \lla \Phi^{\Kv}_1, \nabla \cdot \big( a(\xv)\, \sigma_2\, \nabla \Phi^{\Kv}_1 \big) \rra_{L^2_{\Kv}} \neq 0$ under Assumption \ref{hyp:non_degeneracy} (Section \ref{sec:conjugation_breaking_perturbations}).
    \end{proposition}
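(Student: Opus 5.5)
The proof will use only the antiunitary symmetry $\calP\calC$ and the self-adjointness of the perturbation. No rotational invariance of $a$ is needed, which matters because Assumption~\ref{hyp:non_degeneracy} only asks $a$ to be even. Set
\[
W := \nabla \cdot \big(a(\xv)\,\sigma_2\,\nabla\big)
\]
acting on $H^2_{\Kv_\star}$. The plan is to collect three structural facts about $W$ and then read off the four matrix entries.

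\emph{Structural facts.}
\begin{enumerate}
\item $W$ is symmetric on $L^2_{\Kv_\star}$. The reason is that $\sigma_2$ is Hermitian and $a$ is real. The integration by parts over the cell $\Omega$ produces no boundary terms, because $\overline{g}\,a\,\sigma_2\nabla f$ is $\Lambda$--periodic whenever $f,g\in H^2_{\Kv_\star}$. Consequently $\langle \Phi_l, W\Phi_l\rangle$ is real, so $\vartheta^{\Kv_\star}\in\R$.
\item $\calP$ commutes with $W$. This holds because $a$ is even and both factors of $\nabla$ flip sign under $\xv\mapsto-\xv$.
\item $\calC$ anticommutes with $W$. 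Since $\sigma_2$ is purely imaginary and $a$ is real, $\calC\,W\,\calC = -W$. Explicitly, $W f = \icplx\,(\partial_2 a\,\partial_1 f - \partial_1 a\,\partial_2 f)$, so $W$ is $\icplx$ times a real operator.
\end{enumerate}
Hence the antiunitary involution $\mathcal A := \calP\calC$ satisfies $\mathcal A W = -W\mathcal A$. It maps $L^2_{\Kv_\star}$ to itself, and by Proposition~\ref{prop:sufficient_conditions_dirac_point} we have $\Phi^{\Kv_\star}_2 = \mathcal A\Phi^{\Kv_\star}_1$.

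\emph{Computing the entries.} The only identity needed for an antiunitary involution is $\langle \mathcal A u, v\rangle = \langle \mathcal A v, u\rangle$, which follows from $\langle \mathcal A u,\mathcal A w\rangle = \langle w, u\rangle$ with $w=\mathcal A v$.
\begin{itemize}
\item Lower off-diagonal entry:
\[
\langle \Phi_2, W\Phi_1\rangle = \langle \mathcal A\Phi_1, W\Phi_1\rangle = \langle \mathcal A W\Phi_1, \Phi_1\rangle = -\langle W\Phi_2,\Phi_1\rangle = -\langle \Phi_2, W\Phi_1\rangle .
\]
The last step uses the symmetry of $W$. Hence this entry vanishes.
\item Upper off-diagonal entry: it is the complex conjugate of the previous one, by symmetry of $W$, so it also vanishes.
\item Second diagonal entry:
\[
\langle \Phi_2, W\Phi_2\rangle = \langle \mathcal A\Phi_1, W\mathcal A\Phi_1\rangle = -\langle \mathcal A\Phi_1, \mathcal A W\Phi_1\rangle = -\langle W\Phi_1,\Phi_1\rangle = -\vartheta^{\Kv_\star}.
\]
\end{itemize}
This gives the matrix $\vartheta^{\Kv_\star}\sigma_3$.

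\emph{The identity $\vartheta^{\Kv'}=\vartheta^{\Kv}$.} By Remark~\ref{rmk:Dirac_point_K_sufficient}, $\Phi^{\Kv'}_j=\calP\Phi^{\Kv}_j$. Since $\calP$ is unitary from $L^2_{\Kv}$ onto $L^2_{\Kv'}$ and commutes with $W$,
\[
\langle \calP\Phi^{\Kv}_1, W\calP\Phi^{\Kv}_1\rangle_{L^2_{\Kv'}} = \langle \Phi^{\Kv}_1, W\Phi^{\Kv}_1\rangle_{L^2_{\Kv}} .
\]
The nonvanishing of $\vartheta^{\Kv}$ is exactly Assumption~\ref{hyp:non_degeneracy}.

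The only step needing care is the bookkeeping for antiunitary operators, in particular the sign coming from $\mathcal A W=-W\mathcal A$. The rest is direct, and no $\calR$-symmetry of $a$ enters anywhere.
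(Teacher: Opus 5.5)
Your proof is correct. Every step checks out, including the divergence-free boundary terms, the identity $\langle \mathcal{A}u,v\rangle=\langle \mathcal{A}v,u\rangle$ for the antiunitary involution $\mathcal{A}=\mathcal{P}\mathcal{C}$ on $L^2_{\Kv_\star}$, and the signs.

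It differs from the paper mainly in being self-contained. The paper does not prove this proposition; it imports the matrix identity from \cite[Proposition 4.2]{drouot2020edge}. Its only in-text justification concerns $\vartheta^{\Kv'}=\vartheta^{\Kv}$, via the evenness of $a$ and $\Phi^{\Kv'}_j(\mathbf{x})=\Phi^{\Kv}_j(-\mathbf{x})$ (Remark~\ref{rmk:Dirac_point_K_sufficient}). That is exactly your last step.

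What you add is a short derivation of the $\vartheta^{\Kv_\star}\sigma_3$ structure from two facts: $W$ is symmetric, and $\mathcal{A}W=-W\mathcal{A}$. These give $\langle \mathcal{A}u, Wu\rangle=0$, which yields the off-diagonal zeros, and $\langle \mathcal{A}u, W\mathcal{A}u\rangle=-\langle u,Wu\rangle$, which yields the diagonal sign flip.

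A route to the off-diagonal zeros through the distinct $\mathcal{R}$-eigenvalues $\tau\neq\bar\tau$ of $\Phi^{\Kv_\star}_1,\Phi^{\Kv_\star}_2$ would additionally need $W$ to commute with $\mathcal{R}$, i.e.\ $a$ to be rotation invariant. Your argument uses only the evenness of $a$, which is precisely what Assumption~\ref{hyp:non_degeneracy} provides. So it establishes the statement under the paper's hypotheses exactly as stated.
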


    \vspace{1\baselineskip} \noindent
    By Proposition \ref{prop:expression_Phi_l_nabla_Phi_j} (with $\modK = \kvh_1, \kvh_2$) and Proposition \ref{prop:expression_Phi_l_W_Phi_j}, Equation \eqref{eq:effective_Dirac_1} becomes
    \begin{equation}\label{eq:effective_Dirac_2}
    \big(\vts \calD^{\Kv_\star} (\mu + \gammatilde^\delta_\iiv) - E^{(1)} \vts\big)\, \alpha(\zeta) = 0,
    \end{equation}
    where $\calD^{\Kv_\star} (\muhat): H^1(\R; \C^2) \to L^2 (\R; \C^2)$ is the \emph{effective Dirac operator}:
    \begin{equation} \label{eq:effective_Dirac_operator}
    \spforall \muhat \in \R, \quad \calD^{\Kv_\star} (\muhat) := \displaystyle \sigma^{\Kv_\star} (\kvh_2)\, D_\zeta %
    + \muhat\, \sigma^{\Kv_\star} (\kvh_1) + \vartheta^{\Kv_\star}\, \kappa(\zeta)\, \sigma_3.
    \end{equation}
    In the next section, Section \ref{sec:pties_effective_Dirac_operator}, we show (using the property $\kappa (\zeta) \to \pm 1$ as $\zeta \to \pm \infty$) that $\calD^{\Kv_\star} (\muhat)$ has a gap in its continuous spectrum centered around $0$, containing at least one isolated eigenvalue, and that all such isolated eigenvalues are simple. As explained in Section \ref{sec:recap_multiscale} below, the eigenpairs $(E^{(1)},\alpha(\zeta))$ of the effective operators $\calD^{\Kv_\star}(\mu + \gammatilde^\delta_\iiv)$ give rise to approximate edge states which, at leading order in $\delta$, have the form $\psi^{(0)}(\xv, \zeta) = \Phi^{\Kv_\star} (\xv)^\transp \alpha(\zeta)$, with $\zeta = \delta \kvh_2 \cdot (\xv + s \vvv_2)$.

    The order $\delta$ correction to $\psi^{(0)}(\xv,\zeta)$ is derived  as follows. Let $(E^{(1)}, \alpha (\zeta)) \in \R \times L^2 (\R; \C^2)$ be an eigenpair \eqref{eq:effective_Dirac_2}. Then the right-hand side of \eqref{eq:bvp_psi1} belongs to $P_\perp L^2_{\Kv_\star}$, where $P_\perp$ is the $L^2_{\Kv_\star}$--projector onto the orthogonal complement of $\Ker\, (\calH^0_{\Kv_\star} - E_D) = \vect \{\Phi^{\Kv_\star}_1, \Phi^{\Kv_\star}_2\}$. By Fredholm's alternative (see e.g. \cite[Theorem 6.6]{brezis_functional_2011}), it follows that $\smash{(\calH^0_{\Kv_\star} - E_D)|_{P_\perp L^2_{\Kv_\star}}}$ has a bounded inverse $R_{\Kv_\star} \in \scrL(P_\perp L^2_{\Kv_\star})$, and we find that
    \begin{align}
    \displaystyle
    \psi^{(1)} (\xv, \zeta) &= \psi^{(1)}_p (\xv, \zeta) + \psi^{(1)}_h (\xv, \zeta), \textWITH \nonumber
    \retss
    \psi^{(1)}_p (\xv, \zeta) &:= R_{\Kv_\star}\, \big(- \scrG^{(1)} + E^{(1)} \big)\, \psi^{(0)} (\xv, \zeta) \textAND \psi^{(1)}_h (\xv, \zeta) := \Phi^{\Kv_\star} (\xv)^\transp \beta (\zeta),\label{eq:psi1_particular_plus_homogeneous}
    \end{align}
    and where $\beta \in L^2(\R; \C^2)$ is obtained using the solvability conditions of the equation satisfied at order $\delta^2$. This multiple scale expansion procedure can be continued to construct approximations of edge states at any arbitrary order $\delta^n$, $n \geq 2$.

    \subsection{Spectral properties of the effective Dirac operator}\label{sec:pties_effective_Dirac_operator}
    The  Dirac operator $\calD^{\Kv_\star} (\muhat)$  defined by \eqref{eq:effective_Dirac_operator}, and which emerges from the multiscale expansion in Section \ref{sec:multi_scale_procedure}, is unitarily equivalent to another Dirac operator studied for instance in \cite{drouot2020edge}.
    \begin{lemma}\label{lem:equivalent_Dirac_operators}
    Let $\Kv_\star \in \{\Kv, \Kv'\}$, $\muhat \in \R$, and introduce the Dirac operator
    \begin{equation*}
        \displaystyle
        \calD^{\Kv_\star}_0 (\muhat) := \upsilon^{\Kv_\star}\, |\kvh_2|\, \sigma_1\, D_\zeta - \frac{\upsilon^{\Kv_\star}}{|\kvh_2|}\, \muhat\, \sigma_2 + \vartheta^{\Kv_\star}\, \kappa(\zeta)\, \sigma_3,
    \end{equation*}
    with  $\vartheta^{\Kv'} = \vartheta^{\Kv}$. There exist constants $(K, \omega) \in \R \times \C$ depending on $\kvh_1$, $\kvh_2$, and $\upsilon_D$ only, and with $|\omega| = 1$, such that the following unitary equivalence relation holds:
    \begin{equation}\label{eq:unitary-equiv}
        \calD^{\Kv_\star} (\muhat) = \calN (\muhat)^*\; \calD^{\Kv_\star}_0 (\muhat)\; \calN (\muhat).\quad  \textrm{Here,}\quad (\calN (\muhat)\, \alpha)\vts (\zeta) := %
        \begin{pmatrix}
        \omega & 0
        \\
        0 & \overline{\omega}
        \end{pmatrix}\, \alpha(\zeta)\; \euler^{\vts\icplx\vts \muhat\vts K \vts \zeta}.
    \end{equation}
    \end{lemma}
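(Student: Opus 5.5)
The plan is a direct computation. It uses only the explicit form of the matrices $\sigma^{\Kv_\star}(\modK)$ from Proposition \ref{prop:expression_Phi_l_nabla_Phi_j}, the fact that $\sigma_3$ commutes with diagonal matrices, and the fact that conjugation by a multiplication operator $\euler^{\icplx\vts \muhat K\zeta}$ shifts $D_\zeta$ by a constant. I identify each vector $\modK=(\modK^{(1)},\modK^{(2)})\in\R^2$ with the complex number $c(\modK):=\modK^{(1)}+\icplx\vts\modK^{(2)}$. Then
\[
\sigma^{\Kv_\star}(\modK)=\upsilon^{\Kv_\star}\begin{pmatrix}0&c(\modK)\\ \overline{c(\modK)}&0\end{pmatrix}.
\]
Writing $c(\kvh_2)=|\kvh_2|\,\euler^{\icplx\theta}$, I choose $\omega:=\euler^{\icplx\theta/2}$. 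This depends only on $\kvh_2$, not on $\Kv_\star$, and not on $\upsilon^{\Kv_\star}$, which just factors out, so one $\omega$ serves both $\Kv$ and $\Kv'$. For $W=\mathrm{diag}(\omega,\overline\omega)$ one has
\[
W^*\begin{pmatrix}0&c\\ \bar c&0\end{pmatrix}W=\begin{pmatrix}0&\overline\omega^{\,2}c\\ \omega^2\bar c&0\end{pmatrix},
\]
so the choice of $\omega$ gives $W^*\sigma^{\Kv_\star}(\kvh_2)W=\upsilon^{\Kv_\star}|\kvh_2|\,\sigma_1$. Also $W^*\sigma_3W=\sigma_3$, and $W$ commutes with multiplication by $\kappa(\zeta)$.

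Next I conjugate the $\kvh_1$-term. Since $\overline\omega^{\,2}c(\kvh_1)=c(\kvh_1)\overline{c(\kvh_2)}/|\kvh_2|$, and $\mathrm{Re}\,(c(\kvh_1)\overline{c(\kvh_2)})=\kvh_1\cdot\kvh_2$ and $\mathrm{Im}\,(c(\kvh_1)\overline{c(\kvh_2)})=-\kvh_1\wedge\kvh_2$, this gives
\[
W^*\sigma^{\Kv_\star}(\kvh_1)W=\frac{\upsilon^{\Kv_\star}}{|\kvh_2|}\bigl((\kvh_1\cdot\kvh_2)\,\sigma_1+(\kvh_1\wedge\kvh_2)\,\sigma_2\bigr).
\]
The key identity is $\kvh_1\wedge\kvh_2=\kv_1\wedge(-r\kv_1+\kv_2)=\kv_1\wedge\kv_2=-1$, by the explicit vectors \eqref{eq:honeycomb_vector_quasimomenta} and the normalization $\varsigma^2=2/\sqrt3$. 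So the $\sigma_2$ part is exactly $-\frac{\upsilon^{\Kv_\star}}{|\kvh_2|}\muhat\,\sigma_2$, independent of $r$ up to the harmless $\sigma_1$ part.

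It remains to remove the extra term $\muhat\,\upsilon^{\Kv_\star}\frac{\kvh_1\cdot\kvh_2}{|\kvh_2|}\sigma_1$. For this I use the gauge factor $\euler^{\icplx\muhat K\zeta}$ in $\calN(\muhat)$. Since $\euler^{-\icplx\muhat K\zeta}D_\zeta\,\euler^{\icplx\muhat K\zeta}=D_\zeta+\muhat K$, conjugating $\calD_0^{\Kv_\star}(\muhat)$ by $\calN(\muhat)$ produces $\upsilon^{\Kv_\star}|\kvh_2|\sigma_1(D_\zeta+\muhat K)$. Choosing $K:=(\kvh_1\cdot\kvh_2)/|\kvh_2|^2$ makes this reproduce exactly the $\sigma_1$ part above. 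The scalar phase commutes with the constant matrices and with $\kappa(\zeta)$, and $\calN(\muhat)$ is clearly unitary on $L^2(\R;\C^2)$ and preserves $H^1$. Assembling the three terms gives \eqref{eq:unitary-equiv}.

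The only step needing care is bookkeeping of signs and conjugations: the order $\calN^*\calD_0\calN$ versus $\calN\calD_0\calN^*$, whether $\omega$ or $\overline\omega$ appears, and the sign of $\kv_1\wedge\kv_2$. I would verify these by checking the $(1,2)$ entries explicitly. If the orientation comes out reversed, I would replace $\omega$ by $\overline\omega$ and $K$ by $-K$, which is allowed since the statement only asserts the existence of such $(K,\omega)$.
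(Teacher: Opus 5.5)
Your ingredients are the paper's: a diagonal phase that rotates the $\kvh_2$-coefficient onto $\upsilon^{\Kv_\star}|\kvh_2|\sigma_1$, the identity $\kvh_1\wedge\kvh_2=-1$ that produces the $\sigma_2$ term, and a gauge $\euler^{\icplx\muhat K\zeta}$ that absorbs the $\kvh_1\cdot\kvh_2$ part. Each of your two computations is also correct. With $W=\operatorname{diag}(\omega,\overline\omega)$ and $\omega=\euler^{\icplx\theta/2}$, they read
\[
W^*\calD^{\Kv_\star}(\muhat)W=\calD^{\Kv_\star}_0(\muhat)+\muhat\,\upsilon^{\Kv_\star}\frac{\kvh_1\cdot\kvh_2}{|\kvh_2|}\sigma_1,\qquad
\euler^{-\icplx\muhat K\zeta}\calD^{\Kv_\star}_0(\muhat)\euler^{\icplx\muhat K\zeta}=\calD^{\Kv_\star}_0(\muhat)+\muhat K\upsilon^{\Kv_\star}|\kvh_2|\sigma_1 .
\]
The gap is in how you combine them. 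The first identity conjugates $\calD^{\Kv_\star}$, while the second conjugates $\calD^{\Kv_\star}_0$, and only by the scalar gauge. With $K=\kvh_1\cdot\kvh_2/|\kvh_2|^2$ they give
\[
\calD^{\Kv_\star}(\muhat)=W\euler^{-\icplx\muhat K\zeta}\,\calD^{\Kv_\star}_0(\muhat)\,\euler^{\icplx\muhat K\zeta}W^*.
\]
This is \eqref{eq:unitary-equiv} with $\operatorname{diag}(\overline\omega,\omega)$, i.e. with $\omega$ replaced by $\overline\omega=\euler^{-\icplx\theta/2}$ and the \emph{same} $K$. That is exactly the paper's choice: $\omega^2=|\kvh_2|/c(\kvh_2)$ and $K=\kvh_1\cdot\kvh_2/|\kvh_2|^2$.

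With your literal pair $(\omega,K)$ the stated relation fails. The $D_\zeta$-coefficient of $\calN(\muhat)^*\calD_0^{\Kv_\star}(\muhat)\calN(\muhat)$ is $\upsilon^{\Kv_\star}|\kvh_2|\,W^*\sigma_1W$. Its $(1,2)$ entry is $\upsilon^{\Kv_\star}|\kvh_2|\euler^{-\icplx\theta}$, not $\upsilon^{\Kv_\star}c(\kvh_2)=\upsilon^{\Kv_\star}|\kvh_2|\euler^{\icplx\theta}$. These differ unless $c(\kvh_2)$ is real, which happens only for $r=-1$.

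Your announced fallback does not repair this. Replacing $(\omega,K)$ by $(\overline\omega,-K)$ amounts to replacing $\calN$ by $\calN^*$, which would be the right move only if you had proved $\calD^{\Kv_\star}=\calN\calD_0^{\Kv_\star}\calN^*$. You proved the mixed relation above instead. With $(\overline\omega,-K)$ one gets $\calN^*\calD_0^{\Kv_\star}\calN=\calD^{\Kv_\star}(\muhat)-2\muhat K\upsilon^{\Kv_\star}|\kvh_2|\,W\sigma_1W^*$. This residual is nonzero because $\kvh_1\cdot\kvh_2=-(2r+1)/\sqrt3$ is generically nonzero.

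The fix is one line: take $\omega=\euler^{-\icplx\theta/2}$ and keep $K$. Alternatively, do as the paper does: compute $\calN\calD^{\Kv_\star}\calN^*$ in a single direction and match its off-diagonal entries against those of $\calD_0^{\Kv_\star}$. That avoids the orientation bookkeeping entirely.
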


    \begin{dem}
    It suffices to find $(K, \omega) \in \R \times \C$ for which \eqref{eq:unitary-equiv} holds. Let $\zvh_j := \kvh^{(1)}_j + \icplx\vts \kvh^{(2)}_j$, with $|\kvh_j| = |\zvh_j|$, $j = 1, 2$. Using the definition \eqref{eq:effective_Dirac_operator} of $\calD^{\Kv_\star} (\muhat)$, the expression for $\calN (\muhat)$, and the constraint $|\omega|^2 = 1$, we obtain
    \begin{align*}
        \calN (\muhat)\, \calD^{\Kv_\star} (\muhat)\, \calN (\muhat)^* \!\!&\phantom{:}= %
        \begin{pmatrix}
        0 & Z_1
        \\
        \overline{Z_1} & 0
        \end{pmatrix}\, D_\zeta + \muhat\, %
        \begin{pmatrix}
        0 & Z_2
        \\
        \overline{Z_2} & 0
        \end{pmatrix} + \vartheta^{\Kv_\star}\, \kappa(\zeta)\, \sigma_3,
        \retss
        \textWITH Z_1 &:= \upsilon^{\Kv_\star}\, \omega^2\, \zvh_2 \textAND Z_2 := \upsilon^{\Kv_\star}\, \omega^2\, (\zvh_1 - K\vts \zvh_2).
    \end{align*}
    The right-hand side of the above equation coincides with $\calD^{\Kv_\star}_0 (\muhat)$ if and only if
    \begin{equation}\label{eq:equivalent_Dirac_operators_dem_2}
        \displaystyle
        Z_1 = \omega^2\, \upsilon^{\Kv_\star}\, \zvh_2 = \upsilon^{\Kv_\star}\, |\zvh_2| \textAND Z_2 = \omega^2\, \upsilon^{\Kv_\star} (\zvh_1 - K\vts \zvh_2) = \icplx\vts\, \frac{\upsilon^{\Kv_\star}}{|\zvh_2|}.
    \end{equation}
    The first equation in \eqref{eq:equivalent_Dirac_operators_dem_2} determines $\omega$ up to sign, with $|\omega| = 1$. We substitute this into the second equation in \eqref{eq:equivalent_Dirac_operators_dem_2}, to obtain
    \begin{equation*}
        K = \frac{\zvh_1\, \overline{\zvh_2} - \icplx}{|\zvh_2|^2}.
    \end{equation*}
    Finally, using the relation
    \begin{align*}
        \zvh_1\, \overline{\zvh_2} &= (\kvh_1^{(1)} + \icplx\vts \kvh_1^{(2)})\, (\kvh_2^{(1)} - \icplx\vts \kvh_2^{(2)}) = (\kvh_1 \cdot \kvh_2) + \icplx\vts (\kvh_2 \wedge \kvh_1)
        \\
        &= (\kvh_1 \cdot \kvh_2) + \icplx\vts (\kv_2 \wedge \kv_1) = (\kvh_1 \cdot \kvh_2) + \icplx \quad \textnormal{by (\ref{eq:def_edge_vectors}, \ref{eq:honeycomb_vector_quasimomenta})},
    \end{align*}
    we deduce that $K = (\kvh_1 \cdot \kvh_2) / |\kvh_2|^2$ is indeed a real coefficient. Therefore, the proof of Lemma \ref{lem:equivalent_Dirac_operators} is complete.
    \end{dem}

    \vspace{1\baselineskip}\noindent
    Thanks to Lemma \ref{lem:equivalent_Dirac_operators}, the spectral properties of $\calD^{\Kv_\star} (\muhat)$ can be derived from those of $\calD^{\Kv_\star}_0 (\muhat)$. We use the results \cite[Lemma 3.1]{drouot2019characterization} and \cite[Proposition 4.5]{drouot2020edge}, on the spectrum of $\calD^{\Kv_\star}_0 (\muhat)$.

    \begin{proposition}\label{prop:pties_Dirac_operator}
        Let $\Kv_\star \in \{\Kv, \Kv'\}$. 
        Assume $\kappa(\zeta)$ is a domain wall function (see Definition \ref{defi:dw}), and that $\vartheta^{\Kv_\star} \neq 0$ (see Assumption \ref{hyp:non_degeneracy}). Then the following holds:
        \begin{enumerate}[label={$(\alph*).$}, ref={\theproposition.$(\alph*)$}, wide=0pt]
            \item\label{prop:pties_Dirac_operator_item_1} The essential spectrum of $\calD^{\Kv_\star}(0)$ coincides with its absolutely continuous part, which is $\R \setminus (-\theta_\GAP, \theta_\GAP)$, with $\theta_\GAP := |\vartheta^{\Kv_\star}|$. Moreover, zero is always in the point spectrum, and there exists an integer $N \geq 0$ such that the discrete spectrum of $\calD^{\Kv_\star}(0)$ consists of $2 N + 1$ simple eigenvalues located in the gap $(-\theta_\GAP, \theta_\GAP)$, and symmetric about $0$:
            \begin{equation*}
            -\theta_\GAP < z_{-N} < \dots < z_{-1} < z_0 = 0 < z_1 < \dots < z_N < \theta_\GAP, \quad z_{-j} = - z_j.
            \end{equation*}
            \item For any $\muhat \in \R$, the essential spectrum of $\calD^{\Kv_\star}(\muhat)$ is purely absolutely continuous, and is given by 
            \begin{equation*}
            \R \setminus (-\theta_\GAP(\muhat), \theta_\GAP(\muhat)), \textWHERE \theta_\GAP(\muhat) := \sqrt{\theta_\GAP^2 + \frac{\upsilon^2_D}{|\kvh_2|^2}\, \muhat^2}.
            \end{equation*}
            The discrete spectrum of $\calD^{\Kv_\star}(\muhat)$ consists of $2 N + 1$ simple eigenvalues $z_j (\muhat)$, $-N \leq j \leq N$, in the gap $(-\theta_\GAP(\muhat), \theta_\GAP(\muhat))$:
            \begin{equation*}
            \displaystyle
            z_0(\muhat) := -\frac{\upsilon_D}{|\kvh_2|}\, \muhat\, \sign (\vartheta^{\Kv_\star}), \quad z_{\pm j} (\muhat) := \pm \sqrt{z^2_j + \frac{\upsilon^2_D}{|\kvh_2|^2}\, \muhat^2}, \quad \spforall j = 1, \dots, N.
            \end{equation*}
            where $z_j$, $-N \leq j \leq N$, are the eigenvalues of $\calD^{\Kv_\star}(0)$. 
        \end{enumerate}
    \end{proposition}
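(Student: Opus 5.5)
The plan is to reduce everything to the model operator $\calD^{\Kv_\star}_0(\muhat)$ through Lemma \ref{lem:equivalent_Dirac_operators}. Since $\calN(\muhat)$ is unitary on $L^2(\R;\C^2)$ and maps $H^1(\R;\C^2)$ onto itself, $\calD^{\Kv_\star}(\muhat)$ and $\calD^{\Kv_\star}_0(\muhat)$ have the same spectrum. The same holds for the essential, absolutely continuous and discrete parts, with multiplicities preserved. It then suffices to prove (a) and (b) for
\[
\calD^{\Kv_\star}_0(\muhat) = c\,\sigma_1 D_\zeta + \vartheta^{\Kv_\star}\kappa(\zeta)\sigma_3 - \frac{c}{|\kvh_2|^2}\,\muhat\,\sigma_2, \qquad c := \upsilon^{\Kv_\star}|\kvh_2|.
\]
Rescaling $\zeta \mapsto |c|\,\zeta$ and conjugating by $\sigma_3$ when $\upsilon^{\Kv_\star}<0$, this operator is of the form treated in \cite[Lemma 3.1]{drouot2019characterization} and \cite[Proposition 4.5]{drouot2020edge}. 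If these references are quoted directly, we only need to check that the hypotheses match, namely that $\kappa$ is a domain wall function and $\vartheta^{\Kv_\star}\neq 0$. The steps below outline the arguments, in case a self-contained proof is preferred.

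For (a), take $\muhat=0$. Since $\kappa(\zeta)\to\pm1$, the operator $\calD_0(0)$ is a relatively compact perturbation of the operators with $\kappa$ replaced by $\pm1$ on the respective half-lines. This follows from a Weyl-sequence / decomposition argument with cutoffs supported near $\pm\infty$. The constant-coefficient symbols $c\sigma_1\xi \pm \vartheta\sigma_3$ have eigenvalues $\pm\sqrt{c^2\xi^2+\vartheta^2}$, so the essential spectrum is $\R\setminus(-\theta_\GAP,\theta_\GAP)$. Absolute continuity there follows from standard limiting absorption or Mourre estimates for 1D Dirac operators with asymptotically constant mass, as in the cited works.

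Next we use the chiral symmetry $\sigma_2\calD_0(0)\sigma_2 = -\calD_0(0)$, which holds because $\sigma_2$ anticommutes with both $\sigma_1$ and $\sigma_3$. This symmetry makes the discrete spectrum symmetric about $0$. Simplicity of the eigenvalues holds because eigenfunctions solve a first-order $2\times2$ ODE and must decay at $+\infty$. The decaying subspace at $+\infty$ is one-dimensional, so every eigenspace has dimension at most one.

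The zero mode is written down explicitly. Writing $\calD_0(0)\alpha=0$ as $c\,\sigma_1(-i\alpha') = -\vartheta\kappa\sigma_3\alpha$ gives $\alpha' = -(\vartheta/c)\,\kappa\,(i\sigma_1\sigma_3)\alpha = (\vartheta/c)\,\kappa\,\sigma_2\alpha$. Take $\alpha$ to be an eigenvector of $\sigma_2$ with eigenvalue $-\sign(\vartheta/c)$. Then
\[
\alpha(\zeta) \propto \exp\Big(-\Big|\frac{\vartheta}{c}\Big|\int_0^\zeta\kappa\Big),
\]
which decays at both ends. Finiteness of the discrete spectrum, giving $2N+1$ eigenvalues, follows from the exponential approach to thresholds, or more generally from the fact that $\kappa^2-1\to0$ makes $\calD_0(0)^2-\theta_\GAP^2$ a 1D Schrödinger-type operator with a potential that tends to zero. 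Here I would simply invoke \cite{drouot2020edge}, since finiteness of the eigenvalue count is the most delicate point and in general needs some decay of $\kappa^2-1$.

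For (b), set $m := c\,\muhat/|\kvh_2|^2$, so that $\calD_0(\muhat)=\calD_0(0)-m\sigma_2$. The key identity is that $\sigma_2$ anticommutes with $\calD_0(0)$, so $\calD_0(\muhat)^2 = \calD_0(0)^2 + m^2$. Hence $\calD_0(\muhat)$ restricted to each pair of $\sigma_2$-related eigenspaces $\{\varphi_j,\sigma_2\varphi_j\}$, $j\ge1$, is the $2\times2$ block $\begin{pmatrix} z_j & -m\\ -m & -z_j\end{pmatrix}$ in a suitable basis. Its eigenvalues are $\pm\sqrt{z_j^2+m^2}$. The zero mode is an eigenvector of $\sigma_2$ with eigenvalue $-\sign(\vartheta/c)$, so it gives the eigenvalue $m\,\sign(\vartheta/c)$. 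Substituting $c=\upsilon^{\Kv_\star}|\kvh_2|$ and converting back via the sign conventions of $\calN$ should produce $z_0(\muhat)=-\upsilon_D\muhat\,\sign(\vartheta)/|\kvh_2|$. I expect this sign bookkeeping, in particular for $\Kv'$ where $\upsilon^{\Kv'}<0$, to be the fiddliest step.

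The spectral mapping through the square shows that the spectrum of $\calD_0(\muhat)$ consists of $\pm\sqrt{\lambda^2+m^2}$ for $\lambda\in\spec\calD_0(0)$. This gives the essential spectrum $\R\setminus(-\theta_\GAP(\muhat),\theta_\GAP(\muhat))$ with $m^2=\upsilon_D^2\muhat^2/|\kvh_2|^2$. Absolute continuity is inherited from $\calD_0(0)$ through this decomposition, and the eigenvalue count $2N+1$ is preserved.
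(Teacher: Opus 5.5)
Your main route is the paper's. The paper also passes to $\calD^{\Kv_\star}_0(\muhat)$ through the unitary equivalence of Lemma~\ref{lem:equivalent_Dirac_operators}, then simply invokes \cite[Lemma 3.1]{drouot2019characterization} and \cite[Proposition 4.5]{drouot2020edge} with no further argument. Your caveat that a finite count $2N+1$ requires some decay of $\kappa^2-1$ is justified, because Definition~\ref{defi:dw} imposes no rate. For example, a tail $\kappa^2-1\sim -a\zeta^{-2}$ with $a\vartheta^2/c^2>1/4$ gives, by Kneser's criterion applied to $\calD_0(0)^2$, infinitely many gap eigenvalues accumulating at $\pm|\vartheta^{\Kv_\star}|$. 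The paper tacitly relies on the hypotheses of the cited works here.

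There is, however, a sign slip in your self-contained sketch of the zero mode.
\begin{itemize}
\item From $-ic\,\sigma_1\alpha'=-\vartheta\kappa\,\sigma_3\alpha$ and $\sigma_1\sigma_3=-i\sigma_2$ one gets $\alpha'=-(\vartheta/c)\,\kappa\,\sigma_2\alpha$, not $+(\vartheta/c)\,\kappa\,\sigma_2\alpha$.
\item Hence the kernel of $\calD_0(0)$ is spanned by $\exp\bigl(-|\vartheta/c|\int_0^\zeta\kappa\bigr)v$ with $\sigma_2 v=+\sign(\vartheta/c)\,v$. The vector you wrote down is not annihilated by $\calD_0(0)$.
\item Consequently $\calD_0(\muhat)\alpha_0=-m\,\sign(\vartheta/c)\,\alpha_0$. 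Since $m\,\sign(c)=|c|\muhat/|\kvh_2|^2=\upsilon_D\muhat/|\kvh_2|$, this is exactly $z_0(\muhat)=-\upsilon_D\muhat\,\sign(\vartheta^{\Kv_\star})/|\kvh_2|$, for $\Kv$ and $\Kv'$ alike, with no bookkeeping needed.
\end{itemize}
With your sign you would obtain the opposite value. The repair you anticipate ``via the sign conventions of $\calN$'' is not available: $\calN(\muhat)$ is a unitary equivalence, so $\calD^{\Kv_\star}(\muhat)$ and $\calD^{\Kv_\star}_0(\muhat)$ have identical eigenvalues. The remainder of your part (b) is sound: the anticommutation with $\sigma_2$, the identity $\calD_0(\muhat)^2=\calD_0(0)^2+m^2$, the $2\times2$ blocks, and the inheritance of absolute continuity through the square.
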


    \noindent
    The spectrum of $\calD^{\Kv_\star} (\muhat)$ as a function of $\muhat$ is plotted in Figure \ref{fig:spectrum_Dirac_operator} for $N = 1$.

    \begin{figure}[ht!]
        \makebox[\textwidth][c]{
            \includegraphics[page=5]{figures/tikzpictures.pdf}
        }
        \caption{Spectrum of $\calD^{\Kv_\star} (\muhat)$ with respect to $\muhat$ for $N = 1$. The gray area represents the essential spectrum, the red curve is the topologically protected zero eigenvalue $\muhat \mapsto z_0(\muhat)$, and the green curves represent $\muhat \mapsto (z_{-1}(\muhat), z_{1}(\muhat))$. \label{fig:spectrum_Dirac_operator}}
    \end{figure}

    \begin{remark}\label{rmk:eigenvalues_Dirac_operator}
    The eigenvalue curve, $\muhat\mapsto z_0(\muhat)$, is ``topologically protected'' in the sense that it persists against arbitrary (even large) spatially localized perturbations of the  domain wall function, $\kappa(\zeta)$. Further, %
    while $\calD^{\Kv_\star} (\muhat)$ has at least one eigenvalue $z_0 (\muhat)$, the total number of eigenvalues depends on the details of $\kappa$.  For a prescribed $N \geq 0$, the function $\kappa$ can be constructed to ensure that $\calD^{\Kv_\star} (\muhat)$ admits exactly $2N + 1$ eigenvalues; see \cite{lu2020dirac}.
    \end{remark}

    \subsection{Summary of multiscale procedure and implications for irrational edges}\label{sec:recap_multiscale}
    For $\iiv = (\Kv_\star, m) \in \bbL(\delta)$ and $-N \leq j \leq N$, let $(z_j (\muhat),\, \alpha^{\Kv_\star}_j (\cdot\,;\, \muhat)) \in \R \times H^1(\R; \C^2)$ denote an eigenpair of the effective Dirac operator $\calD^{\Kv_\star} (\muhat)$. Then setting $\kpar = \Kv \cdot \vvh_1 + \delta \mu$ with $\mu \in \R$, the formal analysis of Section \ref{sec:multi_scale_procedure} shows that the function:
    \begin{multline}\label{eq:approximate_augmented_eigenfunction_1}
        \Psi^{\delta, (0)}_{\AUG, \iiv, j}: (\xv, s) \mapsto \exp \Big[\vts\icplx\vts \big(\vts (\mu\, \delta + \gamma_\iiv)\, \kvh_1 \cdot \xv + (\lvar_\iiv + \Kv \cdot \vvv_2)\vts s \vts\big)\vts\Big]\, 
        \\
        \times \Phi^{\Kv_\star} (\xv)^\transp\, \alpha^{\Kv_\star}_j \big(\delta \kvh_2 \cdot (\xv + s \vvv_2)\,;\, \mu + \delta^{-1} \gamma_\iiv \big)
    \end{multline}
    is an $\calO(\delta)$--approximate solution to \eqref{eq:evp_irrational_edges} associated with the energy $E_D + \delta\vts z_j(\mu + \delta^{-1} \gamma_\iiv)$. 

    Evaluating this solution at $s = 0$ (note that $\alpha^{\Kv_\star}_j (\cdot\,;\, \muhat) \in H^1(\R) \subset \scrC^0(\R)$), leads to the two-dimensional $\calO(\delta)$--approximate edge state for the Hamiltonian  $\calH^\delta_\EDGE$ (see \eqref{eq:def_dw_coefficient}):
    \begin{equation}
        \Psi^{\delta, (0)}_{\iiv, j, \kpar} (\xv) = \exp \Big[\vts\icplx\vts \big(\vts (\mu\, \delta + \gamma_\iiv)\, \kvh_1 \cdot \xv \vts\big)\vts\Big]\, \Phi^{\Kv_\star} (\xv)^\transp\, \alpha^{\Kv_\star}_j \big(\delta \kvh_2 \cdot \xv\,;\, \mu + \delta^{-1} \gamma_\iiv \big).\label{eq:quasi-per-edge-states}
    \end{equation}
    For the case of irrational edges, the expression \eqref{eq:quasi-per-edge-states} is a quasiperiodic function as $\xv$ varies along the line $\R\, \vvh_1$, and it decays on a long length scale, of order $\delta^{-1}$, in directions transverse to this line.

    For rational edges, if $\delta$ is small enough, then $\gamma_\iiv = 0$ for $\iiv \in \bbL (\delta)$. Therefore, the expression for the $\calO(\delta)$--approximate edge state reduces to $\Psi^{\delta, (0)}_{\iiv, j, \kpar} (\xv) = \euler^{\vts \icplx\, \mu\vts \delta\, \kvh_1 \cdot \xv}\, \Phi^{\Kv_\star} (\xv)^\transp\, \alpha^{\Kv_\star}_j (\delta \kvh_2 \cdot \xv\,;\, \mu)$, which only involves $\calD^{\Kv_\star} (\mu)$. This is consistent with the results in \cite{fefferman2016edge,fefferman2017topologically,lee2019elliptic,drouot2020edge} for rational edges. Note that even though the augmented solutions are distinct, their restriction to $s = 0$ only depends on $\Kv_\star$. On the other hand, if $r$ is irrational, then $\{\delta^{-1}\, \gamma^\delta_\iiv\}_{\iiv \in \bbL(\delta)}$ forms a nontrivial and dense set in $[-1, 1]$, therefore giving rise to countably many  approximate $\iiv$--dependent edge states, seeded by the point eigenvalues of a corresponding countable family of effective Dirac operators.

    We conclude this section by noting that the procedure in Section \ref{sec:multi_scale_procedure} is not a standard multiple scale expansion in that there is a $\delta$-dependence of the individual terms through $\gammatilde^\delta_\iiv$. Yet it does respect ``scale separation''. Specifically, since $|\gamma_\iiv|\le\delta$ ($\iiv\in\bbL(\delta)$) the expression \eqref{eq:quasi-per-edge-states} is a slow modulation of $\Phi^{\Kv_\star} (\xv)$:
    \begin{equation}
        \Psi^{\delta, (0)}_{\iiv, j, \kpar} (\xv) = \Phi^{\Kv_\star} (\xv)^\transp\, \underbrace{\alpha^{\Kv_\star}_j \big(\delta \kvh_2 \cdot \xv\,;\, \mu + \gammatilde^\delta_\iiv \big)\times \exp \Big[\vts\icplx\vts \big(\vts (\mu\,  + \gammatilde^\delta_\iiv)\, \delta \kvh_1 \cdot \xv \vts\big)\vts\Big]}_{U \scalebox{0.82}{$\big($}\vts \delta \kvh_1 \cdot \xv,\, \delta \kvh_2 \cdot \xv\,;\, \gammatilde^\delta_\iiv \vts\scalebox{0.82}{$\big)$}} \, ,\label{eq:quasi-per-edge-states-alt}
    \end{equation}
    where $\gamma_\iiv := \delta\, \gammatilde^\delta_\iiv$ and $|\gammatilde^\delta_\iiv|\le 1$. 

    \section{Rigorous formulation and main theorem}
    \noindent
    \bypierre{In this section, we reformulate the augmented eigenvalue problem \eqref{eq:evp_irrational_edges} using an appropriate framework involving Sobolev spaces. We then state and comment on the main theorem of this article, Theorem \ref{thm:resolvent_expansion}.}

    \subsection{Formulation of the augmented eigenvalue problem in a Sobolev space}\label{sec:formulation_augmented_EVP_Sobolev_space}
    Due to their periodicity properties \eqref{eq:augmented_potential_property} along the lattice $\Z\vts \av_1 + \Z\vts \av_2$ defined by \eqref{eq:3D_lattice_vectors}, the coefficients $A^\delta_\AUG$, $V_\AUG$, defined on $\R^3$, are completely determined by their restriction to the cylinder
    \begin{equation*}
        \displaystyle
        \cylaug := \R^3 / (\Z\vts \av_1 + \Z\vts \av_2).
    \end{equation*}
    Moreover, since $\av_\perp$ is transverse to $\Gamma$, \eqref{eq:interface_Gamma}, the cylinder $\cylaug$ may be identified with $\Omega \times \R$ in the following trivial manner:
    \begin{align}
        \cylaug &\phantom{:}\equiv \big\{ \tau_1\, \av_1 + \tau_2\, \av_2 + s\, \av_\perp \ \ \big|\ \ \tau_1, \tau_2 \in (0, 1), \ s \in \R \big\} \nonumber
        \retss
        &:= \big\{ (\tau_1\, \vvv_1 + \tau_2\, \vvv_2,\, r\vts \tau_1 - \tau_2 + s) \ \ \big|\ \ \tau_1, \tau_2 \in (0, 1), \ s \in \R \big\} = \Omega \times \R. \label{eq:identification_cylaug}
    \end{align}
    Given $\kpar \in \R$, we shall work with the space of \emph{augmented $\kpar$--pseudo-periodic} functions:
    \begin{equation}\label{eq:def_L2aug}
        L^2_{\kpar} (\cylaug) = \Big\{ F \in L^2_{\LOC} (\R^3)\ \ \Big|\ \ \euler^{-\icplx\vts \kpar (\kvh_1 \cdot \xv)}\; F \in L^2(\cylaug)\Big\}.
    \end{equation}
    In particular, $L^2_{\kpar = 0} (\cylaug) = L^2(\cylaug)$. Note that $L^2_{\kpar} (\cylaug)$ encodes the boundary conditions in \eqref{eq:evp_irrational_edges}. The identification $\cylaug \equiv \Omega \times \R$ justified in \eqref{eq:identification_cylaug} allows us to equip $L^2_{\kpar} (\cylaug)$ with the $L^2(\Omega \times \R)$--inner product and the associated norm. 

    Similarly, for $q \in \N_0$, we set  $H^{q = 0}_{\kpar, \xv} (\cylaug) := L^2_{\kpar} (\cylaug)$, and inductively define:
    \begin{equation}\label{eq:def_Hqaug_edge} 
        \displaystyle
        \spforall q \in \N, \quad H^q_{\kpar, \xv} (\cylaug) := \Big\{ F \in H^{q-1}_{\kpar, \xv} (\cylaug)\ \ \Big|\ \ \nabla_\xv\, F \in [\vts H^{q-1}_{\kpar, \xv} (\cylaug)\vts]^2 \Big\}.
    \end{equation}
    We equip $H^q_{\kpar, \xv} (\cylaug)$ with the $L^2(\R_s; H^q (\Omega))$--inner product and the associated norm. The subscript ``$\xv$'' in $\smash{H^q_{\kpar, \xv} (\cylaug)}$ indicates regularity with respect to the variable $\xv$ alone, which makes it an anisotropic space. Elements of $\smash{H^q_{\kpar, \xv} (\cylaug)}$, $q \geq 1$, can also be characterized via trace (restriction) operators \cite{evans2022partial,lions2012non,adams2003sobolev}, although their definition can be very tricky due to the anisotropy \cite{joly1992some}.

    Finally, we introduce the operator
    \begin{equation}\label{eq:augmented_dw_operator}
        \calH^\delta_{\AUG, \kpar}  := -\nabla_\xv \cdot A^\delta_\AUG (\xv, s)\, \nabla_\xv + V_\AUG (\xv, s), \qquad \DOM(\calH^\delta_{\AUG, \kpar}) := H^2_{\kpar, \xv} (\cylaug).
    \end{equation}
    Then, the eigenvalue problem \eqref{eq:evp_irrational_edges} can be reformulated as:
    \begin{equation}\label{eq:evp_irrational_edges_bis}
        \displaystyle
        (\vts \calH^\delta_{\AUG, \kpar} - E \vts)\, \Psi_\AUG = 0, \qquad \Psi_\AUG \in \DOM(\calH^\delta_{\AUG, \kpar}).
    \end{equation}
    In other words, augmented edge states are eigenfunctions of $\calH^\delta_{\AUG, \kpar}$.

    While not essential for the remainder of this work, the next result is intended to offer more insight into the spectral properties of $\calH^\delta_{\AUG, \kpar}$. This is a direct adaptation of results on almost periodic \cite{simon1982almost} and random \cite{pastur1980spectral} Schrödinger operators.

    \begin{proposition}\label{prop:relation_augmented_operator}~
        \begin{enumerate}[label={$(\alph*).$}, ref={\theremark.$(\alph*)$}, wide=0pt]
            \item We have:
            \begin{equation}\label{eq:relation_augmented_operator}
            \spforall \kpar \in \R, \quad \calH^\delta_{\AUG, \kpar + 2\pi} = \calH^\delta_{\AUG, \kpar} \textAND \calH^\delta_{\AUG, \kpar + 2\pi r} = \euler^{\vts \icplx\vts 2\pi\vts s}\, \calH^\delta_{\AUG, \kpar}\, \euler^{- \icplx\vts 2\pi\vts s},
            \end{equation}
            where $\euler^{\pm \icplx\vts 2\pi\vts s}$ is to be understood as a multiplication operator. Therefore: 
            \item If $r = b_1 / a_1$ with coprime $(b_1, a_1) \in \Z \times \N$, then $\spec (\calH^\delta_{\AUG, \kpar + 2 \pi / a_1}) = \spec (\calH^\delta_{\AUG, \kpar})$.
            \item If $r$ is irrational, then the spectrum of $\calH^\delta_{\AUG, \kpar}$ is independent of $\kpar$.
        \end{enumerate}
    \end{proposition}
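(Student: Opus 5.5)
The plan is to write the boundary conditions encoded in $L^2_{\kpar}(\cylaug)$ explicitly. Since $\kvh_1\cdot\vvv_1 = 1$ and $\kvh_1\cdot\vvv_2 = 0$, a function $F\in L^2_{\LOC}(\R^3)$ belongs to $L^2_{\kpar}(\cylaug)$ if and only if
\begin{equation*}
F(\xv+\vvv_1, s+r) = \euler^{\vts\icplx\vts\kpar}\, F(\xv,s), \qquad F(\xv+\vvv_2, s-1) = F(\xv,s),
\end{equation*}
together with square integrability over $\Omega\times\R$. Three features of these conditions drive the whole proof: they depend on $\kpar$ only modulo $2\pi$; the norm is the $L^2(\Omega\times\R)$-norm and does not depend on $\kpar$; and the differential expression $-\nabla_\xv\cdot A^\delta_\AUG\nabla_\xv + V_\AUG$ does not depend on $\kpar$.

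\textbf{Part (a).} The first identity in \eqref{eq:relation_augmented_operator} is immediate from the observations above. We have $L^2_{\kpar+2\pi}(\cylaug) = L^2_{\kpar}(\cylaug)$ as Hilbert spaces, and the inductive definition \eqref{eq:def_Hqaug_edge} gives $H^2_{\kpar+2\pi,\xv}(\cylaug) = H^2_{\kpar,\xv}(\cylaug)$. Since the differential expression is the same, the two operators coincide.

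For the second identity, let $F\in L^2_{\kpar}(\cylaug)$ and set $G := \euler^{\icplx\vts 2\pi s}F$. A direct computation gives
\begin{equation*}
G(\xv+\vvv_1,s+r) = \euler^{\icplx(\kpar+2\pi r)}\,G(\xv,s), \qquad G(\xv+\vvv_2,s-1) = \euler^{-2\pi\icplx}\,G(\xv,s) = G(\xv,s).
\end{equation*}
Hence multiplication by $\euler^{\icplx\vts 2\pi s}$ maps $L^2_{\kpar}(\cylaug)$ isometrically onto $L^2_{\kpar+2\pi r}(\cylaug)$; its inverse is multiplication by $\euler^{-\icplx\vts 2\pi s}$. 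This multiplier involves only $s$, and the operator has no $s$-derivatives (Remark \ref{rmk:elliptically_degenerate_operator}). It therefore commutes with $\nabla_\xv$, with $A^\delta_\AUG$ and with $V_\AUG$. It also maps $H^2_{\kpar,\xv}(\cylaug)$ onto $H^2_{\kpar+2\pi r,\xv}(\cylaug)$, which gives the claimed unitary conjugation.

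\textbf{Part (b).} Iterating (a), $\calH^\delta_{\AUG,\kpar+2\pi(p+qr)}$ is unitarily equivalent to $\calH^\delta_{\AUG,\kpar}$ for all $p,q\in\Z$. When $r = b_1/a_1$ with $\gcd(a_1,b_1)=1$, we have $p+qr = (pa_1+qb_1)/a_1$. Bézout's identity lets us choose $p,q$ with $pa_1+qb_1 = 1$, and this yields (b).

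\textbf{Part (c).} When $r$ is irrational, the set $\{2\pi(p+qr)\}_{p,q\in\Z}$ is dense in $\R$. Part (a) therefore shows that the spectrum is the same on a dense orbit of each $\kpar$, so it remains to prove a continuity statement in $\kpar$. Set $\eta := \kpar'-\kpar$. Conjugating by the multiplier $\euler^{\icplx\vts\eta\,\kvh_1\cdot\xv}$, which maps $L^2_{\kpar}$ unitarily onto $L^2_{\kpar'}$, transforms $\calH^\delta_{\AUG,\kpar'}$ into the following operator on $L^2_{\kpar}(\cylaug)$:
\begin{equation*}
-(\nabla_\xv+\icplx\eta\kvh_1)\cdot A^\delta_\AUG\,(\nabla_\xv+\icplx\eta\kvh_1)+V_\AUG .
\end{equation*}
Its difference from $\calH^\delta_{\AUG,\kpar}$ is a first-order operator in $\xv$ whose coefficients are $\calO(|\eta|)$.

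The key estimate, and the main point needing care, is the form bound
\begin{equation*}
\|\nabla_\xv u\|^2 \lesssim \langle \calH^\delta_{\AUG,\kpar}u,u\rangle + C\|u\|^2 .
\end{equation*}
I would obtain it from the quadratic form. The matrix $A^\delta_\AUG = \Id - \delta\kappa a\sigma_2$ is Hermitian and uniformly positive for small $\delta$, and $V$ is bounded. This estimate is what makes the perturbation form-bounded with relative bound $\calO(|\eta|)$, and it does not require any $s$-regularity. By the KLMN-type argument, it implies that $\kpar\mapsto\calH^\delta_{\AUG,\kpar}$, transported to the fixed space $L^2_{\kpar}(\cylaug)$, is continuous in the norm-resolvent sense.

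To conclude (c), let $E\notin\spec(\calH^\delta_{\AUG,\kpar'})$. Norm-resolvent continuity gives a neighbourhood of $\kpar'$ on which $E$ stays in the resolvent set. That neighbourhood contains some $\kpar+2\pi(p+qr)$, so $E\notin\spec(\calH^\delta_{\AUG,\kpar})$. This shows $\spec(\calH^\delta_{\AUG,\kpar})\subseteq\spec(\calH^\delta_{\AUG,\kpar'})$, and exchanging the roles of $\kpar$ and $\kpar'$ gives equality.
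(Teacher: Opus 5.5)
Your proposal is correct and follows the same route as the paper. Part (a) comes from comparing the quasi-periodicity conditions and noting that $e^{2\pi i s}$ commutes with the $s$-derivative-free operator, part (b) from Bézout, and part (c) from density of $2\pi(\mathbb{Z}+r\mathbb{Z})$ combined with continuity of the resolvent after conjugating to a fixed space. The differences are minor: you transport to $L^2_{\kpar}(\cylaug)$ rather than $L^2(\cylaug)$, conclude via stability of the resolvent set under norm-resolvent convergence instead of Hausdorff continuity of $\spec(R(\kpar))$, and your form bound makes explicit the relative-boundedness step that the paper compresses into ``varies polynomially in $\kpar$''.
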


    \begin{dem}
        $(a)$. \textit{Equivalence relations}. The operator ${\calH^\delta_{\AUG, \kpar}}$ depends on $\kpar$ only through its domain. Therefore, \eqref{eq:relation_augmented_operator} follows from the relation $\smash{L^2_{\kpar + 2\pi} (\cylaug) = L^2_{\kpar} (\cylaug)}$ and from the fact that $\smash{F \in L^2_{\kpar} (\cylaug)}$ if and only if $\smash{\euler^{\vts \icplx\vts 2\pi\vts s}\, F \in L^2_{\kpar + 2\pi r} (\cylaug)}$. 

        \vspace{1\baselineskip} \noindent
        $(b)$. \textit{Spectrum for rational edges}. Assume that $r = b_1 / a_1$, with coprime $(a_1, b_1) \in \N \times \Z$. Since there exists a pair of coprime integers $(a_2, b_2) \in \Z^2$ such that $a_1\vts b_2 - a_2\vts b_1 = 1$, we get the unitary equivalence relation
        \begin{equation*}
            \calH^\delta_{\AUG, \kpar + 2\pi / a_1} = \calH^\delta_{\AUG, \kpar - 2\pi\vts a_2\vts r + 2\pi b_2} = \euler^{- \icplx\vts 2\pi\vts a_2\vts s}\, \calH^\delta_{\AUG, \kpar}\, \euler^{\vts \icplx\vts 2\pi\vts a_2\vts s} \quad \textnormal{by \eqref{eq:relation_augmented_operator},}
        \end{equation*}
        which implies that $\spec (\calH^\delta_{\AUG, \kpar + 2\pi / a_1}) = \spec (\calH^\delta_{\AUG, \kpar})$.

        \vspace{1\baselineskip} \noindent
        $(c)$. \textit{Spectrum for irrational edges}. In order to prove that $\spec (\calH^\delta_{\AUG, \kpar})$ is independent of $\kpar$ for irrational edges, it is convenient to work with the operator acting on $L^2(\cylaug)$:
        \begin{equation*}
            \begin{array}{r@{\ }l}
            \calH^\delta_{\AUG} (\kpar) &:= \displaystyle \euler^{-\icplx\vts \kpar\vts (\kvh_1 \cdot \xv)}\, \calH^\delta_{\AUG, \kpar}\, \euler^{\vts \icplx\vts \kpar\vts (\kvh_1 \cdot \xv)} 
            \rets
            &\phantom{:}= -(\vts \nabla_\xv + \icplx\, \kpar\vts \kvh_1 \vts) \cdot A^\delta_\AUG (\xv, s)\, (\vts \nabla_\xv + \icplx\, \kpar\vts \kvh_1 \vts) + V_\AUG (\xv, s).
            \end{array}
        \end{equation*}
        It follows from \eqref{eq:relation_augmented_operator} that for any $(m, n) \in \Z^2$,
        \begin{equation*}
            \calH^\delta_\AUG (\kpar + 2\pi (m + n r)) = \euler^{- \icplx\vts 2\pi (m + n r)\, \kvh_1 \cdot \xv}\, \euler^{\vts \icplx\vts 2\pi n\vts s}\, \calH^\delta_\AUG (\kpar)\, \euler^{- \icplx\vts 2\pi n\vts s}\, \euler^{\vts\icplx\vts 2\pi (m + n r)\, \kvh_1 \cdot \xv}.
        \end{equation*}
        As a consequence,
        \begin{equation}\label{eq:relation_augmented_operator_fixed_domain}
            \spforall (m, n) \in \Z^2, \quad \spec \big(\vts \calH^\delta_\AUG (\kpar + 2\pi (m + n r)) \vts\big) = \spec \big(\vts \calH^\delta_\AUG (\kpar) \vts\big).
        \end{equation}
        Since $\calH^\delta_{\AUG} (\kpar)$ varies polynomially in $\kpar$, the map $R: \kpar \mapsto (\vts \calH^\delta_{\AUG} (\kpar) - \icplx \vts)^{-1}$ is continuous from $\R$ to $\scrL(L^2(\cylaug))$. By classical perturbation arguments \cite[Theorem V.4.10]{kato1995perturbation}, it follows that $\kpar \in \R \mapsto \spec (R(\kpar))$ is continuous with respect to the Hausdorff metric. Moreover, by \eqref{eq:relation_augmented_operator_fixed_domain}, $\kpar \mapsto \spec (R(\kpar))$ is constant on the set of periods $2\pi r \Z + 2\pi \Z$. Since $2\pi r \Z + 2\pi \Z$ is dense in $\R$ by Kronecker's theorem \cite[Theorem 442]{hardy1979introduction}, it follows that $\kpar \mapsto \spec (R(\kpar))$ is constant on $\R$. Thus, $\spec (\calH^\delta_{\AUG} (\kpar))$, or equivalently $\spec (\calH^\delta_{\AUG, \kpar})$, is independent of $\kpar$.
    \end{dem}

    \subsection{Statement of the main result, Theorem \texorpdfstring{\ref{thm:resolvent_expansion}}{7.1} on the resolvent expansion}\label{sec:main_results}
    \noindent
    In this section, we state and discuss the main result of this paper, Theorem \ref{thm:resolvent_expansion}, on the asymptotic expansion of the resolvent of $\calH^\delta_{\AUG, \Kv \cdot \vvh_1 + \delta \mu}$ as $\delta \to 0$. Motivated by the multiscale study in Section \ref{sec:ansatz_and_multi_scale_analysis} we begin by introducing the operators involved in the formulation of Theorem \ref{thm:resolvent_expansion}. 

    For any $\iiv = (\Kv_\iiv, m_\iiv) \in \bbL = \{\Kv, \Kv'\} \times \Z$ and $\Phi^{\Kv_\iiv} = (\Phi^{\Kv_\iiv}_1, \Phi^{\Kv_\iiv}_2)^\transp$, using the pair $(\gamma_\iiv, \lvar_\iiv)$ defined by \eqref{eq:gamma_Iandlambda_I}, we introduce
    \begin{equation}\label{eq:mode_for_averaging_operator}
        \varphi_\iiv (\xv, s) := \exp \Big[\vts\icplx\vts \big(\vts \hyperref[eq:gamma_Iandlambda_I]{\gamma_\iiv}\, \kvh_1 \cdot \xv + (\hyperref[eq:gamma_Iandlambda_I]{\lvar_\iiv} + \Kv \cdot \vvv_2)\vts s \vts\big)\vts\Big]\, \Phi^{\Kv_\iiv} (\xv), \quad (\xv, s) \in \R^2 \times \R.
    \end{equation}
    Moreover, define the bounded linear operator $F(\xv, s) \mapsto  (\calT_\iiv\, F)(\zeta)$, which averages over $\Omega$:
    \begin{gather}
        \calT_\iiv : L^2_{\Kv \cdot \vvh_1} (\cylaug) \to L^2 (\R; \C^2), \nonumber
        \\
        (\calT_\iiv\, F) (\zeta) := 
        \int_\Omega \vts\overline{\varphi_\iiv(\xv,\zeta - \kvh_2 \cdot \xv)}\, F\vts\vts (\xv, \zeta - \kvh_2 \cdot \xv)\, d\xv, \quad \zeta \in \R, \label{eq:averaging_operator}
    \end{gather}
    and its bounded linear adjoint $g(\zeta)\mapsto (\calT^*_\iiv\, g) (\xv, s)$:
    \begin{gather}
    \calT^*_\iiv: L^2 (\R; \C^2)\to L^2_{\Kv \cdot \vvh_1} (\cylaug), %
    \\
    (\calT^*_\iiv\, g) (\xv, s) := \varphi_\iiv (\xv, s)^\transp\, g \big(\kvh_2 \cdot (\xv + s \vvv_2)\big), \quad (\xv, s) \in \R^2 \times \R.\label{eq:averaging_operator_adjoint}
    \end{gather}
    Further, define the unitary scaling operators on $L^2(\R)$ given by
    \begin{equation}
        \label{eq:dilation_operators}
        \displaystyle
        \calU_\delta\, g (\zeta) := \delta^{-1/2}\, g(\delta^{-1}\, \zeta) \textAND \calU^*_\delta\, g (\zeta) = \delta^{1/2}\, g(\delta\, \zeta), \quad \zeta \in \R.
    \end{equation}
    and introduce the composition:
    \begin{equation}\label{eq:def_calJ}
        \calJ_{\delta, \iiv} := \calU_\delta\, \calT_\iiv \in \scrL(L^2_{\Kv \cdot \vvh_1} (\cylaug), L^2 (\R; \C^2)).
    \end{equation}
    The results of Section \ref{sec:recap_multiscale} may be summarized as follows: for any given $\iiv \in \bbL (\delta)$, an eigenpair $(z_j (\muhat),\, \alpha^{\Kv_\iiv}_j (\cdot\,;\, \muhat)) \in \R \times L^2(\R; \C^2)$ of the effective Dirac operator $\calD^{\Kv_\iiv} (\muhat)$ gives rise, for any $\mu$ in compact subsets of $\R$, to an $\calO(\delta)$--approximate eigenvalue $E_D + \delta\vts z_j(\mu + \delta^{-1} \gamma_\iiv)$ of $\calH^\delta_{\AUG, \kpar = \Kv \cdot \vvh_1 + \delta \mu}$, with an associated $\calO(\delta)$--approximate eigenfunction, defined by \eqref{eq:approximate_augmented_eigenfunction_1}, expressible as
    \begin{equation}\label{eq:approximate_augmented_eigenfunction_2}
        \Psi^{\delta, (0)}_{\AUG, \iiv, j} = \delta^{-1/2}\, \euler^{\vts\icplx\vts \delta \mu \kvh_1 \cdot \xv}\, \calJ_{\delta, \iiv}^*\, \alpha^{\Kv_\iiv}_j(\cdot\, ; \mu + \delta^{-1} \gamma_\iiv).
    \end{equation}
    This expression underlies the role played by $\calJ_{\delta, \iiv}$ in our main Theorem \ref{thm:resolvent_expansion}, below.

    By arranging the families $\smash{(\calT_\iiv)_{\iiv \in \bbL}}$ and $\smash{(\calJ_{\delta, \iiv})_{\iiv \in \bbL}}$ into columns, we obtain
    \begin{equation}\label{eq:concatenated_averaging_operator}
        \calT := \begin{pmatrix} \vdots \\ \calT_\iiv \\ \vdots \end{pmatrix}, \quad \calJ_\delta := \calU_\delta\, \calT = \begin{pmatrix} \vdots \\ \calJ_{\delta, \iiv} \\ \vdots \end{pmatrix}: L^2_{\Kv \cdot \vvh_1} (\cylaug) \to [ L^2(\R; \C^2) ]^{\bbL}.
    \end{equation}
    Proposition \ref{prop:pties_averaging_operator} presents some useful properties for $\calT_\iiv$, $\calJ_{\delta, \iiv}$, $\calT$, $\calJ_\delta$. In particular, we prove that the column operators $\calT, \calJ_\delta$ are bounded:
    \begin{equation}\label{eq:T_J_delta_are_bounded}
        \spforall q \in \N_0, \quad \calT, \calJ_\delta \in \scrL(H^q_{\Kv \cdot \vvh_1, \xv} (\cylaug),\, \ell^2_q (\bbL; L^2(\R; \C^2))).
    \end{equation}
    Here, $H^q_{\Kv \cdot \vvh_1, \xv} (\cylaug)$ is defined by \eqref{eq:def_Hqaug_edge}, and for any Banach space $\scrX$, $\ell^2_q (\bbL; \scrX)$ is defined by
    \begin{equation}\label{eq:def_ell2_L_X}
        \ell^2_q (\bbL; \scrX) := \bigg\{ (g_\iiv)_{\iiv \in \bbL} \in \scrX^{\bbL}\ \ \Big|\ \ \sum_{\iiv \in \bbL} (1 + m^2_\iiv)^q\, \|g_\iiv\|^2_{\scrX} < \infty \bigg\},
    \end{equation}
    and is equipped with the norm arising from \eqref{eq:def_ell2_L_X}.
    We use the following abbreviated notations:  $\ell^2 (\bbL; \scrX) = \ell^2_{q = 0} (\bbL; \scrX)$, $\ell^2_q (\bbL) = \ell^2_q (\bbL; \C)$, and $\ell^2 (\bbL) = \ell^2_{q = 0} (\bbL; \C)$. 

    In preparation for our resolvent expansion, we introduce the block-diagonal Dirac operator defined for $(\mu, \delta) \in \R \times \R^*_+$ by
    \begin{multline}\label{eq:block_diagonal_dirac_operator}
        \calD^{\vts \delta} (\mu) = %
        \begin{pmatrix}
            \ddots & & (0)
            \\
            & \calD^{\Kv_\iiv} (\mu + \delta^{-1} \gamma_\iiv) & 
            \\
            (0) & & \ddots
        \end{pmatrix}_{\iiv\in\bbL}
        : \ell^2(\bbL; H^1(\R; \C^2)) \to \ell^2 (\bbL; L^2(\R; \C^2)), \quad \textnormal{that is},
        \rets
        \spforall \bm{g} = (g_\iiv)_{\iiv\in\bbL} \in \ell^2(\bbL; H^1(\R; \C^2)), \quad \big( \calD^{\vts \delta} (\mu)\, \bm{g} \big)_\iiv := \calD^{\Kv_\iiv} (\mu + \delta^{-1} \gamma_\iiv)\, g_\iiv, \quad \spforall \iiv \in \bbL.
    \end{multline}
    The operators $\calD^{\Kv_\iiv} (\mu + \delta^{-1} \gamma_\iiv)$ are the effective Dirac operators, arising in our formal multiple scale expansion; see \eqref{eq:effective_Dirac_operator}.

    For any $\bbL' \subseteq \bbL$, we introduce a block-diagonal projection operator, denoted $\mathds{1}_{\bbL'}$. For every $\bm{g} = (g_\iiv)_{\iiv\in\bbL} \in \ell^2(\bbL; L^2(\R; \C^2))$, we define $\mathds{1}_{\bbL'}\; \bm{g}\in \ell^2(\bbL; L^2(\R; \C^2))$ by 
    \begin{equation}
        \big( \mathds{1}_{\bbL'}\; \bm{g} \big)_\iiv := 
        \left\{
            \begin{array}{l@{\quad}l}
            g_\iiv, & \spforall \iiv \in \bbL',
            \rets
            0, & \spforall \iiv \not\in \bbL'.\label{eq:block_diagonal_restriction_operator}
            \end{array}
        \right.   
    \end{equation}
    Clearly, $\mathds{1}_{\bbL'}$  is bounded on $\ell^2(\bbL; L^2(\R; \C^2))$.

    Our main result states that after an energy-centering about $E=E_D$ and $\delta$--dependent rescaling of $\calH^\delta_{\AUG, \kpar}$, $\kpar = \Kv \cdot \vvh_1 + \delta \mu$, so that the bulk energy gap is of order one, the resolvent  is well-approximated, for $\delta \to 0$, by the resolvent of the  block-diagonal effective Dirac operator $\calD^{\vts \delta} (\mu)$.

    \begin{theorem}[Resolvent expansion for general edges]\label{thm:resolvent_expansion}
        Let $\vvh_1 = \vvv_1 + r \vvv_2$, where $r \in \R$. Suppose that the Assumptions \ref{hyp:Dirac_point}, \ref{hyp:omnidirectional_no_fold}, \ref{hyp:non_degeneracy} hold. Fix $\eta, \mu_0 > 0$, and a compact set $S \subset \C$. There exists $\delta_0 > 0$ such that if
        \begin{equation*}
            \delta \in (0, \delta_0), \quad \mu \in [-\mu_0, \mu_0], \quad \kpar = \Kv \cdot \vvh_1 + \delta \mu, \quad z \in S, \quad \dist \big(z,\; \spec \calD^{\vts \delta} (\mu)\big) > \eta,
        \end{equation*}
        then the operator $\calH^\delta_{\AUG, \kpar} - E_D - \delta\, z$ is invertible from $H^2_{\kpar, \xv} (\cylaug)$ to $L^2_{\kpar} (\cylaug)$, and we have the expansion:
        \begin{multline}\label{eq:resolvent_expansion_Haug}
            \bigg(\vts \frac{\calH^\delta_{\AUG, \kpar} - E_D}{\delta} - z \vts\bigg)^{-1} = \euler^{\vts\icplx\vts \delta \mu \kvh_1 \cdot \xv}\, \calJ^*_\delta\; \mathds{1}_{\bbL (\delta^{3/4})}\; \big(\vts \calD^{\vts \delta} (\mu) - z \vts\big)^{-1}\; \mathds{1}_{\bbL (\delta^{3/4})}\; \calJ^{}_\delta\, \euler^{-\icplx\vts \delta \mu \kvh_1 \cdot \xv} %
            \\
            + \frac{1}{\eta}\, \calO (\delta^{1/4}), \quad \textnormal{in}\ \ \scrL(L^2_{\kparsubscript} (\cylaug)),
        \end{multline}
        \bypierre{where $\bbL (\delta^{3/4}) \subset \bbL$ is the set introduced in \eqref{eq:def_L_veps}.}
    \end{theorem}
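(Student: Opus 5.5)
Our plan is a Schur complement (Lyapunov--Schmidt) reduction carried out after a partial Fourier transform in the augmented variable $s$. Since $\calH^\delta_{\AUG,\kpar}$ has no $s$-derivatives and its coefficients depend on $s$ only through the slow variable $\zeta=\delta\kvh_2\cdot(\xv+s\vvv_2)$, we first conjugate by $\euler^{\icplx\delta\mu\kvh_1\cdot\xv}$, which replaces $\kpar$ by $\Kv\cdot\vvh_1$ and adds $\calO(\delta)$ first-order terms in $\nabla_\xv$. We then Fourier transform in $s$. For $F\in L^2_{\Kv\cdot\vvh_1}(\cylaug)$, the frequency $\lvar$ dual to $s$, combined with the pseudo-periodicity across $\av_1,\av_2$, couples $\lvar$ to a two-dimensional quasi-momentum $\kv(\lvar)$ on a line (an ``edge quasi-momentum slice''). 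In this representation the unperturbed operator $\calH^0_{\AUG}$ acts fibrewise as $\calH^0_{\kv}$, and its spectrum near $E_D$ comes from dispersion functions $E_b(\kv)$ evaluated along the slice.

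The first step is to localize in quasi-momentum. By Assumption \ref{hyp:omnidirectional_no_fold}, the dispersion functions are bounded away from $E_D$ outside any fixed neighborhood of $\bbK$; we take this from the lemma announced after the assumption. The slice passes within distance $\veps$ of $\bbK$ exactly for the indices $\iiv\in\bbL(\veps)$. This should follow from a wrapping-lemma computation, since the distance to $\Kv_\iiv+\Lambda^*$ along the slice is governed by $|\gamma_\iiv|$ and the transverse offset.

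We then split $L^2_{\Kv\cdot\vvh_1}(\cylaug)=\mathrm{Ran}\,\Pi\oplus\mathrm{Ran}\,\Pi^\perp$. Here $\Pi$ projects onto functions of the form $\calT_\iiv^*\chi(\delta^{-\nu}D_\zeta)g_\iiv$ with $\iiv\in\bbL(\delta^{3/4})$, that is, Dirac-mode components spectrally localized within $\delta^\nu$ of $\Kv_\iiv$, for some $\nu<1$. We would choose $\nu$ and the cutoff $\delta^{3/4}$ together, following Remark \ref{rmk:technical_exponent}. Proposition \ref{prop:pties_averaging_operator} shows that $\calT$ is essentially a partial isometry onto this range, with almost-orthogonality across different $\iiv$ coming from distinct $\lvar_\iiv$.

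The second step treats the complement. On $\mathrm{Ran}\,\Pi^\perp$ we show $\|(\calH^\delta_\AUG-E_D-\delta z)^{-1}\|\lesssim\delta^{-\nu}$, or more precisely a bound of the form $\delta^{-1}\cdot o(1)$ after rescaling. This bound has three sources:
\begin{itemize}
\item components far from $\bbK$ are controlled by the no-fold gap;
\item components in bands other than $b_\star,b_\star+1$ near $\bbK$ are controlled by the spectral gap of $\calH^0_{\Kv_\star}$ on $P_\perp$;
\item Dirac-band components at distance between $\delta^\nu$ and $\calO(1)$ from $\Kv_\star$ are controlled by the conical estimate \eqref{eq:conical_behavior_Dirac}, which gives $|E_\pm(\kv)-E_D|\gtrsim\delta^\nu\gg\delta$.
\end{itemize}
The $\delta$-perturbation $\nabla\cdot\kappa a\sigma_2\nabla$ is $\calO(\delta)$ relative to $\calH^0$, so it is absorbed by a Neumann series on $\Pi^\perp$. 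The off-diagonal blocks $\Pi\calH\Pi^\perp$ are $\calO(\delta)$ because $\Pi$ commutes with $\calH^0$ up to the localization.

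The third step handles the effective block. We compute $\delta^{-1}\Pi(\calH^\delta_\AUG-E_D)\Pi$. Since $\kappa(\zeta)$ depends only on $\zeta$, $\calT$ intertwines it with multiplication by $\kappa$ up to commutator errors of order $\delta^{1-\nu}$ coming from the cutoff $\chi(\delta^{-\nu}D_\zeta)$. The computation of \eqref{eq:effective_Dirac_1} via Propositions \ref{prop:expression_Phi_l_nabla_Phi_j} and \ref{prop:expression_Phi_l_W_Phi_j}, after rescaling by $\calU_\delta$, then shows that this block is unitarily close to $\mathds{1}\,\calD^{\delta}(\mu)\,\mathds{1}$. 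The errors are $\calO(\delta^{\nu})$ from quadratic remainders in $\kv-\Kv_\star$ and $\calO(\delta^{1-\nu})$ from cutoffs, uniformly in $\iiv$ because $\gamma_\iiv/\delta$ stays bounded only on the relevant range. Indices with $\delta<|\gamma_\iiv|\le\delta^{3/4}$ have large $\muhat$, hence large gap $\theta_\GAP(\muhat)$, so the hypothesis $\dist(z,\spec\calD^\delta(\mu))>\eta$ controls all blocks uniformly. The Schur complement formula then yields \eqref{eq:resolvent_expansion_Haug} with error $\eta^{-1}\calO(\delta^{1/4})$ once the exponents are balanced. Invertibility of $\calH^\delta_{\AUG,\kpar}-E_D-\delta z$ from $H^2_{\kpar,\xv}$ follows from the same construction together with $x$-elliptic regularity applied fibrewise.

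The main obstacle is uniformity over the infinite index set. Different $\iiv$ correspond to Dirac-point neighborhoods that, for irrational $r$, accumulate along the slice. Near-orthogonality of the projections $\calT_\iiv^*\calT_\iiv$ and the $\ell^2(\bbL)$-summability of cross terms must therefore be established carefully, presumably through the $\ell^2_q$ bounds \eqref{eq:T_J_delta_are_bounded} and the decay of Floquet--Bloch coefficients \eqref{eq:link_Sobolev_regularity_decay_FB_coeffs} in $m_\iiv$.
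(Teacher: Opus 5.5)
Your architecture is the paper's: conjugation by $\euler^{\icplx\delta\mu\kvh_1\cdot\xv}$, partial Fourier transform in $s$, wrapping-lemma localization to $\bbL(\delta^{3/4})$, and a Schur complement whose far block is controlled by the no-fold and conical bounds. The gap is that your choice of near projector breaks the off-diagonal estimate the Schur step relies on.

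As you define it, $\Pi=\sum_{\iiv\in\bbL(\delta^{3/4})}\calT_\iiv^*\chi(\delta^{-\nu}D_\zeta)\calT_\iiv$. For $\nu=3/4$ this is exactly $\calJ_\delta^*\mathds{1}_{\bbL(\delta^{3/4})}\chi(\delta^{1/4}D_\zeta)\mathds{1}_{\bbL(\delta^{3/4})}\calJ_\delta$. On the fibre $\lvar$ it projects onto the \emph{frozen} modes $\euler^{\icplx\modK\cdot\xv}\Phi^{\Kv_\iiv}_j$, $\modK=\modK_\iiv(\lvar)$, not onto the Bloch modes $\Phi_\pm(\cdot\,;\Kv+\lvar\kvh_2)$. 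It therefore does not commute with $\calH^0_{\AUG,\Kv\cdot\vvh_1}$. Indeed $(\calH^0-E_D)\euler^{\icplx\modK\cdot\xv}\Phi^{\Kv_\iiv}_j=\euler^{\icplx\modK\cdot\xv}(-2\icplx\modK\cdot\nabla+|\modK|^2)\Phi^{\Kv_\iiv}_j$. Its component $\euler^{\icplx\modK\cdot\xv}P_\perp(-2\icplx\modK\cdot\nabla\Phi^{\Kv_\iiv}_j)$ outside the frozen span is in general nonzero and of size $|\modK|$. Moreover $|\modK|$ reaches $\delta^{3/4}$ through $|\gamma_\iiv|\le\delta^{3/4}$ alone.

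So $\bfB,\bfC$ are $\calO(\delta^{3/4})$, not $\calO(\delta)$. With your bound $\|\bfD^{-1}\|\lesssim\delta^{-\nu}$, $\nu=3/4$, the Schur correction $\bfB\bfD^{-1}\bfC$ can only be bounded by $\calO(\delta^{3/4})$. That swamps the $\calO(\delta)$ block $\delta(\calD^{\delta}(\mu)-z)$ you want to isolate.

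The paper avoids this by building $\Pi_\iiv$ from the exact band projectors $\Pi^0_{\Kv+\lvar\kvh_2}$ onto bands $b_\star,b_\star+1$, over $|\lvar-\lvar_\iiv|\le\delta^{3/4}$. Since $\calH^0_{\AUG,\Kv\cdot\vvh_1}$ is fibred in $\lvar$, $\Pi_\NEAR$ commutes with it exactly, so only the $\calO(\delta)$ terms of \eqref{eq:conjugated_augmented_operator_alt} enter $\bfB,\bfC$. Your frozen projector then appears only inside the near block, as the $\calO(\delta^{3/4})$-accurate approximation of $\Pi_{\bbL'}$ obtained from the expansion \eqref{eq:asymptotics_Pi0k_H_E_Pi0k} (Proposition \ref{prop:asymptotics_near_projectors}). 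You could keep your projector only by adding an argument you do not have. It would show that the $\calH^0$-part of $\bfC$ lands in window fibres, where the compressed far operator has an $\calO(1)$ inverse by the gap of $\calH^0_{\Kv_\star}$ on $\operatorname{Ran}P_\perp$, so its second-order contribution is $\calO(\delta^{3/2})$. This needs care because the domain wall makes $\bfD^{-1}$ non-fibred.

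The second gap is the coupling between distinct indices, which you name as the main obstacle but do not resolve. Your Step 3 treats each $\iiv$ separately. Yet the domain-wall term couples $\scrX_\iiv$ and $\scrX_\jjv$ for $\iiv\neq\jjv$, through entries $\chi(\delta^{1/4}D_\zeta)M_{\iiv,\jjv}\euler^{\icplx\delta^{-1}(\lvar_\jjv-\lvar_\iiv)\zeta}\kappa(\zeta)\chi(\delta^{1/4}D_\zeta)$ with $M_{\iiv,\jjv}$ from \eqref{eq:def_M_I_J} generally nonzero. Commutator estimates with the cutoff do not make these small. Two ingredients are needed:
\begin{enumerate}
\item a per-entry bound $\calO(\delta^{3/4})$, which comes from the uniform separation of the centres $\lvar_\iiv$ relative to the $\delta^{3/4}$-windows together with $\kappa'\in L^\infty$, as in \eqref{eq:dw-bound};
\item a Schur test over the infinite index set (Lemma \ref{lem:dominated_boundedness}), using that $M_{\iiv,\jjv}$ is a Fourier coefficient at $\floK_\iiv-\floK_\jjv$ of a smooth periodic function and that $\iiv\mapsto\floK_\iiv$ is injective for fixed $\Kv_\star$.
\end{enumerate}
The $\ell^2_q$ bounds on $\calT$ that you cite do not supply this.

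Finally, your uniformity remark is wrong. For irrational $r$, $\delta^{-1}\gamma_\iiv$ is not bounded on $\bbL(\delta^{3/4})$ but ranges up to $\delta^{-1/4}$. Hence $\|\mathds{1}_{\bbL(\delta^{3/4})}(\calD^{\delta}(\mu)-z)\mathds{1}_{\bbL(\delta^{3/4})}\|_{H^1\to L^2}\sim\delta^{-1/4}$. In the paper this loss, combined with $\calJ_\delta\calJ_\delta^*=\Id+\calO(\delta)$ holding only from $\ell^2(\bbL;H^1)$ to $\ell^2(\bbL;H^{-1})$, is what fixes the rate $\delta^{1/4}$ (Remark \ref{rmk:technical_exponent}).
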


    \begin{remark}
        By simply replacing $\Kv$ with $\Kv'$ in the various objects involved in the statement of Theorem \ref{thm:resolvent_expansion}, a similar resolvent expansion can be established for parallel quasi-momenta located in a neighborhood of $\Kv' \cdot \vvh_1$.
    \end{remark}

    \vspace{1\baselineskip} \noindent
    Theorem \ref{thm:resolvent_expansion} is a result on the centered and $\delta$--scaled operator $\delta^{-1}\, (\calH^\delta_{\AUG, \Kv \cdot \vvh_1 + \delta \mu} - E_D)$. The resolvent of this operator is shown in \eqref{eq:resolvent_expansion_Haug} to be similar, at first-order, to the resolvent of the block-diagonal Dirac operator $\calD^{\vts \delta} (\mu)$ defined by \eqref{eq:block_diagonal_dirac_operator}. For rational edges, a similar expansion was derived in \cite{drouot2020edge} for the two-dimensional Schrödinger operator $-\nabla \cdot A^\delta (\xv)\, \nabla + V (\xv)$ acting on an appropriate space. Here, working with the augmented Schrödinger operator enables us to study irrational edges. In Section \ref{sec:resolvent_expansion_rational_edges}, we discuss this expansion for rational edges. 

    Given its definition \eqref{eq:resolvent_expansion_Haug}, the spectral properties of $\calD^{\vts \delta} (\mu)$ can be derived from those of the effective Dirac operators $\calD^{\Kv_\star} (\muhat)$ defined in \eqref{eq:effective_Dirac_operator} for $\Kv_\star \in \{\Kv, \Kv'\}$, $\muhat \in \R$. For instance,
    \begin{equation*}
        \spec \calD^{\vts \delta} (\mu) = \overline{\bigcup_{\iiv \in \bbL}\, \spec \calD^{\Kv_\iiv} (\mu + \delta^{-1} \gamma_\iiv)}.
    \end{equation*}
    As stated in Proposition \ref{prop:pties_Dirac_operator} and shown in Figure \ref{fig:spectrum_Dirac_operator}, the essential spectrum of each effective operator $\calD^{\Kv_\star} (\muhat)$ is purely absolutely continuous, and is given by the complement in $\R$ of some interval $(-\theta_\GAP (\muhat), \theta_\GAP (\muhat))$ containing finitely many simple and isolated eigenvalues. If $r$ is rational, then $\gamma_\iiv$ vanishes for $\delta$ small enough and $\iiv \in \bbL(\delta^{3/4})$ (see Proposition \ref{prop:veps_neighborhood_K_rational}), so that $\calD^{\Kv_\iiv} (\mu + \delta^{-1} \gamma_\iiv) = \calD^{\Kv_\iiv} (\mu)$. In that case, the spectrum of $\spec \calD^{\vts \delta} (\mu)$ contains \emph{finitely} many eigenvalues of \emph{infinite} multiplicity. However, if $r$ is irrational, then the family $(\gamma_\iiv)_{\iiv \in \bbL}$ is dense in $[-\pi, \pi]$. As a consequence, the spectrum of $\calD^{\vts \delta} (\mu)$ is $\R$, and contains a \emph{dense} set of eigenvalues of \emph{finite} multiplicity, a subset of which is dense in the bulk gap $(-\theta_\GAP, \theta_\GAP)$, and a subset of which is embedded in its absolutely continuous part $\R \setminus (-\theta_\GAP, \theta_\GAP)$. This is illustrated in Figure \ref{fig:spectrum_block_diagonal_operator}. In particular, for $r \in \R \setminus \Q$, the assumption on $z$ in Theorem \ref{thm:resolvent_expansion} reduces to $|\Imag z| > \eta$.

    \begin{figure}[ht!]
        \makebox[\textwidth][c]{
            \includegraphics[page=6]{figures/tikzpictures.pdf}
        }
        \caption{The spectrum of the block-diagonal operator $\calD^{\vts \delta} (\mu)$, obtained by sampling the spectrum of $\calD^{\Kv_\iiv} (\muhat)$ at $\muhat = \mu + \delta^{-1} \gamma_\iiv$ for $\iiv \in \bbL$. The red dots are the eigenvalues embedded in the gap $(-\theta_\GAP, \theta_\GAP)$. The blue dots represent the eigenvalues embedded in the absolutely continuous spectrum of $\calD^{\vts \delta} (\mu)$, and extend to $\theta^\delta_\GAP := \theta_\GAP (\mu + \delta^{-1}\vts \pi)$. For rational edges, $\calD^{\vts \delta} (\mu)$ has a finite number of eigenvalues of \emph{infinite} multiplicity (top panel), whereas for irrational edges, $\calD^{\vts \delta} (\mu)$ has a \emph{dense} set of simple eigenvalues (bottom panel).\label{fig:spectrum_block_diagonal_operator}}
    \end{figure}

    \begin{remark}\label{rmk:technical_exponent}
        For technical reasons, the quantity $\delta^{3/4}$ appearing in the resolvent expansion cannot be replaced by $\delta$. More generally, one can replace $\delta^{3/4}$ by $\delta^\nu$, which changes the order of the error term in \eqref{eq:resolvent_expansion_Haug}. The exponent $\nu$ has to be taken in a subinterval of $(0, 1)$, which is determined in the proof of Theorem \ref{thm:resolvent_expansion}. 
        For rational edges, $\nu \in (1/2, 1)$, and $\nu = 2/3$, as chosen in \cite{drouot2020edge}, yields an $\calO(\delta^{1/3})$--error term. For irrational edges, $\nu \in (2/3, 1)$, and based on our calculations, the optimal (though likely non-sharp) error order is $\calO(\delta^{1/4})$, achieved when $\nu = 3/4$. The difference in outcomes is due to the existence of constants $C_0, C_1 > 0$ such that
        \begin{equation*}
            \spforall \iiv \in \bbL (\delta^\nu), \ \ 
            \| \calD^{\Kv_\iiv} (\mu + \delta^{-1} \gamma_\iiv) \|_{\scrL(H^1(\R), L^2(\R))} \underset{\delta \to 0}{\sim} C_0 + C_1\, |\delta^{-1} \gamma_\iiv|.
        \end{equation*}
        For rational edges, $\gamma_\iiv$ vanishes for a small enough $\delta$ and for $\iiv \in \bbL (\delta^\nu)$ (Proposition \ref{prop:veps_neighborhood_K_rational}), so that the $H^1(\R) \to L^2(\R)$ norm of the Dirac operator is bounded in $\delta$. However, if $r$ is irrational, then since $(\delta^{-1} \gamma_\iiv)_{\iiv \in \bbL(\delta^\nu)}$ is dense in $[-\delta^{\nu-1}, \delta^{\nu-1}]$, and $\nu < 1$, the above norm is allowed to tend to infinity as $\delta \to 0$. The treatment of this unboundedness property is what leads to an $\calO(\delta^{1/4})$--error, which is larger than the $\calO(\delta^{1/3})$--error of  the rational case.
    \end{remark}

    \subsection{Strategy for the proof of Theorem \ref{thm:resolvent_expansion}}\label{sec:strategy_proof}
    For convenience, we conjugate $\calH^\delta_{\AUG, \Kv \cdot \vvh_1 + \delta \mu}$ to an operator $\calH^\delta_{\AUG, \Kv \cdot \vvh_1} (\mu)$ which acts, for each  $\mu \in \R$, on the fixed space $L^2_{\Kv \cdot \vvh_1} (\cylaug)$:
    \begin{align}
        \calH^\delta_{\AUG, \Kv \cdot \vvh_1} (\mu) &:= \euler^{-\icplx\vts \delta \mu \kvh_1 \cdot \xv}\, \calH^\delta_{\AUG, \Kv \cdot \vvh_1 + \delta \mu}\, \euler^{\vts\icplx\vts \delta \mu \kvh_1 \cdot \xv} \nonumber
        \retss
        &= -\big(\vts \nabla_\xv + \icplx \delta \mu \kvh_1 \vts\big) \cdot A^\delta_\AUG (\xv, s)\, \big(\vts \nabla_\xv + \icplx \delta \mu \kvh_1 \vts\big) + V_\AUG (\xv, s). \label{eq:conjugated_augmented_operator}
    \end{align}
    The proof of Theorem \ref{thm:resolvent_expansion} uses the overall strategy in \cite{drouot2019characterization,drouot2020edge}:
    \begin{enumerate}[label={($\arabic*$). }]
        \bypierre{\item {\it Spectral properties of the unperturbed operator $\calH^0_{\AUG, \Kv \cdot \vvh_1}$, Section \ref{sec:spectrum_unperturbed_augmented_operator}:}}
        \\
        In Section \ref{sec:Fourier_analysis_L2aug}, we study the spectral properties of the unperturbed augmented operator $\calH^0_{\AUG, \Kv \cdot \vvh_1} (\mu) = \calH^0_{\AUG, \Kv \cdot \vvh_1} = -\Delta_\xv + V(\xv)$ acting on $L^2_{\Kv \cdot \vvh_1} (\cylaug)$. Using the partial Fourier transform in $s$ and Floquet-Bloch theory, we establish in Proposition \ref{prop:completeness_slices} a generalized eigenfunction expansion for elements of $L^2_{\Kv \cdot \vvh_1} (\cylaug)$. The generalized eigenfunctions appearing in this expansion are constructed from Floquet-Bloch modes whose quasi-momenta are located on the line $\Kv + \R \kvh_2$. This motivates the study in Section \ref{sec:edge_slices} of the line $\Kv + \R\, \kvh_2$ projected to the torus $\R^2 / \Lambda^*$. For irrational edges, this projection is dense. Section \ref{sec:veps_neighborhood} then characterizes the values of $\lvar \in \R$ for which $\Kv + \lvar \kvh_2$ lies in a neighborhood of the lattice of all high-symmetry quasi-momenta: $\bbK = (\Kv + \Lambda^*) \cup (\Kv' + \Lambda^*)$. This characterization is less trivial for irrational edges, and relies on the Wrapping Lemma \ref{lem:wrapping_lemma}.
        \item {\it Projections on the modes near and away from the Dirac point, Section \ref{sec:near_far_quasimomentum_energy_components}}:
        \\
        The multiple scale analysis of Section \ref{sec:ansatz_and_multi_scale_analysis} produces wave-packet type states, which are spectrally concentrated in a neighborhood of the Dirac point of $\calH^0=-\Delta+V$. This motivates, in Section \ref{sec:near_far_quasimomentum_energy_components}, our introducing projectors:  $\Pi_\NEAR$ and $\Pi_\FAR = \Id - \Pi_\NEAR$, to spectrally localize elements of $L^2_{\Kv \cdot \vvh_1} (\cylaug)$ near and  away from the Dirac point. The construction of $\Pi_\NEAR$ relies heavily on the step $(1)$ above. 
        \item {\it Schur complement / Lyapunov-Schmidt reduction scheme, Section \ref{sec:schur_complement}}:
        \\
        Writing $L^2_{\Kv \cdot \vvh_1} (\cylaug) = \scrX_\NEAR \oplus \scrX_\FAR$ where $\scrX_\NEAR := \Ran (\Pi_\NEAR)$ and $\scrX_\FAR := \Ran (\Pi_\FAR)$, the operator $\calH^\delta_{\AUG, \Kv \cdot \vvh_1} (\mu) - E_D - \delta z$ can be expressed in block form:
        \begin{equation*}
            \displaystyle
            \calH^\delta_{\AUG, \Kv \cdot \vvh_1} (\mu) - E_D - \delta z = %
            \begin{pmatrix}
            \bfA & \bfB
            \\
            \bfC & \bfD
            \end{pmatrix},
        \end{equation*}
        where $\bfA: \scrX_\NEAR \to \scrX_\NEAR$, $\bfB: \scrX_\FAR \to \scrX_\NEAR$, $\bfC: \scrX_\NEAR \to \scrX_\FAR$, and $\bfD: \scrX_\FAR \to \scrX_\FAR$. In Section \ref{sec:schur_complement}, we prove that $\bfD = \Pi_\FAR\, (\vts \calH^\delta_{\AUG, \Kv \cdot \vvh_1} (\mu) - E_D - \delta z \vts)\, \Pi_\FAR$ is invertible under the omnidirectional no-fold condition stated in Assumption \ref{hyp:omnidirectional_no_fold}. Further, we prove that the off-diagonal operators $\bfB$ and $\bfC$ have norms that tend to zero, as $\delta \to 0$, and hence the invertibility of  $\calH^\delta_{\AUG, \Kv \cdot \vvh_1} (\mu) - E_D - \delta z$ on $L^2_{\Kv \cdot \vvh_1} (\cylaug)$ reduces to that of $\bfA = \Pi_\NEAR\, (\vts \calH^\delta_{\AUG, \Kv \cdot \vvh_1} (\mu) - E_D - \delta z \vts)\, \Pi_\NEAR$ on $\scrX_\NEAR$.
        \item {\it Asymptotic expansion of the resolvent for $\delta \to 0$, Section \ref{sec:asymptotic_expansions}}:
        \\
        We expand the inverse  of $\bfA$. In Section \ref{sec:spectral_localization_tools}, we prove some useful properties of the operators $\calT_\iiv$, $\calJ_{\delta, \iiv}$, $\calT$, $\calJ_\delta$, and their adjoints. Section \ref{sec:unperturbed_operator_expansion} expands the near projector $\Pi_\NEAR$ in terms of the operators $\calJ_{\delta, \iiv}$ involved in Theorem \ref{thm:resolvent_expansion}; see Proposition \ref{prop:asymptotics_near_projectors}. Then, in Section \ref{sec:proof_asymptotics_HtildeDeltaNEAR}, we show that $\bfA$ is asymptotically governed by a family of band-limited Dirac operators; see Proposition \ref{prop:asymptotics_HdeltaNEAR}. This result is more technical than the rational case, due to the \emph{infinite} family of Dirac operators. At last, Section \ref{sec:proof_resolvent_expansion_HtildeDeltaNEAR} contains the expansion of $(\bfA/\delta)^{-1}$ (Proposition \ref{prop:resolvent_expansion_HtildeDeltaNEAR}) in terms of  $(\calD^\delta (\mu)-z)^{-1}$, which we use to conclude  the proof of Theorem \ref{thm:resolvent_expansion}.
    \end{enumerate}

    \section{Band dispersion slices}\label{sec:spectrum_unperturbed_augmented_operator}
    \noindent
    In what follows, the parallel quasi-momentum is set to $\kpar = \Kv \cdot \vvh_1$. 
    
    \subsection{Spectral representation of \texorpdfstring{$\calH^0_{\AUG, \Kv \cdot \vvh_1}$}{H0aug}}\label{sec:Fourier_analysis_L2aug}
    The goal of this section is to derive a spectral representation of $\calH^0_{\AUG, \Kv \cdot \vvh_1} = -\Delta_\xv + V (\xv)$. Since this operator commutes with translations in $s$, a natural tool for its study is the \emph{partial Fourier transform in $s$}, $\scrF_\Kv$, defined by \eqref{eq:directional_Fourier}. The properties of $\scrF_\Kv$ are presented in Lemma \ref{lem:directional_Fourier}. Combining these properties with Floquet-Bloch theory, we obtain in Proposition \ref{prop:completeness_slices} a generalized eigenfunction expansion for elements of $L^2_{\Kv \cdot \vvh_1} (\cylaug)$. This result is the analogue of \cite[Theorem 4.2]{fefferman2016edge}, which treats rational edges.
    
    It can be seen from \eqref{eq:def_L2aug} that $\smash{F \in L^2_{\Kv \cdot \vvh_1} (\cylaug)}$ if and only if $\smash{\euler^{- \icplx \vts \Kv \cdot (\xv + s \vvv_2)}\; F \in L^2(\cylaug)}$. Assuming that $\smash{\euler^{- \icplx \vts \Kv \cdot (\xv + s \vvv_2)}\; F \in \scrC^\infty_0 (\cylaug)}$, we can then introduce
    \begin{equation}\label{eq:directional_Fourier}
        (\scrF_\Kv F) (\xv, \lvar) = \widetilde{F} (\xv, \lvar) := \frac{1}{\sqrt{2\pi}} \int_{\R} F (\xv, s)\, \euler^{- \icplx\vts (\Kv + \lvar \kvh_2) \cdot s \vvv_2}\, d s, \quad (\xv, \lvar) \in \R^2 \times \R .
    \end{equation}
    From its definition, $\lvar \mapsto \widetilde{F} (\xv, \lvar)$ is the Fourier transform of $s \mapsto F(\xv, s)\, \euler^{- \icplx\vts (\Kv \cdot \vvv_2)\vts s}$, where the term $\euler^{- \icplx\vts (\Kv \cdot \vvv_2)\vts s}$ was added simply for convenience. To study the properties of $\scrF_\Kv$, introduce the space
    \begin{align}\label{eq:L2aug_alt}
        L^2(\R_\lvar; L^2_{\Kv + \lvar \kvh_2}) := \big\{\vts \widetilde{F} \in L^2_{\LOC} (\R^3) \ \ \big|\ \ \widetilde{F}(\cdot\,, \lvar) \in L^2_{\Kv + \lvar \kvh_2}, \ \widetilde{F}(\xv, \cdot) \in L^2(\R)\vts \big\},
        \intertext{which is equipped with the $L^2(\Omega \times \R)$--norm. Similarly, for any $q \in \N_0$, define}
        L^2(\R_\lvar; H^q_{\Kv + \lvar \kvh_2}) := \big\{\vts \widetilde{F} \in L^2_{\LOC} (\R^3) \ \ \big|\ \ \widetilde{F}(\cdot\,, \lvar) \in H^q_{\Kv + \lvar \kvh_2}, \ \widetilde{F}(\xv, \cdot) \in L^2(\R)\vts \big\},\label{eq:Hqaug_alt}
    \end{align}
    equipped with the $L^2(\R_\lvar; H^q (\Omega))$--norm. We emphasize that $L^2(\R_\lvar; H^q_{\Kv + \lvar \kvh_2})$ is anisotropic, as its definition involves derivatives with respect to the variable $\xv$ only. We next show that the transform $\scrF_\Kv$ defines an isomorphism from $H^q_{\kpar = \Kv \cdot \vvh_1, \xv} (\cylaug)$ to $L^2(\R_\lvar; H^q_{\Kv + \lvar \kvh_2})$ for any $q \in \N_0$, and we provide some of its properties.

    \begin{lemma}[Properties of $\scrF_\Kv$]\label{lem:directional_Fourier}~
        \begin{enumerate}[label={$(\alph*).$}, ref={\theproposition.$(\alph*)$}, wide=0pt]
            \item The transform $\scrF_\Kv$, defined in \eqref{eq:directional_Fourier} for smooth functions, extends by density to a continuous mapping from $L^2_{\kpar = \Kv \cdot \vvh_1} (\cylaug)$ to $L^2(\R_\lvar; L^2_{\Kv + \lvar \kvh_2})$.
            \item The adjoint of $\scrF_\Kv$ is the operator $\scrF_\Kv^*: L^2(\R_\lvar; L^2_{\Kv + \lvar \kvh_2}) \to L^2_{\kpar = \Kv \cdot \vvh_1} (\cylaug)$ defined for any $G \in L^2(\R_\lvar; L^2_{\Kv + \lvar \kvh_2})$ as
            \begin{equation}\label{eq:directional_Fourier-adjoint}
            (\scrF_\Kv^*\, G) (\xv, s) := \frac{1}{\sqrt{2\pi}} \int_{\R} G (\xv, \lvar)\, \euler^{\vts \icplx\vts (\Kv + \lvar \kvh_2) \cdot s \vvv_2}\, d \lvar, \quad (\xv, s) \in \R^2 \times \R.
            \end{equation}
            \item  We have the Plancherel-like identity:
            \begin{equation}\label{eq:directional_Plancherel}
            \spforall F, G \in L^2_{\Kv \cdot \vvh_1} (\cylaug), \quad \lla\vts G,\, F\vts \rra_{L^2_{\Kv \cdot \vvh_1} (\cylaug)} = \int_\R \big\langle\vts \scrF_\Kv G(\cdot\,, \lvar),\, \scrF_\Kv F (\cdot\,, \lvar) \vts\big\rangle_{L^2_{\Kv + \lvar \kvh_2}}\! d\lvar.
            \end{equation}
            \item We have the relations:
            \begin{equation}\label{eq:inverse_directional_Fourier}
            \scrF_\Kv^*\, \scrF^{}_\Kv = \Id_{L^2_{\Kv \cdot \vvh_1} (\cylaug)}, \quad \scrF^{}_\Kv\, \scrF_\Kv^* = \Id_{L^2(\R_\lvar; L^2_{\Kv + \lvar \kvh_2})}.
            \end{equation}
            \item The transform $\scrF_\Kv$ commutes with multiplication by functions constant in $s$: given a function $p \in \scrC^0(\R^2 / \Lambda)$ identified with its extension $(\xv, s) \mapsto p(\xv)$, we have 
            \begin{equation}\label{eq:Fourier_commutes_with_x_functions}
            \displaystyle
            \spforall F \in L^2_{\Kv \cdot \vvh_1} (\cylaug), \quad \scrF_\Kv\, (p\, F)\, (\xv, \lvar) = p(\xv)\, \scrF_\Kv F (\xv, \lvar), \quad (\xv, \lvar) \in \R^2 \times \R.
            \end{equation}
            \item The transform $\scrF_\Kv$ commutes with differential operators in $\xv$:
            \begin{equation}\label{eq:Fourier_commutes_with_differential_operators}
            \spforall F \in H^1_{\kpar = \Kv \cdot \vvh_1, \xv} (\cylaug), \quad \nabla_\xv\, \scrF_\Kv\vts F = \scrF_\Kv\vts \nabla_\xv\, F.
            \end{equation}
            In particular, $\scrF_\Kv$ is isomorphic from $H^q_{\kpar = \Kv \cdot \vvh_1, \xv} (\cylaug)$ to $L^2(\R_\lvar; H^q_{\Kv + \lvar \kvh_2})$ for any $q \in \N$.
        \end{enumerate}
    \end{lemma}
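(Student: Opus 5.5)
The plan is to reduce everything to the ordinary Fourier transform in $s$ by a change of variables adapted to the identification $\cylaug \equiv \Omega \times \R$ in \eqref{eq:identification_cylaug}. Given $F \in L^2_{\Kv\cdot\vvh_1}(\cylaug)$, set $G(\xv,s) := \euler^{-\icplx\Kv\cdot(\xv+s\vvv_2)}F(\xv,s)$. Using $\Kv\cdot\vvh_1 = \Kv\cdot\vvv_1 + r\,\Kv\cdot\vvv_2$ and $\kvh_1 = \kv_1$, one checks that $G$ is invariant under $\av_1$ and $\av_2$, so $G \in L^2(\cylaug)$ with the same norm. Then
\[
\widetilde F(\xv,\lvar) = \euler^{\icplx\Kv\cdot\xv}\,\frac{1}{\sqrt{2\pi}}\int_\R G(\xv,s)\,\euler^{-\icplx\lvar s(\kvh_2\cdot\vvv_2)}\,ds ,
\]
and $\kvh_2\cdot\vvv_2 = 1$. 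Thus, for each fixed $\xv$, $\lvar\mapsto \euler^{-\icplx\Kv\cdot\xv}\widetilde F(\xv,\lvar)$ is the unitary one-dimensional Fourier transform \eqref{eq:def_unitary_Fourier_transform} of $s\mapsto G(\xv,s)$.

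\textbf{Step 1: pseudo-periodicity of $\widetilde F(\cdot,\lvar)$.} For smooth compactly supported $G$ (dense in $L^2(\cylaug)$), I compute $\widetilde F(\xv+\vvv_1,\lvar)$ and $\widetilde F(\xv+\vvv_2,\lvar)$. I use the boundary conditions $F(\xv+\vvv_1,s) = \euler^{\icplx\Kv\cdot\vvh_1}F(\xv,s-r)$ and $F(\xv+\vvv_2,s) = F(\xv,s+1)$, and shift the integration variable. Because $\kvh_2\cdot\vvv_1 = -r$ and $\kvh_2\cdot\vvv_2 = 1$, the factor produced is exactly $\euler^{\icplx(\Kv+\lvar\kvh_2)\cdot\vvv_j}$. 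Hence $\widetilde F(\cdot,\lvar) \in L^2_{\Kv+\lvar\kvh_2}$.

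\textbf{Step 2: Plancherel, extension and adjoint.} Fubini in $\xv\in\Omega$, combined with the one-dimensional Plancherel identity applied for a.e. $\xv$, gives \eqref{eq:directional_Plancherel} on the dense subspace. This is legitimate because integrating over $\Omega\times\R_s$ is integrating over the cylinder, and $|\euler^{\icplx\Kv\cdot\xv}| = 1$. The isometry then extends by density, which proves (a) and (c).

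For (b), I define $\scrF_\Kv^*$ by \eqref{eq:directional_Fourier-adjoint}. The same computation read backwards shows that $\scrF_\Kv^*$ maps $L^2(\R_\lvar;L^2_{\Kv+\lvar\kvh_2})$ into $L^2_{\Kv\cdot\vvh_1}(\cylaug)$: multiply by $\euler^{-\icplx\Kv\cdot\xv}$, then use pseudo-periodicity in $\xv$ to recover the $\av_j$-invariance. It is the formal adjoint by Fubini.

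Part (d) follows from one-dimensional Fourier inversion, applied pointwise in $\xv$. This gives $\scrF_\Kv^*\scrF_\Kv = \Id$, and surjectivity comes from the reverse composition $\scrF_\Kv\scrF_\Kv^* = \Id$, checked the same way.

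\textbf{Step 3: parts (e) and (f).} Part (e) is immediate, since $p(\xv)$ factors out of the $s$-integral on smooth functions and then extends by continuity.

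For (f), the key point is that $\scrF_\Kv$ acts only in $s$ at fixed $\xv$, so it commutes with $\nabla_\xv$. I prove this distributionally by pairing against test functions $\phi(\xv,\lvar)$, integrating by parts in $\xv$, and using $\scrF_\Kv^*$ to move the transform onto $\phi$. Iterating gives the isomorphism $H^q_{\Kv\cdot\vvh_1,\xv}(\cylaug)\to L^2(\R_\lvar;H^q_{\Kv+\lvar\kvh_2})$, with norms matching through \eqref{eq:directional_Plancherel} applied to each $\partial^\alpha_\xv F$.

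\textbf{Main obstacle.} I expect the main difficulty to be bookkeeping, not analysis. Pseudo-periodicity of $F$ couples $\xv$ and $s$, so the density and Fubini arguments must be run on genuine $\av$-invariant functions on $\Omega\times\R$ rather than on $\R^3$. The delicate point is verifying that the fibres land in the $\lvar$-dependent spaces $L^2_{\Kv+\lvar\kvh_2}$, which is the phase computation in Step 1.
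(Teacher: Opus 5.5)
Your proposal is correct and follows essentially the same route as the paper: you reduce to the one-dimensional unitary Fourier transform in $s$ after factoring out $\euler^{-\icplx\Kv\cdot(\xv+s\vvv_2)}$, check pseudo-periodicity of the fibres by shifting the integration variable, obtain the Plancherel identity by integrating the one-dimensional identity over $\Omega$ and extending by density, and prove commutation with $\nabla_\xv$ distributionally by moving the transform onto test functions via $\scrF_\Kv^*$. Your explicit verification that $\scrF_\Kv^*$ lands in the pseudo-periodic space is slightly more detailed than the paper, but there is no substantive difference.
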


    \noindent
    \bypierre{The proof of Lemma \ref{lem:directional_Fourier} is delayed to Appendix \ref{sec:directional_Fourier}, since it relies on classical arguments. Here, however, let us just show that} $\scrF_\Kv F(\cdot\,, \lvar)$ is $\Kv + \lvar \kvh_2$--pseudo-periodic with respect to the triangular lattice $\Lambda = \Z \vvv_1 + \Z \vvv_2$ for any $\lvar \in \R$. Assume that $F \in L^2_{\Kv \cdot \vvh_1} (\cylaug)$ is such that $\euler^{- \icplx \vts \Kv \cdot (\xv + s \vvv_2)}\; F \in \scrC^\infty_0 (\cylaug)$. On one hand, using the relations
    \begin{equation*}
        F(\xv + \vvv_1, s + r) = \euler^{\vts \icplx\vts \Kv \cdot \vvh_1}\, F(\xv, s), \quad \Kv \cdot \vvh_1 = (\Kv + \lvar \kvh_2) \cdot \vvh_1, \textAND \vvh_1 = \vvv_1 + r\vts \vvv_2,
    \end{equation*}
    we obtain
    \begin{align*}
        \scrF_\Kv F (\xv + \vvv_1, \lvar) &%
        = \euler^{\vts \icplx\vts \Kv \cdot \vvh_1} \frac{1}{\sqrt{2\pi}} \int_\R F (\xv, s - r)\, \euler^{- \icplx\vts (\Kv + \lvar \kvh_2) \cdot s \vvv_2}\, d s
        \\
        &= \euler^{\vts \icplx\vts (\Kv + \lvar \kvh_2) \cdot \vvh_1}\, \euler^{- \icplx\vts (\Kv + \lvar \kvh_2) \cdot r \vvv_2}\, \frac{1}{\sqrt{2\pi}} \int_\R F (\xv, s)\, \euler^{- \icplx\vts (\Kv + \lvar \kvh_2) \cdot s \vvv_2}\, d s 
        \\
        &= \euler^{\vts \icplx\vts (\Kv + \lvar \kvh_2) \cdot \vvv_1}\, \scrF_\Kv F (\xv, \lvar).
        \\
        \intertext{On the other hand, the property $F(\xv + \vvv_2, s - 1) = F(\xv, s)$ leads to}
        \scrF_\Kv F (\xv + \vvv_2, \lvar) &%
        = \frac{1}{\sqrt{2\pi}} \int_\R F (\xv, s + 1)\, \euler^{- \icplx\vts (\Kv + \lvar \kvh_2) \cdot s \vvv_2}\, d s = \euler^{\vts \icplx\vts (\Kv + \lvar \kvh_2) \cdot \vvv_2}\, \scrF_\Kv F  (\xv, \lvar).
    \end{align*}
    Consequently, $\scrF_\Kv F(\cdot\,, \lvar) \in L^2_{\Kv + \lvar \kvh_2}$ for any $\lvar \in \R$.

    \bypierre{Recall that $\calH^0_\kv = -\Delta + V(\xv)$ acts on $L^2_\kv$ for any $\kv \in \scrB$. By Lemma \ref{lem:directional_Fourier},
    \begin{equation*}
        \calH^0_{\AUG, \Kv \cdot \vvh_1} = \scrF^*_\Kv\, \left( \int^\oplus_{\lvar \in \R} \calH^0_{\Kv + \lvar \kvh_2}\, d\lvar \right)\, \scrF^{}_\Kv.
    \end{equation*}
    Therefore, for every $b \geq 1$ and $\lvar \in \R$, $E_b (\Kv + \lvar \kvh_2)$ is a generalized eigenvalue of $\calH^0_{\AUG, \Kv \cdot \vvh_1}$ with associated generalized eigenfunction $\Phi_b (\xv; \Kv + \lvar \kvh_2)\, \euler^{\vts \icplx\vts (\Kv + \lvar \kvh_2) \cdot s \vvv_2}$.} The completeness of the Floquet-Bloch eigenfunctions $\{\Phi_b (\cdot\,; \kv)\}_{b \geq 1}$ in $L^2_{\kv}$ for any $\kv \in \scrB$ leads directly to the following result.
    \begin{proposition}[Generalized eigenpairs of $\calH^0_{\AUG, \Kv \cdot \vvh_1} (\cylaug)$]\label{prop:completeness_slices}~
        \begin{enumerate}[label={$(\alph*).$}, ref={\theproposition.$(\alph*)$}, wide=0pt]
            \item Any $F \in L^2_{\Kv \cdot \vvh_1} (\cylaug)$ can be expressed as a continuous superposition of Floquet-Bloch modes with quasi-momenta located on the line $\Kv + \R\, \kvh_2$:
            \begin{subequations}
                \begin{align}
                    F(\xv, s) &= \frac{1}{\sqrt{2\pi}} \int_\R \widetilde{F} (\xv, \lvar)\, \euler^{\vts \icplx\vts (\Kv + \lvar \kvh_2) \cdot s \vvv_2}\, d \lvar
                    \\
                    &= \int_\R \sum_{b \geq 1} \widetilde{F}_b(\lvar)\, \Phi_b (\xv; \Kv + \lvar \kvh_2)\, \euler^{\vts \icplx\vts (\Kv + \lvar \kvh_2) \cdot s \vvv_2}\, d\lvar, \textFOR (\xv, s) \in \R^2 \times \R.\label{eq:Fxs-expanded-2}
                    \\
                    \intertext{Here,}
                    \widetilde{F}_b (\lvar) &:= \frac{1}{\sqrt{2\pi}} \big\langle \Phi_b (\cdot\,; \Kv + \lvar \kvh_2),\, \widetilde{F} (\cdot\,, \lvar) \big\rangle_{L^2_{\Kv + \lvar \kvh_2}} \textFOR (b, \lvar) \in \N \times \R.\label{eq:def_Ftilde_b}
                \end{align}
            \end{subequations}
            \item The estimate \eqref{eq:link_Sobolev_regularity_decay_FB_coeffs} combined with the properties of $\scrF_\Kv$ in Lemma \ref{lem:directional_Fourier} provides the following characterization for any $q \in \N_0$ and $F \in H^q_{\Kv \cdot \vvh_1, \xv} (\cylaug)$:
            \begin{equation}\label{eq:link_augmented_Sobolev_regularity_decay_FB_coeffs}
                \|\vts F \vts\|^2_{H^q_{\Kv \cdot \vvh_1, \xv} (\cylaug)} = \|\vts \widetilde{F} \vts\|^2_{L^2(\R_\lvar; H^q_{\Kv + \lvar \kvh_2})} \hyperref[item:notation_lesssim_simeq]{\simeq} \int_\R \sum_{b \geq 1} (1 + b^2)^{q/2}\, |\widetilde{F}_b (\lvar)|^2\, d\lvar.
            \end{equation}
            \item For any $F \in H^2_{\Kv \cdot \vvh_1, \xv} (\cylaug)$ and for $(\xv, s) \in \R^2 \times \R$,
            \begin{align}
                \calH^0_{\AUG, \Kv \cdot \vvh_1}\, F (\xv, s) &= \frac{1}{\sqrt{2\pi}} \int_\R \big[\vts \calH^0_{\Kv + \lvar \kvh_2} \widetilde{F} (\cdot\,, \lvar) \vts\big] (\xv)\, \euler^{\vts \icplx\vts (\Kv + \lvar \kvh_2) \cdot s \vvv_2}\, d \lvar \label{eq:fibered_H0aug}
                \\
                &= \int_\R \sum_{b \geq 1} E_b (\Kv + \lvar \kvh_2)\, \widetilde{F}_b(\lvar)\, \Phi_b (\xv; \Kv + \lvar \kvh_2)\, \euler^{\vts \icplx\vts (\Kv + \lvar \kvh_2) \cdot s \vvv_2}\, d\lvar. \label{eq:spectral_representation_H0aug}
            \end{align}
            In particular,
            \begin{equation*}
                \displaystyle
                \spec(\calH^0_{\AUG, \Kv \cdot \vvh_1}) = \big\{\vts E_b (\Kv + \lvar \kvh_2),\ \lvar \in \R,\ b \geq 1 \vts\big\}.
            \end{equation*}
        \end{enumerate}
    \end{proposition}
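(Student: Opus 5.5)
The plan is to deduce all three parts from Lemma \ref{lem:directional_Fourier}, which gives the unitary fibering $\scrF_\Kv$, combined fiberwise with the completeness of Floquet-Bloch modes. For part $(a)$, take $F \in L^2_{\Kv \cdot \vvh_1} (\cylaug)$ and set $\widetilde{F} = \scrF_\Kv F$. Then $F = \scrF^*_\Kv \widetilde F$ by \eqref{eq:inverse_directional_Fourier}, which is exactly the first formula by \eqref{eq:directional_Fourier-adjoint}. For almost every $\lvar$ we have $\widetilde F(\cdot\,,\lvar) \in L^2_{\Kv + \lvar \kvh_2}$, and since $\{\Phi_b(\cdot\,;\Kv + \lvar \kvh_2)\}_{b \geq 1}$ is an orthonormal basis of that space, we expand
$\widetilde F(\xv,\lvar) = \sum_b \sqrt{2\pi}\, \widetilde F_b(\lvar)\, \Phi_b(\xv;\Kv+\lvar\kvh_2)$ in $L^2_{\Kv+\lvar\kvh_2}$, using the normalization \eqref{eq:def_Ftilde_b}. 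Substituting this into the first formula gives \eqref{eq:Fxs-expanded-2}. This is to be understood in the $L^2$ sense: the partial sums over $b$ converge in $L^2(\R_\lvar; L^2_{\Kv+\lvar\kvh_2})$ by dominated convergence and Parseval fiberwise, and $\scrF_\Kv^*$ is bounded.

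One technical point must be handled. The eigenfunctions $\Phi_b(\cdot\,;\kv)$ are defined for $\kv \in \scrB$, while here $\kv = \Kv + \lvar\kvh_2$ ranges over all of $\R^2$. I would extend them by setting $\Phi_b(\xv;\kv + \bm{g}) := \Phi_b(\xv;\kv)$ for $\bm{g} \in \Lambda^*$, since $L^2_{\kv+\bm g}=L^2_\kv$. I would also choose them measurably in $\lvar$, for instance via measurable selection of eigenprojections, or by working with the spectral projectors so that phase ambiguities cancel. Only this measurability is needed for $\widetilde F_b$ to be measurable.

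For part $(b)$, the equality of norms is part $(f)$ of Lemma \ref{lem:directional_Fourier}, which with $q=0$ is Plancherel \eqref{eq:directional_Plancherel}. The equivalence then follows by applying \eqref{eq:link_Sobolev_regularity_decay_FB_coeffs} at each fiber $\kv = \Kv + \lvar\kvh_2$ and integrating in $\lvar$. It is crucial here that the implied constants in \eqref{eq:link_Sobolev_regularity_decay_FB_coeffs} are uniform in $\kv$, so that they survive the integration. One also uses that the $H^q_\kv$ norm is invariant under the $\Lambda^*$-shift used to reduce $\kv$ to $\scrB$, up to uniform constants. The factor $2\pi$ from the normalization of $\widetilde F_b$ is absorbed into $\simeq$.

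For part $(c)$, I would use \eqref{eq:Fourier_commutes_with_x_functions} and \eqref{eq:Fourier_commutes_with_differential_operators} to get $\scrF_\Kv(-\Delta_\xv + V)F(\cdot\,,\lvar) = \calH^0_{\Kv+\lvar\kvh_2}\widetilde F(\cdot\,,\lvar)$ for $F \in H^2_{\Kv\cdot\vvh_1,\xv}(\cylaug)$. Applying $\scrF^*_\Kv$ then gives \eqref{eq:fibered_H0aug}, and inserting the expansion of part $(a)$ gives \eqref{eq:spectral_representation_H0aug}. Its convergence is controlled by part $(b)$ with $q=2$, since $E_b \lesssim 1+b$ by Weyl's law.

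The spectral statement is the main obstacle, because the direct-integral multiplier is not trivially identified. The operator is unitarily equivalent, through the combined map $F \mapsto (\widetilde F_b(\lvar))_{b,\lvar}$, to multiplication by $E_b(\Kv+\lvar\kvh_2)$ on $L^2(\R;\ell^2(\N))$. The spectrum of such a multiplication operator is the essential range of the multiplier, taken over $(b,\lvar)$. Each $\lvar \mapsto E_b(\Kv+\lvar\kvh_2)$ is Lipschitz, so its essential range equals the closure of its range. Because $\lvar$ runs over all of $\R$, this closure of the union over $b$ is, in general, the set stated in the result. I would interpret the claimed equality as holding up to closure, and then check closedness: each $E_b$ is continuous and periodic modulo $\Lambda^*$, so each range lies in a compact band, and $E_b \to \infty$ gives local finiteness in $b$. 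These facts should yield that the union of the closures is closed, and I would note the closure if exact equality fails for irrational $r$.
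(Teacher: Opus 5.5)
Your proposal follows the paper's route: the paper fibers $\calH^0_{\AUG, \Kv \cdot \vvh_1}$ through $\scrF_\Kv$ as a direct integral of the operators $\calH^0_{\Kv + \lvar \kvh_2}$, then invokes fiberwise Floquet--Bloch completeness together with the $\kv$-uniform estimate \eqref{eq:link_Sobolev_regularity_decay_FB_coeffs}, which is exactly your argument with the measurability and $\Lambda^*$-extension details made explicit. Your hedge on the spectral identity is justified and should be settled in favor of the closure: for irrational $r$ the line $\Kv + \R\,\kvh_2$ never meets $\Lambda^*$, so the band minimum $E_1(0)$, attained only on $\Lambda^*$, is approached but not attained; hence the correct statement is $\spec(\calH^0_{\AUG, \Kv \cdot \vvh_1}) = \overline{\{E_b(\Kv + \lvar \kvh_2) : \lvar \in \R,\ b \geq 1\}}$, which by density equals $\bigcup_{b} E_b(\scrB)$.
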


    \noindent
    Following the completeness property stated in Proposition \ref{prop:completeness_slices}, we give particular attention to the \emph{$\kvh_2$--band dispersion functions}, defined as
    \begin{equation}
    \lvar \in \R \mapsto E_b (\Kv + \lvar \kvh_2), \quad b \geq 1.
    \end{equation}
    We refer to their graphs as \emph{$\kvh_2$--band dispersion slices}.

    \subsection{Behavior of band dispersion slices in the Brillouin zone}\label{sec:edge_slices}
    By periodicity, restricting the dispersion functions $\kv \mapsto E_b (\kv)$ to the line $\Kv + \R\, \kvh_2$ is equivalent to restricting them to the broken line
    \begin{equation*}
        \displaystyle
        \scrD := (\Kv + \R\, \kvh_2 + \Lambda^*) \cap \scrB.
    \end{equation*}
    In Figure \ref{fig:broken_quasi_momentum_line}, we plot the truncation of $\scrD$, obtained from $(\Kv + [-L, L]\, \kvh_2 + \Lambda^*) \cap \scrB$ for $L \gg 1$, for different choices of edges: the zigzag edge ($r = 0$, first panel), the armchair edge ($r = 1$, second panel), a rational edge ($r = 1/2$, third panel), and an irrational edge ($r = \sqrt{2}$, fourth panel). These panels show different behaviors for rational and irrational edges.
    \begin{figure}[ht!]
        \foreach \idI/\idJ in {1/{Zigzag $(r = 0)$},2/{Armchair $(r = 1)$},3/{$r = 0.5$},4/{$r = \sqrt{2}$}} {
            \begin{subfigure}[b]{0.21\textwidth}
            \caption{\idJ\label{fig:broken_quasi_momentum_line_\idI}}
            \makebox[\textwidth][c]{
                \includegraphics[page=\idI]{figures/slice_dispersion.pdf}
            }
            \end{subfigure}
            \hfill
        }
        \caption{The truncated broken line $(\Kv + [-L, L]\, \kvh_2 + \Lambda^*) \cap \scrB$ for $L = 20$.\label{fig:broken_quasi_momentum_line}}
    \end{figure}

    %
    %
    %
    If $r = b_1 / a_1$ is rational, then the broken line $\scrD$ has a finite number of components. Moreover, the function $\lvar \mapsto E_b (\Kv + \lvar \kvh_2)$ is $2\pi\vts a_1$--periodic, and therefore can be reconstructed from its values on the interval $[-\pi\vts a_1, \pi\vts a_1]$. If $r$ is irrational, then $\lvar \mapsto E_b (\Kv + \lvar \kvh_2)$ is no longer periodic, and Figure \ref{fig:broken_quasi_momentum_line} suggests that $\scrD$ is dense in $\scrB$. Indeed, we have

    \begin{theorem}[Kronecker's theorem, \textnormal{\cite[Theorem 442]{hardy1979introduction}}]\label{thm:kronecker_approximation_theorem}
    Let $r$ be irrational and recall that $\kvh_2 := -r \kv_1 + \kv_2$. Then for any $\kv \in \R^2$, the set $\kv + \R\, \kvh_2 + \Lambda^*$ is dense in $\R^2$. In particular, the broken line $\scrD$ comes arbitrarily close to all vertices of the Brillouin zone $\scrB$, the union of the two high-symmetry sublattices.
    \end{theorem}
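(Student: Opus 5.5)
The plan is to reduce the statement to the classical one-dimensional Kronecker theorem: the set $\Z + r\Z$ is dense in $\R$ when $r$ is irrational. Write an arbitrary point $\yv \in \R^2$ and the base point $\kv$ in the dual basis, $\yv - \kv = 2\pi(c_1 \kv_1 + c_2 \kv_2)$ with $c_1, c_2 \in \R$. A generic element of $\kv + \R\,\kvh_2 + \Lambda^*$ is
\[
\kv + \lvar\,\kvh_2 + 2\pi(n_1\kv_1 + n_2\kv_2) = \kv + 2\pi\Big(\big(n_1 - \tfrac{\lvar r}{2\pi}\big)\kv_1 + \big(n_2 + \tfrac{\lvar}{2\pi}\big)\kv_2\Big),
\]
using $\kvh_2 = -r\kv_1 + \kv_2$ from \eqref{eq:def_edge_vectors}. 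Since $\kv_1,\kv_2$ form a basis, it suffices to approximate the pair $(c_1, c_2)$ by $(n_1 - \lvar r/(2\pi),\, n_2 + \lvar/(2\pi))$.

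The second coordinate can be matched exactly. We choose $\lvar = 2\pi(c_2 - n_2)$, which leaves $n_2 \in \Z$ free. The first coordinate then becomes $n_1 - r c_2 + r n_2$, so we must make $n_1 + r n_2$ close to $c_1 + r c_2$. By density of $\Z + r\Z$ in $\R$, for any $\varepsilon > 0$ there exist $n_1, n_2$ with $|n_1 + r n_2 - (c_1 + r c_2)| < \varepsilon$. The corresponding point lies within $2\pi\varepsilon|\kv_1|$ of $\yv$, which gives density in $\R^2$.

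For the density of $\Z + r\Z$ itself, we can invoke \cite[Theorem 442]{hardy1979introduction}. Alternatively, a short argument: $\Z + r\Z$ is an additive subgroup of $\R$ that is not cyclic, because $r$ is irrational. Every proper closed subgroup of $\R$ is cyclic, so its closure must be all of $\R$.

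For the ``in particular'' statement, we take $\kv = \Kv$. Given a vertex $\Kv_\star$ of $\scrB$ and $\varepsilon > 0$, density gives a point $\Kv + \lvar\kvh_2 + \bm{g}$, with $\bm{g} \in \Lambda^*$, within distance $\varepsilon$ of $\Kv_\star$. Since $\scrB$ is a fundamental cell for $\Lambda^*$, translating this point by a suitable element of $\Lambda^*$ lands it in $\scrB$. For $\varepsilon$ small, the translate stays within $\varepsilon$ of the corresponding vertex, which itself lies in $\scrB$ by closedness. Hence $\scrD$ comes arbitrarily close to every vertex, and more generally to every point of $\bbK$ modulo $\Lambda^*$.

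I expect no step to be a genuine obstacle. The only care needed is bookkeeping: checking that the change of coordinates through $\kvh_2 = -r\kv_1 + \kv_2$ turns the problem into the one-dimensional density of $\Z + r\Z$ with the correct sign of $r$.
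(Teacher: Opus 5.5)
Your density argument is correct and is essentially the paper's route. The paper offers no argument beyond the citation to Hardy--Wright. Your change of coordinates is the same bookkeeping as in the proof of the Wrapping Lemma~\ref{lem:wrapping_lemma}: the pairing with $\vvh_2$ is matched exactly by the choice of $\lvar$, and what remains is the pairing with $\vvh_1$, i.e.\ $n_1 + r n_2$ against $c_1 + r c_2$. There this residual becomes $\gamma_\iiv$. Your subgroup proof of the density of $\Z + r\Z$ is also fine; it implicitly uses that a subgroup of a cyclic group is cyclic.

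One step in your last paragraph does not give what you claim. Let the point $\Kv + \lvar\kvh_2 + \bm{g}$ lie within $\varepsilon$ of the vertex $\Kv_\star$ but outside $\scrB$. Then it lies in one of the other two hexagons of the tiling that meet at $\Kv_\star$. Translating it back into $\scrB$ by some $\bm{h} \in \Lambda^*$ puts it within $\varepsilon$ of $\Kv_\star + \bm{h}$, not of $\Kv_\star$. This is a \emph{different} vertex of the same type, e.g.\ $\Kv + 2\pi\kv_2$ instead of $\Kv$. So, as written, you only show that $\scrD$ approaches some vertex in each $\Lambda^*$-class, not every vertex. Your phrase ``lies in $\scrB$ by closedness'' suggests you identified $\Kv_\star + \bm{h}$ with $\Kv_\star$.

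The fix is to drop the translation. Since $\scrB$ is a closed hexagon, the interior points of $\scrB$ at distance less than $\varepsilon$ from $\Kv_\star$ form a nonempty open set. Density of $\Kv + \R\,\kvh_2 + \Lambda^*$ then gives a point of that set directly, i.e.\ a point of $\scrD$ within $\varepsilon$ of $\Kv_\star$. The same observation shows that $\scrD$ is dense in $\scrB$.
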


    \subsection{Neighborhoods of high-symmetry quasi-momenta}\label{sec:veps_neighborhood}
    We take $r$ to be irrational. As in previous work devoted to rational edges \cite{fefferman2016edge,lee2019elliptic,drouot2019characterization,drouot2020edge}, our goal is to construct edge states comprised dominantly of Floquet-Bloch modes $\{\Phi_b(\cdot\,; \kv)\}_{b \geq 1}$ with quasi-momenta $\kv = \Kv + \lvar \kvh_2$ located near the high-symmetry quasi-momenta, $\bbK := (\Kv + \Lambda^*) \cup (\Kv' + \Lambda^*)$. Recall from Proposition \ref{prop:completeness_slices} that the family of Floquet-Bloch modes with quasi-momenta parameterized by the mapping $\lvar \in \R \mapsto \Kv + \lvar \kvh_2$ is complete in $L^2_{\Kv \cdot \vvh_1} (\cylaug)$. Hence, we next characterize\ those $\lvar\in\R$ for which $\Kv + \lvar \kvh_2$ is located in some specified small neighborhood of $\bbK$.

    Since $r$ is irrational, Theorem \ref{thm:kronecker_approximation_theorem} (see also Figure \ref{fig:broken_quasi_momentum_line}) implies that there are points on the line $\Kv + \R \kvh_2$ which are arbitrarily close to points of both the $\Kv$ and $\Kv'$--sublattices. Thus we treat $\Kv$ and $\Kv'$--type points equivalently, unless the distinction is specifically needed. We introduce the countable set 
    \begin{equation}
    \bbL := \{\Kv, \Kv'\} \times \Z.
    \label{eq:bbL-def}
    \end{equation}
    A generic element of $\bbL$ is denoted $\iiv = (\Kv_\iiv, m_\iiv)$.

    We find it natural to introduce the following parallelogram-shaped neighborhoods in $\R^2_\kv$: 
    \begin{align}\label{eq:cell_aligned_with_edge_vectors}
        \spforall 0 < \veps \leq \pi, \quad \widehat{\Omega}_\veps &:= \big\{\vts \yrm_1\vts \kvh_1 + \yrm_2\vts \kvh_2\ \ /\ \ |\yrm_j| < \veps \vts\big\},
        \rets
        \widehat{\Omega} &:= \big\{\vts \yrm_1\vts \kvh_1 + \yrm_2\vts \kvh_2\ \ /\ \ -\pi \leq \yrm_j < \pi \vts\big\}.\label{eq:unit_cell_aligned_with_edge_vectors}
    \end{align}
    Note that $\widehat{\Omega}$ is a fundamental parallelogram of the lattice $\Z\vts (2\pi \kvh_1) + \Z\vts (2\pi \kvh_2)$, with centering at $\bf{0}$.

    We introduce the open set, $B_\veps$, consisting of $\veps$--neighborhoods of all points in the union of high-symmetry quasi-momenta sublattices $\mathbb K=(\Kv+\Lambda^*)\cup (\Kv'+\Lambda^*)$; see Figure \ref{fig:near_quasimomenta}:
    \begin{equation}\label{eq:def_veps_neighborhood}
    \spforall \veps \in (0, 1), \quad B_\veps := B^{\Kv}_\veps \cup B^{\Kv'}_\veps \textWHERE B^{\Kv_\star}_\veps := \Kv_\star + \widehat{\Omega}_\veps + \Lambda^*, \quad \spforall \Kv_\star \in \{\Kv, \Kv'\}.
    \end{equation}
    The shape of these neighborhoods has the desirable property that $\{\lvar \in \R\ /\ \Kv + \lvar \kvh_2 \in B^{\Kv_\star}_\veps\}$ consists of intervals of \emph{equal} length $\veps$. To introduce the centers of these $\veps$--sized intervals, we begin with the following lemma, which shows that any quasi-momentum of the form $\Kv + \lvar \kvh_2$ coincides with a unique element of $\Kv_\star + \widehat{\Omega}$, up to translations by elements of the dual lattice $\Lambda^*$. Recall that $\modulo{\lvar} := \lvar - \lfloor \lvar \rfloor \in [0, 1)$ denotes the fractional part of $\lvar \in \R$.

    \begin{lemma}[Wrapping Lemma]\label{lem:wrapping_lemma}
        Let $r$ be irrational, and consider the infinite line in $\R^2: \Kv+\R \kvh_2$. 
        Here,  
        $\kvh_2= -r\kv_1+\kv_2$ and $\kvh_1=\kv_1$.
        \begin{enumerate}[label={$(\alph*).$}, ref={$(\alph*)$}]
            \item\label{lem:veps_neighborhood_K_a} Fix $\Kv_\star\in\{\Kv,\Kv'\}$. For any $\lvar \in \R$, there is a unique pair:
            \begin{equation}\label{eq:lvar-def}
            \modK (\lvar)= \yrm_1 (\lvar)\, \kvh_1+ \yrm_2 (\lvar)\, \kvh_2 \in \widehat{\Omega}\quad \textrm{(with $|\yrm_j|\le\pi$) and} \quad \floK(\lvar) \in \Lambda^*,
            \end{equation}
            depending on $\Kv_\star$, such that
            \begin{equation}\label{eq:representation_K_plus_lk2} 
            \Kv + \lvar \kvh_2 = \Kv_\star + \modK (\lvar) + \floK(\lvar). %
            \end{equation}
            The vector $\modK(\lvar)$ represents the relative position of $\Kv + \lvar \kvh_2$ within the local coordinate system centered at $\Kv_\star$ and aligned with the lattice $\Z\vts (2\pi \kvh_1) + \Z\vts (2\pi \kvh_2)$, up to translation by an element of  $ \Lambda^*$ (which is $\floK (\lvar)$); see Figure \ref{fig:relative_position_eps_neighborhoods}. 
            \item\label{lem:veps_neighborhood_K_b} For any $\Kv_\star \in \{\Kv,\Kv'\}$, the mapping $\Kv + \lvar \kvh_2 \mapsto \Kv_\star + \modK (\lvar)$, where $\modK (\lvar)$ is given by \eqref{eq:lvar-def}, is a  bijection between the line $\Kv + \R \kvh_2$ and a countable union of segments which is dense in $\Kv_\star + \widehat{\Omega}$ (see Figure \ref{fig:relative_position_eps_neighborhoods}). We label these segments with indices:  $\iiv = (\Kv_\star, m) \in \{\Kv, \Kv'\} \times \Z  = \bbL$, and they have parameterizations:
            \begin{equation}
            \lvar\in \lvar_\iiv + [-\pi  , \pi ) \mapsto  \Kv_\star + \modK_\iiv(\lvar),\nonumber 
            \end{equation}
            where 
            \begin{equation}
            \modK_\iiv(\lvar) := \displaystyle\gamma_\iiv\, \kvh_1 + (\lvar - \lvar_\iiv)\, \kvh_2\in \widehat\Omega, 
            \label{eq:def-l_I}
            \end{equation}
            and where the pair $(\gamma_\iiv, \lvar_\iiv)$ is defined by \eqref{eq:gamma_Iandlambda_I}. The mid-point of the $\iiv$--indexed segment is the point $\gamma_\iiv\, \kvh_1$, which is attained at $\lvar = \lvar_\iiv$.
            \item\label{lem:veps_neighborhood_K_c} As a subset of the infinite line $\Kv+\R\kvh_2$ 
            the $\iiv$--indexed segment in $\Kv_\star + \widehat\Omega$ is given by
            \begin{equation}\label{eq:representation_K_plus_lk2-iiv} 
            \lvar\in \lvar_\iiv + [-\pi  , \pi ] \mapsto  \Kv + \lvar \kvh_2 = \Kv_\star + \modK_\iiv (\lvar) + \floK_\iiv(\lvar), %
            \end{equation}
            where 
            \begin{equation}\label{eq:def_floK}
            \floK_\iiv (\lvar) = \floK_\iiv = \displaystyle 2\pi \left\lfloor \frac{(\Kv - \Kv_\star) \cdot \vvh_1}{2\pi} - m\vts r + \frac{1}{2} \right\rfloor\vts \kv_1 + 2\pi\vts m\vts \kv_2 \in \Lambda^*.
            \end{equation}
            \item\label{lem:veps_neighborhood_K_d} The relation \eqref{eq:representation_K_plus_lk2-iiv}, namely $\Kv + \lvar \kvh_2 = \Kv_\star + \modK_\iiv (\lvar) + \floK_\iiv$, extends to all $\lvar \in \R$. However, $\modK_\iiv (\lvar)$ might not belong to $\widehat{\Omega}$ if $\lvar \not\in \lvar_\iiv + [-\pi, \pi)$.
        \end{enumerate}
    \end{lemma}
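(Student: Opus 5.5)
The approach is to work in the coordinates of the dual basis $\{\kvh_1,\kvh_2\}$ and reduce everything to bookkeeping with the floor function. Since $\{\kvh_1,\kvh_2\}$ is a basis of $\R^2$ dual to $\{\vvh_1,\vvh_2\}$, every $\kv\in\R^2$ writes uniquely as $\kv = (\kv\cdot\vvh_1)\,\kvh_1 + (\kv\cdot\vvh_2)\,\kvh_2$. Elements of $\Lambda^*$ are $2\pi(n\kv_1+m\kv_2)$, and since $\kv_1=\kvh_1$ and $\kv_2 = \kvh_2 + r\kvh_1$, they read
\[
2\pi(n+mr)\,\kvh_1 + 2\pi m\,\kvh_2 .
\]
For part \ref{lem:veps_neighborhood_K_a}, I pair $\Kv+\lvar\kvh_2-\Kv_\star$ with $\vvh_1$ and $\vvh_2=\vvv_2$. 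Using $\kvh_2\cdot\vvh_1=0$, the equation $\Kv+\lvar\kvh_2=\Kv_\star+\modK+\floK$ with $\floK=2\pi(n\kv_1+m\kv_2)$ becomes the pair of scalar equations
\[
\yrm_1 = (\Kv-\Kv_\star)\cdot\vvh_1 - 2\pi(n+mr), \qquad \yrm_2 = (\Kv-\Kv_\star)\cdot\vvv_2 + \lvar - 2\pi m .
\]
The constraint $\yrm_2\in[-\pi,\pi)$ determines $m$ uniquely, namely $m=\lfloor ((\Kv-\Kv_\star)\cdot\vvv_2+\lvar+\pi)/2\pi\rfloor$. With $m$ fixed, $\yrm_1\in[-\pi,\pi)$ determines $n$ uniquely. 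This is exactly the fractional-part computation already done before \eqref{eq:gamma_Iandlambda_I}, and it gives $\yrm_1=\gamma_{(\Kv_\star,m)}$. Existence and uniqueness of $(\modK(\lvar),\floK(\lvar))$ follow. The bound $|\yrm_j|\le\pi$ comes from working with the half-open cell $\widehat\Omega$.

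For parts \ref{lem:veps_neighborhood_K_b}--\ref{lem:veps_neighborhood_K_d}, I fix $\iiv=(\Kv_\star,m)$. The set of $\lvar$ producing this $m$ is $\{\lvar : \yrm_2\in[-\pi,\pi)\}$. By the formula for $\lvar_\iiv$ in \eqref{eq:gamma_Iandlambda_I}, this set is the interval $\lvar_\iiv+[-\pi,\pi)$, and on it $\yrm_2=\lvar-\lvar_\iiv$. Since $\yrm_1=\gamma_\iiv$ depends on $m$ alone, the image of that interval is the segment $\modK_\iiv(\lvar)=\gamma_\iiv\kvh_1+(\lvar-\lvar_\iiv)\kvh_2$, with midpoint $\gamma_\iiv\kvh_1$ attained at $\lvar=\lvar_\iiv$. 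The integer $n$ comes out as $\lfloor (\Kv-\Kv_\star)\cdot\vvh_1/2\pi - mr+1/2\rfloor$, which is the formula \eqref{eq:def_floK} for $\floK_\iiv$; it is independent of $\lvar$ on the segment. Part \ref{lem:veps_neighborhood_K_d} is then the observation that the identity $\Kv+\lvar\kvh_2=\Kv_\star+\modK_\iiv(\lvar)+\floK_\iiv$ is affine in $\lvar$ with constant $\floK_\iiv$. It therefore holds for every $\lvar\in\R$, but $\yrm_2=\lvar-\lvar_\iiv$ leaves $[-\pi,\pi)$ once $\lvar$ exits $\lvar_\iiv+[-\pi,\pi)$.

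The intervals $\lvar_\iiv+[-\pi,\pi)$, $m\in\Z$, tile $\R$, because $\lvar_\iiv$ runs through a translate of $2\pi\Z$. Hence the map is a bijection from the line onto the disjoint union of these segments. Injectivity needs the segments to be distinct, that is, $\gamma_{(\Kv_\star,m)}\neq\gamma_{(\Kv_\star,m')}$ for $m\ne m'$. This holds because equality would force $(m-m')r\in\Z$, which is impossible for irrational $r$.

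I expect the main obstacle to be density of the union of segments in $\Kv_\star+\widehat\Omega$. This amounts to density of $\{\gamma_{(\Kv_\star,m)}\}_{m\in\Z}$ in $[-\pi,\pi)$, i.e. density of $\{\modulo{c-mr}\}_{m\in\Z}$ in $[0,1)$. That follows from Kronecker/Weyl equidistribution, Theorem \ref{thm:kronecker_approximation_theorem}, applied to irrational $r$. One checks that the shift $c$ and the half-open endpoint conventions do not spoil this.
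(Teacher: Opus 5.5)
Your proposal is correct and follows essentially the same route as the paper: pair with $\vvh_1,\vvh_2$ to obtain the two scalar equations, solve the second for $(\yrm_2,m)$ and then the first for $(\yrm_1,n)$ via the floor/fractional-part decomposition, read off $\lvar_\iiv$, $\gamma_\iiv$, $\floK_\iiv$ on each interval $\lvar_\iiv+[-\pi,\pi)$, and get density from Kronecker's theorem. Your explicit injectivity check ($\gamma_{(\Kv_\star,m)}\neq\gamma_{(\Kv_\star,m')}$ for $m\neq m'$, by irrationality of $r$) and your affine-in-$\lvar$ argument for part~(d) spell out steps the paper leaves implicit or calls ``direct computations''.
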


    \begin{figure}[ht!]
        \centering
        \begin{subfigure}{\textwidth}
            \makebox[\textwidth][c]{
            \includegraphics[page=2]{figures/tikzpictures.pdf}
            }
        \end{subfigure}
        \begin{subfigure}{\textwidth}
            \makebox[\textwidth][c]{
            \includegraphics[page=3]{figures/tikzpictures.pdf}
            }
        \end{subfigure}
        \caption{Illustration of the Wrapping Lemma \ref{lem:wrapping_lemma} with $\Kv_\star = \Kv'$;  $\Kv + \lvar \kvh_2 =  \Kv_\star +\modK_\iiv (\lvar) +  \floK_\iiv$, where 
            $\Kv_\star + \modK_\iiv (\lvar)$ is in the fundamental cell $\Kv_\star + \widehat{\Omega}$. Here, $\floK_\iiv = 2\pi\, (3\vts \kv_1 + 3\vts \kv_2)$ and $\modK_\iiv (\lvar)$ is the relative position of $\Kv + \lvar \kvh_2$ in $\Kv_\star + \widehat{\Omega}$ up to translation by $\floK_\iiv$. Recall the relations \eqref{eq:def_edge_vectors}, e.g. that $\kvh_1=\kv_1$. \label{fig:relative_position_eps_neighborhoods}
        }
    \end{figure}

    \begin{dem}
        Our proof of the Part \ref{lem:veps_neighborhood_K_a} of Lemma  \ref{lem:wrapping_lemma} is constructive. Recall that by Kronecker's theorem, the set $\Kv + \R \kvh_2 + \Lambda^*$ restricted to $\Kv_\star + \widehat{\Omega}$ is a countable union of line segments which is dense in $\Kv_\star + \widehat{\Omega}$ (where $\widehat{\Omega}$ is defined by \eqref{eq:unit_cell_aligned_with_edge_vectors}). Let us now parameterize all such line segments in $\widehat\Omega$. %

        Fix $\lvar \in \R$. Let $\modK := \yrm_1\vts \kvh_1 + \yrm_2\vts \kvh_2\in\widehat\Omega$ and $\floK = 2\pi\vts n_1\vts \kv_1 + 2\pi\vts n_2\vts \kv_2\in \Lambda^*$. The constraint that $\modK$ lie in $\widehat\Omega$ implies that $-\pi \le \yrm_j < \pi$.  Part $(a)$ is then equivalent to \emph{finding $\yrm_j \in [-\pi, \pi)$ and $n_j \in \Z$, $j = 1, 2$, such that}
        \begin{equation}\label{eq:yjnj-fit}
            \Kv + \lvar \kvh_2 = \Kv_\star + \yrm_1\vts \kvh_1 + \yrm_2\vts \kvh_2 + 2\pi\vts n_1\vts \kv_1 + 2\pi\vts n_2\vts \kv_2.
        \end{equation}
        Taking the scalar product with the vectors $\vvh_1$ and $\vvh_2$, and using the relations $\kvh_n \cdot \vvh_l = \delta_{n l}$ as well as $\kv_1 \cdot \vvh_1 = \kv_2 \cdot \vvh_2 = 1$, $\kv_1 \cdot \vvh_2 = 0$, and $\kv_2 \cdot \vvh_1 = r$, the above equation reduces to
        \begin{equation}\label{eq:representation_system}
            \left\{
            \begin{array}{r@{\ =\ }l}
                \Kv \cdot \vvh_1 & \Kv_\star \cdot \vvh_1 + \yrm_1 + 2\pi\, (n_1 + n_2\vts r),
                \rets
                \Kv \cdot \vvh_2 + \lvar & \Kv_\star \cdot \vvh_2 + \yrm_2 + 2\pi\vts n_2.
            \end{array}
            \right.
        \end{equation}
        Let us note that each equation in \eqref{eq:representation_system} boils down to the toy problem:
        \begin{subequations}\label{eq:integer_fractional_decomposition}
            \begin{equation}\label{eq:integer_fractional_decomposition_a}
                \textit{Given $a \in \R$, \quad find $(\yrm, n) \in [-\pi, \pi) \times \Z$ such that} \quad \yrm + 2\pi n = a , 
            \end{equation}
            which may be rewritten as $\yrm/(2\pi) + 1/2 = a/(2\pi) + 1/2 - n$. Taking the integer and fractional parts, we find that the unique solution $(\yrm, n)\in [-\pi, \pi) \times \Z$ is given by:
            \begin{equation}
                \yrm = 2\pi\, \modulo{\frac{a}{2\pi} + \frac{1}{2}} - \pi \textAND n = \left\lfloor \frac{a}{2\pi} + \frac{1}{2} \right\rfloor.
            \end{equation}
        \end{subequations}
        Applying \eqref{eq:integer_fractional_decomposition} to the second equation of \eqref{eq:representation_system} with $a := (\Kv - \Kv_\star) \cdot \vvh_2 + \lvar$, we get
        \begin{equation}\label{eq:y2n2}
            \yrm_2(\lvar) = 2\pi\, \modulo{\frac{(\Kv - \Kv_\star) \cdot \vvh_2 + \lvar}{2\pi} + \frac{1}{2}} - \pi, \quad n_2(\lvar) = \left\lfloor \frac{(\Kv - \Kv_\star) \cdot \vvh_2 + \lvar}{2\pi} + \frac{1}{2} \right\rfloor.
        \end{equation}
        Next, substitute the expression \eqref{eq:y2n2} for $n_2 \equiv n_2(\lvar)$ into the first equation of \eqref{eq:representation_system} and apply \eqref{eq:integer_fractional_decomposition} with $a = (\Kv - \Kv_\star) \cdot \vvh_1 - 2\pi\vts n_2 (\lvar)\vts r$, to obtain:
        \begin{equation}\label{eq:y1n1}
            \begin{aligned}
            \yrm_1(\lvar) &= 2\pi\, \modulo{\frac{(\Kv - \Kv_\star) \cdot \vvh_1}{2\pi} - n_2(\lvar)\vts r + \frac{1}{2}} - \pi,
            \\
            n_1(\lvar) &= \left\lfloor \frac{(\Kv - \Kv_\star) \cdot \vvh_1}{2\pi} - n_2(\lvar)\vts r + \frac{1}{2} \right\rfloor.
            \end{aligned}
        \end{equation}
        Thus we have a representation of any point of $\Kv+\R\kvh_2$ as the sum of a point in $\widehat{\Omega}$ and a point in $\Lambda^*$, in terms of parameters $\yrm_j \in [-\pi,\pi)$ and $n_j \in \Z$. This concludes the proof of Part \ref{lem:veps_neighborhood_K_a}.

        Turning to Part \ref{lem:veps_neighborhood_K_b}, we note that the center of the line segment:   $\lvar\mapsto \Kv_\star+ \modK(\lvar)\in \Kv_\star + \widehat\Omega$  occurs for $\lvar=\widetilde{\lvar}$, such that $\yrm_2(\widetilde{\lvar})=0$. From \eqref{eq:y2n2}, this implies
        \begin{equation*}
            \widetilde{\lvar} \in (\Kv_\star - \Kv) \cdot \vvh_2 + 2\pi\vts \Z.
        \end{equation*}
        For each $\Kv_\star\in\{\Kv,\Kv'\}$ and $m\in\Z$, we define
        \begin{equation}
            \lvar_\iiv := (\Kv_\star - \Kv) \cdot \vvh_2 + 2\pi\vts m,\quad {\rm where}\quad \iiv =(\Kv_\star,m). \label{eq:def-lvar_iiv}
        \end{equation}
        Note that 
        \begin{equation*}
            \frac{(\Kv - \Kv_\star) \cdot \vvh_2 + \lvar}{2\pi} + \frac{1}{2} = \frac{\lvar - \lvar_\iiv}{2\pi} + \frac{1}{2} + m\ \ \in [0, 1) + m,
        \end{equation*}
        with the latter inclusion holding if $\lvar \in \lvar_\iiv + [-\pi, \pi)$.
        Hence, the expressions \eqref{eq:y2n2} for $(\yrm_2, n_2)$ reduce to:
        \begin{subequations}\label{eq:simplified_exp_yj_nj}
            \begin{equation}
            \yrm_2(\lvar) = \lvar - \lvar_\iiv \textAND n_2 = m,
            \end{equation}
            and further, the expressions for $(\yrm_1, n_1)$, become:
            \begin{align}
            \yrm_1(\lvar) &= \gamma_\iiv := 2\pi\vts \modulo{\frac{(\Kv - \Kv_\star) \cdot \vvh_1}{2\pi} - m\vts r + \frac{1}{2}} - \pi, \label{eq:def-gamma_iiv-NEW}
            \\
            n_1 &= \left\lfloor \frac{(\Kv - \Kv_\star) \cdot \vvh_1}{2\pi} - m\vts r + \frac{1}{2} \right\rfloor,
            \end{align}
        \end{subequations}
        where $\lvar \in \lvar_\iiv + [-\pi, \pi)$. We conclude that the $\iiv = (\Kv_\star, m)$--indexed segment in $\Kv_\star + \widehat\Omega$ is given by 
        \begin{equation*}
            \lvar\in \lvar_\iiv + [-\pi, \pi)\mapsto \Kv_\star+ \modK_\iiv (\lvar),\textWHERE \modK_\iiv(\lvar) = \gamma_\iiv\vts \kvh_1 + (\lvar-\lvar_\iiv)\vts \kvh_2\in\widehat\Omega.
        \end{equation*}
        This concludes the proof of Part \ref{lem:veps_neighborhood_K_b} of Lemma \ref{lem:wrapping_lemma}.

        Part \ref{lem:veps_neighborhood_K_c} of Lemma \ref{lem:wrapping_lemma} follows from the decomposition 
        \[ 
            \Kv + \lvar \kvh_2 = \Kv_\star + \modK_\iiv (\lvar) + \floK_\iiv, \quad {\rm for}\quad   \lvar \in \lvar_\iiv + [-\pi, \pi),
        \]
        where 
        \begin{equation}\label{eq:ell_I&l_I}
            \floK_\iiv := 2\pi\vts n_1 \vts \kv_1 + 2\pi\vts n_2\vts \kv_2\in \Lambda^*, \textFOR \lvar \in \lvar_\iiv + [-\pi, \pi),
        \end{equation}
        with $n_1$ and $n_2$ as in \eqref{eq:simplified_exp_yj_nj}. Finally, Part \ref{lem:veps_neighborhood_K_d} can be verified from direct computations. Thus the proof of the Wrapping Lemma \ref{lem:wrapping_lemma} is complete.
    \end{dem}

    \begin{figure}[ht!]  
    \makebox[\textwidth][c]{
        \includegraphics[page=4]{figures/tikzpictures.pdf}
    }
    \caption{The neighborhoods $B^{\Kv_\star}_\veps$ restricted to the dual periodicity cell $\Omega^* = \{\vts z_1\vts \kv_1 + z_2\vts \kv_2\ /\ z_1, z_2 \in (0, 2\pi) \vts\}$, with $r = - \sqrt{2}$. The set of gray lines represents the set $\Kv + \R\, \kvh_2 \mod \Omega^*$, which is dense in $\Omega^*$. The set of green segments represents the subset of $\Kv + \R\, \kvh_2 \mod \Omega^*$ that is within $B^{\Kv_\star}_\veps$.\label{fig:near_quasimomenta}}
    \end{figure}

    \vspace{1\baselineskip} \noindent
    By Part \ref{lem:veps_neighborhood_K_d} of Lemma \ref{lem:wrapping_lemma}, $\Kv + \lvar \kvh_2 = \Kv_\star+\gamma_\iiv\kvh_1 + (\lvar - \lvar_\iiv) \kvh_2 + \floK_\iiv$, and so the coordinate $\gamma_\iiv$ controls displacement along $\kvh_1 = \kv_1$  and $\lvar-\lvar_\iiv$, displacement along $\kvh_2$. A consequence is a characterization of segments in $\Kv + \lvar \kvh_2$ within the $\veps$--neighborhood $B_\veps$, defined in \eqref{eq:def_veps_neighborhood}. The simplification of this result for rational edges is presented in Section \ref{sec:app:neighborhoods_quasimomenta}.
    \begin{proposition}\label{prop:veps_neighborhood_K}
        Let $\veps > 0$ and recall from \eqref{eq:def_L_veps} that $\bbL (\veps) := \left\{\vts \iiv \in \bbL,\ \ |\gamma_\iiv| \leq \veps \vts\right\}$, where $\bbL (\pi) = \{\Kv, \Kv'\} \times \Z = \bbL$, and where $\gamma_\iiv \in [-\pi, \pi)$ is the coefficient defined by \eqref{eq:def-gamma_iiv-NEW}. Then, the set of reals $\lvar$ such that $\Kv + \lvar \kvh_2 \in B_\veps$ is a countable union of $2\vts \veps$--sized segments centered at the points $\lvar_\iiv$ defined in \eqref{eq:def-lvar_iiv}, for $\iiv \in \bbL (\veps)$: 
        \begin{equation}\label{eq:carac_eps_neighborhood_K}
            \textnormal{For} \   \lvar \in \R, \quad \Kv + \lvar \kvh_2 \in B_\veps \quad \Longleftrightarrow \quad \lvar \in \bigcup_{\iiv \in \bbL (\veps)} \lvar_\iiv + (-\veps, \veps)\ .
        \end{equation}
        Further, if $\veps < 1/3$, then  the union of intervals on the right-hand side of \eqref{eq:carac_eps_neighborhood_K} is a disjoint union.
    \end{proposition}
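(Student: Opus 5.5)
The plan is to reduce membership in $B_\veps$ to a statement about the local coordinates supplied by the Wrapping Lemma \ref{lem:wrapping_lemma}. Fix $\Kv_\star \in \{\Kv, \Kv'\}$. Since $B^{\Kv_\star}_\veps = \Kv_\star + \widehat{\Omega}_\veps + \Lambda^*$, a point $\kv$ lies in $B^{\Kv_\star}_\veps$ if and only if its representative in the fundamental cell, namely $\kv - \floK$ for the unique $\floK \in \Lambda^*$ with $\kv - \Kv_\star - \floK \in \widehat{\Omega}$, satisfies $\kv - \Kv_\star - \floK \in \widehat{\Omega}_\veps$.

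For the uniqueness just used, I will rely on the fact that $\widehat{\Omega}_\veps \subset \widehat{\Omega}$ for $\veps \le \pi$. I also need $\widehat{\Omega}$ to be a fundamental domain for $\Lambda^*$, and not merely for $\Z(2\pi\kvh_1) + \Z(2\pi\kvh_2)$. This is exactly the uniqueness in Part \ref{lem:veps_neighborhood_K_a} of the Wrapping Lemma, i.e. the unique solvability of the toy problem \eqref{eq:integer_fractional_decomposition} in both equations of \eqref{eq:representation_system}. The latter holds for any point, not only those on the line.

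Applying this with $\kv = \Kv + \lvar\kvh_2$, the representative is $\modK(\lvar) = \yrm_1(\lvar)\kvh_1 + \yrm_2(\lvar)\kvh_2$ with $\yrm_j \in [-\pi,\pi)$. Hence $\Kv + \lvar\kvh_2 \in B^{\Kv_\star}_\veps$ if and only if $|\yrm_1(\lvar)| < \veps$ and $|\yrm_2(\lvar)| < \veps$. The segments $\lvar_\iiv + [-\pi,\pi)$, $\iiv = (\Kv_\star, m)$, $m \in \Z$, partition $\R$ by \eqref{eq:def-lvar_iiv}. On the segment labelled $\iiv$, Part \ref{lem:veps_neighborhood_K_b} gives $\yrm_1 = \gamma_\iiv$ and $\yrm_2 = \lvar - \lvar_\iiv$. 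So on that segment the condition reads $|\gamma_\iiv| < \veps$ and $\lvar \in \lvar_\iiv + (-\veps,\veps)$. Taking the union over $m$ and then over $\Kv_\star$ gives \eqref{eq:carac_eps_neighborhood_K}.

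One small point to check is strict versus non-strict inequality in $\gamma_\iiv$. The definition \eqref{eq:def_L_veps} uses $|\gamma_\iiv| \le \veps$, while $\widehat{\Omega}_\veps$ is open. I would either note that the boundary case $|\gamma_\iiv| = \veps$ should be handled by the convention, or read the statement with the open condition. I will flag this rather than let it derail the argument.

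For disjointness when $\veps < 1/3$, two intervals with the same $\Kv_\star$ have centers differing by $2\pi(m - m') \ne 0$, so they are at distance at least $2\pi > 2\veps$. For $\Kv_\star = \Kv$ versus $\Kv'$, the centers differ by $(\Kv - \Kv') \cdot \vvh_2 + 2\pi\Z = 2\Kv \cdot \vvv_2 + 2\pi\Z$. Since $\Kv \cdot \vvv_2 = -2\pi/3$, this set is $-4\pi/3 + 2\pi\Z$. Its elements have modulus at least $2\pi/3 > 2/3 > 2\veps$, so the intervals are disjoint.

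The main obstacle is the first step: justifying carefully that membership in the periodized open parallelogram neighborhood is equivalent to the fundamental-cell representative lying in $\widehat{\Omega}_\veps$. Everything else is bookkeeping with the explicit formulas \eqref{eq:y2n2}–\eqref{eq:simplified_exp_yj_nj}.
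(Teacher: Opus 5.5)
Your proposal is correct and follows the same route as the paper. You fix $\Kv_\star$, use the uniqueness in the Wrapping Lemma so that membership in $B^{\Kv_\star}_\veps$ is decided by the coordinates $\yrm_1 = \gamma_\iiv$, $\yrm_2 = \lvar - \lvar_\iiv$ on the segment $\lvar_\iiv + [-\pi,\pi)$, and then take the union over $\Kv_\star$. Your explicit disjointness check (center gaps lie in $2\pi\Z\setminus\{0\}$ or in $-4\pi/3+2\pi\Z$, so they are at least $2\pi/3$) and your remark on the boundary case $|\gamma_\iiv| = \veps$ supply details the paper leaves implicit; in fact the paper's proof writes $\le$ where the open set $\widehat{\Omega}_\veps$ gives $<$.
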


    \begin{dem}
    Fix $\Kv_\star \in \{\Kv, \Kv'\}$. For any $\lvar \in \R$, there exists a unique $m \in \Z$, with $\iiv = (\Kv_\star, m)$, such that $\lvar \in \lvar_\iiv + [-\pi, \pi)$. It follows from Part \ref{lem:veps_neighborhood_K_b} of the Wrapping Lemma \ref{lem:wrapping_lemma} that $\modK_\iiv (\lvar) := \gamma_\iiv\, \kvh_1 + (\lvar - \lvar_\iiv)\, \kvh_2$ and $\floK_\iiv$ are the unique vectors in $\widehat{\Omega}$ and $\Lambda^*$ respectively, such that $\Kv + \lvar \kvh_2 = \Kv_\star + \modK_\iiv (\lvar) + \floK_\iiv$. Therefore, using the definitions of $\modK_\iiv (\lvar)$ and $B^{\Kv_\star}_\veps$, we deduce that $\Kv + \lvar \kvh_2$ belongs to $B^{\Kv_\star}_\veps$ if and only if $|\gamma_\iiv| \leq \veps$ and $|\lvar - \lvar_\iiv| \leq \veps$. Using this conclusion for $\Kv_\star = \Kv$ and $\Kv_\star = \Kv'$, we deduce the characterization \eqref{eq:carac_eps_neighborhood_K}.
    \end{dem}

    \section{Schur complement reduction to near components}\label{sec:proof_Theorems_A_B}
    \subsection{Projectors near and far from Dirac points }\label{sec:near_far_quasimomentum_energy_components}
    Let $(\Kv_\star, E_D)$, with $\Kv_\star \in \{\Kv, \Kv'\}$, denote a Dirac point in the sense of Section \ref{sec:dirac_points}. We recall from Section \ref{sec:FB_states} that for any $\kv \in \scrB$, the $L^2_\kv$--spectrum of $\calH^0_\kv = -\Delta + V(\xv)$ consists of eigenvalues $E_b (\kv)$, with associated eigenfunctions $\Phi_b (\cdot\,; \kv)$, for $b \geq 1$. By Assumption \ref{hyp:Dirac_point}, there exists an integer $b_\star \geq 1$ such that, setting $\{-, +\} := \{b_\star, b_\star + 1\}$, the energy $E_D = E_+ (\Kv_\star) = E_- (\Kv_\star)$ is a double eigenvalue of $\calH^0_{\Kv_\star}$, for $\Kv_\star \in \{\Kv, \Kv'\}$. For arbitrary $\kv \in \scrB$, we  introduce the orthogonal projector $\Pi^0_\kv \in \scrL(L^2_\kv)$:
    \begin{equation}\label{eq:Pi_0_k}
        \spforall u \in L^2_\kv, \quad \Pi^0_\kv\, u (\xv) = \sum_{\pm} \big\langle\vts \Phi_\pm (\cdot\, ; \kv), \, u \vts\big\rangle_{L^2_\kv}\, \Phi_\pm (\xv; \kv), \quad \xv \in \R^2.
    \end{equation}
    For $\Kv_\star \in \{\Kv, \Kv'\}$, $\Pi^0_{\kv = \Kv_\star}$ can be rewritten in terms of the Dirac eigenbasis $\{\Phi^{\Kv_\star}_1, \Phi^{\Kv_\star}_2\}$:
    \begin{equation}\label{eq:Pi_0_K_star}
        \spforall u \in L^2_{\Kv_\star}, \quad \Pi^0_{\Kv_\star}\vts u (\xv) = \Phi^{\Kv_\star} (\xv)^\transp\, \big\langle\vts \Phi^{\Kv_\star}, \, u \vts\big\rangle_{L^2_{\Kv_\star}}, \quad \xv \in \R^2.
    \end{equation}
    Recall, from Proposition \ref{prop:completeness_slices}, that any $F \in L^2_{\Kv \cdot \vvh_1} (\cylaug)$ can be represented as a continuous superposition of Floquet-Bloch modes of $\calH^0 = -\Delta + V(\xv)$ with quasi-momenta located on the line $\Kv + \R\, \kvh_2$. %
    For $\delta > 0$, Proposition \ref{prop:veps_neighborhood_K} ensures that $\Kv + \lvar \kvh_2$ belongs to the $\delta^{3/4}$--neighborhood, $\hyperref[eq:def_veps_neighborhood]{B_{\delta^{3/4}}}$ of $\mathbb K$, the lattice of high-symmetry quasi-momenta, if and only if $\lvar \in \lvar_\iiv + (-{\delta^{3/4}}, {\delta^{3/4}})$ for some $\iiv \in \bbL (\delta^{3/4})$. Here, $\lvar_\iiv \in \R$ is defined by \eqref{eq:gamma_Iandlambda_I}, and the set $\bbL (\delta^{3/4}) \subset \bbL$ is introduced in \eqref{eq:def_L_veps}. Using these observations, we introduce, for any $\iiv \in \bbL (\delta^{3/4})$, the orthogonal projector on $L^2_{\Kv \cdot \vvh_1} (\cylaug)$, associated with the two dispersion surfaces touching at the Dirac point. %
    In \eqref{eq:Fxs-expanded-2}, replacing the infinite sum over $b \geq 1$ by a finite sum over $b \in \{b_\star, b_\star + 1\}$ and inserting a Fourier-truncation to the interval $|\lvar -\lvar_\iiv| \leq \delta^{3/4}$, we define
    for $F \in L^2_{\Kv \cdot \vvh_1} (\cylaug)$:
    \begin{subequations}
    \label{eq:def_Pi_I}
    \begin{align}
        \Pi_\iiv\, F (\xv, s) &:= \int_\R \chi \big(\delta^{-3/4} (\lvar - \lvar_\iiv)\big)\, \sum_{\pm} \widetilde{F}_\pm(\lvar)\, \Phi_\pm (\xv; \Kv + \lvar \kvh_2)\, \euler^{\vts \icplx\vts (\Kv + \lvar \kvh_2) \cdot s \vvv_2}\, d\lvar, \label{eq:def_Pi_I_3}
        \retss
        &:= \frac{1}{\sqrt{2\pi}} \int_\R \chi \big(\delta^{-3/4} (\lvar - \lvar_\iiv)\big)\,  \big[\vts\Pi^0_{\Kv + \lvar \kvh_2}\, \widetilde{F} (\cdot\,, \lvar)\vts\big] (\xv)\, \euler^{\vts \icplx\vts (\Kv + \lvar \kvh_2) \cdot s \vvv_2}\, d \lvar, \label{eq:def_Pi_I_2}
    \end{align}
    where $\widetilde{F}_- (\lvar) := \widetilde{F}_{b_\star} (\lvar)$, $\widetilde{F}_+ (\lvar) := \widetilde{F}_{b_\star + 1} (\lvar)$. Here, $\chi$ denotes the characteristic function of the interval $[-1, 1]$. The second equality follows from the definition \eqref{eq:def_Ftilde_b} of $\widetilde{F}_b (\lvar)$ and the definition \eqref{eq:Pi_0_k} of $\Pi^0_{\Kv + \lvar \kvh_2}$. Note that the projection $\Pi_\iiv$ only involves the Floquet-Bloch eigenfunctions $\Phi_\pm (\xv; \Kv + \lvar\vts \kvh_2)$. The corresponding components are then integrated over the quasi-momentum segment $\{\Kv + \lvar\vts \kvh_2,\ |\lvar - \lvar_\iiv| \leq \delta^{3/4}\}$, which is included in $B_{\delta^{3/4}}$, provided that $\iiv \in \bbL (\delta^{3/4})$. Using the definition \eqref{eq:inverse_directional_Fourier} of $\scrF^*_\Kv$, we find the equivalent definition:
    \begin{equation}
        \Pi_\iiv\, F := \scrF^*_\Kv \Bigl\{\vts (\xv, \lvar) \mapsto \chi \big(\delta^{-3/4} (\lvar - \lvar_\iiv)\big)\,  \big[\vts\Pi^0_{\Kv + \lvar \kvh_2}\, \widetilde{F} (\cdot\,, \lvar)\vts\big] (\xv) \vts\Bigr\}. \label{eq:def_Pi_I_1}
    \end{equation}
    \end{subequations}
    We also introduce the following operators:
    \begin{subequations}
    \label{eq:def_near_far_projections}
    \begin{align}
    \Pi_{\bbL'} &:= \sum_{\iiv \in \bbL'} \Pi_\iiv, \quad  \textrm{where}\quad  \bbL' \subseteq \bbL(\delta^{3/4}), \quad 
    \\
    \Pi_\NEAR &:= \Pi_{\bbL(\delta^{3/4})}= \sum_{\iiv \in \bbL(\delta^{3/4})} \Pi_\iiv, \textAND\\
    \Pi_\FAR &:= \Id_{L^2_{\Kv \cdot \vvh_1} (\cylaug)} - \Pi_\NEAR. 
    \end{align} 
    \end{subequations}
    Following Proposition \ref{prop:veps_neighborhood_K}, we choose $\delta$ so that 
    \begin{equation}\label{eq:pairwise_disjoint_near_intervals}
    0 < \delta^{3/4} < 1/3; \quad \textnormal{hence $\lvar_\iiv + ( -\delta^{3/4},\, \delta^{3/4} )$ are pairwise disjoint for $\iiv \in \bbL (\delta^{3/4})$.}
    \end{equation}
    Under Condition \eqref{eq:pairwise_disjoint_near_intervals}, 
    \begin{subequations}\label{eq:PiIPiJeq0}
    \begin{align}
    &\Pi_\iiv\, \Pi_\jjv = 0,\quad \textrm{for}\quad \iiv \neq \jjv,\quad  
    \retss
    &\Pi_\NEAR\, \Pi_\FAR=0, \textAND
    \retss
    &\textnormal{$\Pi_\NEAR$, $\Pi_\FAR$ and $\Pi_{\bbL'}$ for $\bbL' \subseteq \bbL(\delta^{3/4})$,  define orthogonal projectors on $L^2_{\Kv \cdot \vvh_1} (\cylaug)$.}
    \end{align}
    \end{subequations}
    As a consequence, their ranges
    \begin{equation}
    \scrX_\iiv := \Ran (\Pi_\iiv), \quad \scrX_{\bbL'} := \Ran(\Pi_{\bbL'}), \quad \scrX_{\NEAR} := \Ran(\Pi_\NEAR), \quad \scrX_{\FAR} := \Ran(\Pi_\FAR),
    \end{equation}
    are all closed subspaces of $L^2_{\Kv \cdot \vvh_1} (\cylaug)$.

    \begin{proposition}\label{prop:smoothing_near_projectors}
    For $q \in \N_0$, and for any $\bbL' \subseteq \bbL (\delta^{3/4})$, the orthogonal projector $\Pi_{\bbL'}$ is bounded from $L^2_{\Kv \cdot \vvh_1} (\cylaug)$ to $H^q_{\Kv \cdot \vvh_1, \xv} (\cylaug)$.
    \end{proposition}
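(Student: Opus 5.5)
We will prove the bound by passing to the fibered picture via the partial Fourier transform $\scrF_\Kv$ and then using the characterization \eqref{eq:link_augmented_Sobolev_regularity_decay_FB_coeffs} of the anisotropic Sobolev norms in terms of Floquet-Bloch coefficients. Fix $\bbL' \subseteq \bbL(\delta^{3/4})$ and $F \in L^2_{\Kv \cdot \vvh_1} (\cylaug)$. By \eqref{eq:def_Pi_I_1} and linearity,
\begin{equation*}
    \scrF_\Kv (\Pi_{\bbL'} F)(\cdot\,,\lvar) = \chi_{\bbL'}(\lvar)\, \Pi^0_{\Kv + \lvar \kvh_2}\, \widetilde{F}(\cdot\,,\lvar), \qquad \chi_{\bbL'}(\lvar) := \sum_{\iiv \in \bbL'} \chi\big(\delta^{-3/4}(\lvar - \lvar_\iiv)\big),
\end{equation*}
using $\scrF_\Kv \scrF_\Kv^* = \Id$ from \eqref{eq:inverse_directional_Fourier}. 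Under \eqref{eq:pairwise_disjoint_near_intervals} the intervals are pairwise disjoint, so $0 \le \chi_{\bbL'} \le 1$ (up to a null set of endpoints).

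Next we read off the Floquet-Bloch coefficients: for each $\lvar$, the function $\chi_{\bbL'}(\lvar)\,\Pi^0_{\Kv+\lvar\kvh_2}\widetilde{F}(\cdot\,,\lvar)$ has $b$-th coefficient equal to $\chi_{\bbL'}(\lvar)\widetilde{F}_b(\lvar)$ for $b \in \{b_\star, b_\star+1\}$ and zero otherwise. Hence \eqref{eq:link_augmented_Sobolev_regularity_decay_FB_coeffs} gives
\begin{equation*}
    \|\Pi_{\bbL'} F\|^2_{H^q_{\Kv \cdot \vvh_1, \xv} (\cylaug)} \lesssim \int_\R \sum_{b \in \{b_\star, b_\star+1\}} (1+b^2)^{q/2}\, |\widetilde{F}_b(\lvar)|^2\, d\lvar \le (1 + (b_\star+1)^2)^{q/2} \|F\|^2_{L^2_{\Kv \cdot \vvh_1} (\cylaug)},
\end{equation*}
where the last step uses \eqref{eq:link_augmented_Sobolev_regularity_decay_FB_coeffs} with $q = 0$ (which is in fact an identity by Plancherel \eqref{eq:directional_Plancherel} and orthonormality of $\{\Phi_b\}$). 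The constant depends only on $q$ and $b_\star$, hence is uniform in $\delta$ and $\bbL'$.

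The only point requiring care, and the expected obstacle, is that \eqref{eq:link_Sobolev_regularity_decay_FB_coeffs} must hold with constants independent of $\kv$, including along the whole (dense, for irrational $r$) line $\Kv + \R\kvh_2$, and that $\lvar \mapsto \Pi^0_{\Kv+\lvar\kvh_2}\widetilde F(\cdot,\lvar)$ is measurable even though $\Phi_\pm(\cdot;\kv)$ may be discontinuous in $\kv$ where $E_{b_\star}$ or $E_{b_\star+1}$ meets another band. The uniformity is stated explicitly in the paper. For measurability, we will note that $\Pi^0_\kv$ itself need not be a spectral projector of a smooth family, but the coefficients $\widetilde F_b(\lvar)$ are already assumed measurable in \eqref{eq:def_Ftilde_b}. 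Alternatively, if we wish to avoid the coefficient characterization, we can bound directly $\|\Phi_\pm(\cdot;\kv)\|_{H^q_\kv} \le C_q$ uniformly in $\kv$ (by elliptic regularity, since $E_\pm(\kv)$ is bounded), and then apply Cauchy--Schwarz to the two-term sum in \eqref{eq:Pi_0_k}.
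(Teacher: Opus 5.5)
Your proposal is correct and is essentially the paper's own proof: you pass to the fibers via $\scrF_\Kv$, keep only the bands $b\in\{b_\star,b_\star+1\}$ in \eqref{eq:link_augmented_Sobolev_regularity_decay_FB_coeffs}, bound the sum of indicators by $1$ using the disjointness \eqref{eq:pairwise_disjoint_near_intervals}, and conclude with the Plancherel-like formula \eqref{eq:directional_Plancherel}, which gives a bound proportional to $[1+(b_\star+1)^2]^{q/2}$, independent of $\delta$ and $\bbL'$. One minor point: because of the factor $1/\sqrt{2\pi}$ in \eqref{eq:def_Ftilde_b}, your $q=0$ ``identity'' holds only up to a factor $2\pi$, which does not affect the bound.
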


    \begin{dem}
    Let $F \in L^2_{\Kv \cdot \vvh_1} (\cylaug)$ and $q \in \N_0$. By definition \eqref{eq:def_Pi_I_3} of $\Pi_\iiv$,
    \begin{equation*}
        \Pi_{\bbL'}\, F (\xv, s) = \sum_{\pm} \int_\R \Big[ \sum_{\iiv \in \bbL'} \chi \big(\delta^{-3/4} (\lvar - \lvar_\iiv)\big) \Big]\, \widetilde{F}_\pm(\lvar)\, \Phi_\pm (\xv; \Kv + \lvar \kvh_2)\, \euler^{\vts \icplx\vts (\Kv + \lvar \kvh_2) \cdot s \vvv_2}\, d\lvar.
    \end{equation*}
    Applying the characterization \eqref{eq:link_augmented_Sobolev_regularity_decay_FB_coeffs}, we then get
    \begin{multline*}
        \|\vts \Pi_{\bbL'}\, F \vts\|^2_{H^q_{\Kv \cdot \vvh_1, \xv} (\cylaug)} \underset{\eqref{eq:link_augmented_Sobolev_regularity_decay_FB_coeffs}}{\hyperref[item:notation_lesssim_simeq]{\simeq}} \int_\R \sum_{b \in \{b_\star, b_\star + 1\}} (1 + b^2)^{q/2}\, \Big[ \sum_{\iiv \in \bbL'} \chi \big(\delta^{-3/4} (\lvar - \lvar_\iiv)\big) \Big]^2 |\widetilde{F}_b (\lvar)|^2\, d\lvar 
        \\
        \lesssim C_q\, \int_\R \Big[\sum_{\iiv \in \bbL'} \chi \big(\delta^{-3/4} (\lvar - \lvar_\iiv)\big) \Big]^2\, \sum_{\pm} |\widetilde{F}_\pm (\lvar)|^2\, d\lvar, \textWHERE C_q := \bigl[\vts 1 + (b_\star + 1)^2 \vts\bigr]^{q/2}.
    \end{multline*}
    Since $\chi$ is the characteristic function of $[-1, 1]$ and the intervals $\lvar_\iiv + (-\delta^{3/4}, \delta^{3/4})$, $\iiv \in \bbL (\delta^{3/4})$, are pairwise disjoint by \eqref{eq:pairwise_disjoint_near_intervals}, we have
    \begin{equation}\label{eq:sum_indicators_bounded}
        \spforall \bbL' \subset \bbL(\delta^{3/4}),\ \ \bbL' \neq \emptyset, \quad \Big\|\vts \sum_{\iiv \in \bbL'} \chi \big(\vts\delta^{-3/4} (\cdot - \lvar_\iiv) \vts\big) \vts\Big\|_{L^\infty (\R)} = 1.
    \end{equation}
    As a consequence,
    \begin{equation*}
        \|\vts \Pi_{\bbL'}\, F \vts\|^2_{H^q_{\Kv \cdot \vvh_1, \xv} (\cylaug)} \lesssim C_q\, \int_\R \sum_{\pm} |\widetilde{F}_\pm (\lvar)|^2\, d\lvar \lesssim C_q\, \int_\R \|\widetilde{F} (\cdot\,, \lvar)\|^2_{L^2_{\Kv + \lvar \kvh_2}} = C_q\, \|\vts F \vts\|^2_{L^2_{\Kv \cdot \vvh_1} (\cylaug)},
    \end{equation*}
    where the last equality follows from the Plancherel-like formula \eqref{eq:directional_Plancherel}. The above estimate shows that $\Pi_{\bbL'}$ is indeed bounded from $L^2_{\Kv \cdot \vvh_1} (\cylaug)$ to $H^q_{\Kv \cdot \vvh_1, \xv} (\cylaug)$.
    \end{dem}

    \vspace{1\baselineskip} \noindent
    Even though the individual eigenmode maps $\kv \mapsto \Phi_\pm (\xv; \kv)$ are discontinuous in a neighborhood of $\Kv_\star$ (see \cite[Theorem 3.2]{fefferman2014wave} or \cite[Corollary 3.3]{fefferman2016edge}), the projection-valued map $\kv \mapsto \euler^{-\icplx\vts \kv \cdot \xv}\; \Pi^0_\kv\; \euler^{\vts \icplx\vts \kv \cdot \xv}$ is smooth, and can be expanded using Taylor series \cite{kato1995perturbation}. Using this observation, in Section \ref{sec:unperturbed_operator_expansion} (see Part $(b)$ of Proposition \ref{prop:asymptotics_H0near}), we prove the following result, on the asymptotic behavior of the near quasi-momentum projectors.

    \begin{proposition}\label{prop:asymptotics_near_projectors}
    As $\delta \to 0$, uniformly in subsets $\bbL' \subseteq \bbL(\delta^{3/4})$,
    \begin{equation}
        \Pi_{\bbL'} = \calJ^*_\delta\, \mathds{1}_{\bbL'}\, \chi (\delta^{1/4}\, D_\zeta)\, \mathds{1}_{\bbL'}\, \calJ^{}_\delta + \calO (\delta^{3/4}) \quad \textnormal{in}\ \ \scrL(L^2_{\Kv \cdot \vvh_1} (\cylaug)). \label{eq:asymptotics_Pi_near} 
    \end{equation}
    Here, $\calJ^{}_\delta$ and $\calJ^*_\delta$ are defined by \eqref{eq:mode_for_averaging_operator}--\eqref{eq:concatenated_averaging_operator}, $\mathds{1}_{\bbL'}$ denotes the block-diagonal projection operator on $\ell^2(\bbL;L^2(\R; \C^2))$ defined by \eqref{eq:block_diagonal_restriction_operator}, and $\chi (\delta^{1/4}\, D_\zeta) := \calF^*\, \chi (\delta^{1/4}\, \cdot)\, \calF$ (see \eqref{eq:def_Fourier_multiplier}), where $\calF$ denotes the Fourier transform on $L^2(\R)$.%
    \end{proposition}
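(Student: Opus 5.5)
The plan is to work entirely on the Fourier side in $s$, using the fibered representation of Lemma \ref{lem:directional_Fourier}. The goal is to compare each block $\Pi_\iiv$ with $\calJ^*_{\delta,\iiv}\,\chi(\delta^{1/4}D_\zeta)\,\calJ_{\delta,\iiv}$, and then sum over $\iiv\in\bbL'$ using orthogonality.

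\textbf{Step 1: identify the model operator.} First I compute $\calT_\iiv$ and $\calT_\iiv^*$ in terms of $\scrF_\Kv$. For $g\in L^2(\R;\C^2)$, the function $\calT^*_\iiv g$ is $\varphi_\iiv^\transp g(\kvh_2\cdot(\xv+s\vvv_2))$. Writing $g$ through its Fourier transform and using $\kvh_2\cdot\vvv_2=1$ together with the Wrapping Lemma \ref{lem:wrapping_lemma}(d), namely $\Kv+\lvar\kvh_2=\Kv_\iiv+\modK_\iiv(\lvar)+\floK_\iiv$, I expect
\begin{equation*}
\scrF_\Kv\calT^*_\iiv g(\xv,\lvar)=\euler^{\vts\icplx\vts\modK_\iiv(\lvar)\cdot\xv}\,\euler^{\vts\icplx\vts\floK_\iiv\cdot \xv}\,\Phi^{\Kv_\iiv}(\xv)^\transp\,\widehat g(\lvar-\lvar_\iiv).
\end{equation*}
This is a $(\Kv+\lvar\kvh_2)$-pseudo-periodic function. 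After dilation by $\calU_\delta$, the multiplier $\chi(\delta^{1/4}D_\zeta)$ becomes the cutoff $\chi(\delta^{-3/4}(\lvar-\lvar_\iiv))$. Hence $\calJ^*_{\delta,\iiv}\chi(\delta^{1/4}D_\zeta)\calJ_{\delta,\iiv}$ is the fibered operator $\scrF_\Kv^*\{\chi(\delta^{-3/4}(\lvar-\lvar_\iiv))\,Q_\iiv(\lvar)\}\scrF_\Kv$. Here $Q_\iiv(\lvar)$ is the rank-two projector onto $\euler^{\icplx(\modK_\iiv(\lvar)+\floK_\iiv)\cdot\xv}\Phi^{\Kv_\iiv}_{1,2}$. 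These functions are orthonormal in $L^2_{\Kv+\lvar\kvh_2}$, since the phase factor has modulus one.

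\textbf{Step 2: fiberwise comparison.} I compare $Q_\iiv(\lvar)$ with $\Pi^0_{\Kv+\lvar\kvh_2}$ on the support $|\lvar-\lvar_\iiv|\le\delta^{3/4}$, where $|\modK_\iiv(\lvar)|\lesssim\delta^{3/4}$ because $|\gamma_\iiv|\le\delta^{3/4}$. I conjugate by $\euler^{-\icplx(\Kv+\lvar\kvh_2)\cdot\xv}$ to a fixed periodic space. There $\Pi^0$ becomes the smooth Riesz-projector family $\kv\mapsto\euler^{-\icplx\kv\cdot\xv}\Pi^0_\kv\euler^{\icplx\kv\cdot\xv}$ near $\Kv_\iiv$, shifted by the lattice translation $\floK_\iiv$. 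The operator $Q_\iiv$ becomes exactly the value of this family at $\kv=\Kv_\iiv$. The Kato Taylor expansion mentioned before the statement then gives
\begin{equation*}
\|\Pi^0_{\Kv+\lvar\kvh_2}-Q_\iiv(\lvar)\|\lesssim|\modK_\iiv(\lvar)|\lesssim\delta^{3/4}.
\end{equation*}
The constant is uniform in $\iiv$, because only the two base points $\Kv,\Kv'$ occur and $\floK_\iiv$ acts as a unitary relabeling.

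\textbf{Step 3: summation.} The cutoffs $\chi(\delta^{-3/4}(\cdot-\lvar_\iiv))$ have disjoint supports by \eqref{eq:pairwise_disjoint_near_intervals}. So $\Pi_{\bbL'}$ minus the model operator is fibered, with fiber norm bounded by $\sup_\iiv\sup_\lvar\|\Pi^0-Q_\iiv\|\lesssim\delta^{3/4}$. The Plancherel identity \eqref{eq:directional_Plancherel} then converts this into the operator bound $\calO(\delta^{3/4})$ on $L^2_{\Kv\cdot\vvh_1}(\cylaug)$, uniformly in $\bbL'$. Finally, I rewrite $\sum_{\iiv\in\bbL'}\calJ^*_{\delta,\iiv}\chi\,\calJ_{\delta,\iiv}$ as $\calJ^*_\delta\mathds 1_{\bbL'}\chi(\delta^{1/4}D_\zeta)\mathds 1_{\bbL'}\calJ_\delta$. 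This requires only that $\calJ_\delta$ is bounded into $\ell^2(\bbL;L^2)$, which is the content of \eqref{eq:T_J_delta_are_bounded}.

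\textbf{Main obstacle.} The hard step is Step 1: checking that $\scrF_\Kv\calT^*_\iiv$ really produces the stated fiber, with the correct phase conventions for $\varphi_\iiv$ and the sign of $\gamma_\iiv$ from \eqref{eq:gamma_Iandlambda_I}. In particular, I have to confirm that $\calT_\iiv$, defined as an average over $\Omega$ along $s=\zeta-\kvh_2\cdot\xv$, is the true adjoint of $\calT^*_\iiv$ in the $L^2(\Omega\times\R)$ pairing. My first move is to verify this with the change of variables $(\xv,s)\mapsto(\xv,\zeta)$, whose Jacobian is $1$ since $\kvh_2\cdot\vvv_2=1$.
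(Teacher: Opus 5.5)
Your route is the paper's own. Proposition \ref{prop:asymptotics_H0near} with $d=0$ does exactly your three steps: it compares $\Pi^0_{\Kv+\lvar\kvh_2}$ fiberwise with a rank-two model projector, sums over the disjoint cutoffs using \eqref{eq:directional_Plancherel}, and identifies the model with $\calJ^*_{\delta,\iiv}\chi(\delta^{1/4}D_\zeta)\calJ_{\delta,\iiv}$ via the formula for $\scrF_\Kv\calT^*_\iiv\calF^*$. However, the fiber you expect in Step 1 is wrong, and Step 2 as written depends on it. Substituting $t=\kvh_2\cdot\xv+s$ and using $\kvh_2\cdot\vvv_2=1$ gives
\[
\scrF_\Kv\calT^*_\iiv g(\xv,\lvar)=\euler^{\icplx\,\modK_\iiv(\lvar)\cdot\xv}\,\Phi^{\Kv_\iiv}(\xv)^\transp\,\widehat g(\lvar-\lvar_\iiv),
\]
with no factor $\euler^{\icplx\,\floK_\iiv\cdot\xv}$; this is \eqref{eq:averaging_operator_alt_adjoint}. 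Pseudo-periodicity cannot fix this phase, because $L^2_{\kv}=L^2_{\kv+\floK}$ for every $\floK\in\Lambda^*$.

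With your $Q_\iiv$, the Step 2 bound $\|\Pi^0_{\Kv+\lvar\kvh_2}-Q_\iiv(\lvar)\|\lesssim|\modK_\iiv(\lvar)|$ fails whenever $\floK_\iiv\neq0$. At $\modK_\iiv=0$, the range of your $Q_\iiv$ is spanned by $\euler^{\icplx\,\floK_\iiv\cdot\xv}\Phi^{\Kv_\iiv}_j$, which are not $E_D$-eigenfunctions. Their overlaps with $\Phi^{\Kv_\iiv}_l$ are Fourier coefficients of the periodic function $\overline{\Phi^{\Kv_\iiv}_l}\Phi^{\Kv_\iiv}_j$ at $-\floK_\iiv$. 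These tend to zero as $|m_\iiv|\to\infty$, and for irrational $r$ the set $\bbL(\delta^{3/4})$ contains such indices. So the difference of projectors stays of order one.

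The slip is in the ``unitary relabeling'' step. The conjugated family $P(\kv):=\euler^{-\icplx\kv\cdot\xv}\Pi^0_\kv\euler^{\icplx\kv\cdot\xv}$ is not $\Lambda^*$-periodic; it satisfies $P(\kv+\floK)=\euler^{-\icplx\floK\cdot\xv}P(\kv)\euler^{\icplx\floK\cdot\xv}$. Hence $P(\Kv+\lvar\kvh_2)$ is close to $\euler^{-\icplx\floK_\iiv\cdot\xv}P(\Kv_\iiv)\euler^{\icplx\floK_\iiv\cdot\xv}$, not to $P(\Kv_\iiv)$.

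With the corrected fiber, the argument closes as in the paper:
\begin{itemize}
\item Use $\Pi^0_{\Kv+\lvar\kvh_2}=\Pi^0_{\Kv_\iiv+\modK_\iiv(\lvar)}$, since both $L^2_\kv$ and $\calH^0_\kv$ depend only on $\kv$ mod $\Lambda^*$.
\item Then apply $\Pi^0_{\Kv_\star+\modK}=\euler^{\icplx\modK\cdot\xv}\Pi^0_{\Kv_\star}\euler^{-\icplx\modK\cdot\xv}+\calO(|\modK|)$, which is \eqref{eq:asymptotics_Pi0k_H_E_Pi0k} with $d=0$.
\end{itemize}
The leading term is exactly the corrected $Q_\iiv(\lvar)$. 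The constant is uniform in $\iiv$ because only $\Kv_\star\in\{\Kv,\Kv'\}$ occur. Your Step 3 and your adjoint check are fine as stated.
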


    \subsection{A Schur complement reduction to near spectral components}\label{sec:schur_complement}
    The projectors $\Pi_\NEAR$ and $\Pi_\FAR$ allow us to decompose elements of $L^2_{\Kv \cdot \vvh_1} (\cylaug)$ in terms of ``near'' components and ``far'' components: 
    \begin{equation*}
    \displaystyle
    L^2_{\Kv \cdot \vvh_1} (\cylaug) = \scrX_{\NEAR} \oplus \scrX_{\FAR}, \textWHERE \scrX_{\NEAR} := \Ran(\Pi_\NEAR) \textAND \scrX_{\FAR} := \Ran(\Pi_\FAR). 
    \end{equation*}
    (Recall that the projections $\Pi_\NEAR$ and $\Pi_\FAR$, and thus the corresponding subspaces $\scrX_{\NEAR}$ and $\scrX_{\FAR}$ are $\delta$-dependent.) %
    Accordingly, $\calH^\delta_{\AUG, \Kv \cdot \vvh_1} (\mu) - E_D - \delta z$, for $z \in \C$, admits the following block representation, as an operator acting on $\scrX_{\NEAR} \oplus \scrX_{\FAR}$: 
    \begin{equation*}
    \displaystyle
    \begin{pmatrix}
        \Pi_\NEAR\, \big(\vts \calH^\delta_{\AUG, \Kv \cdot \vvh_1} (\mu) - E_D - \delta z \vts\big)\, \Pi_\NEAR %
        \ & %
        \Pi_\NEAR\, \calH^\delta_{\AUG, \Kv \cdot \vvh_1} (\mu)\, \Pi_\FAR %
        \rets
        \Pi_\FAR\, \calH^\delta_{\AUG, \Kv \cdot \vvh_1} (\mu)\, \Pi_\NEAR %
        \ & %
        \Pi_\FAR\, \big(\vts \calH^\delta_{\AUG, \Kv \cdot \vvh_1} (\mu) - E_D - \delta z \vts\big)\, \Pi_\FAR
    \end{pmatrix}.
    \end{equation*}
    For $\delta$ small, we shall implement a reduction to  $\scrX_\NEAR$ based on the Schur complement.

    \begin{lemma}[Schur complement]\label{lem:schur_complement}
    Let $\scrX, \scrY$ denote Banach spaces. Consider the block operator matrix
    \begin{equation*}
        \calM = \begin{pmatrix} \bfA & \bfB \\ \bfC & \bfD \end{pmatrix}: \DOM(\calM) = \scrX \oplus \DOM(\bfD) \subset \scrX \oplus \scrY \to \scrX \oplus \scrY,
    \end{equation*}
    where $\bfA \in \scrL(\scrX)$ and $\bfC \in \scrL(\scrX, \scrY)$ are bounded, and $\bfB$ and $\bfD$ are possibly unbounded with dense domains $\DOM(\bfD) \subseteq \DOM(\bfB) \subseteq \scrY$. Suppose that $\bfD$ is invertible on $\scrY$ with bounded inverse $\bfD^{-1} \in \scrL(\scrY, \DOM (\bfD))$. Introduce the Schur complement $\bfS := \bfA - \bfB \bfD^{-1} \bfC \in \scrL(\scrX)$. 
    Finally, suppose  $\bfS$ is invertible on $\scrX$. 

    Then, $\calM$ is invertible on $\scrX \oplus \scrY$, and its inverse, $\calM^{-1}$, is bounded from $\scrX \oplus \scrY$ to $\DOM(\calM) = \scrX \oplus \DOM(\bfD)$ and is explicitly given by:
    \begin{equation*}
        \displaystyle
        \calM^{-1} = %
        \begin{pmatrix}
        \bfS^{-1} & %
        -\bfS^{-1} \bfB \bfD^{-1} \rets %
        -\bfD^{-1} \bfC \bfS^{-1} & %
        \bfD^{-1} + \bfD^{-1} \bfC \bfS^{-1} \bfB \bfD^{-1}  
        \end{pmatrix}\ .
    \end{equation*}
    \end{lemma}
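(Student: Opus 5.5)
The plan is to verify the claimed formula directly, via an explicit block $LDU$ factorization of $\calM$. Since $\bfD$ is boundedly invertible, I will write
\begin{equation*}
\calM = \begin{pmatrix} \Id_\scrX & \bfB \bfD^{-1} \\ 0 & \Id_\scrY \end{pmatrix}
\begin{pmatrix} \bfS & 0 \\ 0 & \bfD \end{pmatrix}
\begin{pmatrix} \Id_\scrX & 0 \\ \bfD^{-1} \bfC & \Id_\scrY \end{pmatrix}.
\end{equation*}
Expanding the product recovers the entries $\bfA$, $\bfB$, $\bfC$, $\bfD$. For example, the $(1,1)$ entry is $\bfS + \bfB\bfD^{-1}\bfD\bfD^{-1}\bfC = \bfS + \bfB\bfD^{-1}\bfC = \bfA$, and the $(2,1)$ entry is $\bfD\bfD^{-1}\bfC = \bfC$.

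The first point to check is that every composition is well defined on the correct domain.
\begin{itemize}
\item $\bfD^{-1}\bfC \in \scrL(\scrX, \DOM(\bfD))$, because $\bfC$ is bounded and $\bfD^{-1}$ maps $\scrY$ into $\DOM(\bfD)$.
\item $\bfB\bfD^{-1}$ is defined on all of $\scrY$, because $\DOM(\bfD)\subseteq\DOM(\bfB)$.
\item $\bfB\bfD^{-1}$ is bounded. I would obtain this either by reading the hypothesis $\bfD^{-1}\in\scrL(\scrY,\DOM(\bfD))$ with $\DOM(\bfD)$ carrying the graph norm and assuming $\bfB$ is $\bfD$-bounded, or by taking the stated fact $\bfS\in\scrL(\scrX)$ as implicitly including boundedness of $\bfB\bfD^{-1}$. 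The second route is the one used in the application, where $\bfB$ is bounded on the relevant range.
\end{itemize}
This domain bookkeeping is the only real obstacle. I would make it explicit by noting that the lower-triangular factor maps $\DOM(\calM) = \scrX\oplus\DOM(\bfD)$ bijectively onto itself.

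Next, each factor is inverted separately. The triangular factors have inverses obtained by flipping the sign of the off-diagonal entry: the inverse of $\begin{pmatrix}\Id & 0\\ \bfD^{-1}\bfC & \Id\end{pmatrix}$ is $\begin{pmatrix}\Id & 0\\ -\bfD^{-1}\bfC & \Id\end{pmatrix}$, and similarly for the upper factor. The diagonal factor is invertible with inverse $\operatorname{diag}(\bfS^{-1},\bfD^{-1})$, by the hypotheses on $\bfS$ and $\bfD$.

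Multiplying the three inverses in reverse order gives
\begin{equation*}
\begin{pmatrix} \Id & 0 \\ -\bfD^{-1}\bfC & \Id \end{pmatrix}
\begin{pmatrix} \bfS^{-1} & -\bfS^{-1}\bfB\bfD^{-1} \\ 0 & \bfD^{-1} \end{pmatrix}
= \begin{pmatrix} \bfS^{-1} & -\bfS^{-1}\bfB\bfD^{-1} \\ -\bfD^{-1}\bfC\bfS^{-1} & \bfD^{-1}+\bfD^{-1}\bfC\bfS^{-1}\bfB\bfD^{-1} \end{pmatrix},
\end{equation*}
which is exactly the stated formula.

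Finally, I would confirm the mapping properties of this inverse. The second row takes values in $\DOM(\bfD)$ because each of its entries has $\bfD^{-1}$ on the left, so $\calM^{-1}$ maps $\scrX\oplus\scrY$ into $\DOM(\calM)$, and it is bounded as a composition of bounded maps. To conclude that $\calM$ is genuinely invertible, I would check both $\calM\calM^{-1} = \Id$ on $\scrX\oplus\scrY$ and $\calM^{-1}\calM = \Id$ on $\DOM(\calM)$. Both follow from the factorization, or alternatively by multiplying out the blocks directly.
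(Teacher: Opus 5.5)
The paper states this lemma without proof, treating it as classical, so there is no argument of its own to compare against. Your block $LDU$ factorization is the standard proof, and it is correct:
\begin{itemize}
\item the lower-triangular factor is a bijection of $\scrX\oplus\DOM(\bfD)$ onto itself;
\item $\operatorname{diag}(\bfS,\bfD)$ is a bijection of $\scrX\oplus\DOM(\bfD)$ onto $\scrX\oplus\scrY$;
\item the upper-triangular factor is a bijection of $\scrX\oplus\scrY$ onto itself, which only needs $\bfB\bfD^{-1}$ to be everywhere defined, and this follows from $\DOM(\bfD)\subseteq\DOM(\bfB)$;
\item the product of the three inverses is exactly the stated matrix.
\end{itemize}
The other standard route is to multiply out $\calM\calM^{-1}$ and $\calM^{-1}\calM$ block by block. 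The factorization gives injectivity and surjectivity at once and makes the domain bookkeeping transparent.

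You are right that boundedness of $\bfB\bfD^{-1}$ on all of $\scrY$ is not contained in the hypotheses as written. It is genuinely needed for the boundedness claim. Take $\scrY$ infinite-dimensional, $\bfA=\Id$, $\bfC=0$, $\bfD=\Id$, and $\bfB$ an everywhere-defined but discontinuous linear map. Every stated hypothesis holds, yet $\calM^{-1}$ has the unbounded entry $-\bfB$.

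Since $\bfS\in\scrL(\scrX)$ only controls the composition $\bfB\bfD^{-1}\bfC$, your second reading adds an assumption rather than extracting one from the hypotheses. A cleaner way to supply it is to assume $\bfB$ closable. Suppose $y_n\to 0$ and $\bfB\bfD^{-1}y_n\to z$. Then $\bfD^{-1}y_n\to 0$ in $\scrY$, because $\bfD^{-1}$ is continuous into $\scrY$, and closability of $\bfB$ gives $z=0$. So $\bfB\bfD^{-1}$ is closable and everywhere defined, hence closed, hence bounded by the closed graph theorem.

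In the paper's application the point is moot, because $\bfB$ itself is shown to be bounded in \eqref{eq:C1_B1_O_delta}.
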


    \vspace{1\baselineskip} \noindent
    To study the resolvent of $\calH^\delta_{\AUG, \Kv \cdot \vvh_1} (\mu)$ near the Dirac energy, $E_D$, we shall apply Lemma \ref{lem:schur_complement} with Banach spaces $\scrX = \scrX_\NEAR$, $\scrY = \scrX_\FAR$, and block operators:
    \begin{equation*}
    \begin{array}{l@{\qquad}l}
        \bfA := \Pi_\NEAR\, \big(\vts \calH^\delta_{\AUG, \Kv \cdot \vvh_1} (\mu) - E_D - \delta z \vts\big)\, \Pi_\NEAR, & %
        \bfB := \Pi_\NEAR\, \calH^\delta_{\AUG, \Kv \cdot \vvh_1} (\mu)\, \Pi_\FAR,
        \ret
        \bfC := \Pi_\FAR\, \calH^\delta_{\AUG, \Kv \cdot \vvh_1} (\mu)\, \Pi_\NEAR, & %
        \bfD := \Pi_\FAR\, \big(\vts \calH^\delta_{\AUG, \Kv \cdot \vvh_1} (\mu) - E_D - \delta z \vts\big)\, \Pi_\FAR.
    \end{array}
    \end{equation*}
    By Proposition \ref{prop:smoothing_near_projectors}, we have that $\Pi_\NEAR$ is bounded from $L^2_{\Kv \cdot \vvh_1} (\cylaug)$ to $H^2_{\Kv \cdot \vvh_1, \xv} (\cylaug)$, and it follows that $\bfA \in \scrL(\scrX_\NEAR)$. On the other hand, $\bfD$ is unbounded, with domain
    \begin{equation}
    \DOM(\bfD) := H^2_{\Kv \cdot \vvh_1, \xv} (\cylaug) \cap \scrX_\FAR.\label{eq:H2far}
    \end{equation}
    Moreover, note from its definition \eqref{eq:conjugated_augmented_operator} that
    \begin{multline}
    \calH^\delta_{\AUG, \Kv \cdot \vvh_1} (\mu) = \calH^0_{\AUG, \Kv \cdot \vvh_1} - 2 \icplx\vts \mu\vts \delta\, \kvh_1 \cdot \nabla_\xv + (\mu\vts \delta)^2\, |\kvh_1|^2
    \retss
    + \delta\, \big(\vts \nabla_\xv + \icplx\vts \mu\vts \delta\vts \kvh_1 \vts\big) \cdot \Big[\vts a(\xv)\, \sigma_2\, \kappa\big(\delta\vts \kvh_2 \cdot (\xv + s \vvv_2)\big) \vts\Big]\, \big(\vts \nabla_\xv + \icplx\vts \mu\vts \delta\vts \kvh_1 \vts\big). \label{eq:conjugated_augmented_operator_alt}
    \end{multline}
    From \eqref{eq:fibered_H0aug} and the definition \eqref{eq:def_Pi_I_2} of $\Pi_\iiv$, it follows that $\calH^0_{\AUG, \Kv \cdot \vvh_1}$ commutes with each projector $\Pi_\iiv$, and therefore with $\Pi_\NEAR$. Since $\Pi_\NEAR\, \Pi_\FAR = 0$, we deduce that
    \begin{equation*}
    \Pi_\NEAR\, \calH^0_{\AUG, \Kv \cdot \vvh_1}\, \Pi_\FAR = \Pi_\FAR\, \calH^0_{\AUG, \Kv \cdot \vvh_1}\, \Pi_\NEAR = 0.
    \end{equation*}
    Therefore, \eqref{eq:conjugated_augmented_operator_alt} combined with the boundedness of $\Pi_\NEAR$ from $L^2_{\Kv \cdot \vvh_1} (\cylaug)$ to $H^2_{\Kv \cdot \vvh_1, \xv} (\cylaug)$ implies that
    \begin{subequations}\label{eq:C1_B1_O_delta}
    \begin{equation}
    \bfC \in \scrL(\scrX_\NEAR, \scrX_\FAR) \textAND \bfC = \calO (\delta) \quad \textnormal{in}\ \ \scrL(\scrX_\NEAR, \scrX_\FAR).
    \end{equation}
    Taking the adjoint bound yields
    \begin{equation}
    \bfB \in \scrL(\scrX_\FAR, \scrX_\NEAR) \textAND \bfB = \calO (\delta) \quad \textnormal{in}\ \ \scrL(\scrX_\FAR, \scrX_\NEAR).
    \end{equation}
    \end{subequations}
    The implicit constants in the bounds \eqref{eq:C1_B1_O_delta} are uniform in $(\mu, z)$ varying in compact subsets of $\R \times \C$.

    \vspace{1\baselineskip} \noindent
    To apply Lemma \ref{lem:schur_complement}, we first need to verify that $\bfD$ is boundedly invertible from $\scrX_{\FAR}$ to its domain $\DOM (\bfD)$ defined by \eqref{eq:H2far} and equipped with the norm on $H^2_{\Kv \cdot \vvh_1, \xv} (\cylaug)$. This is precisely where the omnidirectional no-fold condition, \eqref{eq:omnidirectional_no_fold}, becomes relevant. Under this assumption, the $\kvh_2$--band dispersion slices $\lvar \mapsto E_b (\Kv + \lvar \kvh_2)$ are bounded away from the Dirac energy $E_D$ whenever $\Kv + \lvar \kvh_2$ lies outside a neighborhood of $\bbK := (\Kv + \Lambda^*) \cup (\Kv' + \Lambda^*)$, the lattice of high-symmetry quasi-momenta:

    \begin{lemma}\label{lem:dispersion_curves_bounded_away_from_Dirac}
        Suppose that Assumption \ref{hyp:omnidirectional_no_fold}, namely the omnidirectional no-fold condition \eqref{eq:omnidirectional_no_fold} holds. Then, there exist constants $\delta_0, C_0, C_1, C_2 > 0$ such that 
        \begin{subequations}
            \begin{align}
            \displaystyle
            \delta_0 < \dist (\kv, \bbK)\ \ &\Longrightarrow\ \ |E_\pm (\kv) - E_D| \geq C_0\label{eq:dispersion_curves_bounded_away_from_Dirac_1}
            \rets
            \delta^{3/4} < \dist (\kv, \bbK) \leq \delta_0 \ \ &\Longrightarrow \ \ |E_\pm (\kv) - E_D| \geq C_1\vts \delta^{3/4},\label{eq:dispersion_curves_bounded_away_from_Dirac_2}
            \rets
            b \in \N \setminus \{b_\star, b_\star+1\}  \ \ &\Longrightarrow \ \ |E_b (\kv) - E_D| \geq C_2\, (1 + b), \quad \kv \in \R^2.\label{eq:dispersion_curves_bounded_away_from_Dirac_3}
            \end{align} 
        \end{subequations}
    \end{lemma}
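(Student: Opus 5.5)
We prove the three estimates separately. Each follows from a compactness argument based on the Lipschitz continuity and $\Lambda^*$--periodicity of the dispersion functions, together with the no-fold condition \eqref{eq:omnidirectional_no_fold} and the conical behavior \eqref{eq:conical_behavior_Dirac} near the Dirac points.

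\emph{Estimate \eqref{eq:dispersion_curves_bounded_away_from_Dirac_3}.} For $b \geq b_\star + 2$ we have $E_b(\kv) \geq E_{b_\star+2}(\kv)$. The function $\kv \mapsto E_{b_\star+2}(\kv) - E_D$ is continuous and periodic. It does not vanish: a zero at some $\kv$ would, by \eqref{eq:omnidirectional_no_fold}, force $\kv \in \bbK$. But at such $\kv$ the value $E_D$ is exactly a double eigenvalue of $\calH^0_{\kv}$, occupied by bands $b_\star, b_\star+1$, so $E_{b_\star+2}(\kv) > E_D$. Hence $E_{b_\star+2}(\kv) - E_D$ attains a positive minimum on the compact set $\scrB$. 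The same argument gives $\min_{\scrB} (E_D - E_{b_\star-1}) > 0$ for bands below, of which there are finitely many. For large $b$, Weyl's law, applied uniformly in $\kv$ (for instance by comparison with $-\Delta$ on the torus, with $V$ bounded), gives $E_b(\kv) \geq c\,b - C$. Combining the lower bound for finitely many $b$ with the linear growth gives the bound $C_2(1+b)$ after shrinking constants. (If one prefers, the growth $(1+b)$ can be weakened to a uniform positive bound, which suffices in the sequel; the linear rate is standard from min-max.)

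\emph{Estimate \eqref{eq:dispersion_curves_bounded_away_from_Dirac_2}.} By Assumption \ref{hyp:Dirac_point} and Remark \ref{rmk:Dirac_point_K_sufficient}, the conical expansion \eqref{eq:conical_behavior_Dirac} holds near every point of $\bbK$, with a uniform radius $\veps_0$ and constant $C$, since there are only finitely many points of $\bbK$ modulo $\Lambda^*$. Choose $\delta_0 \leq \veps_0$ with $C\delta_0 \leq 1/2$. If $\dist(\kv,\bbK) = |\kv - \Kv_\star| \leq \delta_0$, then
\[
|E_\pm(\kv) - E_D| \geq \upsilon_D |\kv - \Kv_\star| (1 - C|\kv-\Kv_\star|) \geq \tfrac{\upsilon_D}{2}\,\dist(\kv,\bbK).
\]
When $\dist(\kv,\bbK) > \delta^{3/4}$ this is at least $\tfrac{\upsilon_D}{2}\,\delta^{3/4}$, which is the claim with $C_1 = \upsilon_D/2$.

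\emph{Estimate \eqref{eq:dispersion_curves_bounded_away_from_Dirac_1}.} This is the main step, and the only place where the omnidirectional assumption is used. Consider the closed set $\{\kv : \dist(\kv,\bbK) \geq \delta_0\}$. Its image in $\R^2/\Lambda^*$ is compact, and $\kv \mapsto |E_\pm(\kv) - E_D|$ is continuous and periodic. If this function had zero infimum on the set, compactness would give a point with $E_\pm(\kv) = E_D$ and $\kv \notin \bbK$, contradicting \eqref{eq:omnidirectional_no_fold}. Hence the infimum $C_0 > 0$ is attained and positive. The constants obtained are independent of $r$ and $\delta$, which is what makes the lemma edge-independent.
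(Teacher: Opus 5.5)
Your proof is correct and follows the paper's argument. You use compactness of $\R^2/\Lambda^*$ together with the no-fold condition for \eqref{eq:dispersion_curves_bounded_away_from_Dirac_1}, the conical expansion \eqref{eq:conical_behavior_Dirac} with $C\delta_0\le 1/2$ for \eqref{eq:dispersion_curves_bounded_away_from_Dirac_2}, and no-fold plus band ordering plus uniform Weyl asymptotics for \eqref{eq:dispersion_curves_bounded_away_from_Dirac_3}. The only cosmetic difference is that you apply the no-fold condition directly to $E_{b_\star+2}$ and $E_{b_\star-1}$, fixing the sign by connectedness of $\R^2$ and the value on $\bbK$. The paper instead first derives $\pm(E_\pm-E_D)\ge 0$ via \cite{drouot2019characterization}.

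One small correction to your parenthetical remark: the proof of Proposition \ref{prop:Hfar_is_invertible} uses exactly the $(1+b)$ growth to bound $\bfD^{-1}$ into $H^2_{\Kv\cdot\vvh_1,\xv}(\cylaug)$. A merely uniform lower bound would therefore not suffice there without an extra elliptic-regularity step.
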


    \begin{dem}
    {\textit{Estimate \eqref{eq:dispersion_curves_bounded_away_from_Dirac_1}.}} We proceed by contradiction: assume that for $\delta_0 > 0$ and $n \in \N$, there exists $\kv_n \in \scrB$ such that $\delta_0 < \dist(\kv_n, \bbK)$ and $|E_\pm (\kv_n) - E_D| \leq 1/n$. Since the sequence $(\kv_n)_n$ is bounded, one can extract a subsequence converging to some $\widehat{\kv} \in \scrB$ as $n \to +\infty$. It follows that $0 < \delta_0 \leq \dist(\widehat{\kv}, \bbK)$, whereas, by continuity of $\kv\mapsto E_\pm(\kv)$, we have $E_\pm (\widehat{\kv}) = E_D$. But Condition \eqref{eq:omnidirectional_no_fold} requires that $\hat\kv\in\mathbb K$, which contradicts that $\hat\kv$ is bounded away from $\mathbb K$.

    \vspace{1\baselineskip}\noindent
    {\textit{Estimate \eqref{eq:dispersion_curves_bounded_away_from_Dirac_2}.}} The conical behavior \eqref{eq:conical_behavior_Dirac} of the dispersion surfaces $E_\pm$ around the Dirac point implies the existence of constants $\delta_1, C > 0$, as well as functions $e_\pm (\kv)$ defined for $\dist (\kv, \bbK) < \delta_1$ such that 
    \begin{equation}
        |e_\pm (\kv)| \leq C \dist (\kv, \bbK), \quad E_\pm (\kv) - E_D = \pm \upsilon_D\, \dist (\kv, \bbK)\,(1 + e_\pm(\kv)).\label{eq:dispersion_curves_bounded_away_from_Dirac_2_dem}
    \end{equation}
    Since $\dist (\kv, \bbK) < \delta_1 \Longrightarrow 1 + e_\pm(\kv) \geq 1 - C \dist (\kv, \bbK)$, we choose $\delta_0 := \min(\delta_1, 1/C)/2$ so that $\dist (\kv, \bbK) \leq \delta_0 < \delta_1 \Longrightarrow 1 + e_\pm(\kv) > 1 - C \delta_0 =: \widetilde{C} > 0$. For such values of $\kv$, we deduce directly from \eqref{eq:dispersion_curves_bounded_away_from_Dirac_2_dem} that $\dist (\kv, \bbK) \in (\delta^{3/4}, \delta_0] \Longrightarrow |E_\pm (\kv) - E_D| \geq \upsilon_D\, \dist (\kv, \bbK)\, \widetilde{C} \geq C_1 \delta^{3/4}$, where $C_1 := \upsilon_D\vts \widetilde{C}  > 0$.

    \vspace{1\baselineskip}\noindent
    {\textit{Estimate \eqref{eq:dispersion_curves_bounded_away_from_Dirac_3}.}} The first step of the proof of \cite[Lemma 4.1]{drouot2019characterization} implies that if the no-fold condition \eqref{eq:omnidirectional_no_fold} holds, then $\pm (E_\pm (\kv) - E_D) \geq 0$ for $\kv \in \R^2$. Hence, \eqref{eq:dispersion_curves_bounded_away_from_Dirac_3} follows from the fact that the dispersion functions are labeled in increasing order, and from Weyl's law; see also \cite[Equation (7.21)]{fefferman2016edge}.
    \end{dem}

    \vspace{1\baselineskip} \noindent
    The invertibility of $\bfD$ then follows from the behavior of the dispersion surfaces far from the Dirac point:

    \begin{proposition}\label{prop:Hfar_is_invertible}
        Suppose the omnidirectional no-fold condition from Assumption \ref{hyp:omnidirectional_no_fold} satisfied. Consider $\mu_0 > 0$ and a compact set $S \subset \C$. There exists a constant $\delta_0 > 0$ depending on $\mu_0, S$ such that for $\delta \in (0, \delta_0)$, $|\mu| \leq \mu_0$, $z \in S$, the operator 
        \[\bfD = \Pi_\FAR\, \big(\vts \calH^\delta_{\AUG, \Kv \cdot \vvh_1} (\mu) - E_D - \delta z \vts\big)\, \Pi_\FAR\] is invertible from $\scrX_{\FAR}$ to its domain $\DOM (\bfD)$ defined by \eqref{eq:H2far}. In addition, equipping $\DOM (\bfD)$ with the $H^2_{\Kv \cdot \vvh_1, \xv} (\cylaug)$--norm, we have $\bfD^{-1} = \calO (\delta^{-3/4})$ in $\scrL(\scrX_{\FAR}, \DOM (\bfD))$, uniformly in $z \in S$ and $\mu \in [-\mu_0, \mu_0]$. 
    \end{proposition}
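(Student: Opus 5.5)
The plan is to split $\bfD$ into an unperturbed part, which the fibered representation of Proposition \ref{prop:completeness_slices} diagonalizes on $\scrX_\FAR$, and a small perturbation, and then to invert by a Neumann series. Using \eqref{eq:conjugated_augmented_operator_alt}, write
\[
\bfD = \Pi_\FAR\,(\calH^0_{\AUG,\Kv\cdot\vvh_1} - E_D)\,\Pi_\FAR + \Pi_\FAR\, \calW^\delta\, \Pi_\FAR, \qquad \calW^\delta := \calH^\delta_{\AUG,\Kv\cdot\vvh_1}(\mu) - \calH^0_{\AUG,\Kv\cdot\vvh_1} - \delta z .
\]
Here $\calW^\delta$ is a second-order operator in $\xv$ alone. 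All of its coefficients are bounded, and each carries a factor $\delta$ (there is no derivative in $s$, since $\kappa$ only appears as a bounded multiplier). Hence $\|\calW^\delta\|_{\scrL(H^2_{\Kv\cdot\vvh_1,\xv},L^2)} \lesssim \delta$, uniformly in $|\mu|\le\mu_0$ and $z\in S$.

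\textbf{Step 1: invert the unperturbed far block.} Since $\calH^0_{\AUG}$ commutes with $\Pi_\FAR$, we define $\bfD_0 := \Pi_\FAR(\calH^0_{\AUG}-E_D)\Pi_\FAR$ on $\DOM(\bfD)$. By \eqref{eq:spectral_representation_H0aug}, it acts by multiplication by $E_b(\Kv+\lvar\kvh_2)-E_D$ on the coefficients $\widetilde F_b(\lvar)$. For $F\in\scrX_\FAR$ the coefficients $\widetilde F_b(\lvar)$ are arbitrary when $b\notin\{b_\star,b_\star+1\}$. For $b=\pm$, they vanish on $\bigcup_{\iiv\in\bbL(\delta^{3/4})}\lvar_\iiv+[-\delta^{3/4},\delta^{3/4}]$.

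By Proposition \ref{prop:veps_neighborhood_K}, the complement of that union consists of $\lvar$ with $\Kv+\lvar\kvh_2\notin B_{\delta^{3/4}}$. Because $\widehat\Omega_\veps$ contains a Euclidean ball of radius $c\veps$, this gives $\dist(\Kv+\lvar\kvh_2,\bbK)\ge c\,\delta^{3/4}$. There is a constant mismatch between parallelogram and ball neighborhoods; Lemma \ref{lem:dispersion_curves_bounded_away_from_Dirac} is insensitive to it, so its proof adapts with $c\,\delta^{3/4}$ in place of $\delta^{3/4}$. Lemma \ref{lem:dispersion_curves_bounded_away_from_Dirac} then yields $|E_\pm - E_D| \ge \min(C_0, C_1 c\,\delta^{3/4})$ on the relevant set, and $|E_b - E_D| \ge C_2(1+b)$ for the other bands.

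Therefore $\bfD_0^{-1}$ is multiplication by $(E_b-E_D)^{-1}$ on those coefficients. Through the norm characterization \eqref{eq:link_augmented_Sobolev_regularity_decay_FB_coeffs} with $q=2$, this gives $\|\bfD_0^{-1}\|_{\scrL(\scrX_\FAR,H^2)}\lesssim\delta^{-3/4}$. The high bands contribute a factor $(1+b^2)^{1/2}/(1+b)^2\lesssim1$ because of the growth $|E_b-E_D|\gtrsim 1+b$.

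One point needs care: $\DOM(\bfD)$ is exactly the image of this inverse. This holds because $\Pi_\FAR$ preserves $H^2_{\Kv\cdot\vvh_1,\xv}$, by Proposition \ref{prop:smoothing_near_projectors} applied to $\Pi_\NEAR$.

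\textbf{Step 2: perturb.} Write $\bfD = \bfD_0(\Id + \bfD_0^{-1}\Pi_\FAR\calW^\delta\Pi_\FAR)$ on $\DOM(\bfD)$. The correction is bounded on $H^2\cap\scrX_\FAR$, with
\[
\|\bfD_0^{-1}\Pi_\FAR\calW^\delta\|_{\scrL(H^2)} \lesssim \delta^{-3/4}\cdot\delta = \delta^{1/4},
\]
using that $\Pi_\FAR$ is bounded on $L^2$. For $\delta<\delta_0$ this is less than $1/2$, so a Neumann series inverts the bracket. We conclude $\bfD^{-1} = (\Id+\cdots)^{-1}\bfD_0^{-1} = \calO(\delta^{-3/4})$ in $\scrL(\scrX_\FAR,\DOM(\bfD))$, uniformly in $\mu$ and $z$.

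\textbf{Expected obstacle.} The main difficulty is Step 1, namely the precise lower bound $|E_\pm(\Kv+\lvar\kvh_2)-E_D|\gtrsim\delta^{3/4}$ on the far set. It requires matching the parallelogram neighborhoods $B_\veps$ with the distance to $\bbK$. It also requires uniformity over the infinitely many segments $\iiv$, which the wrapping lemma supplies, since every segment has the same geometry. The perturbative step is routine once the $H^2$ bound on $\bfD_0^{-1}$ is in hand.
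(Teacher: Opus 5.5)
Your proposal is correct and follows the paper's proof essentially step for step. You split off the $\calO(\delta)$ perturbation via \eqref{eq:conjugated_augmented_operator_alt}, invert $\Pi_\FAR(\calH^0_{\AUG,\Kv\cdot\vvh_1}-E_D)\Pi_\FAR$ by dividing the Floquet--Bloch coefficients by $E_b(\Kv+\lvar\kvh_2)-E_D$ using Lemma \ref{lem:dispersion_curves_bounded_away_from_Dirac} and \eqref{eq:link_augmented_Sobolev_regularity_decay_FB_coeffs}, and conclude with a Neumann series since $\delta^{-3/4}\cdot\delta=\delta^{1/4}$. Your remark that $B_{\delta^{3/4}}$ only yields $\dist(\Kv+\lvar\kvh_2,\bbK)\ge c\,\delta^{3/4}$ fills in a harmless constant the paper leaves implicit. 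One small correction: the high-band factor should read $(1+b^2)^{1/2}/(1+b)$ rather than $(1+b^2)^{1/2}/(1+b)^2$, which is still bounded.
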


    \begin{dem}
        Let $(\mu, z) \in [-\mu_0, \mu_0] \times S$. By \eqref{eq:conjugated_augmented_operator_alt}, uniformly in $(\mu, z)$ varying in compact subsets,
        \begin{equation*}
            \displaystyle
            \bfD = \Pi_\FAR\, \big(\vts \calH^0_{\AUG, \Kv \cdot \vvh_1} - E_D \vts\big)\, \Pi_\FAR + \calO (\delta) \quad \textnormal{in}\ \ \scrL(\DOM (\bfD), \scrX_\FAR).
        \end{equation*}
        Therefore, it suffices to first prove that $\Pi_\FAR\, (\vts \calH^0_{\AUG, \Kv \cdot \vvh_1} - E_D \vts)\, \Pi_\FAR$ is invertible from $\scrX_\FAR$ to $\DOM (\bfD)$ with an $\calO(\delta^{-3/4})$--bounded inverse, and then conclude to the invertibility of $\bfD$ using a Neumann series argument.

        By completeness of the Floquet-Bloch modes $\{\Phi_b (\cdot\,; \Kv + \lvar \kvh_2)\}_{b \geq 1, \lvar \in \R}$ in $L^2_{\Kv \cdot \vvh_1} (\cylaug)$ (see Proposition \ref{prop:completeness_slices}), and by definition (\ref{eq:def_Pi_I_3}, \ref{eq:def_near_far_projections}) of $\Pi_\FAR$, any $F \in \scrX_\FAR$ is given by
        \begin{multline*}
            F (\xv, s) = \int_\R \chi^\delta_\FAR (\lvar)\, \sum_{\pm} \widetilde{F}_\pm (\lvar)\, \Phi_\pm (\xv; \Kv + \lvar \kvh_2)\, \euler^{\vts \icplx\vts (\Kv + \lvar \kvh_2) \cdot s \vvv_2}\, d\lvar
            \\[-4pt]
            + \int_\R \sum_{b \in \N \setminus \{b_\star, b_\star + 1\}} \widetilde{F}_b (\lvar)\, \Phi_b (\xv; \Kv + \lvar \kvh_2)\, \euler^{\vts \icplx\vts (\Kv + \lvar \kvh_2) \cdot s \vvv_2}\, d\lvar, \quad (\xv, s) \in \R^2 \times \R.
        \end{multline*}
        Here, $\widetilde{F}_- (\lvar) := \widetilde{F}_{b_\star} (\lvar)$, $\widetilde{F}_+ (\lvar) := \widetilde{F}_{b_\star + 1} (\lvar)$, where $\widetilde{F}_b (\lvar)$ is defined by \eqref{eq:def_Ftilde_b}, and
        \begin{equation*}
            \chi^\delta_\FAR (\lvar) := 1 - \sum_{\iiv \in \bbL(\delta^{3/4})} \chi \big(\delta^{-3/4} (\lvar - \lvar_\iiv)\big).
        \end{equation*}
        We introduce the function in $\scrX_\FAR$, defined by
        \begin{equation*}
            U (\xv, s) = \sum_{b \geq 1} \int_\R \widetilde{U}_b(\lvar)\, \Phi_b (\xv; \Kv + \lvar \kvh_2)\, \euler^{\vts \icplx\vts (\Kv + \lvar \kvh_2) \cdot s \vvv_2}\, d\lvar, \textFOR (\xv, s) \in \R^2 \times \R,
        \end{equation*}
        with $\widetilde{U}_- (\lvar) = \widetilde{U}_{b_\star} (\lvar)$, $\widetilde{U}_+ (\lvar) = \widetilde{U}_{b_\star + 1} (\lvar)$, and where
        \begin{equation*}
            \widetilde{U}_b (\lvar) := \left\{ 
            \begin{array}{r@{\textFOR}l}
                \displaystyle \chi^\delta_\FAR (\lvar)\, \big(\vts E_- (\Kv + \lvar \kvh_2) - E_D \vts\big)^{-1}\, \widetilde{F}_- (\lvar) & b = b_\star,
                \rets
                \displaystyle \chi^\delta_\FAR (\lvar)\, \big(\vts E_+ (\Kv + \lvar \kvh_2) - E_D \vts\big)^{-1}\, \widetilde{F}_+ (\lvar) & b = b_\star + 1,
                \rets
                \displaystyle \big(\vts E_b (\Kv + \lvar \kvh_2) - E_D \vts\big)^{-1}\, \widetilde{F}_b (\lvar) & b \not\in \{b_\star, b_\star + 1\}.
            \end{array}
            \right.
        \end{equation*}
        It follows from Lemma \ref{lem:dispersion_curves_bounded_away_from_Dirac} that for $\delta$ small enough,
        \begin{equation*}
            |\widetilde{U}_\pm (\lvar)| \underset{(\ref{eq:dispersion_curves_bounded_away_from_Dirac_1}, \ref{eq:dispersion_curves_bounded_away_from_Dirac_2})}{\lesssim} \delta^{-3/4}\, |\widetilde{F}_\pm (\lvar)| \textAND (1 + b)\, |\widetilde{U}_b (\lvar)| \underset{\eqref{eq:dispersion_curves_bounded_away_from_Dirac_3}}{\lesssim} |\widetilde{F}_b (\lvar)| \quad \spforall b \neq \pm,
        \end{equation*}
        where the implied constants are independent of $\lvar$. These estimates, squared and summed over $b \geq 1$, $\lvar \in \R$, lead to
        \begin{equation*}
            \int_\R \sum_{b \geq 1} (1 + b)^2\, |\widetilde{U}_b (\lvar)|^2 \lesssim \delta^{-3/2}\, \int_\R \sum_{b \geq 1} |\widetilde{F}_b (\lvar) |^2,
        \end{equation*}
        or equivalently, using the characterization \eqref{eq:link_augmented_Sobolev_regularity_decay_FB_coeffs}:
        \begin{equation}\label{eq:Hfar_is_invertible_dem_1}
        \|\vts U \vts\|_{H^2_{\Kv \cdot \vvh_1, \xv} (\cylaug)} \lesssim \delta^{-3/4}\, \|\vts F \vts\|_{L^2_{\Kv \cdot \vvh_1} (\cylaug)}.
        \end{equation}
        In particular, $U \in H^2_{\Kv \cdot \vvh_1, \xv} (\cylaug)$, and from the spectral representation \eqref{eq:spectral_representation_H0aug} of $\calH^0_{\AUG, \Kv \cdot \vvh_1}$, we see that
        \begin{equation}\label{eq:Hfar_is_invertible_dem_2}
            \Pi_\FAR\, \big(\vts \calH^0_{\AUG, \Kv \cdot \vvh_1} - E_D \vts\big)\, \Pi_\FAR\, U = F.
        \end{equation}
        Equations (\ref{eq:Hfar_is_invertible_dem_1}, \ref{eq:Hfar_is_invertible_dem_2}) imply that $\Pi_\FAR\, (\vts \calH^0_{\AUG, \Kv \cdot \vvh_1} - E_D \vts)\, \Pi_\FAR$ is boundedly invertible from $\scrX_\FAR$ to $H^2_{\Kv \cdot \vvh_1, \xv} (\cylaug) \cap \scrX_\FAR = \DOM (\bfD)$ with an $\calO(\delta^{-3/4})$ inverse.
    \end{dem}

    \vspace{1\baselineskip} \noindent
    Now that the invertibility of $\bfD$ has been established, it remains to study the invertibility of the Schur complement $\bfS := \bfA - \bfB \bfD^{-1} \bfC$ on $\scrX_\NEAR$. Using Proposition \ref{prop:Hfar_is_invertible} and \eqref{eq:C1_B1_O_delta}, one has
    \begin{align}
    \bfS &= \Pi_\NEAR\, \big(\vts \calH^\delta_{\AUG, \Kv \cdot \vvh_1} (\mu) - E_D - \delta z \vts\big)\, \Pi_\NEAR + \calO (\delta^{1 - 3/4 + 1}) &\textnormal{in}\ \ \scrL(\scrX_\NEAR) \nonumber
    \\
    &=\Pi_\NEAR\, \big(\vts \calH^\delta_{\AUG, \Kv \cdot \vvh_1} (\mu) - E_D - \delta z \vts\big)\, \Pi_\NEAR + \calO (\delta^{5/4}) &\textnormal{in}\ \ \scrL(\scrX_\NEAR),\label{eq:link_schur_complement_H_near}
    \end{align}
    uniformly in $(\mu, z)$ varying in compact subsets of $\R \times \C$. In particular, it suffices to study the invertibility of $\Pi_\NEAR\, (\vts \calH^\delta_{\AUG, \Kv \cdot \vvh_1} (\mu) - E_D - \delta z \vts)\, \Pi_\NEAR$, and then pass to $\bfS$ using a Neumann series argument. This is the object of the next results.

    Recall the block-diagonal effective Dirac operator, $\calD^{\vts \delta} (\mu) $ (see \eqref{eq:block_diagonal_dirac_operator}), emerging from the multiple scale asymptotic analysis of Section \ref{sec:ansatz_and_multi_scale_analysis}.
    \begin{proposition}\label{prop:asymptotics_HdeltaNEAR}
        Consider $\mu_0 > 0$ and a compact set $S \subset \C$. There exists $\delta_0 > 0$, depending on $\mu_0$ and $S$, such that for all $0<\delta<\delta_0$, the following operator expansions hold, with error terms which are  uniform in $\mu \in [-\mu_0, \mu_0]$, $z \in S$ and $\bbL', \bbL'' \subseteq \bbL (\delta^{3/4})$:
        \begin{enumerate}[label={$(\alph*).$}, ref={\theproposition.$(\alph*)$}]
            \item\label{prop:asymptotics_HdeltaNEAR_diagonal} The asymptotic behavior of $\Pi_{\bbL'}\, (\vts \calH^\delta_{\AUG, \Kv \cdot \vvh_1} (\mu) - E_D - \delta z \vts)\, \Pi_{\bbL'}$ is governed by a band-limited block-diagonal effective Dirac operator. More precisely, 
            \begin{multline*}
            \Pi_{\bbL'}\, \big(\vts \calH^\delta_{\AUG, \Kv \cdot \vvh_1} (\mu) - E_D - \delta z \vts\big)\, \Pi_{\bbL'} 
            \retss
            = \delta\, \calJ^*_\delta\, \chi (\delta^{1/4}\, D_\zeta)\, \mathds{1}_{\bbL'}\, \big( \calD^{\vts \delta} (\mu) - z \big)\, \mathds{1}_{\bbL'}\, \chi (\delta^{1/4}\, D_\zeta)\, \calJ^{}_\delta + \calO (\delta^{3/2}) \quad \textnormal{in}\ \ \scrL(\scrX_{\bbL'}).
            \end{multline*}
            We recall that $\scrX_{\bbL'} = \Ran (\Pi_{\bbL'})$ and $\chi (\delta^{1/4}\, D_\zeta) := \calF^*\, \chi (\delta^{1/4}\, \cdot)\, \calF$; see also \eqref{eq:def_Fourier_multiplier}.
            \item\label{prop:asymptotics_HdeltaNEAR_off_diagonal} If $\bbL'$ and $\bbL''$ are disjoint, then
            \begin{equation*}
            \Pi_{\bbL'}\, \big(\vts \calH^\delta_{\AUG, \Kv \cdot \vvh_1} (\mu) - E_D - \delta z \vts\big)\, \Pi_{\bbL''} = \calO (\delta^{7/4}) \quad \textnormal{in}\ \ \scrL(\vts\scrX_{\bbL''},\, \scrX_{\bbL'}\vts).
            \end{equation*}
        \end{enumerate}
    \end{proposition}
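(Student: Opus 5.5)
The plan is to decompose $\calH^\delta_{\AUG, \Kv \cdot \vvh_1} (\mu) - E_D - \delta z$ via \eqref{eq:conjugated_augmented_operator_alt} as
\[
\bigl(\calH^0_{\AUG, \Kv \cdot \vvh_1} - E_D\bigr) + \delta\, \mathcal{W}_1 + \delta^2\, \mathcal{W}_2 - \delta z,
\]
where $\mathcal{W}_1 := -2\icplx\mu\, \kvh_1 \cdot \nabla_\xv + \nabla_\xv \cdot \kappa(\delta \kvh_2 \cdot (\xv + s\vvv_2))\, a\, \sigma_2\, \nabla_\xv$, and $\mathcal{W}_2$ collects the remaining terms. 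These are bounded from $H^2_{\Kv\cdot\vvh_1,\xv}$ to $L^2$. Since $\Pi_{\bbL'}$ maps into $H^2_{\Kv\cdot\vvh_1,\xv}$ (Proposition \ref{prop:smoothing_near_projectors}), the $\delta^2\mathcal{W}_2$ term contributes only $\calO(\delta^2)$. Each remaining piece is then sandwiched between near projectors and expanded using Proposition \ref{prop:asymptotics_near_projectors}, i.e. $\Pi_{\bbL'} \approx \calJ^*_\delta \mathds{1}_{\bbL'} \chi(\delta^{1/4}D_\zeta) \mathds{1}_{\bbL'} \calJ_\delta$. The structural input is that on $\Ran \Pi_\iiv$ the Bloch fibers lie at $\Kv + \lvar\kvh_2 = \Kv_\iiv + \modK_\iiv(\lvar) + \floK_\iiv$ with $|\modK_\iiv(\lvar)| \lesssim \delta^{3/4}$, by the Wrapping Lemma \ref{lem:wrapping_lemma}.

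\emph{Unperturbed part.} By the fibered representation \eqref{eq:fibered_H0aug}, $\Pi_\iiv(\calH^0_{\AUG}-E_D)\Pi_\iiv$ acts on each fiber as $\Pi^0_\kv(\calH^0_\kv - E_D)\Pi^0_\kv$ with $\kv = \Kv_\iiv + \modK_\iiv(\lvar)$. The next step is to Taylor expand the smooth conjugated projection $\euler^{-\icplx\kv\cdot\xv}\Pi^0_\kv\euler^{\icplx\kv\cdot\xv}$ about $\Kv_\iiv$, in the spirit of Kato. At first order this gives the $2\times 2$ matrix $\sigma^{\Kv_\iiv}(\modK_\iiv(\lvar))$ from Proposition \ref{prop:expression_Phi_l_nabla_Phi_j}, with a remainder $\calO(|\modK|^2) = \calO(\delta^{3/2})$.

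Writing $\modK_\iiv(\lvar) = \gamma_\iiv \kvh_1 + (\lvar - \lvar_\iiv)\kvh_2$, two identifications follow. First, $\calU_\delta$ turns $\lvar - \lvar_\iiv$ into $\delta D_\zeta$; this must be checked against the definition of $\calT_\iiv$ in \eqref{eq:averaging_operator}, where the variable $\zeta - \kvh_2\cdot\xv$ is exactly the $s$-slot matching the Fourier variable $\lvar$. Second, $\gamma_\iiv = \delta\cdot\delta^{-1}\gamma_\iiv$. Together these reproduce $\delta\bigl[\sigma^{\Kv_\iiv}(\kvh_2) D_\zeta + \delta^{-1}\gamma_\iiv \sigma^{\Kv_\iiv}(\kvh_1)\bigr]$ conjugated by $\calJ_{\delta,\iiv}$.

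\emph{First-order perturbation.} The term $-2\icplx\mu\kvh_1\cdot\nabla$ projected onto the Dirac basis gives $\mu\,\sigma^{\Kv_\iiv}(\kvh_1)$, with an $\calO(\delta^{3/4})$ correction from the variation of the fiber. For the domain-wall term, the plan is to replace $\kappa(\delta\kvh_2\cdot(\xv+s\vvv_2))$ acting on $\calJ^*_\delta g$ by multiplication by $\kappa(\zeta)$ in the slow variable, and then project via Proposition \ref{prop:expression_Phi_l_W_Phi_j} to obtain $\vartheta^{\Kv_\iiv}\kappa(\zeta)\sigma_3$. Collecting terms gives $\delta\,\calJ^*_\delta\chi\,\mathds{1}_{\bbL'}(\calD^\delta(\mu)-z)\mathds{1}_{\bbL'}\chi\,\calJ_\delta$. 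The errors are: $\calO(\delta\cdot\delta^{3/4})$ from fiber variation, $\calO(\delta^{3/2})$ from the second-order Taylor remainder, and commutator errors $[\kappa(\zeta),\chi(\delta^{1/4}D_\zeta)]$, which on the scaled side are controlled by $\|\kappa'\|_\infty$ times the frequency width. All of these must be $\calO(\delta^{3/2})$ in total.

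\emph{Main obstacle.} Uniformity over the infinite index set is the delicate point. Since $|\delta^{-1}\gamma_\iiv|$ can be as large as $\delta^{-1/4}$, the factor $\delta^{-1}\gamma_\iiv\sigma(\kvh_1)$ is not uniformly bounded; this is the issue raised in Remark \ref{rmk:technical_exponent}. I would handle it by never separating $\gamma_\iiv$ from $\delta$: the product $\delta\cdot\delta^{-1}\gamma_\iiv = \gamma_\iiv = \calO(\delta^{3/4})$ is bounded, and the only unbounded-looking quantity is absorbed exactly into $\calD^\delta(\mu)$. Errors must then be bounded fiberwise, uniformly in $\iiv$. For part (b), distinct $\iiv \neq \jjv$ have disjoint $\lvar$-supports, so $\calH^0$ and the $\mu$ term (both fibered in $\lvar$) vanish identically between $\Pi_{\bbL'}$ and $\Pi_{\bbL''}$. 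Only the domain-wall term couples different $\lvar$, through $\kappa$. I would write $\kappa = \pm1 + (\kappa \mp$ tail$)$ and note that the constant part is fibered and drops out, while the remainder shifts $\lvar$ by the $s$-frequencies of $\kappa(\delta\cdot)$. These are concentrated at scale $\delta$, whereas the fiber supports are separated by at least $1/3 - 2\delta^{3/4}$. A smoothing or integration-by-parts argument in $s$ (nonstationary phase, Fourier decay of $\kappa'$) should then give $\delta\cdot\calO(\delta^{3/4})$. I expect making this last decay bound rigorous, given that $\kappa$ is only assumed smooth with a bounded derivative, to be the hardest step; failing that, I would try bounding the commutator with $\chi$ directly.
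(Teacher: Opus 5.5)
Your handling of the fibered pieces (unperturbed part, $\mu$-term, $-\delta z$, and the $\calO(\delta^2)$ remainder) and of the \emph{diagonal} domain-wall blocks is the paper's route. That is Proposition~\ref{prop:asymptotics_H0near}, together with $\kappa_\delta\,\calJ^*_{\delta,\iiv}g=\calJ^*_{\delta,\iiv}(\kappa g)$ and $M_{\iiv,\iiv}=\vartheta^{\Kv_\iiv}\sigma_3$.

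The gap is the domain-wall coupling between \emph{different} windows $\lvar_\iiv+[-\delta^{3/4},\delta^{3/4}]$, of which there are infinitely many. You need this coupling for (b), and also for (a). Inserting Proposition~\ref{prop:asymptotics_near_projectors} on both sides of $\Pi_{\bbL'}\,\nabla_\xv\cdot a\,\sigma_2\,\kappa_\delta\,\nabla_\xv\,\Pi_{\bbL'}$ produces every block $(\iiv,\jjv)\in\bbL'\times\bbL'$, but your part-(a) computation only evaluates $\iiv=\jjv$. You flag this as the hardest step, but none of the tools you name closes it:
\begin{itemize}
\item $\kappa$ minus a constant never decays at both ends (its limits are $\pm1$), so there is no ``constant plus tail'' splitting.
\item Decay of $\widehat{\kappa'}$ is not available, since only $\kappa'\in L^\infty$ is assumed. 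For example, $\kappa(\zeta)=\sqrt{8/\pi}\int_0^\zeta\cos(t^2)\,dt$ is an admissible domain wall with $\kappa'\notin L^1$.
\item Your fallback, the $D_\zeta^{-1}$/commutator trick behind \eqref{eq:dw-bound}, gives $\calO(\delta^{3/4})$ for each pair $(\iiv,\jjv)$ separately. The bound is uniform but has no decay in $|m_\iiv-m_\jjv|$, and a uniform per-pair bound does not control an infinite block operator.
\end{itemize}

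The paper gets the missing summability from the $\xv$-regularity, not from $\kappa$. Up to an $\calO(\delta^{3/4})$ error for the whole operator, the $(\iiv,\jjv)$ block is $\chi(\delta^{1/4}D_\zeta)\,M_{\iiv,\jjv}\,\euler^{\icplx\delta^{-1}(\lvar_\jjv-\lvar_\iiv)\zeta}\kappa(\zeta)\,\chi(\delta^{1/4}D_\zeta)$ (Lemma~\ref{lem:K_to_K_dw_coupling_in_between}). Here $M_{\iiv,\jjv}$ in \eqref{eq:def_M_I_J} is the Fourier coefficient at $\floK_\iiv-\floK_\jjv\in\Lambda^*$ of a smooth $\Lambda$-periodic function built from $\Phi^{\Kv_\iiv}$, $\Phi^{\Kv_\jjv}$ and $a$, so it decays rapidly. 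Since $\floK_\iiv\cdot\vvv_2=2\pi m_\iiv$, the map $\iiv\mapsto\floK_\iiv$ is injective for fixed $\Kv_\iiv$, so rows and columns are summable. The Schur test (Lemma~\ref{lem:dominated_boundedness}) combined with \eqref{eq:dw-bound} then bounds all off-diagonal blocks at once by $\calO(\delta^{3/4})$.

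Alternatively, your frequency-transfer heuristic can be made rigorous by splitting $\kappa(\delta\,\cdot)$ in $s$-\emph{frequency} instead of amplitude. Take $\psi\in\scrC^\infty(\R)$ vanishing near $0$ and equal to $1$ for $|\xi|\ge 1/8$.
\begin{itemize}
\item The piece $(1-\psi)(D_s)[\kappa(\delta\,\cdot)]$ shifts $s$-frequencies by at most $1/8$. Distinct windows are at distance $\ge 2\pi/3-2\delta^{3/4}>1/8$, so this piece cannot connect them.
\item The piece $\psi(D_s)[\kappa(\delta\,\cdot)]=(\psi(\xi)/\xi)(D_s)[-\icplx\delta\kappa'(\delta\,\cdot)]$ has sup-norm $\lesssim\delta\|\kappa'\|_\infty$, because the convolution kernel of $\psi(\xi)/\xi$ is integrable.
\end{itemize}
Now $\nabla_\xv$ and $a(\xv)$ commute with $s$-frequency cutoffs (Lemma~\ref{lem:directional_Fourier}), and $\kappa_\delta(\xv,\cdot)$ is an $s$-translate of $\kappa(\delta\,\cdot)$. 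Hence this bounds all cross-window couplings of the domain-wall operator simultaneously, in (a) and (b), even by $\calO(\delta)$.

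Finally, drop the commutators $[\kappa,\chi(\delta^{1/4}D_\zeta)]$ from your error budget. They are not small in $\scrL(L^2)$ for the sharp cutoff $\chi$, and none is needed, since the statement keeps $\chi(\delta^{1/4}D_\zeta)$ on both sides.
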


    \noindent
    Proposition \ref{prop:asymptotics_HdeltaNEAR} describes the asymptotic behavior of $\Pi_{\bbL'}\, (\vts \calH^\delta_{\AUG, \Kv \cdot \vvh_1} (\mu) - E_D - \delta z \vts)\, \Pi_{\bbL'}$. Its inverse is expanded in the next result.
    \begin{proposition}\label{prop:resolvent_expansion_HtildeDeltaNEAR}
    Fix $\mu_0, \eta > 0$ and a compact set $S \subset \C$. Suppose that
    \begin{multline}\label{eq:z-constraint}
        \spexists \delta_0, C > 0, \quad \spforall \delta \in (0, \delta_0), \quad \bbL' \subseteq \bbL(\delta^{3/4}), \quad \mu \in [-\mu_0, \mu_0], \quad z \in S,
        \retss
        \big\|\vts \mathds{1}_{\bbL'}\, \big(\vts \calD^{\vts \delta} (\mu) - z \vts\big)^{-1}\, \mathds{1}_{\bbL'}\; \calJ^{}_\delta\; \Pi_{\bbL'}\vts\big\|_{\scrL(\vts L^2_{\Kv \cdot \vvh_1} (\cylaug), \ell^2 (\bbL; H^1 (\R; \C^2))\vts)} \leq C/\eta.
    \end{multline}
    Then $\Pi_{\bbL'}\, (\vts \calH^\delta_{\AUG, \Kv \cdot \vvh_1} (\mu) - E_D - \delta z \vts)\, \Pi_{\bbL'}$ is invertible on $\scrX_{\bbL'}$, and its inverse admits the following expansion:
    \begin{multline}\label{eq:resolvent_on_near_components}
        \Pi_{\bbL'}\, \bigg(\vts \frac{\calH^\delta_{\AUG, \Kv \cdot \vvh_1} (\mu) - E_D}{\delta} - z \vts\bigg)^{-1}\! \Pi_{\bbL'} %
        \\
        = \Pi_{\bbL'}\, \calJ_\delta^*\, \mathds{1}_{\bbL'}\, \big(\vts \calD^{\vts \delta} (\mu) - z \vts\big)^{-1}\, \mathds{1}_{\bbL'}\, \calJ^{}_\delta\; \Pi_{\bbL'} + \frac{1}{\eta}\, \calO (\delta^{1/4}) \quad \textnormal{in}\ \ \scrL(\vts\scrX_{\bbL'}\vts).
    \end{multline}
    \end{proposition}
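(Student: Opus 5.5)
The strategy is to build an explicit approximate inverse on $\scrX_{\bbL'}$ from the effective Dirac resolvent, and then correct it by a Neumann series, with Proposition \ref{prop:asymptotics_HdeltaNEAR}\ref{prop:asymptotics_HdeltaNEAR_diagonal} as the main input. Write $\bfA' := \delta^{-1}\Pi_{\bbL'}(\calH^\delta_{\AUG,\Kv\cdot\vvh_1}(\mu)-E_D-\delta z)\Pi_{\bbL'}$. By that proposition,
\begin{equation*}
\bfA' = \calJ^*_\delta\,\chi(\delta^{1/4}D_\zeta)\,\mathds{1}_{\bbL'}\,(\calD^{\vts\delta}(\mu)-z)\,\mathds{1}_{\bbL'}\,\chi(\delta^{1/4}D_\zeta)\,\calJ_\delta+\calO(\delta^{1/2})
\end{equation*}
on $\scrX_{\bbL'}$. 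The candidate inverse is
\begin{equation*}
\bfR := \Pi_{\bbL'}\,\calJ^*_\delta\,\mathds{1}_{\bbL'}\,(\calD^{\vts\delta}(\mu)-z)^{-1}\,\mathds{1}_{\bbL'}\,\calJ_\delta\,\Pi_{\bbL'}.
\end{equation*}
The goal is to show $\bfA'\bfR=\Pi_{\bbL'}+\calO(\delta^{1/4}/\eta)$ and $\bfR\bfA'=\Pi_{\bbL'}+\calO(\delta^{1/4}/\eta)$. For $\delta$ small this makes $\bfA'$ invertible on $\scrX_{\bbL'}$ with $\bfA'^{-1}=\bfR+\calO(\delta^{1/4}/\eta)$, since $\|\bfR\|\lesssim 1/\eta$ by \eqref{eq:z-constraint} and the boundedness \eqref{eq:T_J_delta_are_bounded} of $\calJ^*_\delta$.

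The algebraic ingredients are the near-orthonormality relations for $\calJ_\delta$ that I expect to establish in Section \ref{sec:spectral_localization_tools}.
\begin{itemize}
\item[(i)] On the band-limited range, $\mathds{1}_{\bbL'}\chi(\delta^{1/4}D_\zeta)\calJ_\delta\calJ^*_\delta\chi(\delta^{1/4}D_\zeta)\mathds{1}_{\bbL'}$ equals $\mathds{1}_{\bbL'}\chi(\delta^{1/4}D_\zeta)$ up to a small error. The reason is that the Dirac modes $\varphi_\iiv$ are orthonormal in $L^2_{\Kv_\iiv}$, and distinct $\iiv$ have disjoint $\lvar$-supports by \eqref{eq:pairwise_disjoint_near_intervals}.
\item[(ii)] Proposition \ref{prop:asymptotics_near_projectors}, which gives $\Pi_{\bbL'}=\calJ^*_\delta\mathds{1}_{\bbL'}\chi(\delta^{1/4}D_\zeta)\mathds{1}_{\bbL'}\calJ_\delta+\calO(\delta^{3/4})$.
\end{itemize}
Substituting these into $\bfA'\bfR$ collapses the product $\calJ_\delta\Pi_{\bbL'}\calJ^*_\delta$ to $\mathds{1}_{\bbL'}\chi(\delta^{1/4}D_\zeta)$ up to $\calO(\delta^{3/4})$. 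The remaining difficulty is to commute $\chi(\delta^{1/4}D_\zeta)$ past $(\calD^{\vts\delta}(\mu)-z)$.

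The kinetic parts of $\calD^{\Kv_\iiv}(\hat\mu)$, namely $\sigma(\kvh_2)D_\zeta$ and $\hat\mu\,\sigma(\kvh_1)$, are Fourier multipliers and commute exactly with $\chi(\delta^{1/4}D_\zeta)$. Only the mass term $\vartheta\,\kappa(\zeta)\sigma_3$ produces a commutator $[\chi(\delta^{1/4}D_\zeta),\kappa]$. This commutator is not small in operator norm, since $\chi$ is a sharp cutoff. However, it is only applied to $(\calD^{\vts\delta}-z)^{-1}\mathds{1}_{\bbL'}\calJ_\delta\Pi_{\bbL'}F$, which \eqref{eq:z-constraint} controls in $H^1$ by $C/\eta$. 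On $H^1$ functions, $(1-\chi(\delta^{1/4}D_\zeta))$ costs a factor $\delta^{1/4}$. Since $\kappa$ is bounded, I plan to write $[\chi,\kappa]g=\chi\kappa(1-\chi)g-(1-\chi)\kappa\chi g$, and to bound $(1-\chi)\kappa\chi g$ by $\delta^{1/4}\|\kappa\chi g\|_{H^1}\lesssim\delta^{1/4}\|g\|_{H^1}$ using $\kappa'\in L^\infty$. This is exactly the origin of the $\delta^{1/4}/\eta$ error.

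The $\calO(\delta^{1/2})$ remainder of Proposition \ref{prop:asymptotics_HdeltaNEAR} multiplies $\bfR$ and gives $\calO(\delta^{1/2}/\eta)$, which is acceptable. The left-inverse identity $\bfR\bfA'$ is obtained by the same computation on the adjoint side, using $(\calD^{\vts\delta}-\bar z)^{-1}$, or alternatively by combining the right inverse with the injectivity that follows from the lower bound.

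The main obstacle is uniformity over the infinite family. For irrational $r$, $|\delta^{-1}\gamma_\iiv|$ can be as large as $\delta^{-1/4}$ for $\iiv\in\bbL(\delta^{3/4})$, so the $H^1\to L^2$ norms of $\calD^{\Kv_\iiv}$ are unbounded (Remark \ref{rmk:technical_exponent}). For this reason every step must use the $H^1$ resolvent bound \eqref{eq:z-constraint}, and never a bound on $\calD^{\vts\delta}$ itself. In particular, $\hat\mu\,\sigma(\kvh_1)$ must be handled only through its exact commutation with Fourier multipliers.
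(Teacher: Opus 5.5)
Your architecture is the paper's. You use the same approximate inverse $\calJ^*_\delta\mathds{1}_{\bbL'}(\calD^{\vts\delta}(\mu)-z)^{-1}\mathds{1}_{\bbL'}\calJ_\delta$ sandwiched by $\Pi_{\bbL'}$, Proposition \ref{prop:asymptotics_HdeltaNEAR}$(a)$ with its $\calO(\delta^{1/2})$ remainder, and Proposition \ref{prop:asymptotics_near_projectors} to recognize $\Pi_{\bbL'}$. As in the paper, the commutator $[\kappa,\chi(\delta^{1/4}D_\zeta)]$ on $H^1$ produces the $\delta^{1/4}$, and a Neumann series finishes; your splitting of that commutator, which uses $\kappa'\in L^\infty$, is more careful than the paper's.

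The step that fails as written is the collapse of $\calJ_\delta\Pi_{\bbL'}\calJ^*_\delta$ to $\mathds{1}_{\bbL'}\chi(\delta^{1/4}D_\zeta)$ up to $\calO(\delta^{3/4})$. After inserting Proposition \ref{prop:asymptotics_near_projectors} for the middle $\Pi_{\bbL'}$, only the left copy of $\calJ_\delta\calJ^*_\delta$ is band-limited on both sides; there your (i) even holds exactly, by the disjoint $\lvar$-supports. The right copy appears as $\mathds{1}_{\bbL'}\chi(\delta^{1/4}D_\zeta)\calJ_\delta\calJ^*_\delta\mathds{1}_{\bbL'}$. Its off-diagonal blocks $\chi(\delta^{1/4}D_\zeta)\,M\,\euler^{\icplx\delta^{-1}(\lvar_\jjv-\lvar_\iiv)\zeta}$ have $L^2$ operator norm of order one. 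They are small only from $H^1$ to $L^2$, of order $\delta$, because $|\lvar_\iiv-\lvar_\jjv|\geq 2\pi/3$ shifts frequencies by $\gtrsim\delta^{-1}$; the blocks are then summed with Lemma \ref{lem:dominated_boundedness}. This is the content of Lemma \ref{lem:technical_res}$(a)$, and your (i) does not supply it.

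Moreover, this error and the $\calO(\delta^{3/4})$ error of Proposition \ref{prop:asymptotics_near_projectors} sit to the right of $\chi(\delta^{1/4}D_\zeta)\mathds{1}_{\bbL'}(\calD^{\vts\delta}(\mu)-z)\mathds{1}_{\bbL'}\chi(\delta^{1/4}D_\zeta)$, with no resolvent to absorb $\calD^{\vts\delta}$. You must pay its $L^2$ norm $\calO(\delta^{-1/4})$, coming from $|\lvar|\le\delta^{-1/4}$ and $|\delta^{-1}\gamma_\iiv|\le\delta^{-1/4}$, exactly as the paper does when bounding $\calE^\delta_1$. So your rule of never bounding $\calD^{\vts\delta}$ itself cannot be kept. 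For the same reason, the crude bound $\|(1-\chi(\delta^{1/4}D_\zeta))g\|_{L^2}\le\delta^{1/4}\|g\|_{H^1}$ would not suffice here: its gain is cancelled by that $\delta^{-1/4}$, so the full frequency-separation gain of $\delta$ is essential. With it, these terms are $\calO(\delta^{3/4}/\eta)$ and $\calO(\delta^{1/2}/\eta)$, and your argument closes.

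For invertibility, your adjoint computation would need \eqref{eq:z-constraint} at $\bar z$, which is not assumed. You also do not say where the alternative lower bound would come from. The paper only builds a right inverse, and that is enough. On $\scrX_{\bbL'}$, the operator $\delta^{-1}\Pi_{\bbL'}(\calH^\delta_{\AUG,\Kv\cdot\vvh_1}(\mu)-E_D)\Pi_{\bbL'}-z$ is a bounded self-adjoint operator minus a scalar (boundedness by Proposition \ref{prop:smoothing_near_projectors}), hence normal. A normal operator $N$ with a bounded right inverse is surjective, and then injective because $\ker N=\ker N^*=(\Ran N)^\perp=\{0\}$.
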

    \begin{remark}\label{rmk:dist_to_spectrum_not_equivalent_to_bound_on_D}
        \begin{enumerate}[label={$(\alph*).$}, ref={\theremark.$(\alph*)$}, wide=0pt]
            \item\label{rmk:dist_to_spectrum_implies_bound_on_D} The condition $z \in \C$ with $\dist (z, \spec \calD^{\vts \delta} (\mu)) > \eta$ implies \eqref{eq:z-constraint}. In fact, if $\dist (z, \spec \calD^{\vts \delta} (\mu)) > \eta$, then the resolvent $(\calD^\delta (\mu) - z)^{-1}$ is bounded in $\ell^2 (\bbL; L^2 (\R; \C^2))$ with a norm less than $1/\eta$. Moreover, the boundedness of this inverse from $\ell^2 (\bbL; L^2 (\R; \C^2))$ to $\ell^2 (\bbL; H^1 (\R; \C^2))$ follows from classical graph estimates. Combining this with the boundedness \eqref{eq:T_J_delta_are_bounded} of $\calJ_\delta$, we deduce \eqref{eq:z-constraint} with $C := \|\calJ_\delta\| = \|\calT\|$ which is independent of $\delta$.
            
            \item The condition $\dist (z, \spec \calD^{\vts \delta} (\mu)) > \eta$ implies \eqref{eq:z-constraint} (as explained in the first part of this remark) but the converse is \underline{not} true. In fact, in subsequent work \cite{amenoagbadji2026dense}, under a Diophantine condition on the edge parameter $r$, we shall prove that \eqref{eq:z-constraint} holds even when $z$ is an eigenvalue of the effective Dirac operator $\calD^{\Kv_\iiv} (\mu + \delta^{-1}\, \gamma_\iiv)$ such that $\iiv \not\in \bbL'$. This result will be used in the construction of augmented edge states.
        \end{enumerate}
    \end{remark}

    \vspace{1\baselineskip} \noindent
    Finally, we use the next result to reformulate the dominant term in \eqref{eq:resolvent_on_near_components} for $\bbL' = \bbL (\delta^{3/4})$. 
    \begin{proposition}\label{prop:dominant_term_without_near_projectors}
    Consider $\eta, \mu_0 > 0$, a compact set $S \subset \C$, and assume that there exists $\delta_0 > 0$ such that
    \begin{equation*}
        \delta \in (0, \delta_0), \quad \mu \in [-\mu_0, \mu_0], \quad \kpar = \Kv \cdot \vvh_1 + \delta \mu, \quad z \in S, \quad \dist \big(z,\; \spec \calD^{\vts \delta} (\mu)\big) > \eta,
    \end{equation*}
    as in Theorem \ref{thm:resolvent_expansion}. Then, as $\delta \to 0$,
    \begin{multline*}
        \Pi_\NEAR\, \calJ^*_\delta\; \mathds{1}_{\bbL (\delta^{3/4})}\; \big(\vts \calD^{\vts \delta} (\mu) - z \vts\big)^{-1}\; \mathds{1}_{\bbL (\delta^{3/4})}\; \calJ^{}_\delta\, \Pi_\NEAR%
        \\
        = \calJ^*_\delta\; \mathds{1}_{\bbL (\delta^{3/4})}\; \big(\vts \calD^{\vts \delta} (\mu) - z \vts\big)^{-1}\; \mathds{1}_{\bbL (\delta^{3/4})}\; \calJ^{}_\delta + \frac{1}{\eta}\, \calO (\delta^{1/4}) \quad \textnormal{in}\ \ \scrL(L^2_{\Kv \cdot \vvh_1} (\cylaug)).
    \end{multline*}
    In other words, the near projectors can be removed from the left-hand side, at the cost of introducing an $\calO(\delta^{1/4})$--error. 
    \end{proposition}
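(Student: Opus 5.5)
We write $\bm{R} := \mathds{1}_{\bbL (\delta^{3/4})}\, (\calD^{\vts \delta} (\mu) - z)^{-1}\, \mathds{1}_{\bbL (\delta^{3/4})}$. Since $\Pi_\FAR = \Id - \Pi_\NEAR$, the difference between the two sides equals
\begin{equation*}
    -\Pi_\FAR\, \calJ^*_\delta\, \bm{R}\, \calJ^{}_\delta\, \Pi_\NEAR \;-\; \calJ^*_\delta\, \bm{R}\, \calJ^{}_\delta\, \Pi_\FAR .
\end{equation*}
The second term is the adjoint of the first with $\bar z$ in place of $z$; the hypothesis is symmetric under $z\mapsto \bar z$ because $\calD^{\vts\delta}(\mu)$ is self-adjoint. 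So it suffices to prove $\|\Pi_\FAR\, \calJ^*_\delta\, \bm{R}\, \calJ^{}_\delta\| = \eta^{-1}\, \calO(\delta^{1/4})$. The point is to show that $\calJ^*_\delta\, \mathds{1}_{\bbL(\delta^{3/4})}$, applied to functions of moderate $\zeta$-frequency, lands almost entirely in $\scrX_\NEAR$.

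First, we split $\bm{R} = \chi(\delta^{1/4} D_\zeta)\, \bm{R} + (1-\chi(\delta^{1/4} D_\zeta))\, \bm{R}$; both Fourier multipliers commute with the block-diagonal structure. For the high-frequency part we use Remark \ref{rmk:dist_to_spectrum_implies_bound_on_D}(a): $\bm{R}$ is bounded into $\ell^2(\bbL; H^1)$. The relevant graph norm involves $\calD^{\Kv_\iiv}$, whose derivative coefficient $\sigma^{\Kv_\iiv}(\kvh_2)$ is invertible with $\delta$-independent norm. From $D_\zeta g = \sigma^{\Kv_\iiv}(\kvh_2)^{-1}\bigl[(\calD - z)g - (\mu+\delta^{-1}\gamma_\iiv)\sigma(\kvh_1) g - \vartheta\kappa\sigma_3 g\bigr]$ we get $\|D_\zeta \bm{R}\| \lesssim 1 + (1+\delta^{-1/4})/\eta$. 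Here we used $|\delta^{-1}\gamma_\iiv| \le \delta^{-1/4}$ on $\bbL(\delta^{3/4})$; this is exactly the unboundedness described in Remark \ref{rmk:technical_exponent}. Combined with $\|(1-\chi(\delta^{1/4}\cdot))\langle\cdot\rangle^{-1}\|_\infty \le \delta^{1/4}$, this gives only $\calO(1/\eta)$, which is too weak. I would therefore rather use a slightly wider cutoff $\chi(\delta^{1/2}D_\zeta)$, or the sharper commutator estimate $\|D_\zeta \bm{R}\| \lesssim \delta^{-1/4}/\eta$ together with a cutoff at frequency $\delta^{-1/2}$, so the high-frequency tail is $\eta^{-1}\calO(\delta^{1/4})$. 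The projection $\Pi_\FAR$ and $\calJ^*_\delta$ are bounded uniformly by \eqref{eq:T_J_delta_are_bounded}.

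For the low-frequency part we show directly that $\Pi_\FAR\,\calJ^*_\delta\, \mathds{1}_{\bbL(\delta^{3/4})}\, \chi(\delta^{\nu'}D_\zeta) = \calO(\delta^{1/4})$. We compute $\scrF_\Kv$ of $\calJ^*_{\delta,\iiv}\, g$. By \eqref{eq:mode_for_averaging_operator}, \eqref{eq:averaging_operator_adjoint} and the Wrapping Lemma \ref{lem:wrapping_lemma}(d), it is supported in $\lvar$ at $\lvar_\iiv + \delta\,\mathrm{supp}\,\widehat g$, and in $\xv$ it equals $\euler^{\icplx(\gamma_\iiv\kvh_1 + (\lvar-\lvar_\iiv)\kvh_2)\cdot \xv}\,\Phi^{\Kv_\iiv}(\xv)$ times $\widehat g$, suitably scaled. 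If $\widehat g$ is supported in $|\cdot|\le \delta^{-\nu'}$ with $1-\nu' < 3/4$, this lies inside the $\chi(\delta^{-3/4}(\lvar-\lvar_\iiv))$ window. Hence $\Pi_\FAR$ acts there only through $(\Id - \Pi^0_{\Kv+\lvar\kvh_2})$. The smooth Kato expansion of the conjugated projector $\euler^{-\icplx\kv\cdot\xv}\Pi^0_\kv\euler^{\icplx\kv\cdot\xv}$ about $\Kv_\iiv$, which the paper invokes before Proposition \ref{prop:asymptotics_near_projectors}, gives $\|(\Id-\Pi^0_\kv)\,\euler^{\icplx(\kv-\Kv_\iiv)\cdot\xv}\Phi^{\Kv_\iiv}\| \lesssim |\kv - \Kv_\iiv| \lesssim \delta^{3/4}$. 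Orthogonality across distinct $\iiv$ follows from the disjoint $\lvar$-supports \eqref{eq:pairwise_disjoint_near_intervals}, so the $\ell^2$ sum is controlled.

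The main obstacle is balancing exponents. The low-frequency piece is only $\calO(\delta^{3/4})$ if $\nu' \ge 1/4$. The high-frequency tail needs cutoff frequency at least $\delta^{-1/2}$ to beat the $\delta^{-1/4}$ growth of $D_\zeta\bm{R}$, but that cutoff must stay inside the near window, which requires $\delta\cdot\delta^{-1/2} \le \delta^{3/4}$. This fails. So I would first try cutting at frequency $\delta^{-1/4}$ and estimating the tail through the first-order quantity $\|(\Id-\Pi^0)\euler^{\icplx(\kv-\Kv_\iiv)\cdot\xv}\Phi\|\lesssim \delta|\widehat\zeta|+\delta^{3/4}$ applied to $\bm R g$, weighted by $|\widehat\zeta|$. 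Using $\|D_\zeta\bm{R}\|\lesssim \delta^{-1/4}/\eta$, this gives $\delta\cdot\delta^{-1/4}/\eta = \delta^{3/4}/\eta$ inside the window. Only the part of $\bm{R}$ outside $|\widehat\zeta|\le\delta^{-1/4}$ then remains, and it is bounded by $\delta^{1/4}\|D_\zeta\bm R\|\lesssim 1/\eta$. This last bound is the delicate point: I expect to need the finer structure that $\mathds{1}_{\bbL(\delta^{3/4})}\calJ_\delta\Pi_\FAR$ itself is small on high frequencies to close it at $\eta^{-1}\calO(\delta^{1/4})$.
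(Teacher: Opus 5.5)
Your reduction is sound. With $\Id = \Pi_\NEAR + \Pi_\FAR$, both error terms are controlled by $\|\Pi_\FAR\,\calJ^*_\delta\,\bm{R}\,\calJ_\delta\|$; the adjoint of the second term is $\Pi_\FAR\,\calJ^*_\delta\,\bm{R}(\bar z)\,\calJ_\delta$, which is exactly this quantity.

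Your low-frequency argument is also correct once the cutoff is exactly $\chi(\delta^{1/4}D_\zeta)$. Note that $\lvar_\iiv + \delta\operatorname{supp}\widehat g$ lies in the near window when $1-\nu' \geq 3/4$, not $1-\nu'<3/4$. On that window:
\begin{itemize}
\item $\Pi_\FAR$ acts fiberwise as $\Id - \Pi^0_{\Kv+\lvar\kvh_2}$;
\item smoothness of the conjugated projector gives $\calO(\delta^{3/4})$ for each $\iiv \in \bbL(\delta^{3/4})$;
\item the disjoint $\lvar$-supports let you sum over $\iiv$.
\end{itemize}

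However, the proposal does not reach the conclusion. You are left with the tail $\Pi_\FAR\,\calJ^*_\delta\,(1-\chi(\delta^{1/4}D_\zeta))\,\bm{R}\,\calJ_\delta$, bounded only by $\delta^{1/4}\|D_\zeta\bm{R}\|\lesssim 1/\eta$. The wider cutoff fails, as you observe yourself, and the ``finer structure'' you expect to need is never supplied. As written, this step is a genuine gap.

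The missing ingredient is that $\bm{R}$ maps $\ell^2(\bbL;L^2(\R;\C^2))$ into $\ell^2(\bbL;H^1(\R;\C^2))$ with norm $\lesssim 1+\eta^{-1}$ \emph{uniformly in} $\delta$. Your $\delta^{-1/4}/\eta$ came from bounding $|\muhat|\,\|g\|$ by $|\muhat|\,\|f\|/\eta$, where $\muhat=\mu+\delta^{-1}\gamma_\iiv$ and $g=(\calD^{\Kv_\iiv}(\muhat)-z)^{-1}f$. Remark \ref{rmk:technical_exponent} concerns the forward norm of $\calD^{\Kv_\iiv}(\muhat)$, not its resolvent.

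By Proposition \ref{prop:pties_Dirac_operator}, part $(b)$, the spectrum of $\calD^{\Kv_\star}(\muhat)$ lies in $\{|\lambda|\geq \upsilon_D|\muhat|/|\kvh_2|\}$. Hence $|\muhat|\,\|(\calD^{\Kv_\star}(\muhat)-z)^{-1}\|\lesssim 1+\eta^{-1}$ uniformly in $\muhat$ and $z\in S$.

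Equivalently, $\sigma^{\Kv_\star}(\kvh_2)D_\zeta+\muhat\,\sigma^{\Kv_\star}(\kvh_1)$ has symbol $\sigma^{\Kv_\star}(\xi\kvh_2+\muhat\kvh_1)$. Its size is $\upsilon_D|\xi\kvh_2+\muhat\kvh_1|\gtrsim |\xi|+|\muhat|$, because $\kvh_1,\kvh_2$ are linearly independent. So $\|D_\zeta g\|\lesssim \|\calD^{\Kv_\star}(\muhat)g\|+\|g\|$ uniformly in $\muhat$. This is the graph estimate of Remark \ref{rmk:dist_to_spectrum_implies_bound_on_D}, on which the paper's proof relies.

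With it, $\|(1-\chi(\delta^{1/4}D_\zeta))\bm{R}\|\leq \delta^{1/4}\|\bm{R}\|_{\ell^2(L^2)\to\ell^2(H^1)}\lesssim \delta^{1/4}(1+\eta^{-1})$, and your argument closes.

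The paper takes a different route to the same smallness. It derives $\Pi_\NEAR\calJ^*_\delta=\calJ^*_\delta\mathds{1}_{\bbL(\delta^{3/4})}+\calO(\delta^{1/4})$ in $\scrL(\ell^2(\bbL;H^1),L^2_{\Kv\cdot\vvh_1}(\cylaug))$ (Lemma \ref{lem:technical_res}, part $(b)$). That lemma rests on Proposition \ref{prop:asymptotics_near_projectors} and on $\calJ_\delta\calJ^*_\delta=\Id+\calO(\delta)$ in $\scrL(\ell^2(H^1),\ell^2(H^{-1}))$. The paper then handles the right factor by the dual $H^{-1}$ bound.

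Your fiberwise argument together with the adjoint reduction is more direct, since it avoids the off-diagonal $\calJ_\delta\calJ^*_\delta$ analysis. The uniform $H^1$ bound on the resolvent is needed in either route.
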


    \vspace{1\baselineskip} \noindent
    Proposition \ref{prop:asymptotics_HdeltaNEAR} is the analog of \cite[Proposition 6.4]{drouot2020edge}, while Propositions \ref{prop:resolvent_expansion_HtildeDeltaNEAR} and \ref{prop:dominant_term_without_near_projectors} together form an analog of \cite[Proposition 6.5]{drouot2020edge}. Their proofs are presented in Section \ref{sec:asymptotic_expansions}. We now show that Theorem \ref{thm:resolvent_expansion} follows from these results.

    \vspace{0\baselineskip} \noindent
    \begin{dem}[of Theorem \ref{thm:resolvent_expansion} using Propositions \ref{prop:asymptotics_HdeltaNEAR}, \ref{prop:resolvent_expansion_HtildeDeltaNEAR}, and \ref{prop:dominant_term_without_near_projectors}]
    The assumption on $z$ in Theorem \ref{thm:resolvent_expansion} implies \eqref{eq:z-constraint}; see Remark \ref{rmk:dist_to_spectrum_implies_bound_on_D}. Thus, by applying Proposition \ref{prop:resolvent_expansion_HtildeDeltaNEAR} with $\bbL' = \bbL(\delta^{3/4})$, we obtain that $\Pi_{\NEAR}\, (\vts \calH^\delta_{\AUG, \Kv \cdot \vvh_1} (\mu) - E_D - \delta z \vts)\, \Pi_{\NEAR}$ is invertible on $\scrX_\NEAR$. Using a Neumann series argument, we deduce from \eqref{eq:link_schur_complement_H_near}, for $\delta$ sufficiently small,  that $\bfS$ is invertible on $\scrX_\NEAR$ and the following holds in $\scrL(\scrX_\NEAR)$:
    \begin{align}
        (\bfS/\delta)^{-1} &= \bigg[\vts \Pi_\NEAR\, \bigg(\vts \frac{\calH^\delta_{\AUG, \Kv \cdot \vvh_1} (\mu) - E_D}{\delta} - z \vts\bigg)\, \Pi_\NEAR + \calO_{\scrL(\scrX_\NEAR)} (\delta^{1/4}) \vts\bigg]^{-1} \nonumber
        \\
        &= \Pi_\NEAR\, \bigg(\vts \frac{\calH^\delta_{\AUG, \Kv \cdot \vvh_1} (\mu) - E_D}{\delta} - z \vts\bigg)^{-1}\, \Pi_\NEAR + \calO (\delta^{1/4}) \nonumber 
        \\
        &= \Pi_\NEAR\, \calJ^*_\delta\, \mathds{1}_{\bbL (\delta^{3/4})}\, \big(\vts \calD^{\vts \delta} (\mu) - z \vts\big)^{-1}\, \mathds{1}_{\bbL (\delta^{3/4})}\, \calJ^{}_\delta\, \Pi_\NEAR + \frac{1}{\eta}\, \calO (\delta^{1/4}). \label{eq:inverse_schur_complement_inverse_H_near}
    \end{align}
    The last equality follows from the expansion \eqref{eq:resolvent_on_near_components}.
    In particular, $\bfS^{-1} = \calO (\delta^{-1})$ in $\scrL(\scrX_\NEAR)$. Moreover, since by \eqref{eq:C1_B1_O_delta} and Proposition \ref{prop:Hfar_is_invertible},
    \begin{equation*}
        \bfC = \calO_{\scrL(\scrX_\NEAR, \scrX_{\FAR})} (\delta), \quad \bfB = \calO_{\scrL(\mathscr{H}^2_{\FAR}, \scrX_\NEAR)} (\delta), \textAND \bfD_1^{-1} = \calO_{\scrL(\scrX_{\FAR}, \mathscr{H}^2_\FAR)}(\delta^{-3/4}),
    \end{equation*}
    Lemma \ref{lem:schur_complement} implies the following in $\scrL(L^2_{\Kv \cdot \vvh_1} (\cylaug))$:
    \begin{equation}
        (\calM/\delta)^{-1} = \bigg( \frac{\calH^\delta_{\AUG, \Kv \cdot \vvh_1} (\mu) - E_D}{\delta} - z \bigg)^{-1}  = \begin{pmatrix}  (\bfS/\delta)^{-1} & 0 \\ 0 & 0 \end{pmatrix} + \calO (\delta^{1/4}).\label{eq:inverse_M_inverse_schur_complement}
    \end{equation}
    Substituting \eqref{eq:inverse_schur_complement_inverse_H_near} in \eqref{eq:inverse_M_inverse_schur_complement}, and using Proposition \ref{prop:dominant_term_without_near_projectors} to remove the projectors $\Pi_\NEAR$, we obtain in $\scrL(L^2_{\Kv \cdot \vvh_1} (\cylaug))$ the following:
    \begin{equation*}
        \bigg( \frac{\calH^\delta_{\AUG, \Kv \cdot \vvh_1} (\mu) - E_D}{\delta} - z \bigg)^{-1}\!\! = \calJ^*_\delta\, \mathds{1}_{\bbL (\delta^{3/4})}\, \big(\vts \calD^{\vts \delta} (\mu) - z \vts\big)^{-1}\, \mathds{1}_{\bbL (\delta^{3/4})}\, \calJ^{}_\delta + \frac{1}{\eta}\, \calO (\delta^{1/4}).
    \end{equation*}
    By combining this expansion with the definition \eqref{eq:conjugated_augmented_operator_alt} of $\calH^\delta_{\AUG, \Kv \cdot \vvh_1} (\mu)$ in terms of $\calH^\delta_{\AUG, \Kv \cdot \vvh_1 + \delta \mu}$, we deduce the resolvent expansion for general edges, Theorem \ref{thm:resolvent_expansion}.
    \end{dem}

    \section{Asymptotic expansions and estimates near the Dirac point}\label{sec:asymptotic_expansions}
    \noindent
    In this final but technical section, we carefully derive the asymptotic expansions and estimates stated in Propositions \ref{prop:asymptotics_near_projectors}, \ref{prop:asymptotics_HdeltaNEAR}, \ref{prop:resolvent_expansion_HtildeDeltaNEAR}, and \ref{prop:dominant_term_without_near_projectors}. We follow the strategy of \cite[Section 6]{drouot2020edge}. To simplify the presentation, we set
    \begin{equation*}
    \mu = 0, \quad \textnormal{so that} \quad \calH^\delta_{\AUG, \Kv \cdot \vvh_1} (\mu) = \calH^\delta_{\AUG, \Kv \cdot \vvh_1}.
    \end{equation*}
    The forthcoming calculations extend to any $\mu$ varying in a compact subset of $\R$. 

    In Section \ref{sec:spectral_localization_tools}, we present some properties of the operators $\calT_\iiv$ and $\calJ_{\delta, \iiv}$. Section \ref{sec:unperturbed_operator_expansion} contains the proof of Proposition \ref{prop:asymptotics_near_projectors} on the asymptotic behavior of near projectors, as well as part of the proof of Proposition \ref{prop:asymptotics_HdeltaNEAR} on the asymptotic behavior of the operator $\calH^\delta_{\AUG, \Kv \cdot \vvh_1}$ projected on near components. The proof of Proposition \ref{prop:asymptotics_HdeltaNEAR} is concluded in Section \ref{sec:proof_asymptotics_HtildeDeltaNEAR}. Finally, in Section \ref{sec:proof_resolvent_expansion_HtildeDeltaNEAR}, we prove the resolvent estimates in Propositions \ref{prop:resolvent_expansion_HtildeDeltaNEAR} and \ref{prop:dominant_term_without_near_projectors}, therefore completing the proof of Theorem \ref{thm:resolvent_expansion}.

    \subsection{Spectral localization tools: properties of \texorpdfstring{$\calT_\iiv$}{Ti} and \texorpdfstring{$\calJ_{\delta, \iiv}$}{Jiδ}}\label{sec:spectral_localization_tools}
    We refer to Section \ref{sec:main_results} and Equations \eqref{eq:mode_for_averaging_operator}--\eqref{eq:def_calJ} for the definitions of $\calT_\iiv$, $\calJ_{\delta, \iiv}$, $\calT$, $\calJ_\delta$, and their adjoints. In this section, we summarize some useful properties of these operators, via the following proposition. Recall that the Fourier transform $\calF: g(\zeta) \in L^2(\R) \mapsto \calF g (\lvar) \in L^2(\R)$ is given by \eqref{eq:def_unitary_Fourier_transform}, and the partial Fourier transform $\scrF_\Kv: F (\xv, s) \in L^2_{\Kv \cdot \vvh_1} (\cylaug) \mapsto \widetilde{F} (\xv, \lvar) \in L^2 (\R_\lvar; L^2_{\Kv + \lvar \kvh_2})$ is defined by \eqref{eq:directional_Fourier}. 

    \begin{proposition}[Properties of \texorpdfstring{$\calT_\iiv$}{T_i} and \texorpdfstring{$\calJ_{\delta, \iiv}$}{J_iδ}]\, \label{prop:pties_averaging_operator}
    Let $\iiv = (\Kv_\iiv, m_\iiv) \in \bbL = \{\Kv,\Kv'\} \times \Z$ and further let $\lvar_\iiv$ and $\modK_\iiv (\lvar) := \gamma_\iiv \kvh_1 + (\lvar - \lvar_\iiv)\, \kvh_2$ be as in the Wrapping Lemma \ref{lem:wrapping_lemma}.  
    \begin{enumerate}[label={$(\alph*).$}, ref={\theproposition.$(\alph*)$}, wide=0pt]
        \item Expression for $\calF\, \calT_\iiv$:  Let   $\lvar-\lvar_\iiv\in[-\pi,\pi)$. Then, for all $ F \in L^2_{\Kv \cdot \vvh_1} (\cylaug),$
        \begin{equation}
        (\calF\, \calT_\iiv)\, F\, (\lvar - \lvar_\iiv) = \left\langle \euler^{\icplx\, \modK_\iiv (\lvar) \cdot \xv}\; \Phi^{\Kv_\iiv}(\xv),\; \scrF_\Kv{F}(\xv,\lvar)  \right\rangle_{L_\xv^2(\Omega)} 
        \label{eq:averaging_operator_alt}
        \end{equation}
        \item Expression for $\calT_\iiv^*\calF^*$: For  $\lvar\in\R$ and   all $ g \in L^2(\R; \C^2)$:
        \begin{equation}
        \scrF_\Kv\, (\calT^*_\iiv\, \calF^*)\, g(\xv, \lvar) = \euler^{\vts \icplx\vts \modK_\iiv (\lvar) \cdot \xv}\; \Phi^{\Kv_\iiv} (\xv)^\transp\, g(\lvar - \lvar_\iiv), \quad \xv\in \R^2\ .\label{eq:averaging_operator_alt_adjoint}
        \end{equation}
        \item\label{prop:pties_averaging_operator:orthogonality_relations}(Orthogonality relations) $\calT^*_\iiv$ and $\calJ^*_{\delta, \iiv}$ are isometries:
        \begin{equation}\label{eq:J_left_inverse_is_J_star}
        \calT^{}_{\iiv}\, \calT^*_{\iiv} = \calJ^{}_{\delta, \iiv}\, \calJ^*_{\delta, \iiv} = \Id_{L^2(\R; \C^2)}\ .
        \end{equation}

        \item\label{prop:pties_averaging_operator:square_summability}(Boundedness of $\calT$ and $\calJ_\delta$) For any $q \in \N_0$, the column operators $\calT$ and $\calJ_\delta$ are bounded from the space $H^q_{\Kv \cdot \vvh_1, \xv} (\cylaug)$ defined by \eqref{eq:def_Hqaug_edge} to the space $\ell^2_q (\bbL; L^2(\R; \C^2))$ defined by \eqref{eq:def_ell2_L_X}, uniformly in $\delta$: in fact, there exists a constant $C_q > 0$ depending on the $\scrC^q$--norm of $(\Phi^{\Kv}, \Phi^{\Kv'})^\transp$ such that
        \begin{equation}\label{eq:square_summability_estimate}
        \spforall F \in H^q_{\Kv \cdot \vvh_1, \xv} (\cylaug), \quad \sum_{\iiv \in \bbL} (1 + m^2_\iiv)^q\; \|\vts \calT_\iiv\, F\vts \|^2_{L^2(\R)} \leq C_q\, \|F\|^2_{H^q_{\Kv \cdot \vvh_1, \xv} (\cylaug)}.
        \end{equation}
    \end{enumerate}
    \end{proposition}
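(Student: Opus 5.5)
Parts (a)--(c) are direct computations with the partial Fourier transform $\scrF_\Kv$ of Lemma \ref{lem:directional_Fourier}. Part (d) is a Bessel/Parseval-type inequality over $\iiv \in \bbL$, and I expect its weighted version ($q \geq 1$) to be the main difficulty.

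For (a), I would take $F$ with $\euler^{-\icplx \Kv\cdot(\xv+s\vvv_2)}F \in \scrC^\infty_0(\cylaug)$ and conclude by density. Write $F$ through the inversion formula \eqref{eq:directional_Fourier-adjoint}, and use $\kvh_2 \cdot \vvv_2 = 1$, so that $\euler^{\icplx(\Kv+\lvar'\kvh_2)\cdot s\vvv_2} = \euler^{\icplx(\Kv\cdot\vvv_2+\lvar')s}$. Substitute $s = \zeta - \kvh_2\cdot\xv$ in \eqref{eq:averaging_operator}. The $s$-dependent phase of $\overline{\varphi_\iiv}$ cancels the $\Kv\cdot\vvv_2$ part and leaves $\euler^{\icplx(\lvar'-\lvar_\iiv)(\zeta-\kvh_2\cdot\xv)}$. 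Then apply $\calF$ in $\zeta$ and evaluate at $\lvar-\lvar_\iiv$. Fourier inversion collapses the $\lvar'$-integral to $\lvar'=\lvar$. The remaining $\xv$-phase is $\euler^{-\icplx(\gamma_\iiv\kvh_1 + (\lvar-\lvar_\iiv)\kvh_2)\cdot\xv} = \euler^{-\icplx \modK_\iiv(\lvar)\cdot\xv}$, which gives \eqref{eq:averaging_operator_alt}.

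For the integral over $\Omega$ to make sense, the integrand must be $\Lambda$-periodic. By Part \ref{lem:veps_neighborhood_K_d} of the Wrapping Lemma \ref{lem:wrapping_lemma}, $\Kv_\iiv + \modK_\iiv(\lvar) = \Kv+\lvar\kvh_2 - \floK_\iiv$ with $\floK_\iiv\in\Lambda^*$. Hence $\euler^{\icplx\modK_\iiv(\lvar)\cdot\xv}\Phi^{\Kv_\iiv}$ lies in $L^2_{\Kv+\lvar\kvh_2}$, the same space as $\scrF_\Kv F(\cdot,\lvar)$. The same observation shows that the formula in fact holds for every $\lvar\in\R$, not only on the window $\lvar - \lvar_\iiv \in [-\pi,\pi)$.

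Part (b) follows either by taking adjoints in (a), using the Plancherel identity \eqref{eq:directional_Plancherel} and the unitarity of $\calF$, or by the same direct computation applied to \eqref{eq:averaging_operator_adjoint}. For (c), I would compose (a) and (b): $\calF\calT_\iiv\calT_\iiv^*\calF^* g(\lvar-\lvar_\iiv) = \langle \euler^{\icplx\modK_\iiv\cdot\xv}\Phi^{\Kv_\iiv}, \euler^{\icplx\modK_\iiv\cdot\xv}\Phi^{\Kv_\iiv\,\transp}\rangle\, g = g$. The phases cancel because their modulus is one, and the Dirac eigenbasis is orthonormal. Finally, $\calU_\delta$ is unitary, so the identity passes to $\calJ_{\delta,\iiv}$.

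For (d), fix $\lvar$ and $\Kv_\star$, and set $w := \euler^{-\icplx(\Kv+\lvar\kvh_2)\cdot\xv}\scrF_\Kv F(\cdot,\lvar)$ and $\psi_j := \euler^{-\icplx\Kv_\star\cdot\xv}\Phi^{\Kv_\star}_j$; both are $\Lambda$-periodic. By (a) and the identity above, the inner product equals $\langle \euler^{-\icplx\floK_\iiv\cdot\xv}\psi_j, w\rangle$, which is $|\Omega|$ times the Fourier coefficient of $p_j := \overline{\psi_j}\, w$ at the dual lattice point $\floK_\iiv$. As $m$ ranges over $\Z$, the vectors $\floK_\iiv$ are pairwise distinct, since $\floK_\iiv\cdot\vvv_2 = 2\pi m$ by \eqref{eq:def_floK}. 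Bessel's inequality in $L^2(\R^2/\Lambda)$ then gives
\begin{equation*}
\sum_m |\langle\cdots\rangle|^2 \lesssim \|\psi\|_\infty^2\,\|w\|_{L^2(\Omega)}^2 = \|\psi\|_\infty^2\,\|\scrF_\Kv F(\cdot,\lvar)\|^2_{L^2(\Omega)}.
\end{equation*}
Integrating in $\lvar$, summing over $\Kv_\star\in\{\Kv,\Kv'\}$, and applying \eqref{eq:directional_Plancherel} proves the case $q=0$.

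For $q\geq1$, I would aim to use $|\floK_\iiv| \gtrsim |m|$ together with the Sobolev characterization \eqref{eq:link_Sobolev_regularity_decay_Fourier_coeffs} of the coefficients of $p_j$. The obstacle is that $\|w\|_{H^q(\Omega)}$ involves $\nabla + \icplx(\Kv+\lvar\kvh_2)$, which is not uniformly controlled by $\|\scrF_\Kv F(\cdot,\lvar)\|_{H^q}$ as $\lvar\to\infty$. So I would instead integrate by parts directly against the pseudo-periodic pair, in the $\vvv_2$-direction. Using $\vvv_2\cdot(\Kv_\star+\modK_\iiv(\lvar)+\floK_\iiv) = \Kv\cdot\vvv_2 + \lvar$ and $\floK_\iiv\cdot\vvv_2 = 2\pi m$, I would try to extract the factor $m$ by applying $(\vvv_2\cdot D_\xv)^q$ to $\scrF_\Kv F$. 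The competing factor $(\lvar-\lvar_\iiv)^q$ would be handled by splitting the $\lvar$-integral into the window $|\lvar-\lvar_\iiv|\leq\pi$, where it is bounded, and its complement, where one gains instead from the smoothness of $\Phi^{\Kv_\iiv}$; this is the step where the constant picks up the $\scrC^q$-norm of $(\Phi^{\Kv},\Phi^{\Kv'})$. Since $\calU_\delta$ is unitary, the bound for $\calJ_\delta$ then follows immediately from the bound for $\calT$.
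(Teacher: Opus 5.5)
Your treatment of (a)--(c), and of (d) for $q=0$, is correct and is essentially the paper's argument. The paper computes $\scrF_\Kv\,\calT^*_\iiv\,\calF^*$ directly and obtains (a) by duality; you go the other way, which is equivalent. For $q=0$, its Parseval argument over the pairwise distinct indices $\floK_\iiv$ is your Bessel argument. You are also right that (a) holds for every $\lvar\in\R$, not only on the window.

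For (d) with $q\geq 1$ there is a genuine gap, and it cannot be closed, because the estimate is false. Integrating by parts with $\vvv_2\cdot D_\xv$ against $u=\euler^{\icplx\modK_\iiv(\lvar)\cdot\xv}\,\Phi^{\Kv_\iiv}$ only produces the symbol $\vvv_2\cdot(\Kv_\iiv+\modK_\iiv(\lvar))=\Kv_\iiv\cdot\vvv_2+\lvar-\lvar_\iiv$, plus derivatives of $\Phi^{\Kv_\iiv}$. The $2\pi m$ carried by $\floK_\iiv$ is cancelled exactly by the $\lvar$ in the Bloch phase $\Kv+\lvar\kvh_2$. On the window $|\lvar-\lvar_\iiv|\leq\pi$ one has $2\pi m=\lvar+\calO(1)$. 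So a power of $m$ is a power of the $s$-frequency, which the norm of $H^q_{\Kv\cdot\vvh_1,\xv}(\cylaug)$ (only $\xv$-derivatives) does not see.

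Concretely, fix $\iiv=(\Kv_\star,m)$ and a nonzero $\chi_0\in L^2(\R)$ supported in $[-1,1]$. Let $F:=\scrF^*_\Kv G$ with $G(\xv,\lvar):=\chi_0(\lvar-\lvar_\iiv)\,\euler^{\icplx\modK_\iiv(\lvar)\cdot\xv}\,\Phi^{\Kv_\star}_1(\xv)$.
\begin{itemize}
\item Since $\Kv_\star+\modK_\iiv(\lvar)=\Kv+\lvar\kvh_2-\floK_\iiv$, we have $G(\cdot,\lvar)\in H^q_{\Kv+\lvar\kvh_2}$.
\item Since $|\modK_\iiv(\lvar)|\leq\pi|\kvh_1|+|\kvh_2|$ on the support, $\|F\|_{H^q_{\Kv\cdot\vvh_1,\xv}(\cylaug)}\leq C\,\|\chi_0\|_{L^2}$ with $C$ independent of $m$.
\item By (a) and orthonormality of the Dirac eigenbasis, $\calF\calT_\iiv F(\lvar-\lvar_\iiv)=\chi_0(\lvar-\lvar_\iiv)\,(1,0)^\transp$.
\end{itemize}
Hence the left-hand side of \eqref{eq:square_summability_estimate} is at least $(1+m^2)^q\|\chi_0\|^2_{L^2}$, which is unbounded as $|m|\to\infty$. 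No bound in terms of the $\xv$-regularity of $F$ alone can hold; $s$-derivatives would have to enter.

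The paper's proof of (d) takes the route you considered first. Via \eqref{eq:link_Sobolev_regularity_decay_Fourier_coeffs}, it bounds the weighted sum by $\int_\R\|[\euler^{\icplx\Kv_\star\cdot\xv}\,\overline{\Phi^{\Kv_\star}}]\,[\euler^{-\icplx(\Kv+\lvar\kvh_2)\cdot\xv}\,\scrF_\Kv F(\cdot,\lvar)]\|^2_{H^q(\R^2/\Lambda)}\,d\lvar$, and then by $\|\scrF_\Kv F\|^2_{L^2(\R_\lvar;H^q_{\Kv+\lvar\kvh_2})}$. That last step is precisely the bound you flagged as non-uniform in $\lvar$. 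For the $G$ above, the periodic function in brackets is $\chi_0\,\euler^{-\icplx\floK_\iiv\cdot\xv}\,\overline{\Phi^{\Kv_\star}}\,\Phi^{\Kv_\star}_1$, whose $H^q$-norm grows like $|m|^q$. So only $q=0$ is actually established, which is the only case invoked elsewhere in the paper.
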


    \begin{dem}
    \textit{$(a)$ and $(b)$. Relations satisfied by $\calF\, \calT_\iiv$ and its adjoint}. It suffices to establish the expression \eqref{eq:averaging_operator_alt_adjoint}; the relation \eqref{eq:averaging_operator_alt} follows by a duality argument.
    To prove \eqref{eq:averaging_operator_alt_adjoint}, we replace $g(\cdot)$ by $\mathcal F^*g(\cdot)$ in the definition of $\calT^*_\iiv$, and use \eqref{eq:mode_for_averaging_operator}, which defines $\varphi_\iiv (\xv, s)$ in terms of $\Phi^{\Kv_\iiv}=(\Phi^{\Kv_\iiv}_1, \Phi^{\Kv_\iiv}_2)$:
    \begin{equation*}
        (\calT^*_\iiv\; \calF^*\, g) (\xv, s) = %
        \euler^{\vts\icplx\, \scalebox{0.7}{$\big($} \gamma_\iiv\, \kvh_1 \cdot \xv + (\lvar_\iiv\, \kvh_2 + \Kv) \cdot s\vts \vvv_2 \scalebox{0.7}{$\big)$} }\ \Phi^{\Kv_\iiv}(\xv)^\top\ (\mathcal F^*g)\big(\kvh_2 \cdot (\xv + s \vvv_2)\big).
    \end{equation*}
    Applying the partial Fourier transform $\scrF_\Kv$, \eqref{eq:directional_Fourier}, yields 
    \begin{align*}
        &\scrF_\Kv\, (\calT^*_\iiv\, \calF^*\, g) (\xv, \lvar) = \frac{1}{\sqrt{2\pi}} \int_{\R} (\calT^*_\iiv\; \calF^*\, g) (\xv, s)\, \euler^{- \icplx\vts (\Kv + \lvar \kvh_2) \cdot s \vvv_2}\, d s \\
        &= \euler^{\vts \icplx\vts \gamma_\iiv\, \kvh_1 \cdot \xv}\, \Phi^{\Kv_\iiv} (\xv)^\transp\, 
        \frac{1}{\sqrt{2\pi}} \int_{\R} (\calF^*\, g)\vts \big(\kvh_2 \cdot (\xv + s \vvv_2)\big)\
        \euler^{\vts\icplx\, (\lvar_\iiv\kvh_2+\Kv)\cdot s\vvv_2 }\ \euler^{- \icplx\vts (\Kv + \lvar \kvh_2) \cdot s \vvv_2}\, ds\\
        &= \euler^{\vts \icplx\vts \gamma_\iiv\, \kvh_1 \cdot \xv}\, \Phi^{\Kv_\iiv} (\xv)^\transp\, \ 
        \frac{1}{\sqrt{2\pi}} \int_{\R} (\calF^*\, g)\vts ( s + \kvh_2 \cdot \xv )\
        \euler^{-\icplx\vts (\lvar - \lvar_\iiv) \vts s }\, ds\\
        &=  \euler^{\vts \icplx\vts (\gamma_\iiv\, \kvh_1 
        + (\lvar-\lvar_\iiv)\kvh_2)\cdot \xv}\,\ \Phi^{\Kv_\iiv} (\xv)^\transp\, \frac{1}{\sqrt{2\pi}}\, \int_{\R} (\calF^*\, g)\vts (t)\ \euler^{-\icplx\vts (\lvar - \lvar_\iiv)\vts t}\, dt\ =\  \euler^{\icplx\modK_\iiv (\lvar)\cdot \xv}\ 
        \Phi^{\Kv_\iiv} (\xv)^\transp\ g(\lvar-\lvar_\iiv)\ .
    \end{align*}
    In the last equality we used the relation \eqref{eq:def-l_I}: %
    $\modK_\iiv(\lvar) =  \displaystyle\gamma_\iiv\, \kvh_1 + (\lvar - \lvar_\iiv)\, \kvh_2\in \widehat\Omega$ %
    of the Wrapping Lemma \ref{lem:wrapping_lemma}. This completes the proof of \eqref{eq:averaging_operator_alt_adjoint}.

    \vspace{1\baselineskip} \noindent
    \textit{$(c)$. Orthogonality relations}. \eqref{eq:J_left_inverse_is_J_star} follows from the definition of $\calT_\iiv$ and its adjoint, and from the fact that $\{\Phi^{\Kv_\iiv}_1, \Phi^{\Kv_\iiv}_2\}$ is orthonormal; see Condition \ref{item:dirac_point_item_2} in Proposition \ref{prop:sufficient_conditions_dirac_point}.

    \vspace{1\baselineskip} \noindent
    \textit{$(d)$. Bound \eqref{eq:square_summability_estimate}}. Let $F \in H^q_{\Kv \cdot \vvh_1, \xv} (\cylaug)$. Fix $\Kv_\star \in \{\Kv, \Kv'\}$, and let $\iiv \in \{\Kv_\star\} \times \Z$. By substituting \eqref{eq:representation_K_plus_lk2}, \emph{i.e.} the relation $\Kv + \lvar \kvh_2 = \Kv_\star + \modK_\iiv (\lvar) + \floK_\iiv$, in the relation \eqref{eq:averaging_operator_alt}, we find
    \begin{multline}
        \calF\, \calT_\iiv\, F\, (\lvar - \lvar_\iiv) = \widehat{F}^{\Kv_\star}_{\floK_\iiv} (\lvar), \textWHERE \floK_\iiv \in \Lambda^* \textAND
        \rets
        \widehat{F}^{\Kv_\star}_{\kv} (\lvar) := \int_{\Omega} \euler^{\vts \icplx\, \kv \cdot \xv}\; \left[\vts \euler^{\vts\icplx\, \Kv_\star \cdot \xv}\; \overline{\Phi^{\Kv_\star} (\xv)} \vts\right] \left[\vts\euler^{-\icplx\, (\Kv + \lvar \kvh_2) \cdot \xv}\; \scrF_\Kv F(\xv, \lvar) \vts\right]\, d\xv, \quad \spforall \kv \in \Lambda^*.\label{eq:Fourier_coefficient_Fhat}
    \end{multline}
    Since $\Phi^{\Kv_\star}$ is $\Kv_\star$-pseudo-periodic and $\xv \mapsto \scrF_\Kv F (\xv, \lvar)$ is $\Kv + \lvar \kvh_2$-pseudo-periodic with respect to $\Lambda$, the bracketed terms in the integrand are $\Lambda$-- periodic functions of $\xv$.  \emph{Hence, for $\lvar \in \R$, $\widehat{F}^{\Kv_\star}_{\kv} (\lvar)$ is the Fourier coefficient with index $\kv$ of a $\Lambda$--periodic function.} Moreover, from its definition \eqref{eq:def_floK}, the family $(\floK_\iiv)_{\iiv \in \{\Kv_\star\} \times \Z}$ is pairwise disjoint in $\Lambda^*$. Therefore, there exists a constant $C > 0$ independent of $F$, such that
    {
    \allowdisplaybreaks
    \begin{align*}
        \sum_{\iiv \in \{\Kv_\star\} \times \Z} (1 + m^2_\iiv)^q\; \| \calT_\iiv\, F\, \|^2_{L^2(\R)} &= \sum_{\iiv \in \{\Kv_\star\} \times \Z} (1 + m^2_\iiv)^q\; \| \calF\, \calT_\iiv\, F\, \|^2_{L^2(\R)} &&\!\!\textnormal{by Plancherel's theorem}
        \\
        &=\sum_{\iiv \in \{\Kv_\star\} \times \Z} (1 + m^2_\iiv)^q\; \| \widehat{F}^{\Kv_\star}_{\floK_\iiv} \|^2_{L^2(\R)} &&\!\!\textnormal{by \eqref{eq:Fourier_coefficient_Fhat}}
        \\
        &\leq C\, \sum_{\iiv \in \{\Kv_\star\} \times \Z} (1 + |\floK_\iiv|^2)^q\; \| \widehat{F}^{\Kv_\star}_{\floK_\iiv} \|^2_{L^2(\R)} &&\!\!\textnormal{because $\floK_\iiv \cdot \vvv_2 = 2 \pi\vts m_\iiv$}
        \\
        &\leq C \sum_{\kv \in \Lambda^*} (1 + |\kv|^2)^q\; \| \widehat{F}^{\Kv_\star}_\kv \|^2_{L^2(\R)} 
        \\
        &= C\, \int_\R \sum_{\kv \in \Lambda^*} (1 + |\kv|^2)^q\; |\widehat{F}^{\Kv_\star}_{\kv} (\lvar)|^2\, d\lvar,
    \end{align*}
    where the last equality is obtained by interchanging the sum over $\kv \in \Lambda^*$ and the integral over $\lvar \in \R$, via Tonelli's theorem. Now, note that $\scrF_\Kv F(\cdot\,, \lvar) \in H^q_{\Kv + \lvar \kvh_2}$, since $\scrF_\Kv$ is isomorphic from $H^q_{\Kv \cdot \vvh_1, \xv} (\cylaug)$ to $L^2(\R_\lvar; H^q_{\Kv + \lvar \kvh_2})$, according to Lemma \ref{lem:directional_Fourier}. Therefore, using the characterization \eqref{eq:link_Sobolev_regularity_decay_Fourier_coeffs} of elements of $H^q (\R^2 / \Lambda)$ in terms of their Fourier series, and the fact that $\Phi^{\Kv_\star} \in \scrC^\infty(\R^2)$, we deduce the estimate}
    \begin{align*}
        \sum_{\iiv \in \{\Kv_\star\} \times \Z} (1 + m^2_\iiv)^q\; &\| \calT_\iiv\, F\, \|^2_{L^2(\R)} 
        \\
        &\underset{\eqref{eq:link_Sobolev_regularity_decay_Fourier_coeffs}}{\lesssim} \int_\R \left\| \xv \mapsto \left[\vts \euler^{\vts\icplx\, \Kv_\star \cdot \xv}\, \overline{\Phi^{\Kv_\star} (\xv)} \vts\right] \left[\vts\euler^{-\icplx\, (\Kv + \lvar \kvh_2) \cdot \xv}\, \scrF_\Kv F(\xv, \lvar) \vts\right] \right\|^2_{H^q (\R^2 / \Lambda)}\, d\lvar
        \retss
        &\lesssim \|\Phi^{\Kv_\star}\|^2_{\scrC^q}\; \|\scrF_\Kv F\|^2_{L^2(\R_\lvar; H^q_{\Kv + \lvar \kvh_2})}\ \lesssim\ \|\Phi^{\Kv_\star}\|^2_{\scrC^q}\; \|F\|^2_{H^q_{\Kv \cdot \vvh_1, \xv} (\cylaug)}.
    \end{align*}
    For the last bound, we reused the boundedness of $\scrF_\Kv$ from $H^q_{\Kv \cdot \vvh_1, \xv} (\cylaug)$ to $L^2(\R_\lvar; H^q_{\Kv + \lvar \kvh_2})$. Summing this bound over $\Kv_\star \in \{\Kv, \Kv'\}$ leads to \eqref{eq:square_summability_estimate}, which proves the boundedness of $\calT$. Moreover since $\calJ_\delta = \calU_\delta\, \calT$ and $\calU_\delta$ is bounded uniformly in $\delta$, it follows that $\calJ_\delta$ is bounded uniformly in $\delta$.
    \end{dem}

    \subsection{Proof of Proposition \ref{prop:asymptotics_near_projectors} and expansion of \texorpdfstring{$\calH^0_{\AUG, \Kv \cdot \vvh_1}$}{H0AUG} on near components}\label{sec:unperturbed_operator_expansion}
    In this section, we study the asymptotic behavior of the unperturbed operator $\calH^0_{\AUG, \Kv \cdot \vvh_1}$, projected on near components. More precisely, in Proposition \ref{prop:asymptotics_H0near}, we derive an expansion for $\Pi_{\bbL'}\; (\vts \calH^0_{\AUG, \Kv \cdot \vvh_1} - E_D \vts)^d\; \Pi_{\bbL''}$, where $d = 0, 1$, and $\bbL', \bbL''$ are subsets of $\bbL (\delta^{3/4})$. The motivation behind this expansion is two-fold: first, for $d = 0$, it provides an expansion for the near projectors, hence proving Proposition \ref{prop:asymptotics_near_projectors}. Second, for $d = 1$, it captures the effective Dirac Hamiltonian character of the unperturbed operator $\calH^0_{\AUG, \Kv \cdot \vvh_1}$, when spectrally localized near the Dirac (conical) point. The latter result will be used in the subsequent proof of Proposition \ref{prop:asymptotics_HdeltaNEAR} in Section \ref{sec:proof_asymptotics_HtildeDeltaNEAR}.
    \begin{proposition}\label{prop:asymptotics_H0near}
    Let $d \in \{0, 1\}$.
    \begin{enumerate}[label={$(\alph*).$}, ref={\theproposition.$(\alph*)$}]
        \item Assume $\bbL', \bbL'' \subseteq \bbL (\delta^{3/4})$ and $\bbL'\cap \bbL'' =\emptyset$. Then, $\Pi_{\bbL'}\; (\vts \calH^0_{\AUG, \Kv \cdot \vvh_1} - E_D \vts)^d\; \Pi_{\bbL''} = 0.$
        \item As $\delta \to 0$, the following holds in $\scrL(L^2_{\Kv \cdot \vvh_1} (\cylaug))$, uniformly in $\bbL' \subset \bbL (\delta^{3/4})$:
        \begin{multline}\label{eq:asymptotics_H0near}
        \Pi_{\bbL'}\; \big(\vts \calH^0_{\AUG, \Kv \cdot \vvh_1} - E_D \vts\big)^d\; \Pi_{\bbL'} 
        \retss
        = \delta^d \sum_{\iiv \in \bbL'} \calJ^*_{\delta, \iiv}\, \chi (\delta^{1/4}\, D_\zeta)\, \big[\vts \sigma^{\Kv_\iiv} (\kvh_2)\, D_\zeta + \delta^{-1}\, \gamma_\iiv\, \sigma^{\Kv_\iiv} (\kvh_1) \vts\big]^d\, \chi (\delta^{1/4}\, D_\zeta)\, \calJ^{}_{\delta, \iiv} + \calO \big(\delta^{\frac{3}{4} (d+1)} \big),
        \end{multline}
        where $\calJ_{\delta, \iiv}$ is given by \eqref{eq:def_calJ}, and $\sigma^{\Kv_\iiv} (\kvh_j)$ is the $2\times2$ Hermitian matrix defined in  \eqref{eq:def_twisted_Pauli_matrix_1}--\eqref{eq:def_twisted_Pauli_matrix_2}. In particular, for $d = 0$,  \eqref{eq:asymptotics_H0near} reduces to the expansion of  $\Pi_{\bbL'}$, displayed in \eqref{eq:asymptotics_Pi_near} of Proposition \ref{prop:asymptotics_near_projectors}.
    \end{enumerate}
    \end{proposition}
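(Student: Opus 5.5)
Everything will be done on the Fourier side in $s$ via $\scrF_\Kv$, where $\calH^0_{\AUG,\Kv\cdot\vvh_1}$ is fibered over $\lvar$ by \eqref{eq:fibered_H0aug}. By \eqref{eq:def_Pi_I_1}, $\scrF_\Kv\,\Pi_\iiv\,\scrF_\Kv^*$ acts as multiplication by $\chi(\delta^{-3/4}(\lvar-\lvar_\iiv))\,\Pi^0_{\Kv+\lvar\kvh_2}$. Moreover, $\Pi^0_\kv$ commutes with $\calH^0_\kv$, and $(\calH^0_\kv-E_D)\Pi^0_\kv=\sum_\pm (E_\pm(\kv)-E_D)\,\Phi_\pm\otimes\Phi_\pm$.

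\emph{Part (a).} For disjoint $\bbL',\bbL''\subseteq\bbL(\delta^{3/4})$, the product $\Pi_{\bbL'}(\calH^0_{\AUG,\Kv\cdot\vvh_1}-E_D)^d\Pi_{\bbL''}$ is the fibered multiplier
\[
\chi_{\bbL'}(\lvar)\,\chi_{\bbL''}(\lvar)\,(\calH^0_{\Kv+\lvar\kvh_2}-E_D)^d\,\Pi^0_{\Kv+\lvar\kvh_2},
\]
where $\chi_{\bbL'}(\lvar):=\sum_{\iiv\in\bbL'}\chi(\delta^{-3/4}(\lvar-\lvar_\iiv))$. By the disjointness \eqref{eq:pairwise_disjoint_near_intervals} we have $\chi_{\bbL'}\chi_{\bbL''}=0$ (up to endpoints, a null set), so the product vanishes. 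The same argument shows that the cross terms $\iiv\ne\jjv$ inside $\bbL'$ vanish. Part (b) therefore reduces to a single index, and the only remaining issue is uniformity of the errors in $\bbL'$.

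\emph{Part (b), single index.} Fix $\iiv=(\Kv_\star,m)$ and $|\lvar-\lvar_\iiv|\le\delta^{3/4}$. By the Wrapping Lemma \ref{lem:wrapping_lemma}, $\Kv+\lvar\kvh_2=\Kv_\star+\modK_\iiv(\lvar)+\floK_\iiv$ with $|\modK_\iiv(\lvar)|\lesssim\delta^{3/4}$ (using $|\gamma_\iiv|\le\delta^{3/4}$). Conjugating by $\euler^{\icplx\floK_\iiv\cdot\xv}$ identifies the fiber at $\Kv+\lvar\kvh_2$ with the fiber at $\Kv_\star+\modK_\iiv(\lvar)$. We then use the smoothness of the periodic-gauge projector $\kv\mapsto\euler^{-\icplx\kv\cdot\xv}\Pi^0_\kv\euler^{\icplx\kv\cdot\xv}$ near $\Kv_\star$, a two-dimensional spectral subspace of the analytic family $\calH^0_{\Kv_\star}(\kappa)=-(\nabla+\icplx\kappa)^2+V$ (Kato). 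This gives
\begin{align*}
\Pi^0_{\Kv_\star+\kappa}&=\euler^{\icplx\kappa\cdot\xv}\,\Pi^0_{\Kv_\star}\,\euler^{-\icplx\kappa\cdot\xv}+\calO(|\kappa|),\\
(\calH^0_{\Kv_\star+\kappa}-E_D)\,\Pi^0_{\Kv_\star+\kappa}&=\euler^{\icplx\kappa\cdot\xv}\,\Phi^{\Kv_\star\transp}\,\sigma^{\Kv_\star}(\kappa)\,\langle\Phi^{\Kv_\star},\cdot\rangle\,\euler^{-\icplx\kappa\cdot\xv}+\calO(|\kappa|^2).
\end{align*}
The second expansion is first-order degenerate perturbation theory: the $2\times2$ matrix of $-2\icplx\kappa\cdot\nabla$ on the Dirac basis is $\sigma^{\Kv_\star}(\kappa)$ by Proposition \ref{prop:expression_Phi_l_nabla_Phi_j}, and the second-order terms, including the projector correction, are $\calO(|\kappa|^2)$ uniformly in $\kappa$ small.

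We insert $\kappa=\modK_\iiv(\lvar)=\gamma_\iiv\kvh_1+(\lvar-\lvar_\iiv)\kvh_2$. By linearity, $\sigma^{\Kv_\star}(\kappa)=\gamma_\iiv\sigma(\kvh_1)+(\lvar-\lvar_\iiv)\sigma(\kvh_2)$. Comparing with Proposition \ref{prop:pties_averaging_operator}(a),(b), namely \eqref{eq:averaging_operator_alt} and \eqref{eq:averaging_operator_alt_adjoint}, the leading term on the fiber is exactly $\scrF_\Kv\calT_\iiv^*\calF^*$ composed with multiplication by $\chi(\delta^{-3/4}\cdot)\,[\,\sigma(\kvh_2)\cdot+\gamma_\iiv\sigma(\kvh_1)]^d$ composed with $\calF\calT_\iiv\scrF_\Kv^*$, where the variable is $\lvar-\lvar_\iiv$. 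In physical variables this is $\calT_\iiv^*\,\chi(\delta^{-3/4}D_\zeta)\,[\sigma(\kvh_2)D_\zeta+\gamma_\iiv\sigma(\kvh_1)]^d\,\chi(\delta^{-3/4}D_\zeta)\,\calT_\iiv$. Conjugating with the dilation $\calU_\delta$, so that $\calT_\iiv=\calU_\delta^*\calJ_{\delta,\iiv}$, turns $D_\zeta$ into $\delta D_\zeta$ and $\chi(\delta^{-3/4}D_\zeta)$ into $\chi(\delta^{1/4}D_\zeta)$, producing the prefactor $\delta^d$ and $\delta^{-1}\gamma_\iiv$, as in \eqref{eq:asymptotics_H0near}. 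The fiberwise remainder is $\calO(|\kappa|^{d+1})=\calO(\delta^{3(d+1)/4})$.

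\emph{Summation and main obstacle.} The step requiring most care is summing the remainders over the infinitely many $\iiv\in\bbL'$ while keeping the bound uniform in $\bbL'$. The remainders live on disjoint $\lvar$-supports: $R_\iiv(\lvar)=\chi_\iiv(\lvar)\,\rho(\lvar)$, where $\rho$ is a fiber operator on $L^2_{\Kv+\lvar\kvh_2}$ of norm $\calO(\delta^{3(d+1)/4})$. The constants must be independent of $m$. This holds because the Kato expansion constants depend only on $\Kv_\star\in\{\Kv,\Kv'\}$ and $|\kappa|$, and conjugation by $\euler^{\icplx\floK_\iiv\cdot\xv}$ is unitary. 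Therefore $\sum_\iiv R_\iiv$ is a single decomposable operator whose norm is $\sup_\lvar\|\cdot\|=\calO(\delta^{3(d+1)/4})$, by Plancherel \eqref{eq:directional_Plancherel} together with \eqref{eq:sum_indicators_bounded}. The leading terms sum in the same way, consistent with the orthogonality in Proposition \ref{prop:pties_averaging_operator}(c). Setting $d=0$ yields Proposition \ref{prop:asymptotics_near_projectors}.
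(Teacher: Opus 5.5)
Your proposal is correct and follows essentially the same route as the paper. You reduce fiberwise via $\scrF_\Kv$, use disjointness of the $\lvar$-supports both for part (a) and for the reduction $\Pi_{\bbL'}(\cdots)\Pi_{\bbL'}=\sum_{\iiv\in\bbL'}\Pi_\iiv(\cdots)\Pi_\iiv$, combine the Wrapping Lemma with the expansion of $\Pi^0_{\Kv_\star+\modK}(\calH^0_{\Kv_\star+\modK}-E_D)^d\Pi^0_{\Kv_\star+\modK}$, identify the leading term through \eqref{eq:averaging_operator_alt}--\eqref{eq:averaging_operator_alt_adjoint} and the dilation $\calU_\delta$, and sum the $\iiv$-uniform remainders by Plancherel and \eqref{eq:sum_indicators_bounded}. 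Two small differences: the paper quotes that expansion from Drouot rather than rederiving it via Kato as you do, and in the paper's pseudo-periodic gauge the fibers at $\Kv+\lvar\kvh_2$ and $\Kv_\star+\modK_\iiv(\lvar)$ are literally the same operator by $\Lambda^*$-periodicity, so your conjugation by $\euler^{\icplx\floK_\iiv\cdot\xv}$ is just the periodic-gauge version of that identification.
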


    \begin{dem}
    Part $(a)$ of Proposition \ref{prop:asymptotics_H0near} derives from the fact that $\calH^0_{\AUG, \Kv \cdot \vvh_1}$ commutes with $\Pi_\iiv$, and $\Pi_\iiv\, \Pi_\jjv = 0$ for $\iiv \neq \jjv$; see \eqref{eq:PiIPiJeq0}. This also implies that
    \begin{equation}\label{eq:asymptotics_H0near_dem1}
        \Pi_{\bbL'}\; \big(\vts \calH^0_{\AUG, \Kv \cdot \vvh_1} - E_D \vts\big)^d\; \Pi_{\bbL'} = \sum_{\iiv \in \bbL'} \Pi_\iiv\; \big(\vts \calH^0_{\AUG, \Kv \cdot \vvh_1} - E_D \vts\big)^d\; \Pi_\iiv.
    \end{equation}
    We now turn to the proof of  Part $(b)$ of Proposition \ref{prop:asymptotics_H0near}. We first show that the operator $\Pi_{\bbL'}\; (\calH^0_{\AUG, \Kv \cdot \vvh_1} - E_D)^d\; \Pi_{\bbL'}$ is approximately equal to a sum of operators: $\sum_{\iiv \in \bbL'} \calN_\iiv$, where each $\calN_\iiv$ is given explicitly in terms of a Dirac eigenspace projection. The latter expression is then shown, for $\delta$ small, to be equal to the dominant term in \eqref{eq:asymptotics_H0near}.

    \vspace{1\baselineskip} \noindent
    \textbf{Step 1.} \textit{Expansion near the Dirac point}. Fix $\iiv \in \bbL$. From the integral representation \eqref{eq:fibered_H0aug} of $\calH^0_{\AUG, \Kv \cdot \vvh_1}$ in terms of $\calH^0_{\Kv + \lvar \kvh_2}$, and from the definition \eqref{eq:def_Pi_I_1} of $\Pi_\iiv$, it follows that
    \begin{align*}
        &\Pi_\iiv\; \big(\vts \calH^0_{\AUG, \Kv \cdot \vvh_1} - E_D \vts\big)^d\; \Pi_\iiv\, F
        \retss
        &= \scrF^*_\Kv\,  \biggl\{\vts (\xv, \lvar) \mapsto \chi \big(\delta^{-3/4} (\lvar - \lvar_\iiv)\big)\, \Big[\vts \Pi^0_{\Kv + \lvar \kvh_2}\; \big(\vts \calH^0_{\Kv + \lvar \kvh_2} - E_D \vts\big)^d\; \Pi^0_{\Kv + \lvar \kvh_2}\, \widetilde{F} (\cdot\,, \lvar)\vts \Big] (\xv) \vts\biggr\}.
    \end{align*}
    To expand this operator, we use the following asymptotic expansion for $d = 0, 1$, shown for instance in \cite[Equation ($6.9$) ($d = 0$) and Step $2$ of Proposition $6.6$ ($d = 1$)]{drouot2020edge}. For any $\Kv_\star \in \{\Kv, \Kv'\}$, as $|\modK| \to 0$, there holds in $\scrL(L^2_{\Kv_\star + \modK})$:
    \begin{equation}\label{eq:asymptotics_Pi0k_H_E_Pi0k}
        \Pi^0_{\Kv_\star + \modK}\; (\calH^0_{\Kv_\star + \modK} - E_D)^d\; \Pi^0_{\Kv_\star + \modK} = \euler^{\vts \icplx\, \modK \cdot \xv}\, \Pi^0_{\Kv_\star}\; (\vts -2\vts \icplx\vts \modK \cdot \nabla \vts)^d \; \Pi^0_{\Kv_\star}\, \euler^{- \icplx\, \modK \cdot \xv} + \calO (|\modK|^{d+1}).
    \end{equation}
    Let $\lvar \in \R$. By Part \ref{lem:veps_neighborhood_K_d} of the Wrapping Lemma \ref{lem:wrapping_lemma}, we have $\Kv + \lvar \kvh_2 = \Kv_\iiv + \modK_\iiv (\lvar) + \floK_\iiv$, where $\floK_\iiv \in \Lambda^*$. Therefore, using the $\Lambda^*$--periodicity of $\kv \mapsto (\Pi^0_\kv, \calH^0_\kv)$, we have
    \begin{equation}
        \Pi^0_{\smash{\Kv + \lvar \kvh_2\vphantom{\Kv_\iiv + \modK_\iiv (\lvar)}}}\; %
        \big(\vts \calH^0_{\smash{\Kv + \lvar \kvh_2\vphantom{\Kv_\iiv + \modK_\iiv (\lvar)}}} - E_D \vts\big)^d\; %
        \Pi^0_{\smash{\Kv + \lvar \kvh_2\vphantom{\Kv_\iiv + \modK_\iiv (\lvar)}}} = %
        \Pi^0_{\smash{\Kv_\iiv + \modK_\iiv (\lvar)}}\; \big(\vts \calH^0_{\smash{\Kv_\iiv + \modK_\iiv (\lvar)}} - E_D \vts\big)^d\; \Pi^0_{\smash{\Kv_\iiv + \modK_\iiv (\lvar)}}.
    \end{equation}
    Now, assume $\iiv \in \bbL (\delta^{3/4})$ and $|\lvar - \lvar_\iiv| \leq \delta^{3/4}$, so that $\modK_\iiv (\lvar) = \gamma_\iiv\, \kvh_1 + (\lvar - \lvar_\iiv)\, \kvh_2 = \calO(\delta^{3/4})$. Then, applying the expansion \eqref{eq:asymptotics_Pi0k_H_E_Pi0k} with $\modK := \modK_\iiv (\lvar)$ leads to the following in $\scrL(L^2_{\Kv + \lvar \kvh_2})$:
    \begin{gather}
        \Pi^0_{\Kv + \lvar \kvh_2}\; \big(\vts \calH^0_{\Kv + \lvar \kvh_2} - E_D \vts\big)^d\; \Pi^0_{\Kv + \lvar \kvh_2} = \widetilde{\calN}_\iiv (\lvar) + \calO \big(\delta^{\frac{3}{4} (d+1)} \big), \quad |\lvar-\lvar_\iiv| \leq \delta^{3/4},\ \iiv\in \bbL(\delta^{3/4}), \nonumber
        \rets
        \textWITH \widetilde{\calN}_\iiv(\lvar) := \euler^{\vts \icplx\, \modK_\iiv (\lvar) \cdot \xv}\, \Pi^0_{\Kv_\iiv}\; \big(\vts -2\vts \icplx\vts \modK_\iiv (\lvar) \cdot \nabla_\xv \vts\big)^d \; \Pi^0_{\Kv_\iiv}\, \euler^{- \icplx\, \modK_\iiv (\lvar) \cdot \xv}, \label{eq:asymptotics_Pi0k_p_lvark2_H_E_Pi0k_p_lvark2}
    \end{gather}
    and where the implicit constant in the error term can be chosen to be independent of $\iiv$. Introducing the operator $\calN_\iiv$ defined for $F \in L^2_{\Kv \cdot \vvh_1} (\cylaug)$ by 
    \begin{equation}\label{eq:def_N_I}
        \calN_\iiv\, F := \scrF^*_\Kv\,  \Bigl\{\vts (\xv, \lvar) \mapsto \chi \big(\delta^{-3/4} (\lvar - \lvar_\iiv)\big)\, \big[ \widetilde{\calN}_\iiv(\lvar)\, \widetilde{F} (\cdot\,, \lvar) \vts \big] (\xv) \vts\Bigr\},
    \end{equation}
    it follows from the Plancherel-like formula \eqref{eq:directional_Plancherel} and the estimates (\ref{eq:sum_indicators_bounded}, \ref{eq:asymptotics_Pi0k_p_lvark2_H_E_Pi0k_p_lvark2}) that
    \begin{align*}
        &\Big\|\vts \sum_{\iiv \in \bbL'} \Big[\vts \Pi_\iiv\; \big(\vts \calH^0_{\AUG, \Kv \cdot \vvh_1} - E_D \vts\big)^d\; \Pi_\iiv - \calN_\iiv \vts\Big]\vts F\, \Big\|^2_{L^2_{\Kv \cdot \vvh_1} (\cylaug)}
        \\
        &\underset{\eqref{eq:directional_Plancherel}}{=} \int_\R \Big|\sum_{\iiv \in \bbL'} \!\chi \big(\delta^{-3/4} (\lvar - \lvar_\iiv)\big) \Big|^2\, \Big\|\vts \Big[\vts \Pi^0_{\Kv + \lvar \kvh_2}\, \big(\vts \calH^0_{\Kv + \lvar \kvh_2} - E_D \vts\big)^d\, \Pi^0_{\Kv + \lvar \kvh_2} - \widetilde{\calN}_\iiv(\lvar) \vts\Big] \widetilde{F} (\cdot\,, \lvar) \vts\Big\|^2_{L^2_{\Kv + \lvar \kvh_2}}\!\!\!\!\!\! d\lvar
        \\
        &\underset{(\ref{eq:sum_indicators_bounded}, \ref{eq:asymptotics_Pi0k_p_lvark2_H_E_Pi0k_p_lvark2})}{\lesssim} \delta^{\frac{3}{2}(d+1)} \int_\R \|\vts \widetilde{F} (\cdot\,, \lvar) \vts\|^2_{L^2_{\Kv + \lvar \kvh_2}}\, d\lvar \underset{\eqref{eq:directional_Plancherel}}{=} \delta^{\frac{3}{2}(d+1)}\; \|\vts F \vts\|^2_{L^2_{\Kv \cdot \vvh_1} (\cylaug)}.
    \end{align*}
    Combining this bound with \eqref{eq:asymptotics_H0near_dem1}, we deduce the following expansion in $\scrL(L^2_{\Kv \cdot \vvh_1} (\cylaug))$, uniformly in $\bbL' \subseteq \bbL (\delta^{3/4})$:
    \begin{equation}\label{eq:asymptotics_H0near_dem3}
        \Pi_{\bbL'}\, \big(\vts \calH^0_{\AUG, \Kv \cdot \vvh_1} - E_D \vts\big)^d\, \Pi_{\bbL'} = \sum_{\iiv \in \bbL'} \calN_\iiv + \calO \big( \delta^{\frac{3}{4}(d+1)} \big).
    \end{equation}
    It remains to prove that $\sum_{\iiv \in \bbL'} \calN_\iiv$ coincides with the first expression on the right-hand side of \eqref{eq:asymptotics_H0near}. This is the object of the next step.

    \vspace{1\baselineskip} \noindent
    \textbf{Step 2.} \textit{Reformulation of the dominant operator $\calN_\iiv$}. Let $\iiv \in \bbL' \subseteq \bbL (\delta^{3/4})$ be arbitrary. In this step, we prove the identity:
    \begin{equation}\label{eq:asymptotics_H0near_step2_goal}
        \calN_{\iiv}\, F = \delta^d\, \calJ^*_{\delta, \iiv}\, \chi(\delta^{1/4}\, D_\zeta)\; \big[\vts \sigma^{\Kv_\iiv} (\kvh_2)\, D_\zeta + \delta^{-1}\, \gamma_\iiv\, \sigma^{\Kv_\iiv} (\kvh_1) \vts\big]^d\; \chi(\delta^{1/4}\, D_\zeta)\, \calJ^{}_{\delta, \iiv}\, F.
    \end{equation}
    Then Part $(b)$ of Proposition \ref{prop:asymptotics_H0near} will follow from summing \eqref{eq:asymptotics_H0near_step2_goal} over $\iiv \in \bbL'$, and using the expansion \eqref{eq:asymptotics_H0near_dem3} of Step 1.

    Let $F \in L^2_{\Kv \cdot \vvh_1} (\cylaug)$ and recall that $\widetilde{F} := \scrF_\Kv F$ (see \eqref{eq:directional_Fourier}). We begin by reformulating the function $\widetilde{\calN}_\iiv (\lvar)\, \widetilde{F} (\cdot\,, \lvar)$ defined by \eqref{eq:asymptotics_Pi0k_p_lvark2_H_E_Pi0k_p_lvark2}, using the alternative expression \eqref{eq:Pi_0_K_star} for $\Pi^0_{\Kv_\iiv}$:
    \begin{align}
        &\big[ \widetilde{\calN}_\iiv (\lvar)\, \widetilde{F} (\cdot\,, \lvar) \big] (\xv) \nonumber
        \retss
        &= \euler^{\vts \icplx\, \modK_\iiv (\lvar) \cdot \xv}\, \Phi^{\Kv_\iiv} (\xv)^\transp\, \big\langle \Phi^{\Kv_\iiv}, (-2\vts \icplx\vts \modK_\iiv (\lvar) \cdot \nabla)^d\; \Phi^{\Kv_\iiv} \big\rangle_{L^2_{\Kv_\iiv}}\, \big\langle\vts  \euler^{\vts \icplx\, \modK_\iiv (\lvar) \cdot \yv}\, \Phi^{\Kv_\iiv} (\yv), \, \widetilde{F}(\yv, \lvar) \vts\big\rangle_{L^2_\yv (\Omega)}. \nonumber
        \intertext{Recall that $\modK_\iiv (\lvar) = \gamma_\iiv\, \kvh_1 + (\lvar - \lvar_\iiv)\, \kvh_2$. The product $\langle\vts \Phi^{\Kv_\iiv}, (-2\vts \icplx\vts \modK_\iiv (\lvar) \cdot \nabla)^d\; \Phi^{\Kv_\iiv} \vts\rangle \in \C^{2 \times 2}$ is the identity matrix when $d = 0$, by orthonormality of $\{\Phi^{\Kv_\iiv}_1, \Phi^{\Kv_\iiv}_2\}$. For $d = 1$, it equals the matrix $\sigma^{\Kv_\iiv} (\modK_\iiv (\lvar)) = (\lvar - \lvar_\iiv)\, \sigma^{\Kv_\iiv} (\kvh_2) + \gamma_\iiv\, \sigma^{\Kv_\iiv} (\kvh_1)$ defined by \eqref{eq:def_twisted_Pauli_matrix_1}--\eqref{eq:def_twisted_Pauli_matrix_2}, by Proposition \ref{prop:expression_Phi_l_nabla_Phi_j}. Therefore,}
        &\big[ \widetilde{\calN}_\iiv (\lvar)\, \widetilde{F} (\cdot\,, \lvar) \big] (\xv) \label{eq:asymptotics_H0near_dem4}
        \retss
        &= \euler^{\vts \icplx\, \modK_\iiv (\lvar) \cdot \xv}\, \Phi^{\Kv_\iiv} (\xv)^\transp\, \big[\vts (\lvar - \lvar_\iiv)\, \sigma^{\Kv_\iiv} (\kvh_2) + \gamma_\iiv\, \sigma^{\Kv_\iiv} (\kvh_1) \vts\big]^d\, \big\langle\vts  \euler^{\vts \icplx\, \modK_\iiv (\lvar) \cdot \yv}\, \Phi^{\Kv_\iiv} (\yv), \, \widetilde{F}(\yv, \lvar) \vts\big\rangle_{L^2_\yv (\Omega)}.\nonumber
    \end{align}
    Multiplying each side of \eqref{eq:asymptotics_H0near_dem4} by $\chi (\delta^{-3/4} (\lvar - \lvar_\iiv))$ yields an expression for $\scrF_\Kv\, \calN_\iiv\, F (\xv, \lvar)$, where $\calN_\iiv$ is defined by \eqref{eq:def_N_I}. Further, using the formulas (\ref{eq:averaging_operator_alt}, \ref{eq:averaging_operator_alt_adjoint}) satisfied by $\calT_\iiv$, \eqref{eq:averaging_operator}, and its adjoint, \eqref{eq:averaging_operator_adjoint}, we obtain
    \begin{align*}
        &\scrF_\Kv\, \calN_\iiv\, F (\xv, \lvar) 
        \retss
        &\underset{\eqref{eq:averaging_operator_alt}}{=} \euler^{\vts \icplx\, \modK_\iiv (\lvar) \cdot \xv}\, \Phi^{\Kv_\iiv} (\xv)^\transp\, \big[\vts (\lvar - \lvar_\iiv)\, \sigma^{\Kv_\iiv} (\kvh_2) + \gamma_\iiv\, \sigma^{\Kv_\iiv} (\kvh_1) \vts\big]^d\, \chi \big(\delta^{-3/4} (\lvar - \lvar_\iiv)\big)\, \calF\, \calT^{}_\iiv\, F\, (\lvar - \lvar_\iiv)
        \retss
        &\underset{\eqref{eq:averaging_operator_alt_adjoint}}{=} \scrF^{}_\Kv\, \calT^*_\iiv\, \calF^* \Big\{\vts \widehat{\lvar} \mapsto \big[\vts \widehat{\lvar}\, \sigma^{\Kv_\iiv} (\kvh_2) + \gamma_\iiv\, \sigma^{\Kv_\iiv} (\kvh_1) \vts\big]^d\,\chi (\delta^{-3/4}\, \widehat{\lvar}) \vts\Big\}\, \calF\, \calT^{}_\iiv\, F\, (\xv, \lvar)
        \retss
        &= \scrF^{}_\Kv\, \calT^*_\iiv\, \big[\vts \sigma^{\Kv_\iiv} (\kvh_2)\, D_\zeta + \gamma_\iiv\, \sigma^{\Kv_\iiv} (\kvh_1) \vts\big]^d\, \chi(\delta^{-3/4}\, D_\zeta)\, \calT^{}_\iiv\, F\, (\xv, \lvar).
    \end{align*}
    Using the relation $f(D_\zeta) = \calU_\delta^*\, f(\delta\, D_\zeta)\, \calU^{}_\delta$ with $f(\lvar) := [\vts \sigma^{\Kv_\iiv} (\kvh_2)\, \lvar + \gamma_\iiv\, \sigma^{\Kv_\iiv} (\kvh_1) \vts]^d\, \chi(\delta^{-3/4}\, \lvar)$, we deduce that
    \begin{gather*}
        \scrF_\Kv\, \calN_\iiv\, F (\xv, \lvar) = \scrF_\Kv\, \calT^*_\iiv\, \calU^*_\delta\, \chi(\delta^{1/4}\, D_\zeta)\, \big[\vts \delta\, \sigma^{\Kv_\iiv} (\kvh_2)\, D_\zeta + \gamma_\iiv\, \sigma^{\Kv_\iiv} (\kvh_1) \vts\big]^d\, \chi(\delta^{1/4}\, D_\zeta)\, \calU_\delta\, \calT_\iiv\, F\, (\xv, \lvar)
        \retss
        = \delta^d\, \scrF^{}_\Kv\, \calJ^*_{\delta, \iiv}\, \chi(\delta^{1/4}\, D_\zeta)\, \big[\vts \sigma^{\Kv_\iiv} (\kvh_2)\, D_\zeta + \delta^{-1}\, \gamma_\iiv\, \sigma^{\Kv_\iiv} (\kvh_1) \vts\big]^d\, \chi(\delta^{1/4}\, D_\zeta)\, \calJ^{}_{\delta, \iiv}\, F\, (\xv, \lvar),
    \end{gather*}
    where the last equality follows from the definition of $\calJ_{\delta, \iiv} := \calU_\delta\, \calT_\iiv$. Finally, applying $\scrF^*_\Kv$ leads to \eqref{eq:asymptotics_H0near_step2_goal}.
    \end{dem}

    \subsection{Proof of Proposition \ref{prop:asymptotics_HdeltaNEAR}: expansion of the domain wall contribution}\label{sec:proof_asymptotics_HtildeDeltaNEAR}
    In Section \ref{sec:contribution_domain_wall}, we study the contributions from the domain wall perturbation, when projected on near components; see Proposition \ref{prop:DW_contributions}. In Section \ref{sec:completion_expansion_HdeltaNEAR}, we will combine this with results from Section \ref{sec:unperturbed_operator_expansion}, and complete the proof of Proposition \ref{prop:asymptotics_HdeltaNEAR}.

    \subsubsection{Expansion of $\ \Pi_{\bbL'}\, (\vts \nabla_\xv \cdot a(\xv)\, \sigma_2\, \kappa_\delta (\xv, s)\, \nabla_\xv \vts)\, \Pi_{\bbL''}$\ ; contribution from the domain wall}\label{sec:contribution_domain_wall}
    Define 
    \begin{equation*}
    \kappa_\delta (\xv, s) := \kappa \big(\delta\vts \kvh_2 \cdot (\xv + s \vvv_2) \big).
    \end{equation*}
    The main result of this section is the following.
    \begin{proposition}\label{prop:DW_contributions}
    As $\delta \to 0$, the following expansion holds in $\scrL(\scrX_{\bbL''}, \scrX_{\bbL'})$, uniformly in subsets $\bbL', \bbL'' \subseteq \bbL (\delta^{3/4})$:
    \begin{multline}\label{eq:K_to_K_dw_coupling}
        \Pi_{\bbL'}\, \big(\vts \nabla_\xv \cdot a(\xv)\, \sigma_2\, \kappa_\delta (\xv, s)\, \nabla_\xv \vts\big)\, \Pi_{\bbL''}
        \retss
        = \sum_{\iiv \in \bbL' \cap \bbL''} \calJ^*_{\delta, \iiv}\, \chi (\delta^{1/4}\, D_\zeta)\, \vartheta^{\Kv_\iiv}\, \sigma_3\, \kappa (\zeta)\, \chi (\delta^{1/4}\, D_\zeta)\, \calJ^{}_{\delta, \iiv} + \calO (\delta^{3/4}).
    \end{multline}
    \end{proposition}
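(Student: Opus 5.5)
The plan is to reduce the projected domain-wall operator to an explicit two-index family $\calJ^*_{\delta,\iiv}\,(\cdots)\,\calJ_{\delta,\jjv}$, and then treat the diagonal terms ($\iiv=\jjv$) and the off-diagonal terms ($\iiv\neq\jjv$) separately.

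\emph{Step 1 (insert the projector expansions).} Set $W_\delta := \nabla_\xv \cdot a(\xv)\,\sigma_2\,\kappa_\delta\,\nabla_\xv$. By Proposition \ref{prop:smoothing_near_projectors}, $W_\delta\,\Pi_{\bbL''}$ and $\Pi_{\bbL'}\,W_\delta$ (the latter by duality) are bounded on $L^2_{\Kv\cdot\vvh_1}(\cylaug)$, uniformly in $\delta$ and in $\bbL',\bbL''$. I will therefore substitute the expansion \eqref{eq:asymptotics_Pi_near} of Proposition \ref{prop:asymptotics_near_projectors} for both $\Pi_{\bbL'}$ and $\Pi_{\bbL''}$. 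To keep the $\calO(\delta^{3/4})$ remainder bounded after $W_\delta$ is applied, I replace one projector at a time, writing $\Pi=\Pi^2$ so that a smoothing factor always sits next to $W_\delta$. This yields
\[
\Pi_{\bbL'}\,W_\delta\,\Pi_{\bbL''}=\sum_{\iiv\in\bbL',\,\jjv\in\bbL''}\calJ^*_{\delta,\iiv}\,\chi(\delta^{1/4}D_\zeta)\,\calJ_{\delta,\iiv}\,W_\delta\,\calJ^*_{\delta,\jjv}\,\chi(\delta^{1/4}D_\zeta)\,\calJ_{\delta,\jjv}+\calO(\delta^{3/4}).
\]

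\emph{Step 2 (apply $W_\delta$ to a band-limited modulated mode).} Fix $g$ with $\widehat g$ supported in $|\lvar|\le\delta^{-1/4}$. Unwinding \eqref{eq:averaging_operator_adjoint} and \eqref{eq:dilation_operators}, $\calJ^*_{\delta,\jjv}g$ equals $\delta^{1/2}\varphi_\jjv(\xv,s)^\transp g(\delta\,\kvh_2\cdot(\xv+s\vvv_2))$. When $\nabla_\xv$ falls on $g(\delta\,\cdot)$, it produces a factor of size $\delta\cdot\delta^{-1/4}=\delta^{3/4}$. When it falls on $\kappa_\delta$, it produces a factor $\calO(\delta)$, since $\kappa'\in L^\infty$. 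When it falls on the phase $\euler^{\icplx\gamma_\jjv\kvh_1\cdot\xv}$, it produces $\gamma_\jjv=\calO(\delta^{3/4})$, because $\jjv\in\bbL(\delta^{3/4})$. Hence
\[
W_\delta\,\calJ^*_{\delta,\jjv}g=\delta^{1/2}\,\kappa_\delta\,\euler^{\icplx(\gamma_\jjv\kvh_1\cdot\xv+(\lvar_\jjv+\Kv\cdot\vvv_2)s)}\,\big[\nabla\cdot(a\sigma_2\nabla\Phi^{\Kv_\jjv})\big]^\transp g(\delta\,\cdot)+\calO(\delta^{3/4})\|g\|.
\]
The main term has the form $\calJ^*_{\delta,\jjv}$-like with the periodic profile $\nabla\cdot(a\sigma_2\nabla\Phi^{\Kv_\jjv})$ in place of $\Phi^{\Kv_\jjv}$. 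Because the columns $\calJ_\delta$ and $\calJ^*_\delta$ are bounded (Proposition \ref{prop:pties_averaging_operator}), these remainders sum to $\calO(\delta^{3/4})$ in operator norm.

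\emph{Step 3 (diagonal terms).} For $\iiv=\jjv$, the exponential factors in $\calT_\iiv$ cancel. The cell average of $\overline{\Phi^{\Kv_\iiv}}\,\nabla\cdot(a\sigma_2\nabla\Phi^{\Kv_\iiv})^\transp$ is exactly $\vartheta^{\Kv_\iiv}\sigma_3$ by Proposition \ref{prop:expression_Phi_l_W_Phi_j}. Along the fibre $s=\zeta-\kvh_2\cdot\xv$ used in \eqref{eq:averaging_operator}, the slow factor becomes $\kappa(\delta\,\kvh_2\cdot(\xv+s\vvv_2))$ with argument $\delta(\zeta+\kvh_2\cdot\xv-\kvh_2\cdot\xv)$ up to an $\calO(\delta)$ variation over $\Omega$. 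The same holds for $g$, with an $\calO(\delta^{3/4})$ variation. After conjugating by $\calU_\delta$, this gives $\vartheta^{\Kv_\iiv}\sigma_3\,\kappa(\zeta)\,g$ up to $\calO(\delta^{3/4})$. This produces the leading sum over $\bbL'\cap\bbL''$ in \eqref{eq:K_to_K_dw_coupling}.

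\emph{Step 4 (off-diagonal terms, the main obstacle).} For $\iiv\neq\jjv$ there is no $\delta$-small prefactor, so the estimate must come from oscillation. I would pass to the $\scrF_\Kv$ side, following the computation in part $(d)$ of Proposition \ref{prop:pties_averaging_operator}. There the surviving factor is a Fourier coefficient at $\floK_\iiv-\floK_\jjv\neq0$ of a $\Lambda$-periodic function, multiplied by the transverse phase $\euler^{\icplx(\lvar_\jjv-\lvar_\iiv)\zeta}$. Since $\lvar_\iiv-\lvar_\jjv$ is bounded below, with $|\lvar_\iiv-\lvar_\jjv|\gtrsim 1+|m_\iiv-m_\jjv|$ for $\iiv\neq\jjv$ (the intervals are disjoint, cf.\ \eqref{eq:pairwise_disjoint_near_intervals}), I integrate by parts once or twice in $\zeta$. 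Each integration by parts lands a $\zeta$-derivative on $\kappa(\delta\zeta)\,g(\delta\zeta)$ or on the cut-off test function, and so gains $\delta^{3/4}$. This gives a matrix bound $\|\,(\iiv,\jjv)\text{ block}\,\|\lesssim\delta^{3/4}(1+|\lvar_\iiv-\lvar_\jjv|)^{-2}$. I would then sum over all $\iiv\neq\jjv$ with a Schur test on the index set. To do this I would use the fact that each $m$ labels at most a bounded number of indices in $\bbL$, together with the $\ell^2$-boundedness of the columns $\calJ_\delta,\calJ^*_\delta$, to conclude that the off-diagonal part is $\calO(\delta^{3/4})$ uniformly in $\bbL',\bbL''$.

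The delicate points are twofold. First, the error constants must be uniform over the infinitely many indices and over the large mismatch $\delta^{-1}\gamma_\iiv$. Second, $\kappa$ does not decay, so the integration by parts must be arranged so that derivatives always hit $\kappa'$ or the band-limited $g$, never $\kappa$ itself in a non-integrable way.
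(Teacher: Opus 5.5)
Your Steps 1--3 essentially reproduce the paper's Lemma \ref{lem:K_to_K_dw_coupling_in_between} together with the diagonal identity $M_{\iiv,\iiv}=\vartheta^{\Kv_\iiv}\sigma_3$. One simplification: nothing is approximate along the fibre. Since $\kvh_2\cdot\vvv_2=1$, on $s=\zeta-\kvh_2\cdot\xv$ one has $\kvh_2\cdot(\xv+s\vvv_2)=\zeta$ exactly. So $\kappa_\delta$ and the envelope are constant on each fibre, and the averaged main term is exactly $M_{\iiv,\jjv}\,\euler^{\icplx\delta^{-1}(\lvar_\jjv-\lvar_\iiv)\zeta}\kappa(\zeta)$ applied to the band-limited $g$.

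The gap is in the summation of Step 4. Your Schur test needs the per-block decay $(1+|\lvar_\iiv-\lvar_\jjv|)^{-2}$, and that requires two integrations by parts. Work in the variable of $\calT_\iiv$ (before $\calU_\delta$) and set $c=\lvar_\jjv-\lvar_\iiv$.

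The first integration by parts is harmless. It leaves a term with integrand $\delta\,\euler^{\icplx c\zeta}\kappa'(\delta\zeta)$, which contributes only $\calO(\delta|c|^{-1})$. Improving this term means differentiating $\kappa'$, i.e. assuming $\kappa''\in L^\infty$, and Definition \ref{defi:dw} only assumes $\kappa'\in L^\infty$. With one integration by parts you get $\calO(\delta^{3/4}|c|^{-1})$ per block. Since $\sum_{m'}(1+|m-m'|)^{-1}=\infty$, the Schur test does not close.

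This is not just an artifact of the method. Take $\kappa(\zeta)=\tanh\zeta+\psi(\zeta)\,\zeta^{-1}\sin(\zeta^2)$, with $\psi$ a smooth cutoff vanishing near $0$. This wall is admissible, since $\kappa'$ is bounded, but $\kappa''$ is unbounded. By stationary phase, the Fourier transform of $\kappa(\delta\,\cdot)$ at frequency $c$ has size of order $|c|^{-1}$. So the $\zeta$-oscillation alone cannot give better than $|\lvar_\iiv-\lvar_\jjv|^{-1}$ decay. If you add $\kappa''\in L^\infty$ as a hypothesis, your argument does close.

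The missing idea is to get summability from the coefficients $M_{\iiv,\jjv}$ of \eqref{eq:def_M_I_J}. You already recognized them as Fourier coefficients but did not use their decay. For fixed $\Kv_\star$, the map $m\mapsto\floK_{(\Kv_\star,m)}$ is injective into $\Lambda^*$. Hence the column sums $\sup_\jjv\sum_\iiv|M_{\iiv,\jjv}|$, and likewise the row sums, are bounded by $\sum_{\kv\in\Lambda^*}\bigl|\int_\Omega\euler^{\icplx\kv\cdot\xv}u(\xv)\,d\xv\bigr|<\infty$, where $u$ is a smooth $\Lambda$-periodic function built from $a$ and $\Phi^{\Kv_\star}$. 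This is Lemma \ref{lem:dominated_boundedness}.

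With that in hand, a single integration by parts suffices. In the paper it appears as the factorization $\chi(\delta^{-3/4}(D_\zeta+\lvar_\jjv-\lvar_\iiv))=\chi(\delta^{-3/4}(D_\zeta+\lvar_\jjv-\lvar_\iiv))D_\zeta^{-1}\cdot D_\zeta$, combined with $[D_\zeta,\kappa(\delta\zeta)]=-\icplx\delta\kappa'(\delta\zeta)$. This gives the uniform per-block factor $C\delta^{3/4}$ of \eqref{eq:dw-bound}, depending only on $\|\kappa\|_\infty$ and $\|\kappa'\|_\infty$. No decay in $\lvar_\iiv-\lvar_\jjv$ is needed.
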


    \noindent
    The main difficulty in proving Proposition \ref{prop:DW_contributions} is that, unlike $\calH^0_{\AUG, \Kv \cdot \vvh_1} - E_D$, the domain wall term $\nabla_\xv \cdot a(\xv)\, \sigma_2\, \kappa_\delta (\xv, s)\, \nabla_\xv$ does \emph{not} commute with $\Pi_\iiv$. As a result, it couples elements lying in the ranges of distinct projectors $\Pi_\iiv$ and $\Pi_\jjv$ for $\iiv \neq \jjv$. We address this issue in two steps, starting with the following lemma.

    \begin{lemma}\label{lem:K_to_K_dw_coupling_in_between}
    As $\delta \to 0$, the following holds in $\scrL(\scrX_{\bbL''}, \scrX_{\bbL'})$, uniformly $\bbL', \bbL'' \subseteq \bbL (\delta^{3/4})$:
    \begin{multline}\label{eq:K_to_K_dw_coupling_in_between}
        \Pi_{\bbL'}\, \big(\vts \nabla_\xv \cdot a(\xv)\, \sigma_2\, \kappa_\delta (\xv, s)\, \nabla_\xv \vts\big)\, \Pi_{\bbL''}
        \retss
        = \sum_{\iiv \in \bbL', \jjv \in \bbL''} \calJ^*_{\delta, \iiv}\, \chi (\delta^{1/4}\, D_\zeta)\, M_{\iiv, \jjv}\, \euler^{\vts \icplx\, \delta^{-1}\, (\lvar_\jjv - \lvar_\iiv)\, \zeta}\, \kappa (\zeta)\, \chi (\delta^{1/4}\, D_\zeta)\, \calJ^{}_{\delta, \jjv} + \calO (\delta^{3/4}).
    \end{multline}
    Here $M_{\iiv, \jjv} \in \C^{2 \times 2}$ is defined for any $\iiv, \jjv \in \bbL$ as
    \begin{equation}\label{eq:def_M_I_J}
        M_{\iiv, \jjv} := \int_\Omega \euler^{\vts \icplx\, (\Kv_\iiv + \floK_\iiv - \Kv_\jjv - \floK_\jjv) \cdot \xv}\; \overline{\Phi^{\Kv_\iiv} (\xv)}\, \nabla \cdot a (\xv)\, \sigma_2\, \nabla \Phi^{\Kv_\jjv} (\xv)^\transp\, d\xv,
    \end{equation}
    where $\floK_\iiv \in \Lambda^*$ is introduced in the Wrapping Lemma \ref{lem:wrapping_lemma}; see \eqref{eq:def_floK}.
    \end{lemma}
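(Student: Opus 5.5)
We approximate the near projectors on both sides by their leading terms, and then evaluate the middle operator explicitly on the range of $\calT^*_\jjv$. Write $W_\delta := \nabla_\xv \cdot a(\xv)\, \sigma_2\, \kappa_\delta (\xv, s)\, \nabla_\xv$. Since $\kappa_\delta$ is bounded with $\nabla_\xv \kappa_\delta = \calO(\delta)$, and $\Pi_{\bbL'}, \Pi_{\bbL''}$ map $L^2$ boundedly into $H^1_{\Kv\cdot\vvh_1,\xv}$ (Proposition \ref{prop:smoothing_near_projectors}), the operator $\Pi_{\bbL'} W_\delta \Pi_{\bbL''}$ is bounded, and $W_\delta$ is bounded from $H^1$ to $H^{-1}$. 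We therefore substitute $\Pi_{\bbL''} = \calJ^*_\delta \mathds{1}_{\bbL''}\chi(\delta^{1/4}D_\zeta)\mathds{1}_{\bbL''}\calJ_\delta + \calO(\delta^{3/4})$ from Proposition \ref{prop:asymptotics_H0near}(b) with $d=0$, and likewise for $\Pi_{\bbL'}$ on the left. For the error to remain $\calO(\delta^{3/4})$ after composition with $W_\delta$, the remainder must be $\calO(\delta^{3/4})$ in $\scrL(L^2, H^1_{\Kv\cdot\vvh_1,\xv})$, not only in $\scrL(L^2)$. We will get this by writing $\Pi_{\bbL''} = \Pi_{\bbL''}\Pi_{\bbL''}$ and using that $\Pi_{\bbL''}$ is smoothing, or, more directly, by repeating Step 1 of the proof of Proposition \ref{prop:asymptotics_H0near} in the $H^1$ norm: the Taylor remainder of $\kv\mapsto \euler^{-\icplx \kv\cdot\xv}\Pi^0_\kv \euler^{\icplx\kv\cdot\xv}$ is smooth in $\xv$, uniformly. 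The leading term then becomes $\sum_{\iiv\in\bbL',\jjv\in\bbL''} \calJ^*_{\delta,\iiv}\chi\,\calJ_{\delta,\iiv} W_\delta \calJ^*_{\delta,\jjv}\chi\,\calJ_{\delta,\jjv}$. Here uniformity in $\bbL',\bbL''$ is needed for the double sum; we obtain it by organizing the estimate via the column operator $\calJ_\delta$, which is bounded by \eqref{eq:T_J_delta_are_bounded}, and its adjoint.

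The core computation is $\calT_\iiv W_\delta \calT^*_\jjv g$ for band-limited $g$, meaning $\widehat g$ supported in $|\lvar|\le \delta^{3/4}$ after rescaling. By \eqref{eq:averaging_operator_adjoint}, $\calT^*_\jjv g = \varphi_\jjv(\xv,s)^\transp g(\kvh_2\cdot(\xv+s\vvv_2))$. Applying $W_\delta$ and using the chain rule, the $\nabla_\xv$ derivatives that fall on $g$ or on $\kappa_\delta$ produce factors $\kvh_2 g'$, of size $\calO(\delta^{3/4})$ by band-limitation in the unscaled variable, and $\delta$ respectively. The derivatives falling on $\euler^{\icplx \gamma_\jjv \kvh_1\cdot\xv}$ give $\gamma_\jjv = \calO(\delta^{3/4})$. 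Hence, up to $\calO(\delta^{3/4})$, $W_\delta \calT^*_\jjv g \approx \euler^{\icplx(\gamma_\jjv\kvh_1\cdot\xv + (\lvar_\jjv+\Kv\cdot\vvv_2)s)}\,[\nabla\cdot a\sigma_2\nabla\Phi^{\Kv_\jjv}]^\transp\,\kappa_\delta\, g(\kvh_2\cdot(\xv+s\vvv_2))$. We then apply $\calT_\iiv$ from \eqref{eq:averaging_operator}, setting $s=\zeta-\kvh_2\cdot\xv$, so that the slow arguments become $\kappa(\delta\zeta)g(\zeta)$, independent of $\xv$. The phases combine into $\euler^{\icplx(\gamma_\jjv-\gamma_\iiv)\kvh_1\cdot\xv}\euler^{\icplx(\lvar_\jjv-\lvar_\iiv)(\zeta-\kvh_2\cdot\xv)}$. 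By the Wrapping Lemma, $\Kv_\iiv+\floK_\iiv = \Kv+\lvar_\iiv\kvh_2-\gamma_\iiv\kvh_1$, so the $\xv$-dependent phase is exactly $\euler^{\icplx(\Kv_\iiv+\floK_\iiv-\Kv_\jjv-\floK_\jjv)\cdot\xv}$, and the $\Omega$-integral yields $M_{\iiv,\jjv}$ of \eqref{eq:def_M_I_J}. The integrand is $\Lambda$-periodic, so integrating over $\Omega$ is legitimate. This gives $\calT_\iiv W_\delta\calT^*_\jjv g = M_{\iiv,\jjv}\euler^{\icplx(\lvar_\jjv-\lvar_\iiv)\zeta}\kappa(\delta\zeta)g + \calO(\delta^{3/4})$. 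Conjugating by $\calU_\delta$ rescales this to $\euler^{\icplx\delta^{-1}(\lvar_\jjv-\lvar_\iiv)\zeta}\kappa(\zeta)$, as in \eqref{eq:K_to_K_dw_coupling_in_between}.

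The main obstacle is summing these $\calO(\delta^{3/4})$ remainders over the infinite double index set $(\iiv,\jjv)$ with a uniform operator-norm bound. Per-pair bounds are not summable, so we will avoid them. Instead, we estimate the remainder operators in aggregate: each error term has the form $\calJ^*_\delta\,\mathds{1}_{\bbL'}\chi\,(\text{matrix in } \iiv,\jjv)\,\chi\,\mathds{1}_{\bbL''}\calJ_\delta$, and we recognize each as $\calJ_\delta^* \chi\, \calJ_\delta\, R\, \calJ_\delta^*\chi\calJ_\delta$ with $R$ an honest operator on $L^2_{\Kv\cdot\vvh_1}(\cylaug)$ of norm $\calO(\delta^{3/4})$. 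Concretely, $R$ is one of $\delta\,\nabla\cdot a\sigma_2(\nabla\kappa_\delta)$, or $W_\delta$ composed with the first-order correction coming from $g'$ or $\gamma_\jjv$. Its norm is then controlled using boundedness of $\calJ_\delta$ on $H^q$ and the band-limitation. The boundedness of the column operator is exactly what permits this resummation.
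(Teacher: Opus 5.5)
Your proposal is correct and follows essentially the paper's proof. You replace both projectors by $\calJ^*_\delta\mathds{1}\chi(\delta^{1/4}D_\zeta)\mathds{1}\calJ_\delta$, expand $W_\delta\calJ^*_{\delta,\jjv}\chi(\delta^{1/4}D_\zeta)g$ by the product rule, and read off $M_{\iiv,\jjv}$ from the term where both derivatives act on the fast factor, via the Wrapping Lemma phase identity. You then bound the $\gamma_\jjv$, $\delta D_\zeta$ and $\nabla\kappa_\delta$ corrections in aggregate through uniformly bounded column operators. The paper packages this last step with the column operators $\calS_l$ built from $\partial_{x_l}\Phi^{\Kv_\jjv}$ and a block-diagonal factor $\Upsilon^\delta_j(D_\zeta)\chi(\delta^{1/4}D_\zeta)$ of norm $\calO(\delta^{3/4})$, rather than an operator $R$ on the cylinder side.

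One small caveat: $\Pi_{\bbL''}=\Pi_{\bbL''}\Pi_{\bbL''}$ plus smoothing does not by itself make the projector remainder small in $\scrL(L^2,H^1)$. It leaves a term $(I-\Pi_{\bbL''})\calJ^*_\delta\mathds{1}\chi\mathds{1}\calJ_\delta$ uncontrolled. Rely instead on your second route, since the fiberwise Taylor expansion of the rank-two projections holds in $H^q$. Alternatively, note that $\Pi_{\bbL'}W_\delta$ extends boundedly to $L^2$ by duality, and $W_\delta\calJ^*_\delta\mathds{1}_{\bbL''}\chi\mathds{1}_{\bbL''}\calJ_\delta$ is bounded; together these make the $\scrL(L^2)$ expansion of the projectors already sufficient.
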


    \begin{dem}
    By Proposition \ref{prop:asymptotics_near_projectors}, $\Pi_{\bbL'} = \calJ^*_\delta\, \mathds{1}_{\bbL'}\, \chi (\delta^{1/4}\, D_\zeta)\, \mathds{1}_{\bbL'}\, \calJ^{}_\delta + \calO (\delta^{3/4})$ in $\scrL(L^2_{\Kv \cdot \vvh_1} (\cylaug))$, uniformly in $\bbL' \subseteq \bbL (\delta^{3/4})$. By writing the same expansion for $\Pi_{\bbL''}$, and by using the boundedness properties of $\Pi_{\bbL'}, \Pi_{\bbL''}$  (Proposition \ref{prop:smoothing_near_projectors}), we deduce the following asymptotic expansion in $\scrL(\scrX_{\bbL''}, \scrX_{\bbL'})$, uniformly in $\bbL', \bbL'' \subseteq \bbL (\delta^{3/4})$:
    \begin{equation}\label{eq:dem_DW_0}
        \Pi_{\bbL'}\, \big(\vts \nabla_\xv \cdot a(\xv)\, \sigma_2\, \kappa_\delta (\xv, s)\, \nabla_\xv \vts\big)\, \Pi_{\bbL''} = \calJ^*_\delta\, \mathds{1}_{\bbL'}\, \calA^\delta\, \mathds{1}_{\bbL''}\, \calJ^{}_\delta + \calO (\delta^{3/4}).
    \end{equation}
    Here, $\calA^\delta \in \scrL(\ell^2 (\bbL; L^2(\R; \C^2)))$ is the block operator whose entries are given by
    \begin{equation}\label{eq:def_A_I_J_dem_DW}
        \calA^\delta_{\iiv, \jjv} := \left\{%
        \begin{array}{cl}
        \displaystyle \chi (\delta^{1/4}\, D_\zeta)\, \calJ^{}_{\delta, \iiv}\, \nabla_\xv\, a (\xv)\, \sigma_2\, \kappa_\delta (\xv, s)\, \nabla_\xv\, \calJ^*_{\delta, \jjv}\, \chi (\delta^{1/4}\, D_\zeta), & (\iiv, \jjv) \in \bbL' \times \bbL'',
        \ret
        0 & (\iiv, \jjv) \not\in \bbL' \times \bbL''.
        \end{array}
        \right.
    \end{equation}
    In what follows, we decompose $\calA^\delta_{\iiv, \jjv}$, with the goal of extracting the dominant term in \eqref{eq:K_to_K_dw_coupling_in_between}. 

    \vspace{1\baselineskip}\noindent
    \textbf{Step 1.} \textit{Preliminary decomposition of $\calA^\delta$}. We introduce the notations: for $g \in L^2 (\R; \C^2)$ and $\jjv \in \bbL$,
    \begin{align}
        a(\xv)\, \sigma_2 &:= (\widetilde{a}_{j, l} (\xv))_{j, l = 1, 2}, \nonumber
        \retss
        \xi_\jjv (\xv, s) &:= \gamma_\jjv\, \kvh_1 \cdot \xv + (\lvar_\jjv + \Kv \cdot \vvv_2)\vts s \ \ \textnormal{so that}\ \ \varphi_\jjv (\xv, s) \underset{\eqref{eq:mode_for_averaging_operator}}{=} \euler^{\vts \icplx\, \xi_\jjv (\xv, s)}\; \Phi^{\Kv_\jjv} (\xv), \nonumber
        \retss
        \Upsilon^\delta_{n, \jjv} (D_\zeta) &:= \delta\, \Upsilon^\delta_\jjv (0; D_\zeta) \cdot \bm{e}_n := \icplx \left(\gamma_\jjv \kvh^{(n)}_1 + \delta\, \kvh^{(n)}_2 D_\zeta \right), \quad n = 1, 2, \label{eq:def_Upsilon_l_J}
        \rets
        g_\delta (\zeta) &:= \chi (\delta^{1/4}\, D_\zeta)\, g (\zeta). \nonumber
    \end{align}
    Here, $\Upsilon^\delta_\jjv (\mu; D_\zeta)$ is the operator introduced in \eqref{eq:diff_op_slow_variable}, and $\bm{e}_1 := (1, 0)^\transp$, $\bm{e}_2 := (0, 1)^\transp$. Note that $g_\delta \in \scrC^\infty (\R)$, since $\chi (\delta^{1/4}\, \cdot)$ is compactly supported. We also require the following identity for future reference: for $u (\xv, \zeta) \in \scrC^1 (\R^3; \C^2)$, $f (\zeta) \in \scrC^1 (\R; \C^2)$, and $l' = 1, 2$, by the product rule,
    \begin{multline}
        U(\xv, s) = \euler^{\vts \icplx\, \xi_\jjv (\xv, s)}\; u (\xv, s)^\transp\, f (\zeta) \big|_{\zeta = \delta\, \kvh_2 \cdot (\xv + s \vvv_2)}
        \rets
        \begin{aligned}[b]
        \Longrightarrow \ \ \partial_{x_{l'}} U (\xv, s) &= \euler^{\vts \icplx\, \xi_\jjv (\xv, s)} \; \partial_{x_{l'}} u (\xv, s)^\transp f (\zeta) \big|_{\zeta = \delta\, \kvh_2 \cdot (\xv + s \vvv_2)}
        \\
        &+ \euler^{\vts \icplx\, \xi_\jjv (\xv, s)}\; u (\xv, s)^\transp \left(\vts \Upsilon^\delta_{l', \jjv} (D_\zeta)\, f \vts\right)\vts (\zeta) \big|_{\zeta = \delta\, \kvh_2 \cdot (\xv + s \vvv_2)}.
        \end{aligned}\label{eq:multi_scale_product_rule}
    \end{multline}
    Let $(\iiv, \jjv) \in \bbL' \times \bbL''$, $j, l \in \{1,2\}$ and $g \in L^2 (\R; \C^2)$. With the above notations, it follows from the definition $\calJ^*_{\delta, \jjv} := \calT^*_\jjv\, \calU^*_\delta$ and from the definition (\ref{eq:mode_for_averaging_operator}, \ref{eq:averaging_operator_adjoint}) of $\calT^*_\jjv$ that 
    \begin{equation*}
        \calJ^*_{\delta, \jjv}\, g_\delta\vts (\xv, s) = \delta^{1/2}\, \euler^{\vts \icplx\, \xi_\jjv (\xv, s)}\; \Phi^{\Kv_\jjv} (\xv)^\transp\, g_\delta (\zeta) \big|_{\zeta = \delta\, \kvh_2 \cdot (\xv + s \vvv_2)}.
    \end{equation*}
    Differentiating the above expression with respect to $x_l$, and applying the product rule \eqref{eq:multi_scale_product_rule} with $u(\xv, s) = \Phi^{\Kv_\jjv} (\xv)$, $f = g_\delta$, $l' = l$, we obtain:
    \begin{multline}
        \partial_{x_l}\, \calJ^*_{\delta, \jjv}\, g_\delta \vts (\xv, s) = \delta^{1/2}\, \euler^{\vts \icplx\, \xi_\jjv (\xv, s)} \; \partial_{x_l} \Phi^{\Kv_\jjv} (\xv)^\transp g_\delta (\zeta) \big|_{\zeta = \delta\, \kvh_2 \cdot (\xv + s \vvv_2)}
        \retss
        + \delta^{1/2}\, \euler^{\vts \icplx\, \xi_\jjv (\xv, s)}\; \Phi^{\Kv_\jjv} (\xv)^\transp \left(\vts \Upsilon^\delta_{l, \jjv} (D_\zeta)\, g_\delta \vts\right) (\zeta) \big|_{\zeta = \delta\, \kvh_2 \cdot (\xv + s \vvv_2)}. \label{eq:dem_DW_1}
    \end{multline}
    Multiplying \eqref{eq:dem_DW_1} by $\widetilde{a}_{j, l} (\xv)\, \kappa_\delta (\xv, s)$ yields an expression with two terms on the right-hand side. The first term is of the form $U (\xv, s)$ in \eqref{eq:multi_scale_product_rule} with $u(\xv, s) = \widetilde{a}_{j, l} (\xv)\, \kappa_\delta (\xv, s)\, \partial_{x_l} \Phi^{\Kv_\jjv} (\xv)$ and $f = g_\delta$. The second term is also of the form $U (\xv, s)$ with $u (\xv, s) = \widetilde{a}_{j, l} (\xv)\, \kappa_\delta (\xv, s)\, \Phi^{\Kv_\jjv} (\xv)$ and $f = \Upsilon^\delta_{l, \jjv} (D_\zeta)\, g_\delta$. Therefore, we can differentiate these two terms with respect to $x_j$ and use the product rule \eqref{eq:multi_scale_product_rule} with $l' = j$ to obtain
    \begin{align}
        &\displaystyle \partial_{x_j} \left(\vts \widetilde{a}_{j, l} (\xv)\, \kappa_\delta (\xv, s)\, \partial_{x_l}\, \calJ^*_{\delta, \jjv}\, g_\delta \vts\right) (\xv, s) \nonumber
        \rets
        &\begin{aligned}\label{eq:dem_DW_2}
        &= \left. \delta^{1/2}\, \euler^{\vts \icplx\, \xi_\jjv (\xv, s)}\; %
        \partial_{x_j} \left(\vts \widetilde{a}_{j, l} (\xv)\, \kappa_\delta (\xv, s)\, \partial_{x_l} \Phi^{\Kv_\jjv} (\xv)^\transp \vts\right) \times g_\delta (\zeta) \right|_{\zeta = \delta\, \kvh_2 \cdot (\xv + s \vvv_2)}
        \retss
        &+\; \delta^{1/2}\, \euler^{\vts \icplx\, \xi_\jjv (\xv, s)} \left(\vts \widetilde{a}_{j, l} (\xv)\, \kappa_\delta (\xv, s)\, \partial_{x_l}\, \Phi^{\Kv_\jjv} (\xv)^\transp \vts\right) \times \left. \left(\vts \Upsilon^\delta_{j, \jjv} (D_\zeta)\, g_\delta \vts\right) (\zeta) \right|_{\zeta = \delta\, \kvh_2 \cdot (\xv + s \vvv_2)}
        \retss
        &+\; \delta^{1/2}\, \euler^{\vts \icplx\, \xi_\jjv (\xv, s)} \partial_{x_j} \left(\vts \widetilde{a}_{j, l} (\xv)\, \kappa_\delta (\xv, s)\, \Phi^{\Kv_\jjv} (\xv)^\transp \vts\right) \times \left. \left(\vts \Upsilon^\delta_{l, \jjv} (D_\zeta)\, g_\delta \vts\right) (\zeta) \right|_{\zeta = \delta\, \kvh_2 \cdot (\xv + s \vvv_2)}
        \retss
        &+\; \delta^{1/2}\, \euler^{\vts \icplx\, \xi_\jjv (\xv, s)} \left(\vts \widetilde{a}_{j, l} (\xv)\, \kappa_\delta (\xv, s)\, \Phi^{\Kv_\jjv} (\xv)^\transp \vts\right) \times \left. \left(\vts \Upsilon^\delta_{j, \jjv} (D_\zeta)\, \Upsilon^\delta_{l, \jjv} (D_\zeta)\, g_\delta \vts\right) (\zeta) \right|_{\zeta = \delta\, \kvh_2 \cdot (\xv + s \vvv_2)}
        \end{aligned}
        \rets
        &:= G^{(0)}_{j, l, \delta, \jjv} (\xv, s) + G^{(1)}_{j, l, \delta, \jjv} (\xv, s) + G^{(2)}_{j, l, \delta, \jjv} (\xv, s) + G^{(3)}_{j, l, \delta, \jjv} (\xv, s), \label{eq:dem_DW_2bis}
    \end{align}
    where $G^{(n)}_{j, l, \delta, \jjv} (\xv, s) \in L^2_{\Kv \cdot \vvh_1} (\cylaug)$, $n = 0, \dots, 3$, are the four terms on the right-hand side of \eqref{eq:dem_DW_2}. Now, for $n \in \{0, \dots, 3\}$, let $\calA^{\delta, (n)}$ be the block operator acting on $\ell^2 (\bbL; L^2(\R; \C^2))$, whose entries are given for any $g \in L^2(\R; \C^2)$ by
    \begin{equation}\label{eq:def_A_n_I_J_delta_dem_DW}
        \calA^{\delta, (n)}_{\iiv, \jjv}\, g := \left\{
        \begin{array}{cl}
            \displaystyle\sum_{j, l = 1}^2 \chi (\delta^{1/4}\, D_\zeta)\, \calJ^{}_{\delta, \iiv}\, G^{(n)}_{j, l, \delta, \jjv} \in L^2(\R; \C^2), & (\iiv, \jjv) \in \bbL' \times \bbL'',
            \rets
            0, & (\iiv, \jjv) \not\in \bbL' \times \bbL''.
        \end{array}
        \right.
    \end{equation}
    Then summing \eqref{eq:dem_DW_2bis} over $j, l = 1, 2$, applying $\chi (\delta^{1/4}\, D_\zeta)\, \calJ^{}_{\delta, \iiv}$, and using $g_\delta = \chi (\delta^{1/4}\, D_\zeta)\, g$ yields an expression for the entries of $\calA^\delta$, \eqref{eq:def_A_I_J_dem_DW}: for $(\iiv, \jjv) \in \bbL' \times \bbL''$,
    \begin{equation}
        \calA^\delta_{\iiv, \jjv} \underset{\eqref{eq:def_A_I_J_dem_DW}}{=} \sum_{j, l = 1}^2 \chi (\delta^{1/4}\, D_\zeta)\, \calJ^{}_{\delta, \iiv}\, \partial_{x_j} \widetilde{a}_{j, l}(\xv)\, \kappa_\delta (\xv, s)\, \partial_{x_l}\, \calJ^*_{\delta, \jjv}\, \chi (\delta^{1/4}\, D_\zeta) \underset{(\ref{eq:dem_DW_2bis}, \ref{eq:def_A_n_I_J_delta_dem_DW})}{=} \sum_{n = 0}^3 \calA^{\delta, (n)}_{\iiv, \jjv}. \label{eq:dem_DW_2_summed}
    \end{equation}
    We claim (and prove in the next steps below) that
    \begin{align}\label{eq:calJ_G_0_I_J}
        \spforall (\iiv, \jjv) \in \bbL' \times \bbL'', \quad \calA^{\delta, (0)}_{\iiv, \jjv} &= \chi (\delta^{1/4}\, D_\zeta)\, M_{\iiv, \jjv}\, \euler^{\vts \icplx\, \delta^{-1}\, (\lvar_\jjv - \lvar_\iiv)\, \zeta}\, \kappa (\zeta)\, \chi (\delta^{1/4}\, D_\zeta),
        \rets
        \spforall n \in \{1, \dots, 3\}, \quad \calA^{\delta, (n)} &= \calO (\delta^{3/4}), \quad \textnormal{in}\ \ \scrL(\ell^2 (\bbL; L^2(\R; \C^2))), \label{eq:G_n_123_negligible}
    \end{align}
    uniformly in $\bbL', \bbL'' \subseteq \bbL (\delta^{3/4})$, where $M_{\iiv, \jjv} \in \C^{2 \times 2}$ is defined by \eqref{eq:def_M_I_J}. Note that if these two statements hold, then together with \eqref{eq:dem_DW_2_summed} and \eqref{eq:dem_DW_0} they directly imply the desired result, Lemma \ref{lem:K_to_K_dw_coupling_in_between}. It therefore remains to establish (\ref{eq:calJ_G_0_I_J}, \ref{eq:G_n_123_negligible}). We do so in the next two steps.

    \vspace{1\baselineskip} \noindent
    \textbf{Step 2.} \textit{Proof of Identity \eqref{eq:calJ_G_0_I_J}}. Let $(\iiv, \jjv) \in \bbL' \times \bbL''$. From the definitions \eqref{eq:mode_for_averaging_operator}--\eqref{eq:def_calJ}, it follows that for $\zeta \in \R$,
    \begin{equation*}
        \calJ^{}_{\delta, \iiv}\, G^{(0)}_{j, l, \delta, \jjv} (\zeta) = \delta^{-1/2}\,\int_\Omega \euler^{-\icplx\, \xi_\iiv \scalebox{0.82}{$($} \xv,\, \delta^{-1}\, \zeta - \kvh_2 \cdot \xv \scalebox{0.82}{$)$}}\; \overline{\Phi^{\Kv_\iiv} (\xv)}\, G^{(0)}_{j, l, \delta, \jjv} (\xv, \delta^{-1}\, \zeta - \kvh_2 \cdot \xv)\, d\xv.
    \end{equation*}
    Then substituting $G^{(0)}_{j, l, \delta, \jjv}$ with its definition \eqref{eq:dem_DW_2}, and using $\kappa_\delta (\xv, s) = \kappa (\delta\, \kvh_2 \cdot (\xv + s\vvv_2))$ as well as the identity $\xi_\jjv (\xv, s) - \xi_\iiv (\xv, s) = (\gamma_\jjv - \gamma_\iiv)\, \kvh_1 \cdot \xv + (\lvar_\jjv - \lvar_\iiv)\, s$, we obtain 
    \begin{gather*}
        \sum_{j, l = 1}^2 \calJ^{}_{\delta, \iiv}\, G^{(0)}_{j, l, \delta, \jjv} (\zeta) = M_{\iiv, \jjv}\, \euler^{\vts \icplx\, \delta^{-1}\, (\lvar_\iiv - \lvar_\jjv)\, \zeta}\, \kappa (\zeta)\, \left(\vts \chi (\delta^{1/4}\, D_\zeta)\, g \vts\right) (\zeta), \textWHERE
        \retss
        M_{\iiv, \jjv} := \sum_{j, l = 1}^2 \int_\Omega \euler^{\vts \icplx\, \scalebox{0.82}{$($} (\gamma_\jjv - \gamma_\iiv)\, \kvh_1 - (\lvar_\iiv - \lvar_\jjv) \kvh_2 \scalebox{0.82}{$)$} \cdot \xv}\; \overline{\Phi^{\Kv_\iiv} (\xv)}\, \partial_{x_j}\, \widetilde{a}_{j, l}(\xv)\, \partial_{x_l} \Phi^{\Kv_\jjv} (\xv)^\transp\, d\xv.
    \end{gather*}
    Finally, using the relation $(\gamma_\jjv - \gamma_\iiv)\, \kvh_1 - (\lvar_\jjv - \lvar_\iiv)\, \kvh_2 = \Kv_\iiv + \floK_\iiv - \Kv_\jjv - \floK_\jjv$, which follows from Part \ref{lem:veps_neighborhood_K_d} of the Wrapping Lemma \ref{lem:wrapping_lemma}, as well as the definition $a(\xv)\, \sigma_2 = (\widetilde{a}_{j, l} (\xv))_{j, l = 1, 2}$, we deduce that the above definition of $M_{\iiv, \jjv}$ coincides with \eqref{eq:def_M_I_J}. Substituting this in the definition \eqref{eq:def_A_n_I_J_delta_dem_DW}, we then obtain \eqref{eq:calJ_G_0_I_J}.

    \vspace{1\baselineskip} \noindent
    \textbf{Step 3.} \textit{Proof of Bound \eqref{eq:G_n_123_negligible}}. We only present the arguments for $n = 1$, since they extend straightforwardly to $n = 2, 3$. Let $g \in L^2 (\R; \C^2)$ and $\jjv \in \bbL$. By \eqref{eq:dem_DW_2},
    \begin{equation}\label{eq:def_G_1_delta_J_0}
        G^{(1)}_{j, l, \delta, \jjv} (\xv, s) := \delta^{1/2}\vts \euler^{\vts \icplx\, \xi_\jjv (\xv, s)}\vts \widetilde{a}_{j, l}(\xv)\vts \kappa_\delta (\xv, s)\, \partial_{x_l} \Phi^{\Kv_\jjv} (\xv)^\transp \left(\vts \Upsilon^\delta_{j, \jjv} (D_\zeta)\, g_\delta \vts\right) (\zeta) \big|_{\zeta = \delta\, \kvh_2 \cdot (\xv + s \vvv_2)}.
    \end{equation}
    To reformulate this definition, introduce the operator $\calS^{}_{l, \jjv} \in \scrL (L^2_{\Kv \cdot \vvh_1} (\cylaug), L^2 (\R; \C^2))$ whose adjoint is defined by
    \begin{equation*}
        \calS^*_{l, \jjv} \in \scrL(L^2 (\R; \C^2), L^2_{\Kv \cdot \vvh_1} (\cylaug)), \quad \calS^*_{l, \jjv}\, g (\xv, s) := \euler^{\vts \icplx\, \xi_\jjv (\xv, s)}\, \partial_{x_l}\, \Phi^{\Kv_\jjv} (\xv)^\transp g (\kvh_2 \cdot (\xv + s \vvv_2)).
    \end{equation*} 
    Note that $\calS^*_{l, \jjv}$ is defined similarly to $\calT^*_\jjv$, \eqref{eq:averaging_operator_adjoint}, where $\Phi^{\Kv_\jjv}$ is replaced by \smash{$\partial_{x_l}\, \Phi^{\Kv_\jjv}$}. The same holds for $\calS^{}_{l, \jjv}$ in terms of $\calT_\jjv$. The definitions of $\calS^*_{l, \jjv}$, $\calU^*_\delta$, and $g_\delta = \chi (\delta^{1/4}\, D_\zeta)\, g$, substituted in Equation \eqref{eq:def_G_1_delta_J_0}, lead to
    \begin{equation}
        G^{(1)}_{j, l, \delta, \jjv} (\xv, s) := \widetilde{a}_{j, l} (\xv)\, \kappa_\delta (\xv, s)\, \calS^*_{l, \jjv}\; \calU^*_\delta\, \Upsilon^\delta_{j, \jjv} (D_\zeta)\, \chi (\delta^{1/4}\, D_\zeta)\, g\vts (\xv, s). \label{eq:def_G_1_delta_J}
    \end{equation}
    Summing \eqref{eq:def_G_1_delta_J} over $j, l = 1, 2$ and applying $\chi (\delta^{1/4}\, D_\zeta)\, \calJ_{\delta, \iiv}$ yields an expression for $\calA^{\delta, (1)}_{\iiv, \jjv}$, the operator defined by \eqref{eq:def_A_n_I_J_delta_dem_DW}. This expression can be written in the block operator form:
    \begin{align}
        &\calA^{\delta, (1)} = \sum_{j, l = 1}^2 \chi (\delta^{1/4}\, D_\zeta)\, \mathds{1}_{\bbL'}\; \calJ^{}_\delta\, \widetilde{a}_{j, l} (\xv)\, \kappa_\delta (\xv, s)\, \calS^*_l\; \calU^*_\delta\, \Upsilon^\delta_j (D_\zeta)\, \chi (\delta^{1/4}\, D_\zeta)\; \mathds{1}_{\bbL''}, \label{eq:dem_DW_5}
        \retss
        &\textnormal{where} \quad \calS_l := \begin{pmatrix} \vdots \\ \calS_{l, \jjv} \\ \vdots \end{pmatrix} \textAND \Upsilon^\delta_j (D_\zeta) := %
        \begin{pmatrix}
        \ddots & & (0)
        \\
        & \Upsilon^\delta_{j, \jjv} (D_\zeta) & 
        \\
        (0) & & \ddots
        \end{pmatrix}_{\jjv\in\bbL}. \nonumber
    \end{align}
    Recall that $\calT$ is bounded from $L^2_{\Kv \cdot \vvh_1} (\cylaug)$ to $\ell^2 (\bbL; L^2(\R; \C^2))$, uniformly in $\delta$, by Proposition \ref{prop:pties_averaging_operator:square_summability}. Moreover, note that the proof of this result only relies on the smoothness of $\Phi^{\Kv_\jjv}$, and therefore can be adapted to show a similar result for $\calS_l$. As a consequence, $\calS^*_l$ is bounded from $\ell^2 (\bbL; L^2(\R; \C^2))$ to $L^2_{\Kv \cdot \vvh_1} (\cylaug)$. Finally, since $\Upsilon^\delta_j (D_\zeta)$ is block-diagonal with entries given by $\smash{\icplx\, ( \gamma_\jjv \kvh^{(l)}_1 + \delta \kvh^{(l)}_2 D_\zeta )}$, and $\bbL (\delta^{3/4})$ is, by definition \eqref{eq:def_L_veps}, the set of indices $\jjv \in \bbL$ such that $|\gamma_\jjv| \leq \delta^{3/4}$, there exists a constant $C > 0$ depending on $\kvh_1, \kvh_2$ only, such that
    \begin{align*}
        \big\|\vts \Upsilon^\delta_j (D_\zeta)\, \chi (\delta^{1/4}\, D_\zeta)\, \mathds{1}_{\bbL''} \vts\big\|_{\scrL(\ell^2(\bbL; L^2 (\R)))} &= \sup_{\jjv \in \bbL''} \big\|\vts \big(\vts \gamma_\jjv\, \kvh_1^{(j)} + \delta\, \kvh_2^{(j)}\, D_\zeta \vts\big)\, \chi (\delta^{1/4}\, D_\zeta) \vts\big\|_{\scrL(L^2 (\R))} 
        \retss
        &= \sup_{\jjv \in \bbL''} \big\|\vts \lvar \mapsto \big(\vts \gamma_\jjv\, \kvh_1^{(j)} + \delta\, \kvh_2^{(j)}\, \lvar \vts\big)\, \chi (\delta^{1/4}\, \lvar) \vts\big\|_\infty \leq C\, \delta^{3/4}.
    \end{align*}
    Therefore, since $a(\xv)$ and $\kappa (\zeta)$ are bounded functions, we deduce from \eqref{eq:dem_DW_5} that $\calA^{\delta, (1)}$ is bounded in $\ell^2(\bbL; L^2 (\R; \C^2))$ with an $\calO (\delta^{3/4})$ norm.
    \end{dem}

    \vspace{1\baselineskip} \noindent
    To prove Proposition \ref{prop:DW_contributions}, we also require the following useful lemma.

    \begin{lemma}\label{lem:dominated_boundedness}
    For $\iiv, \jjv \in \bbL$, let $\xv\mapsto u_{\Kv_\iiv, \Kv_\jjv}(\xv) \in \scrC^\infty (\R^2 / \Lambda; \C^{2 \times 2})$ be a smooth, $2 \times 2$ matrix-valued, $\Lambda$--periodic function.  For $\iiv, \jjv \in \bbL$, consider an operator $\calA_{\iiv, \jjv} \in \scrL(\scrX, \scrY)$, where $(\scrX, \|\cdot\|_\scrX)$ and $(\scrY, \|\cdot\|_\scrY)$ are Hilbert spaces. Suppose that
    \begin{equation}\label{eq:assumption_dominated_boundedness}
        \spforall \iiv, \jjv \in \bbL, \quad \| \calA_{\iiv, \jjv} \|_{\scrL(\scrX, \scrY)} \leq  \left| \int_\Omega \euler^{\vts \icplx\, (\floK_\iiv - \floK_\jjv) \cdot \xv}\; u_{\Kv_\iiv, \Kv_\jjv} (\xv)\, d\xv \, \right|,
    \end{equation}
    where $|\cdot|$ denotes the  matrix $2$--norm. Then, the block operator $\calA := (\calA_{\iiv, \jjv})_{\iiv, \jjv \in \bbL}$ is bounded from $\ell^2(\bbL; \scrX)$ to $\ell^2(\bbL; \scrY)$.
    \end{lemma}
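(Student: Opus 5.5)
The plan is to use a Schur test on the block operator $\calA$. If $w_{\iiv,\jjv} := \|\calA_{\iiv,\jjv}\|_{\scrL(\scrX,\scrY)}$ satisfies
\[
\sup_{\iiv}\sum_{\jjv} w_{\iiv,\jjv} < \infty \quad\text{and}\quad \sup_{\jjv}\sum_{\iiv} w_{\iiv,\jjv} < \infty,
\]
then for $\bm{g}\in\ell^2(\bbL;\scrX)$ we can bound $\|(\calA\bm{g})_\iiv\|_\scrY \le \sum_\jjv w_{\iiv,\jjv}\|g_\jjv\|_\scrX$. Cauchy--Schwarz in the form $w = w^{1/2}w^{1/2}$ then gives $\|\calA\bm{g}\|^2 \le (\sup_\iiv\sum_\jjv w)(\sup_\jjv\sum_\iiv w)\|\bm{g}\|^2$. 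The task therefore reduces to bounding the row and column sums of the dominating quantity in \eqref{eq:assumption_dominated_boundedness}.

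We first split $\bbL=\{\Kv,\Kv'\}\times\Z$ according to the pair $(\Kv_\iiv,\Kv_\jjv)$, which gives four sub-blocks, each with a fixed smooth periodic $u=u_{\Kv_\iiv,\Kv_\jjv}$. The integral $\int_\Omega \euler^{\icplx(\floK_\iiv-\floK_\jjv)\cdot\xv}u(\xv)\,d\xv$ is $|\Omega|$ times the Fourier coefficient $\widehat{u}_{\kv}$ with $\kv = \floK_\jjv-\floK_\iiv\in\Lambda^*$. Since $u$ is smooth, \eqref{eq:link_Sobolev_regularity_decay_Fourier_coeffs} gives $|\widehat{u}_\kv| \le C_N(1+|\kv|)^{-N}$ for every $N$.

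The key geometric input comes from \eqref{eq:def_floK}: $\floK_\iiv = 2\pi n_1(\iiv)\kv_1 + 2\pi m_\iiv\kv_2$, so $(\floK_\jjv-\floK_\iiv)\cdot\vvv_2 = 2\pi(m_\jjv-m_\iiv)$. Consequently $|\floK_\jjv-\floK_\iiv| \ge 2\pi|m_\jjv-m_\iiv|/|\vvv_2|$. For fixed $\iiv$ and fixed $\Kv_\jjv$, the map $\jjv\mapsto m_\jjv$ is a bijection onto $\Z$. This gives
\[
\sum_{\jjv} w_{\iiv,\jjv} \le \sum_{\Kv_\jjv}\sum_{m\in\Z} C_2\,(1+c|m-m_\iiv|)^{-2} \le C,
\]
uniformly in $\iiv$, and the column sums are handled symmetrically.

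The only point that needs care is that the decay must be taken in $m$ alone. The first coordinate $n_1$ depends on $m$ through a floor function, but dropping it is harmless because the lower bound on $|\floK_\jjv-\floK_\iiv|$ only uses the $\vvv_2$-component. Nothing here depends on $\delta$, so the resulting bound is uniform, as needed in the subsequent application with $u = \overline{\Phi^{\Kv_\iiv}}\,\nabla\cdot a\sigma_2\nabla\Phi^{\Kv_\jjv\,\transp}$ multiplied by the appropriate phase factors.
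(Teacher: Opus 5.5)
Your proof is correct and follows the paper's argument: a Schur (Holmgren--Young) test on the row and column sums of $\|\calA_{\iiv,\jjv}\|$, combined with summability of the Fourier coefficients of the smooth periodic $u_{\Kv_\iiv,\Kv_\jjv}$. The only difference is the counting step. The paper uses injectivity of $\iiv \mapsto \floK_\iiv$ on each $\{\Kv_\star\}\times\Z$ to dominate every row or column sum by $\sum_{\kv\in\Lambda^*}\bigl|\int_\Omega \euler^{\icplx\kv\cdot\xv}u(\xv)\,d\xv\bigr|$. You instead combine rapid decay with $|\floK_\jjv-\floK_\iiv|\ge 2\pi|m_\jjv-m_\iiv|/|\vvv_2|$ to reduce to a one-dimensional sum in $m$.
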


    \begin{dem}
    By the following Schur-Holmgren-Young type inequality, the operator norm of $\calA$ can be controlled in terms of its supremum column sum norm and supremum row sum norm:
    \begin{multline*}
        \|\calA\|^2_{\scrL(\ell^2(\bbL; \scrX), \ell^2(\bbL; \scrY))} \leq \vertiii{\calA}_1\, \vertiii{\calA}_\infty,
        \retss
        \textWHERE \vertiii{\calA}_1 = \sup_{\jjv \in \bbL} \sum_{\iiv \in \bbL} \| \calA_{\iiv, \jjv} \|_{\scrL(\scrX, \scrY)} \textAND \vertiii{\calA}_\infty = \sup_{\iiv \in \bbL} \sum_{\jjv \in \bbL} \| \calA_{\iiv, \jjv} \|_{\scrL(\scrX, \scrY)}.
    \end{multline*}
    Therefore, it suffices to bound $\vertiii{\calA}_1$ and $\vertiii{\calA}_\infty$. 
    Fix $\Kv_\star \in \{\Kv, \Kv'\}$ and let $\jjv \in \bbL$. From its definition \eqref{eq:def_floK}, the family $(\floK_\iiv)_{\iiv \in \{\Kv_\star\} \times \Z}$ is pairwise disjoint in $\Lambda^*$. Therefore, summing \eqref{eq:assumption_dominated_boundedness} over $\iiv \in \{\Kv_\star\} \times \Z$ and using the smoothness of $u_{\Kv_\star, \Kv_\jjv}$, we get
    \begin{equation*}
        \sum_{\iiv \in \{\Kv_\star\} \times \Z} \| \calA_{\iiv, \jjv} \|_{\scrL(\scrX, \scrY)} \leq \sum_{\kv \in \Lambda^*} \left| \int_\Omega \euler^{\vts \icplx\, \kv \cdot \xv}\; u_{\Kv_\star, \Kv_\jjv} (\xv)\, d\xv \, \right| < \infty.
    \end{equation*}
    Summing this inequality over $\Kv_\star \in \{\Kv, \Kv'\}$, and taking the supremum over $\jjv \in \bbL$, we deduce that
    \begin{equation*}
        \vertiii{\calA}_1 \leq \sup_{\Kv_\star, \Kv_\circ \in \{\Kv, \Kv'\}} \bigg[ \sum_{\kv \in \Lambda^*} \left| \int_\Omega \euler^{\vts \icplx\, \kv \cdot \xv}\; u_{\Kv_\star, \Kv_\circ} (\xv)\, d\xv \, \right| \bigg].
    \end{equation*}
    By interchanging the roles of $\iiv$ and $\jjv$, we deduce a similar inequality for $\vertiii{\calA}_\infty$. Combining both inequalities on $\vertiii{\calA}_1$ and $\vertiii{\calA}_\infty$, we then obtain the boundedness of $\calA$.
    \end{dem}

    \vspace{1\baselineskip} \noindent
    We now turn to the proof of Proposition \ref{prop:DW_contributions}.
    \begin{dem}[of Proposition \ref{prop:DW_contributions}] %
    We start from the expansion \eqref{eq:K_to_K_dw_coupling_in_between} established in Lemma \ref{lem:K_to_K_dw_coupling_in_between}. The dominant contribution in \eqref{eq:K_to_K_dw_coupling_in_between} is a sum over $\iiv \in \bbL', \jjv \in \bbL''$, that can be decomposed into diagonal summands ($\iiv = \jjv \in \bbL' \cap \bbL''$) and off-diagonal summands ($\iiv \neq \jjv$). Then, comparing \eqref{eq:K_to_K_dw_coupling_in_between} with the statement of Proposition \ref{prop:DW_contributions}, we see that it suffices to prove the identity
    \begin{equation}
        \spforall \iiv \in \bbL, \quad M_{\iiv, \iiv} = \vartheta^{\Kv_\iiv}\, \sigma_3,\label{eq:K_to_K_dw_coupling_dem_diagonal}
    \end{equation}
    as well as the following bound in $\scrL(L^2_{\Kv \cdot \vvh_1} (\cylaug))$, uniformly in $\bbL', \bbL'' \subset \bbL (\delta^{3/4})$:
    \begin{equation}
        \sum_{\substack{\iiv \in \bbL', \jjv \in \bbL'' \\ \iiv \neq \jjv}}\!\!\! \calJ^*_{\delta, \iiv}\, \chi (\delta^{1/4}\, D_\zeta)\, M_{\iiv, \jjv}\, \euler^{\vts \icplx\, \delta^{-1}\, (\lvar_\jjv - \lvar_\iiv)\, \zeta}\, \kappa (\zeta)\, \chi (\delta^{1/4}\, D_\zeta)\, \calJ^{}_{\delta, \jjv} = \calO (\delta^{3/4}).
        \label{eq:K_to_K_dw_coupling_dem_off_diagonal}
    \end{equation}
    In other words, the asymptotic behavior of $\Pi_{\bbL'}\, (\vts \nabla_\xv \cdot a(\xv)\, \sigma_2\, \kappa_\delta (\xv, s)\, \nabla_\xv \vts)\, \Pi_{\bbL''}$ is mainly governed by the diagonal summands, given by \eqref{eq:K_to_K_dw_coupling_dem_diagonal}, while the off-diagonal summands are negligible due to \eqref{eq:K_to_K_dw_coupling_dem_off_diagonal}. The next steps are devoted to proving these statements.

    \vspace{1\baselineskip} \noindent
    \textbf{Step 1.} \textit{Diagonal summands; identity \eqref{eq:K_to_K_dw_coupling_dem_diagonal}}. The identity \eqref{eq:K_to_K_dw_coupling_dem_diagonal} follows directly from the definition \eqref{eq:def_M_I_J} of $M_{\iiv, \jjv}$ and Proposition \ref{prop:expression_Phi_l_W_Phi_j}: in fact,
    \begin{equation*}
        \spforall \iiv \in \bbL, \quad M_{\iiv, \iiv} = \int_\Omega \overline{\Phi^{\Kv_\iiv} (\xv)}\; \nabla \cdot a(\xv)\, \sigma_2\, \nabla \Phi^{\Kv_\iiv} (\xv)^\transp\, d\xv = \vartheta^{\Kv_\iiv}\, \sigma_3.
    \end{equation*}

    \noindent
    \textbf{Step 2.} \textit{Off-diagonal summands; bound \eqref{eq:K_to_K_dw_coupling_dem_off_diagonal}}. The left-hand side of \eqref{eq:K_to_K_dw_coupling_dem_off_diagonal}, which we wish to bound, can be rewritten as
    \begin{equation*}
        \calJ^*_\delta\, \mathds{1}_{\bbL'}\, \OFF (\calA^{\delta, (0)})\, \mathds{1}_{\bbL''}\, \calJ^{}_\delta,
    \end{equation*}
    where $\OFF (\calA^{\delta, (0)})$ is the block operator whose entries are given for $\iiv, \jjv \in \bbL$ by
    \begin{equation*}
        \left\{
        \begin{array}{cl}
        \chi (\delta^{1/4}\, D_\zeta)\, M_{\iiv, \jjv}\, \euler^{\vts \icplx\, \delta^{-1}\, (\lvar_\jjv - \lvar_\iiv)\, \zeta}\, \kappa (\zeta)\, \chi (\delta^{1/4}\, D_\zeta) &\textnormal{if}\quad \iiv \neq \jjv
        \rets
        0 &\textnormal{if}\quad \iiv = \jjv.
        \end{array}
        \right.
    \end{equation*}
    Our strategy is to first bound each entry, and then extend the bound to the block operator using Lemma \ref{lem:dominated_boundedness}. By definition \eqref{eq:def_M_I_J} of $M_{\iiv, \jjv} (\zeta)$, the definition of Fourier multiplier \eqref{eq:def_Fourier_multiplier}, and the properties of the Fourier transform:
    \begin{multline}
        \chi (\delta^{1/4}\, D_\zeta)\, M_{\iiv, \jjv}\, \euler^{\vts \icplx\, \delta^{-1}\, (\lvar_\jjv - \lvar_\iiv)\, \zeta}\, \kappa (\zeta)\, \chi (\delta^{1/4}\, D_\zeta)
        \\
        = M_{\iiv, \jjv}\, \euler^{\vts \icplx\, \delta^{-1}\, (\lvar_\jjv - \lvar_\iiv)\, \zeta}\; \calU^{}_\delta\; \Big[\vts \chi \big(\vts\delta^{-3/4}\, (D_\zeta + \lvar_\jjv - \lvar_\iiv) \vts\big)\, \kappa (\delta\, \zeta)\, \chi \big(\vts \delta^{-3/4}\, D_\zeta \vts\big) \vts\Big]\; \calU^*_\delta. \label{eq:K_to_K_dw_coupling_dem_off_diagonal_1}
    \end{multline}
    The operator $\chi (\vts\delta^{-3/4}\, (D_\zeta + \lvar_\jjv - \lvar_\iiv) \vts)\, \kappa (\delta\, \zeta)\, \chi (\vts \delta^{-3/4}\, D_\zeta \vts) \in \scrL(L^2(\R))$ arises in \cite[Step 3 of proof of Proposition 6.7]{drouot2020edge}. To estimate its norm, first note, by \eqref{eq:pairwise_disjoint_near_intervals}  that if  $\iiv \neq \jjv$, then the intervals $ (-\delta^{3/4}, \delta^{3/4})$ and $\lvar_\jjv - \lvar_\iiv + (-\delta^{3/4}, \delta^{3/4})$ are disjoint if $\delta^{3/4} < 1/3$. %
    Therefore, we may write $\chi(\vts \delta^{-3/4}\, (D^{}_\zeta + \lvar_\jjv - \lvar_\iiv) \vts) = \chi(\vts \delta^{-3/4}\, (D^{}_\zeta + \lvar_\jjv - \lvar_\iiv) \vts)\, D_\zeta^{-1} \cdot D^{}_\zeta$, where the first factor is a bounded operator on $L^2(\R)$. Then, using that $\kappa$ is smooth, $\kappa' \in L^\infty (\R)$ and the properties of the Fourier transform, we obtain that there is a constant $C>0$, such that for all $\delta$ sufficiently small: if $\iiv, \jjv \in \bbL$ and  $\iiv\ne \jjv$,  then 
    \begin{equation}\label{eq:dw-bound}
        \big\|\vts \chi \big(\vts\delta^{-3/4}\, (D_\zeta + \lvar_\jjv - \lvar_\iiv) \vts\big)\, \kappa (\delta\, \zeta)\, \chi \big(\vts \delta^{-3/4}\, D_\zeta \vts\big) \vts\big\|_{\scrL(L^2(\R))} \leq C\, \delta^{3/4}\ .
    \end{equation}
    The constant $C > 0$ depends on $\|\kappa\|_{L^\infty}$ and $\|\kappa'\|_{L^\infty}$. Using \eqref{eq:dw-bound} and the definition \eqref{eq:def_M_I_J} of $M_{\iiv, \jjv}$ to bound \eqref{eq:K_to_K_dw_coupling_dem_off_diagonal_1}, we deduce that for all $\iiv, \jjv \in \bbL$ such that $\iiv \neq \jjv$,
    \begin{multline}\label{eq:K_to_K_dw_coupling_dem_off_diagonal_2}
        \big\|\vts \chi (\delta^{1/4}\, D_\zeta)\, M_{\iiv, \jjv}\, \euler^{\vts \icplx\, \delta^{-1}\, (\lvar_\jjv - \lvar_\iiv)\, \zeta}\, \kappa (\zeta)\, \chi (\delta^{1/4}\, D_\zeta) \vts\big\|_{\scrL(L^2(\R))} 
        \\
        \leq C\, \delta^{3/4}\, \left| \int_\Omega \euler^{\vts \icplx\, (\Kv_\iiv + \floK_\iiv - \Kv_\jjv - \floK_\jjv) \cdot \xv}\; \overline{\nabla \Phi^{\Kv_\iiv} (\xv)}\, a (\xv)\, \sigma_2\, \nabla \Phi^{\Kv_\jjv} (\xv)^\transp\, d\xv \vts\right|\ .
    \end{multline}
    Since this corresponds to \eqref{eq:assumption_dominated_boundedness} with $u_{\Kv_\iiv, \Kv_\jjv} := C\, \delta^{3/4}\, \euler^{\vts \icplx\vts (\Kv_\iiv - \Kv_\jjv) \cdot \xv}\vts \overline{\nabla \Phi^{\Kv_\iiv} (\xv)}\vts a (\xv)\vts \sigma_2\vts \nabla \Phi^{\Kv_\jjv} (\xv)^\transp$, we can use Lemma \ref{lem:dominated_boundedness}, and deduce that \eqref{eq:K_to_K_dw_coupling_dem_off_diagonal_2} implies the bound 
    \begin{equation*}
        \|\vts \OFF (\calA^{\delta, (0)}) \vts\|_{\scrL(\ell^2 (\bbL; L^2(\R)))} \lesssim \delta^{3/4}.
    \end{equation*}
    Finally, using the uniform boundedness of $\calJ_\delta$ in $\delta$, we deduce \eqref{eq:K_to_K_dw_coupling_dem_off_diagonal}.
    \end{dem}

    \subsubsection{Completion of the proof of Proposition \ref{prop:asymptotics_HdeltaNEAR}}\label{sec:completion_expansion_HdeltaNEAR}
    Since we set $\mu = 0$, we have from \eqref{eq:conjugated_augmented_operator_alt} that $\calH^\delta_{\AUG, \Kv \cdot \vvh_1} (\mu) = \calH^\delta_{\AUG, \Kv \cdot \vvh_1}$, and
    \begin{multline}
    \Pi_{\bbL'}\, \big(\vts \calH^\delta_{\AUG, \Kv \cdot \vvh_1} - E_D - \delta z \vts\big)\, \Pi_{\bbL''} 
    \retss
    = \Pi_{\bbL'}\,\big(\vts \calH^0_{\AUG, \Kv \cdot \vvh_1} - E_D \vts\big)\, \Pi_{\bbL''} - \Pi_{\bbL'}\,\big(\vts \delta z \vts\big)\, \Pi_{\bbL''} + \Pi_{\bbL'}\,\big(\vts \delta\, \nabla_\xv \cdot a(\xv)\, \sigma_2\, \kappa_\delta (\xv, s)\, \nabla_\xv \vts\big)\, \Pi_{\bbL''}, \label{eq:decomp_Haug}
    \end{multline}
    where $\kappa_\delta (\xv, s) := \kappa(\delta\vts \kvh_2 \cdot (\xv + s \vvv_2))$. For $\bbL' = \bbL''$, we expand the first term using \eqref{eq:asymptotics_H0near} with $d = 1$, the second term using \eqref{eq:asymptotics_H0near} with $d = 0$, and the third term using \eqref{eq:K_to_K_dw_coupling}. We then get that uniformly in $\bbL' \subseteq \bbL (\delta^{3/4})$ and in $z$ varying in compact subsets:
    \begin{multline*}
    \Pi_{\bbL'}\, \big(\vts \calH^\delta_{\AUG, \Kv \cdot \vvh_1} - E_D - \delta z \vts\big)\, \Pi_{\bbL'} 
    \retss
    = \delta \sum_{\iiv \in \bbL'} \calJ^*_{\delta, \iiv}\, \chi (\delta^{1/4}\, D_\zeta)\, \Big[\vts \sigma(\kvh_2)\, D_\zeta + \delta^{-1}\, \gamma_\iiv\, \sigma(\kvh_1) + \vartheta^{\Kv_\iiv}\, \sigma_3\, \kappa (\zeta) - z \vts\Big]\, \chi (\delta^{1/4}\, D_\zeta)\, \calJ^{}_{\delta, \iiv}
    \\
    + \calO (\delta^{3/2}), \quad \textnormal{in}\ \ \scrL(\scrX_{\bbL'}, \scrX_{\bbL'}).
    \end{multline*}
    By definition \eqref{eq:effective_Dirac_operator} of $\calD^{\Kv_\iiv} (\delta^{-1}\, \gamma_\iiv)$, this expansion is precisely Proposition \ref{prop:asymptotics_HdeltaNEAR_diagonal}. On the other hand, if $\bbL'$ and $\bbL'' \subseteq \bbL (\delta^{3/4})$ are disjoint, then the first part of Proposition \ref{prop:asymptotics_H0near} shows that the first two terms in \eqref{eq:decomp_Haug} are zero, while the third term, according to \eqref{eq:K_to_K_dw_coupling}, is of order $\delta \cdot \delta^{3/4}$ (the dominant term in \eqref{eq:K_to_K_dw_coupling} is zero, since $\bbL' \cap \bbL'' = \emptyset$). We then obtain:
    \begin{equation*}
    \Pi_{\bbL'}\, \big(\vts \calH^\delta_{\AUG, \Kv \cdot \vvh_1} - E_D  - \delta z \vts\big)\, \Pi_{\bbL''} = \calO (\delta^{3/4 + 1}) = \calO (\delta^{7/4}), \quad \textnormal{in}\ \ \scrL(\scrX_{\bbL''}, \scrX_{\bbL'}),
    \end{equation*}
    which is precisely Proposition \ref{prop:asymptotics_HdeltaNEAR_off_diagonal}. This completes the proof of Proposition \ref{prop:asymptotics_HdeltaNEAR}.

    \subsection{Resolvent expansion, proof of Propositions \ref{prop:resolvent_expansion_HtildeDeltaNEAR} and \ref{prop:dominant_term_without_near_projectors}}\label{sec:proof_resolvent_expansion_HtildeDeltaNEAR} 
    We now turn to the proof of Proposition \ref{prop:resolvent_expansion_HtildeDeltaNEAR}. As shown at the end of Section \ref{sec:schur_complement}, Theorem \ref{thm:resolvent_expansion} then follows directly. We recall that $\mu$ is set to zero for simplicity, so that $\calH^\delta_{\AUG, \Kv \cdot \vvh_1} (\mu) = \calH^\delta_{\AUG, \Kv \cdot \vvh_1}$.

    As a starting point, the following technical result provides useful bounds for the proof of the resolvent expansion.
    \begin{lemma}\label{lem:technical_res}~
    \begin{enumerate}[label={$(\alph*).$}, ref={\theproposition.$(\alph*)$}]
        \item As $\delta$ tends to 0,
        \begin{equation}
        \calJ^{}_{\delta}\, \calJ_{\delta}^* = \Id + \calO (\delta) \quad \textnormal{in}\ \ \scrL(\ell^2(\bbL; H^1(\R; \C^2)), \vts\ell^2(\bbL; H^{-1}(\R; \C^2))).\label{eq:J_times_Jstar_off_diagonal}
        \end{equation}
        \item $\calJ^*_\delta$ is asymptotically invariant under $\Pi_{\bbL'}$: more precisely, uniformly in $\bbL' \subseteq \bbL (\delta^{3/4})$,
        \begin{align}
            \Pi_{\bbL'}\, \calJ^*_\delta &= \calJ^*_\delta\, \mathds{1}_{\bbL'} + \calO (\delta^{1/4}) \quad \textnormal{in}\ \ \scrL(\ell^2(\bbL; H^1(\R; \C^2)),\, L^2_{\Kv \cdot \vvh_1} (\cylaug)).\label{eq:Jstar_almost_invariant_under_Pi}
            \retss
            \calJ^{}_\delta\, \Pi_{\bbL'} &= \mathds{1}_{\bbL'}\, \calJ^{}_\delta + \calO (\delta^{1/4}) \quad \textnormal{in}\ \ \scrL(L^2_{\Kv \cdot \vvh_1} (\cylaug),\, \ell^2(\bbL; H^{-1}(\R; \C^2))).\label{eq:dual_Jstar_almost_invariant_under_Pi}
        \end{align}
    \end{enumerate}
    \end{lemma}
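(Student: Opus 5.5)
Both parts will be handled block by block in Fourier variables. We use the explicit formulas \eqref{eq:averaging_operator_alt}--\eqref{eq:averaging_operator_alt_adjoint} for $\calF\calT_\iiv$ and $\scrF_\Kv\calT^*_\iiv\calF^*$, and we control off-diagonal interactions with the Schur-type Lemma \ref{lem:dominated_boundedness}.

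\emph{Part (a).} Compute the $(\iiv,\jjv)$ block $\calJ_{\delta,\iiv}\calJ^*_{\delta,\jjv}=\calU_\delta\calT_\iiv\calT^*_\jjv\calU^*_\delta$. For $\iiv=\jjv$, the block is the identity by Proposition \ref{prop:pties_averaging_operator:orthogonality_relations}. For $\iiv\neq\jjv$, insert \eqref{eq:averaging_operator_adjoint} into \eqref{eq:averaging_operator}. The result is a multiplication operator in $\zeta$ that involves the factor $e^{i(\lvar_\jjv-\lvar_\iiv)\zeta}$ (after rescaling, $e^{i\delta^{-1}(\lvar_\jjv-\lvar_\iiv)\zeta}$), multiplied by a coefficient of the form $\int_\Omega e^{i(\Kv_\jjv+\floK_\jjv-\Kv_\iiv-\floK_\iiv)\cdot\xv}\,\overline{\Phi^{\Kv_\iiv}}\,\Phi^{\Kv_\jjv\,\transp}\,g(\ldots)\,d\xv$. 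The $\xv$-dependence of the argument $\kvh_2\cdot(\xv+s\vvv_2)$ contributes a shift of size $\delta$ in the rescaled variable.

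Expand $g(\zeta+\delta\,\cdot)=g(\zeta)+\calO(\delta)\,\|g'\|$ to first order. In the leading part, the oscillating factor $e^{i\delta^{-1}(\lvar_\jjv-\lvar_\iiv)\zeta}$, with $|\lvar_\jjv-\lvar_\iiv|\ge 1/3$ by \eqref{eq:pairwise_disjoint_near_intervals}, maps $H^1\to H^{-1}$ with norm $\calO(\delta)$: write $e^{ia\zeta/\delta}=\frac{\delta}{ia}\partial_\zeta e^{ia\zeta/\delta}$ and integrate by parts against an $H^1$ test function. The Taylor remainder is $\calO(\delta)$ from $H^1$ to $L^2$. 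Each resulting block bound is $\calO(\delta)$ times the modulus of a Fourier coefficient of a smooth periodic function at the frequency $\floK_\iiv-\floK_\jjv$. By Lemma \ref{lem:dominated_boundedness}, applied with $\scrX=H^1$ and $\scrY=H^{-1}$, the off-diagonal part is therefore $\calO(\delta)$.

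One caveat: when $\lvar_\iiv=\lvar_\jjv$ with $\Kv_\iiv\neq\Kv_\jjv$ the oscillation is absent. In that case I expect the coefficient $\int_\Omega\overline{\Phi^{\Kv}}\Phi^{\Kv'}e^{i(\cdots)\cdot\xv}$ to vanish or the frequency to be nonzero by pseudo-periodicity, and this case must be checked separately.

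\emph{Part (b).} This part follows from (a) together with Proposition \ref{prop:asymptotics_near_projectors}. Write
\[
\Pi_{\bbL'}\calJ^*_\delta=\calJ^*_\delta\mathds{1}_{\bbL'}\chi(\delta^{1/4}D_\zeta)\mathds{1}_{\bbL'}\calJ_\delta\calJ^*_\delta+\calO(\delta^{3/4}).
\]
Replacing $\calJ_\delta\calJ^*_\delta$ by $\Id$ through (a) is problematic, because (a) lands in $H^{-1}$. However, $\chi(\delta^{1/4}D_\zeta)$ maps $H^{-1}\to L^2$ with norm $\lesssim\delta^{-1/4}$, so the error is $\calO(\delta\cdot\delta^{-1/4})$, which is acceptable. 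It then remains to bound $\calJ^*_\delta\mathds{1}_{\bbL'}(1-\chi(\delta^{1/4}D_\zeta))$ on $\ell^2(H^1)$: since $|1-\chi(\delta^{1/4}\lvar)|\le\delta^{1/4}|\lvar|$, this term is $\calO(\delta^{1/4})$ from $H^1$ to $L^2$, and $\calJ^*_\delta$ is bounded uniformly in $\delta$ by Proposition \ref{prop:pties_averaging_operator:square_summability}. This yields \eqref{eq:Jstar_almost_invariant_under_Pi}, with the $\delta^{1/4}$ loss coming from this last step. Estimate \eqref{eq:dual_Jstar_almost_invariant_under_Pi} follows by taking adjoints, using the $H^1$--$H^{-1}$ duality.

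The main obstacle is part (a): one must uniformly control the infinite family of cross terms $\calJ_{\delta,\iiv}\calJ^*_{\delta,\jjv}$, including the cross-valley pairs, and in particular verify that the Fourier-coefficient domination needed for Lemma \ref{lem:dominated_boundedness} holds with $\delta$-independent summability.
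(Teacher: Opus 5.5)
Your route is the paper's. In (a), the diagonal blocks of $\calJ_\delta\calJ_\delta^*$ are the identity, each off-diagonal block is a $2\times 2$ Fourier coefficient times an oscillating phase, one integration by parts gives $\calO(\delta)$ from $H^1$ to $H^{-1}$, and Lemma \ref{lem:dominated_boundedness} sums the blocks. Your part (b) follows the paper's argument step for step.

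Two steps of (a), however, are wrong as written.

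\emph{There is no shift in the argument of $g$.} $\calT_\iiv$ integrates along the line $s=\zeta-\kvh_2\cdot\xv$. Since $\kvh_2\cdot\vvv_2=1$, one has $\kvh_2\cdot(\xv+s\vvv_2)=\zeta$ exactly on that line. Hence $\calJ_{\delta,\iiv}\calJ^*_{\delta,\jjv}$ is exactly multiplication by $\bigl[\int_\Omega e^{\icplx(\Kv_\iiv+\floK_\iiv-\Kv_\jjv-\floK_\jjv)\cdot\xv}\,\overline{\Phi^{\Kv_\iiv}(\xv)}\,\Phi^{\Kv_\jjv}(\xv)^\transp\,d\xv\bigr]\,e^{\icplx\delta^{-1}(\lvar_\jjv-\lvar_\iiv)\zeta}$. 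Drop the Taylor step, and not only because it is superfluous. A genuine remainder $g(\zeta+\delta c(\xv))-g(\zeta)$ would only be bounded by $\delta\|g'\|_{L^2}\int_\Omega|\overline{\Phi^{\Kv_\iiv}(\xv)}\,\Phi^{\Kv_\jjv}(\xv)^\transp|\,d\xv$. That bound is uniform in $(\iiv,\jjv)$ rather than a decaying Fourier coefficient. Lemma \ref{lem:dominated_boundedness} would then not apply, and your off-diagonal sum would not close. The argument works only because this term is identically zero.

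\emph{Your caveat is vacuous, but not for the reason you cite.} Condition \eqref{eq:pairwise_disjoint_near_intervals} concerns only indices in $\bbL(\delta^{3/4})$. It yields only $|\lvar_\iiv-\lvar_\jjv|\ge 2\delta^{3/4}$, which after integration by parts gives $\calO(\delta^{1/4})$, not $\calO(\delta)$. Part (a) involves all pairs in $\bbL$, and in (b) the column index of $\calJ_\delta\calJ_\delta^*$ is unrestricted. You therefore need a $\delta$-independent separation on all of $\bbL$. It comes from \eqref{eq:gamma_Iandlambda_I}: since $\Kv\cdot\vvv_2=-2\pi/3$, one has $\lvar_{(\Kv,m)}=2\pi m$ and $\lvar_{(\Kv',m)}=4\pi/3+2\pi m$. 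Hence $|\lvar_\iiv-\lvar_\jjv|\ge 2\pi/3$ for all $\iiv\neq\jjv$, cross-valley pairs included, so the case $\lvar_\iiv=\lvar_\jjv$ never arises.

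With these two corrections your proof coincides with the paper's.
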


    \begin{dem}
    ~
    \\
    $(a)$. \textit{The product $\calJ^{}_\delta\, \calJ_\delta^*$}. We decompose $\calJ^{}_\delta\, \calJ_\delta^*$ into its diagonal and off-diagonal parts. According to Proposition \ref{prop:pties_averaging_operator:orthogonality_relations}, the diagonal part of $\calJ^{}_\delta\, \calJ_\delta^*$ is the identity operator. Therefore, it suffices to prove that the off-diagonal part of $\calJ^{}_\delta\, \calJ_\delta^*$ is negligible. Let $\iiv, \jjv \in \bbL$. Using the definition \eqref{eq:mode_for_averaging_operator}--\eqref{eq:def_calJ} of $\calJ^{}_{\delta, \iiv} = \calU^{}_\delta\, \calT^{}_\iiv$, we see that $\smash{\calJ^{}_{\delta, \iiv}\, \calJ^*_{\delta, \jjv}}$ is a multiplication operator: more precisely, for any $g \in L^2(\R; \C^2)$,
    \begin{align*}
        \calJ^{}_{\delta, \iiv}\, \calJ^*_{\delta, \jjv}\; g (\zeta) 
        &= \left[\vts \int_\Omega \euler^{\vts \icplx\, \scalebox{0.675}{$\big($}\vts (\gamma_\jjv - \gamma_\iiv)\, \kvh_1 - (\lvar_\jjv - \lvar_\iiv)\, \kvh_2 \vts\scalebox{0.675}{$\big)$} \cdot \xv}\; \overline{\Phi^{\Kv_\iiv} (\xv)}\; \Phi^{\Kv_\jjv} (\xv)^\transp\, d\xv \vts\right]\, \euler^{\vts \icplx\vts \delta^{-1}\, (\lvar_\jjv - \lvar_\iiv)\, \zeta\vphantom{\scalebox{0.675}{$\big($}}}\, g(\zeta)
        \retss
        &= \left[\vts \int_\Omega \euler^{\vts \icplx\, (\Kv_\iiv + \floK_\iiv - \Kv_\jjv - \floK_\jjv) \cdot \xv}\; \overline{\Phi^{\Kv_\iiv} (\xv)}\; \Phi^{\Kv_\jjv} (\xv)^\transp\, d\xv \vts\right]\, \euler^{\vts \icplx\vts \delta^{-1}\, (\lvar_\jjv - \lvar_\iiv)\, \zeta}\, g(\zeta),
    \end{align*}
    where we used the identity $(\gamma_\jjv - \gamma_\iiv)\, \kvh_1 - (\lvar_\jjv - \lvar_\iiv)\, \kvh_2 = \Kv_\iiv + \floK_\iiv - \Kv_\jjv - \floK_\jjv$, which follows from Part \ref{lem:veps_neighborhood_K_d} of the Wrapping Lemma. Next, we prove that if $\iiv \neq \jjv$, then the multiplication by $\zeta \mapsto \euler^{- \icplx\vts \delta^{-1}\vts ( \lvar_\iiv - \lvar_{\smash{\jjv}})\vts \zeta}$ is a bounded operator from $H^1(\R; \C^2)$ to $H^{-1}(\R; \C^2)$, with norm $\calO(\delta)$. For any $g, h \in H^1(\R; \C^2)$, by integration by parts,
    \begin{gather*}
        \left| \int_\R \overline{h(\zeta)} \cdot g(\zeta)\; \euler^{- \icplx\vts \delta^{-1}\, ( \lvar_\iiv - \lvar_{\smash{\jjv}})\vts \zeta}\, d\zeta \right| = \frac{\delta}{|\lvar_\iiv - \lvar_{\smash{\jjv}}|} \left| \int_\R \big[\, \overline{h'(\zeta)} \cdot g(\zeta) + \overline{h(\zeta)} \cdot g'(\zeta) \,\big]\; \euler^{- \icplx\vts \delta^{-1}\, ( \lvar_\iiv - \lvar_{\smash{\jjv}})\vts \zeta}\, d\zeta \right|
        \\
        \leq \frac{\delta}{|\lvar_\iiv - \lvar_{\smash{\jjv}}|}\, \|h\|_{H^1(\R; \C^2)}\; \|g\|_{H^1(\R; \C^2)} \ \leq\ C_0\, \delta\, \|h\|_{H^1(\R; \C^2)}\; \|g\|_{H^1(\R; \C^2)},
    \end{gather*}
    where the last inequality follows from the fact that the set $\{|\lvar_\iiv - \lvar_{\smash{\jjv}}|\ /\ \iiv, \jjv \in \bbL,\ \iiv \neq \jjv\}$ is bounded from below by a constant $C^{-1}_0 > 0$, by definition \eqref{eq:gamma_Iandlambda_I} of $\lvar_\iiv$. As a consequence, 
    \begin{multline}\label{eq:J_times_Jstar_dem2}
        \spforall \iiv, \jjv \in \bbL, \quad \iiv \neq \jjv, \quad \|\vts \calJ^{}_{\delta, \iiv}\, \calJ^*_{\delta, \jjv} \vts\|_{\scrL({H^1(\R; \C^2),\, H^{-1}(\R; \C^2)})} 
        \\
        \leq C_0\, \delta\, \left|\, \int_\Omega \euler^{\vts \icplx\, (\floK_\iiv - \floK_\jjv) \cdot \xv}\, \bigl[\vts \euler^{\vts \icplx\, \Kv_\iiv \cdot \xv}\; \overline{\Phi^{\Kv_\iiv} (\xv)}\vts\bigr]\; \bigl[\vts \euler^{- \icplx\, \Kv_\jjv \cdot \xv}\; \Phi^{\Kv_\jjv} (\xv)^\transp \vts\bigr]\, d\xv \,\right|.
    \end{multline}
    This inequality corresponds to \eqref{eq:assumption_dominated_boundedness} with $u_{\Kv_\iiv, \Kv_\jjv} (\xv) := [\vts \euler^{\vts \icplx\, \Kv_\iiv \cdot \xv}\; \overline{\Phi^{\Kv_\iiv} (\xv)}\vts]\, [\vts \euler^{- \icplx\, \Kv_\jjv \cdot \xv}\; \Phi^{\Kv_\jjv} (\xv)^\transp \vts]$. Therefore, we can apply Lemma \ref{lem:dominated_boundedness} to deduce that the off-diagonal part of $\calJ^{}_\delta\, \calJ_\delta^*$ is bounded from $\ell^2 (\bbL; H^1 (\R; \C^2))$ to $\ell^2 (\bbL; H^{-1} (\R; \C^2))$ with norm $\calO (\delta)$. This proves \eqref{eq:J_times_Jstar_off_diagonal}.

    \vspace{1\baselineskip}\noindent
    $(b)$. \textit{Near-invariance of $\calJ^*_\delta$ under $\Pi_{\bbL'}$}. By the expansion of $\Pi_{\bbL'}$ in Proposition \ref{prop:asymptotics_near_projectors}, and the boundedness of $\calJ_\delta^*$ uniformly in $\delta$ (Proposition \ref{prop:pties_averaging_operator:square_summability}), we have
    \begin{equation*}
        \Pi_{\bbL'}\, \calJ_\delta^* = \calJ^*_\delta\, \mathds{1}_{\bbL'}\, \chi (\delta^{1/4}\, D_\zeta)\, \mathds{1}_{\bbL'}\, \calJ^{}_\delta\, \calJ_\delta^* + \calO (\delta^{3/4}) \quad \textnormal{in}\ \ \scrL(\ell^2(\bbL; L^2(\R; \C^2)), L^2_{\Kv \cdot \vvh_1} (\cylaug)).
    \end{equation*}
    Using the decomposition $\calJ^{}_\delta\, \calJ_\delta^* = \Id + (\calJ^{}_\delta\, \calJ_\delta^* - \Id)$ leads to
    \begin{multline}\label{eq:Jstar_almost_invariant_under_Pi_dem_1}
        \Pi_{\bbL'}\, \calJ_\delta^* = \calJ^*_\delta\, \mathds{1}_{\bbL'}\, \chi (\delta^{1/4}\, D_\zeta)\, \mathds{1}_{\bbL'} + \calJ^*_\delta\, \mathds{1}_{\bbL'}\, \chi (\delta^{1/4}\, D_\zeta)\, \mathds{1}_{\bbL'}\, \big(\vts \calJ^{}_\delta\, \calJ_\delta^* - \Id \vts\big)
        \retss
        + \calO (\delta^{3/4}) \ \ \textnormal{in}\ \ \scrL(\ell^2(\bbL; L^2(\R; \C^2)), L^2_{\Kv \cdot \vvh_1} (\cylaug)).
    \end{multline}
    By \eqref{eq:J_times_Jstar_off_diagonal}, $\calJ^{}_\delta\, \calJ_\delta^* - \Id$ is bounded from $\ell^2(\bbL; H^1(\R; \C^2))$ to $\ell^2(\bbL; H^{-1}(\R; \C^2))$ with an $\calO(\delta)$ norm. Moreover, 
    \begin{equation*}
        \big\|\vts \chi (\delta^{1/4}\, D_\zeta) \vts\big\|_{\scrL(H^{-1}(\R),\, L^2 (\R))} \leq \sup_{|\lvar| \leq \delta^{-1/4}} \big( 1 + |\lvar|^2 \big)^{1/2} = \calO (\delta^{-1/4}).
    \end{equation*}
    Therefore, $\chi (\delta^{1/4}\, D_\zeta)\, \mathds{1}_{\bbL'}\, (\vts \calJ^{}_\delta\, \calJ_\delta^* - \Id \vts)$ is bounded from $\ell^2(\bbL; H^1(\R; \C^2))$ to $\ell^2 (\bbL; L^2(\R; \C^2))$ with an $\calO (\delta^{3/4})$ norm. Because $\calJ^*_\delta$ is bounded uniformly in $\delta$, \eqref{eq:Jstar_almost_invariant_under_Pi_dem_1} becomes
    \begin{equation*}
        \Pi_{\bbL'}\, \calJ_\delta^* = \calJ^*_\delta\, \mathds{1}_{\bbL'}\, \chi (\delta^{1/4}\, D_\zeta)\, \mathds{1}_{\bbL'} + \calO (\delta^{3/4}), \quad \textnormal{in}\ \ \scrL(\ell^2(\bbL; H^1(\R; \C^2)), L^2_{\Kv \cdot \vvh_1} (\cylaug)).
    \end{equation*}
    Next, the decomposition $\chi (\delta^{1/4}\, D_\zeta) = \Id + (\vts \chi (\delta^{1/4}\, D_\zeta) - \Id \vts)$ leads to
    \begin{multline}\label{eq:Jstar_almost_invariant_under_Pi_dem_2}
        \Pi_{\bbL'}\, \calJ_\delta^* = \calJ^*_\delta\, \mathds{1}_{\bbL'} \retss
        + \calJ^*_\delta\, \mathds{1}_{\bbL'}\, \big(\vts \chi (\delta^{1/4}\, D_\zeta) - \Id \vts\big)\, \mathds{1}_{\bbL'} + \calO (\delta^{3/4}) \quad \textnormal{in}\ \ \scrL(\ell^2(\bbL; H^1(\R; \C^2)), L^2_{\Kv \cdot \vvh_1} (\cylaug)).
    \end{multline}
    Using the bound
    \begin{equation}\label{eq:bound_band_width}
        \big\|\vts \chi (\delta^{1/4}\, D_\zeta) - \Id \vts\big\|_{\scrL(H^1(\R),\, L^2(\R))} \leq \sup_{|\lvar| \geq \delta^{-1/4}} (1 + |\lvar|^2)^{-1/2} = \calO(\delta^{1/4}),
    \end{equation}
    and the boundedness of $\calJ^*_\delta$, we then deduce that $\calJ^*_\delta\, \mathds{1}_{\bbL'}\, (\vts \chi (\delta^{1/4}\, D_\zeta) - \Id \vts)\, \mathds{1}_{\bbL'}$ is bounded from $\ell^2(\bbL; H^1(\R; \C^2))$ to $L^2_{\Kv \cdot \vvh_1} (\cylaug)$ with an $\calO (\delta^{1/4})$ norm. By combining this with \eqref{eq:Jstar_almost_invariant_under_Pi_dem_2}, we then get \eqref{eq:Jstar_almost_invariant_under_Pi}. Moreover, \eqref{eq:dual_Jstar_almost_invariant_under_Pi} is simply the dual bound to \eqref{eq:Jstar_almost_invariant_under_Pi}.
    \end{dem}

    \vspace{1\baselineskip} \noindent
    \begin{dem}[of Proposition \ref{prop:resolvent_expansion_HtildeDeltaNEAR}]
    Our starting point is Proposition \ref{prop:asymptotics_HdeltaNEAR_diagonal}, which states that uniformly in $\bbL' \subset \bbL (\delta^{3/4})$,
    \begin{equation}\label{eq:asymptotics_HdeltaNEAR}
        \Pi_{\bbL'}\, \big(\vts \calH^\delta_{\AUG, \Kv \cdot \vvh_1} - E_D - \delta\, z \vts\big)\, \Pi_{\bbL'} = \delta\, \mathfrak{L}^\delta (z) + \calO (\delta^{3/2}) \quad \textnormal{in}\ \ \scrL(\scrX_{\bbL'}),
    \end{equation}
    where
    \begin{equation*}
        \mathfrak{L}^\delta (z) := \calJ^*_\delta\, \chi (\delta^{1/4}\, D_\zeta)\, \mathds{1}_{\bbL'}\, \big( \calD^{\vts \delta} (0) - z \big)\, \mathds{1}_{\bbL'}\, \chi (\delta^{1/4}\, D_\zeta)\, \calJ^{}_\delta.
    \end{equation*}
    Define
    \begin{equation*}
        \mathfrak{R}^{\vts \delta} (z) := \calJ_\delta^*\, \mathds{1}_{\bbL'}\, \big(\vts \calD^{\vts \delta} (0) - z \vts\big)^{-1}\, \mathds{1}_{\bbL'}\, \calJ^{}_\delta. 
    \end{equation*}
    Step $1$ is devoted to showing that $\Pi_{\bbL'}\, \mathfrak{L}^\delta (z)\, \mathfrak{R}^{\vts \delta} (z)\, \Pi_{\bbL'} = \Pi_{\bbL'} + \calO (\delta^{1/4})$. In Step $2$, we then prove the invertibility of $\Pi_{\bbL'}\, (\vts \calH^\delta_{\AUG, \Kv \cdot \vvh_1} - E_D - \delta\, z \vts)\, \Pi_{\bbL'}$ and the expansion of its inverse in terms of $\Pi_{\bbL'}\, \mathfrak{R}^{\vts \delta} (z)\, \Pi_{\bbL'}$.

    \vspace{1\baselineskip} \noindent
    \textbf{Step 1.} \textit{The expression $\Pi_{\bbL'}\, \mathfrak{L}^\delta (z)\, \mathfrak{R}^{\vts \delta} (z)\, \Pi_{\bbL'}$ is close to $\Pi_{\bbL'}$}. One has
    \begin{multline*}
        \Pi_{\bbL'}\, \mathfrak{L}^\delta (z)\, \mathfrak{R}^{\vts \delta} (z)\, \Pi_{\bbL'} 
        \retss
        = \Pi_{\bbL'}\, \calJ^*_\delta\, \chi (\delta^{1/4}\, D_\zeta)\, \mathds{1}_{\bbL'}\, \big( \calD^{\vts \delta} (0) - z \big)\, \mathds{1}_{\bbL'}\, \chi (\delta^{1/4}\, D_\zeta)\, \calJ^{}_\delta\, \calJ_\delta^*\, \mathds{1}_{\bbL'}\, \big(\vts \calD^{\vts \delta} (0) - z \vts\big)^{-1}\, \mathds{1}_{\bbL'}\, \calJ^{}_\delta\, \Pi_{\bbL'}.
    \end{multline*}
    This product features $\calJ^{}_\delta\, \calJ_\delta^*$, which we express as $\calJ^{}_\delta\, \calJ_\delta^* = \Id + (\calJ^{}_\delta\, \calJ_\delta^* - \Id)$. Then
    \begin{align}\label{eq:dem_RE_1}
        &\Pi_{\bbL'}\, \mathfrak{L}^\delta (z)\, \mathfrak{R}^{\vts \delta} (z)\, \Pi_{\bbL'} \nonumber
        \retss 
        &= \Pi_{\bbL'}\, \calJ^*_\delta\, \chi (\delta^{1/4} D_\zeta)\, \mathds{1}_{\bbL'}\, \big( \calD^{\vts \delta} (0) - z \big)\, \mathds{1}_{\bbL'} \chi (\delta^{1/4} D_\zeta)\, \mathds{1}_{\bbL'}\, \big(\vts \calD^{\vts \delta} (0) - z \vts\big)^{-1} \mathds{1}_{\bbL'}\, \calJ^{}_\delta\, \Pi_{\bbL'} + \calE^\delta_1,
    \end{align}
    where
    \begin{equation*}
        \calE^\delta_1 := \Pi_{\bbL'}\vts \calJ^*_\delta\, \chi (\delta^{1/4}\, D_\zeta)\vts \mathds{1}_{\bbL'}\vts \big( \calD^{\vts \delta} (0) - z \big)\vts \mathds{1}_{\bbL'}\vts \chi (\delta^{1/4}\, D_\zeta)\vts \big(\vts \calJ^{}_\delta\vts \calJ_\delta^* - \Id \vts\Big)\vts \mathds{1}_{\bbL'}\vts \big(\vts \calD^{\vts \delta} (0) - z \vts\big)^{-1}\, \mathds{1}_{\bbL'}\vts \calJ^{}_\delta\vts \Pi_{\bbL'}.
    \end{equation*}
    We bound the $L^2_{\Kv \cdot \vvh_1} (\cylaug)$--norm of $\calE^\delta_1$ using the triangle inequality:
    \begin{align}
        \|\vts \calE^\delta_1 \vts\|_{\scrL(L^2_{\Kv \cdot \vvh_1} (\cylaug))} &\leq %
        \big\|\, \Pi_{\bbL'}\, \calJ^*_\delta\, \chi (\delta^{1/4}\, D_\zeta) \,\|_{\scrL(\ell^2(\bbL; L^2(\R)),\, L^2_{\Kv \cdot \vvh_1} (\cylaug))} \nonumber
        \\
        &\times \left\|\, \mathds{1}_{\bbL'}\, \big( \calD^{\vts \delta} (0) - z \big)\, \mathds{1}_{\bbL'} \,\right\|_{\scrL(\ell^2(\bbL; H^1(\R)),\, \ell^2(\bbL; L^2(\R)))}\; \nonumber
        \\
        & \times \|\, \chi (\delta^{1/4}\, D_\zeta) \,\|_{\scrL(\ell^2(\bbL; H^{-1}(\R)),\, \ell^2(\bbL; H^1(\R)))}\,  \nonumber
        \\
        & \times \big\|\, \calJ^{}_\delta\, \calJ_\delta^* - \Id \,\big\|_{\scrL(\ell^2(\bbL; H^{1}(\R)),\, \ell^2(\bbL; H^{-1}(\R)))} \nonumber
        \\
        & \times \big\|\, \mathds{1}_{\bbL'}\, \big(\vts \calD^{\vts \delta} (0) - z \vts\big)^{-1}\, \mathds{1}_{\bbL'}\, \calJ^{}_\delta\, \Pi_{\bbL'} \,\big\|_{\scrL(L^2_{\Kv \cdot \vvh_1} (\cylaug),\, \ell^2(\bbL; H^1(\R)))}.\label{eq:technical_res_eq_1_dem1}
    \end{align}
    We bound each term on the right-hand side, starting from the last one. First, by assumption, the operator $\mathds{1}_{\bbL'}\, (\vts \calD^{\vts \delta} (0) - z \vts)^{-1} \mathds{1}_{\bbL'}\, \calJ^{}_\delta\, \Pi_{\bbL'}$ is bounded from $L^2_{\Kv \cdot \vvh_1} (\cylaug)$ to $\ell^2(\bbL; H^1(\R; \C^2))$, with norm less than $C/\eta$. Second, by \eqref{eq:J_times_Jstar_off_diagonal}, $\calJ^{}_\delta\, \calJ_\delta^* - \Id$ is bounded from $\ell^2(\bbL; H^1(\R; \C^2))$ to $\ell^2(\bbL; H^{-1}(\R; \C^2))$ with an $\calO(\delta)$ norm. Third, we have the bound
    \begin{equation*}
        \big\|\vts \chi (\delta^{1/4}\, D_\zeta) \vts\big\|_{\scrL(H^{-1}(\R),\, H^{1}(\R))} \leq \sup_{|\lvar| \leq \delta^{-1/4}} \big( 1 + |\lvar|^2 \big) = \calO (\delta^{-1/2}).
    \end{equation*}
    Fourth, as already discussed in Remark \ref{rmk:technical_exponent}, there exist constants $C_0, C_1 > 0$ such that
    \begin{equation*}
        \left\|\, \mathds{1}_{\bbL'}\, \big(\vts \calD^{\vts \delta} (0) - z \vts\big)\, \mathds{1}_{\bbL'} \,\right\|_{\scrL(\ell^2(\bbL; H^1(\R)),\, \ell^2(\bbL; L^2(\R)))} \underset{\delta \to 0}{\sim} C_0 + C_1\, \sup_{\iiv \in \bbL (\delta^{3/4})} |\delta^{-1} \gamma_\iiv| = \calO(\delta^{-1/4}).
    \end{equation*}
    Finally, $\calJ^*_\delta\, \chi (\delta^{1/4}\, D_\zeta)$ is bounded from $\ell^2 (\bbL; L^2(\R; \C^2))$ to $L^2_{\Kv \cdot \vvh_1} (\cylaug)$ uniformly in $\delta$, by Proposition \ref{prop:pties_averaging_operator:square_summability}. Substituting these observations in \eqref{eq:technical_res_eq_1_dem1}, we get that $\calE^\delta_1 = \eta^{-1}\, \calO (\delta^{1/4})$ in $\scrL(L^2_{\Kv \cdot \vvh_1} (\cylaug))$, and \eqref{eq:dem_RE_1} becomes:
    \begin{multline*}
        \Pi_{\bbL'}\, \mathfrak{L}^\delta (z)\, \mathfrak{R}^{\vts \delta} (z)\, \Pi_{\bbL'} 
        \retss
        = \Pi_{\bbL'}\, \calJ^*_\delta\, \chi (\delta^{1/4}\, D_\zeta)\, \mathds{1}_{\bbL'} \big( \calD^{\vts \delta} (0) - z \big)\, \mathds{1}_{\bbL'} \chi (\delta^{1/4}\, D_\zeta)\, \mathds{1}_{\bbL'}\, \big( \calD^{\vts \delta} (0) - z \big)^{-1}\, \mathds{1}_{\bbL'}\, \calJ^{}_\delta\, \Pi_{\bbL'} + %
        \eta^{-1}\, \calO (\delta^{1/4}).
    \end{multline*}
    Moreover, by substituting the identity $\chi (\delta^{1/4}\, D_\zeta) = \Id + (\vts \chi (\delta^{1/4}\, D_\zeta) - \Id \vts)$, we get
    \begin{multline}\label{eq:dem_RE_2}
        \Pi_{\bbL'}\, \mathfrak{L}^\delta (z)\, \mathfrak{R}^{\vts \delta} (z)\, \Pi_{\bbL'} = \Pi_{\bbL'}\, \calJ^*_\delta\, \chi (\delta^{1/4}\, D_\zeta)\, \mathds{1}_{\bbL'}\, \calJ^{}_\delta\, \Pi_{\bbL'}
        \retss
        + \Pi_{\bbL'}\, \calJ^*_\delta\, \chi (\delta^{1/4}\, D_\zeta)\, \mathds{1}_{\bbL'}\, \big( \calD^{\vts \delta} (0) - z \big)\, \mathds{1}_{\bbL'}\, \big(\vts \chi (\delta^{1/4}\, D_\zeta) - \Id \vts\big)\, \mathds{1}_{\bbL'}\, \big(\vts \calD^{\vts \delta} (0) - z \vts\big)^{-1}\, \mathds{1}_{\bbL'}\, \calJ^{}_\delta\, \Pi_{\bbL'}%
        \retss
        + \eta^{-1}\, \calO (\delta^{1/4}) \quad \textnormal{in}\ \ \scrL(\scrX_{\bbL'}).
    \end{multline}
    Call $\calE^\delta_2$ the second term on the right-hand side of \eqref{eq:dem_RE_2}. We bound $\calE^\delta_2$ by following closely \cite[Step 3, proof of Proposition 6.5]{drouot2020edge}. We write
    \begin{multline*}
        \calE^\delta_2 = \calJ^*_\delta\, \chi (\delta^{1/4}\, D_\zeta)\, \big(\vts \chi (\delta^{1/4}\, D_\zeta) - \Id \vts\big)\, \mathds{1}_{\bbL'}\, \big( \calD^{\vts \delta} (0) - z \big)\, \big(\vts \calD^{\vts \delta} (0) - z \vts\big)^{-1}\, \mathds{1}_{\bbL'}\, \calJ^{}_\delta\, \Pi_{\bbL'}
        \retss
        + \calJ^*_\delta\, \chi (\delta^{1/4}\, D_\zeta)\, \mathds{1}_{\bbL'}\, \Big[\vts \big( \calD^{\vts \delta} (0) - z \big),\; \big(\vts \chi (\delta^{1/4}\, D_\zeta) - \Id \vts\big) \vts\Big]\, \mathds{1}_{\bbL'}\, \big(\vts \calD^{\vts \delta} (0) - z \vts\big)^{-1}\, \mathds{1}_{\bbL'}\, \calJ^{}_\delta\, \Pi_{\bbL'}.
    \end{multline*}
    The first term vanishes because $\chi (\delta^{1/4}\, D_\zeta)\, (\vts \chi (\delta^{1/4}\, D_\zeta) - \Id \vts) = 0$. In the second term, the commutator can be simplified using that the massless part of the Dirac operator $\calD^{\vts \delta} (0) - z$ commutes with $\chi (\delta^{1/4}\, D_\zeta)$: more precisely, calling $\DIAG ((\vartheta^{\Kv_\iiv}\, \sigma_3)_{\iiv}) \in \scrL (\ell^2 (\bbL; L^2 (\R; \C^2)))$ the block-diagonal operator whose entries are $\vartheta^{\Kv_\iiv}\, \sigma_3$, the second term can be rewritten as
    \begin{equation*}
        \calJ^*_\delta\, \chi (\delta^{1/4}\, D_\zeta)\, \mathds{1}_{\bbL'}\, \DIAG ((\vartheta^{\Kv_\iiv}\, \sigma_3)_{\iiv})\, \big[\vts \kappa(\cdot),\; \chi (\delta^{1/4}\, D_\zeta) - \Id \vts\big]\, \mathds{1}_{\bbL'}\, \big(\vts \calD^{\vts \delta} (0) - z \vts\big)^{-1}\, \mathds{1}_{\bbL'}\, \calJ^{}_\delta\, \Pi_{\bbL'}.
    \end{equation*}
    By assumption, $\mathds{1}_{\bbL'}\, (\vts \calD^{\vts \delta} (0) - z \vts)^{-1} \mathds{1}_{\bbL'}\, \calJ^{}_\delta\, \Pi_{\bbL'}$ is bounded from $L^2_{\Kv \cdot \vvh_1} (\cylaug)$ to $\ell^2(\bbL; H^1(\R; \C^2))$, with norm less than $1/\eta$. Moreover, the bound \eqref{eq:bound_band_width}, combined with $\kappa \in L^\infty (\R)$, implies that $[\vts \kappa, \chi (\delta^{1/4}\, D_\zeta) - \Id \vts] = \calO (\delta^{1/4})$ as an operator from $\ell^2(\bbL; H^1(\R; \C^2))$ to $\ell^2 (\bbL; L^2(\R; \C^2))$. Finally, as $\calJ^*_\delta\, \chi (\delta^{1/4}\, D_\zeta)\, \mathds{1}_{\bbL'}\, \DIAG ((\vartheta^{\Kv_\iiv}\, \sigma_3)_{\iiv})$ is bounded from $\ell^2 (\bbL; L^2(\R; \C^2))$ to $L^2_{\Kv \cdot \vvh_1} (\cylaug)$ uniformly in $\delta$ (Proposition \ref{prop:pties_averaging_operator}), we deduce that $\calE^\delta_2 = \eta^{-1}\, \calO (\delta^{1/4})$ in $\scrL(L^2_{\Kv \cdot \vvh_1} (\cylaug))$. The expansion \eqref{eq:dem_RE_2} becomes
    \begin{equation*}
        \Pi_{\bbL'}\, \mathfrak{L}^\delta (z)\, \mathfrak{R}^{\vts \delta} (z)\, \Pi_{\bbL'} = \Pi_{\bbL'}\, \calJ^*_\delta\, \chi (\delta^{1/4}\, D_\zeta)\, \mathds{1}_{\bbL'}\, \calJ^{}_\delta\, \Pi_{\bbL'} + \eta^{-1}\, \calO (\delta^{1/4}) \quad \textnormal{in}\ \ \scrL(\scrX_{\bbL'}).
    \end{equation*}
    Finally, Proposition \ref{prop:asymptotics_near_projectors} shows that the dominant term in this expansion is equal to $\Pi_{\bbL'}$, plus an $\calO (\delta^{3/4})$--error in $\scrL(\scrX_{\bbL'})$. Therefore,
    \begin{equation}\label{eq:res_exp_dem_1}
        \Pi_{\bbL'}\, \mathfrak{L}^\delta (z)\, \mathfrak{R}^{\vts \delta} (z)\, \Pi_{\bbL'} = \Pi_{\bbL'} + \eta^{-1}\, \calO (\delta^{1/4}) \quad \textnormal{in}\ \ \scrL(\scrX_{\bbL'}).
    \end{equation}
    This concludes the first step.

    \vspace{1\baselineskip} \noindent
    \textbf{Step 2.} \textit{Inverting $\Pi_{\bbL'}\, (\vts \calH^\delta_{\AUG, \Kv \cdot \vvh_1} - E_D - \delta\, z \vts)\, \Pi_{\bbL'}$ in terms of $\Pi_{\bbL'}\, \mathfrak{R}^{\vts \delta} (z)\, \Pi_{\bbL'}$}. We multiply \eqref{eq:asymptotics_HdeltaNEAR} on the right with $\delta^{-1}\, \mathfrak{R}^{\vts \delta} (z)\, \Pi_{\bbL'}$. Because $\mathfrak{R}^{\vts \delta} (z)\, \Pi_{\bbL'}$ is, by assumption \eqref{eq:z-constraint}, bounded in $L^2_{\Kv \cdot \vvh_1} (\cylaug)$ with norm less than $C / \eta$, uniformly in $z \in S$ and $\bbL' \subset \bbL (\delta^{3/4})$, we get that
    \begin{equation*}
        \Pi_{\bbL'}\, \big(\vts \calH^\delta_{\AUG, \Kv \cdot \vvh_1} - E_D - \delta\, z \vts\big)\, \Pi_{\bbL'}\; \frac{1}{\delta}\, \mathfrak{R}^{\vts \delta} (z)\, \Pi_{\bbL'} = \Pi_{\bbL'}\, \mathfrak{L}^\delta (z)\, \mathfrak{R}^{\vts \delta} (z)\, \Pi_{\bbL'} + \calO ( \delta^{1/2} ) \quad \textnormal{in}\ \ \scrL(\scrX_{\bbL'}).
    \end{equation*}
    We then use the expansion \eqref{eq:res_exp_dem_1} shown in the first step, to obtain
    \begin{equation}\label{eq:res_exp_dem_2}
        \Pi_{\bbL'}\, \big(\vts \calH^\delta_{\AUG, \Kv \cdot \vvh_1} - E_D - \delta\, z \vts\big)\, \Pi_{\bbL'}\; \frac{1}{\delta}\, \mathfrak{R}^{\vts \delta} (z)\, \Pi_{\bbL'} = \Pi_{\bbL'} + \eta^{-1}\, \calO ( \delta^{1/4} ) \quad \textnormal{in}\ \ \scrL(\scrX_{\bbL'}).
    \end{equation}
    Denote by $\calE^\delta_3 (z)$ the $\eta^{-1}\, \calO(\delta^{1/4})$--error term in \eqref{eq:res_exp_dem_2}. Using a Neumann series argument, we deduce the expansion
    \begin{align}
        \Pi_{\bbL'}\, \big(\vts \calH^\delta_{\AUG, \Kv \cdot \vvh_1} - E_D - \delta\, z \vts\big)^{-1}\, \Pi_{\bbL'} &= \frac{1}{\delta}\; \Pi_{\bbL'}\; \mathfrak{R}^{\vts \delta} (z) \; \big[\vts \Id_{\scrX_{\bbL'}} + \calE^\delta_3 (z) \vts\big]^{-1}\; \Pi_{\bbL'} \nonumber
        \\
        &= \frac{1}{\delta}\; \Pi_{\bbL'}\; \Bigl[\vts \mathfrak{R}^{\vts \delta} (z) + \frac{1}{\eta}\, \calO ( \delta^{1/4} ) \vts\Bigr]\; \Pi_{\bbL'}\quad \textnormal{in}\ \ \scrL(\scrX_{\bbL'}),\label{eq:res_exp_dem_3}
    \end{align} 
    which is precisely \eqref{eq:resolvent_on_near_components}. Therefore, the proof of Proposition \ref{prop:resolvent_expansion_HtildeDeltaNEAR} is complete.
    \end{dem}

    \vspace{1\baselineskip} \noindent
    We finish with the proof of Proposition \ref{prop:dominant_term_without_near_projectors}.
    \begin{dem}[of Proposition \ref{prop:dominant_term_without_near_projectors}]%
    Define
    \begin{equation*}
        \mathfrak{R}^{\vts \delta} (z) := \calJ_\delta^*\, \mathds{1}_{\bbL (\delta^{3/4})}\, \big(\vts \calD^{\vts \delta} (0) - z \vts\big)^{-1}\, \mathds{1}_{\bbL (\delta^{3/4})}\, \calJ^{}_\delta. 
    \end{equation*}
    Then Proposition \ref{prop:dominant_term_without_near_projectors} is equivalent to the expansion $\Pi_\NEAR\, \mathfrak{R}^{\vts \delta} (z)\, \Pi_\NEAR = \mathfrak{R}^{\vts \delta} (z) + \eta^{-1} \calO(\delta^{1/4})$ in $\scrL(L^2_{\Kv \cdot \vvh_1} (\cylaug))$. We write
    \begin{equation*}
        \Pi_\NEAR\; \mathfrak{R}^{\vts \delta} (z) \; \Pi_\NEAR = \big(\vts \Pi_\NEAR\; \calJ_\delta^* \vts\big)\, \mathds{1}_{\bbL (\delta^ {3/4})}\, \big(\vts \calD^{\vts \delta} (0) - z \vts\big)^{-1}\, \mathds{1}_{\bbL (\delta^ {3/4})}\, \big(\vts \calJ^{}_\delta \; \Pi_\NEAR \vts\big).
    \end{equation*}
    By applying \eqref{eq:Jstar_almost_invariant_under_Pi} with $\bbL' = \bbL(\delta^{3/4})$, we find that $\Pi_\NEAR\, \calJ^*_\delta - \calJ^*_\delta\, \mathds{1}_{\bbL (\delta^ {3/4})}$ is bounded from $\ell^2(\bbL; H^1(\R; \C^2))$ to $L^2_{\Kv \cdot \vvh_1} (\cylaug)$ with an $\calO (\delta^{1/4})$ norm. Since $(\vts \calD^{\vts \delta} (0) - z \vts)^{-1}$ is by assumption bounded from $\ell^2 (\bbL; L^2(\R; \C^2))$ to $\ell^2(\bbL; H^1(\R; \C^2))$, uniformly in $z \in S$, with norm less than $1/\eta$, we deduce the following expansion in $\scrL(L^2_{\Kv \cdot \vvh_1} (\cylaug))$:
    \begin{equation*}
        \Pi_\NEAR\; \mathfrak{R}^{\vts \delta} (z) \; \Pi_\NEAR = \calJ^*_\delta\, \mathds{1}_{\bbL (\delta^ {3/4})}\, \big(\vts \calD^{\vts \delta} (0) - z \vts\big)^{-1}\, \mathds{1}_{\bbL (\delta^ {3/4})}\, \big(\vts \calJ^{}_\delta \; \Pi_\NEAR \vts\big) + \eta^{-1}\, \calO (\delta^{1/4}).
    \end{equation*}
    Moreover, by applying \eqref{eq:dual_Jstar_almost_invariant_under_Pi} with $\bbL' = \bbL(\delta^{3/4})$, we get that $\calJ^{}_\delta\, \Pi_\NEAR - \mathds{1}_{\bbL (\delta^ {3/4})}\, \calJ^{}_\delta$ is bounded from $L^2_{\Kv \cdot \vvh_1} (\cylaug)$ to $\ell^2(\bbL; H^{-1}(\R; \C^2))$ with an $\calO (\delta^{1/4})$ norm. Since $(\vts \calD^{\vts \delta} (0) - z \vts)^{-1}$ is bounded from $\ell^2(\bbL; H^{-1}(\R; \C^2))$ to $\ell^2 (\bbL; L^2(\R; \C^2))$ uniformly in $z \in S$, with norm less than $1/\eta$, we obtain the expansion
    \begin{align}
        \Pi_\NEAR\; \mathfrak{R}^{\vts \delta} (z) \; \Pi_\NEAR &= \calJ^*_\delta\, \mathds{1}_{\bbL (\delta^ {3/4})}\, \big(\vts \calD^{\vts \delta} (0) - z \vts\big)^{-1}\, \mathds{1}_{\bbL (\delta^ {3/4})}\, \calJ^{}_\delta + \eta^{-1}\, \calO (\delta^{1/4}) \nonumber
        \retss
        &= \mathfrak{R}^{\vts \delta} (z) + \eta^{-1}\, \calO (\delta^{1/4}) \quad \textnormal{in}\ \ \scrL(L^2_{\Kv \cdot \vvh_1} (\cylaug)),\label{eq:res_exp_dem_4}
    \end{align}
    which proves Proposition \ref{prop:dominant_term_without_near_projectors}.
    \end{dem}

    \appendix
    \section{Rational matters} \label{sec:rational_matters}
    \noindent
    The goal of this section is to show how the resolvent expansion in Theorem \ref{thm:resolvent_expansion} simplifies when $r$ is a rational number. Note that for rational edges, the lifting strategy adopted in this paper is not necessary, since Theorem \ref{thm:resolvent_expansion} can be directly formulated at the level of $\calH^\delta_\EDGE$ in $\R^2$; see \cite{fefferman2016edge,lee2019elliptic,drouot2019characterization,drouot2020edge}. Thus the present section's sole purpose is to complete the presentation.

    In Section \ref{sec:app:neighborhoods_quasimomenta}, we recall the classification of rational edges as zigzag-type and armchair-type, and we characterize the set $\bbL (\veps)$ defined by \eqref{eq:def_L_veps} for $\veps$ small enough and for such edges. Section \ref{sec:resolvent_expansion_rational_edges} shows how Theorem \ref{thm:resolvent_expansion} simplifies for rational edges.

    \subsection{Neighborhoods of high-symmetry quasi-momenta}\label{sec:app:neighborhoods_quasimomenta}
    For rational edges, it turns out that the set $\bbL (\veps) \subseteq \bbL$, defined by \eqref{eq:def_L_veps}, can be characterized explicitly for $\veps$ small enough. In fact, assume that $r = b_1 / a_1$ where $b_1 \in \Z$ and $a_1 \in \N$ are coprime, with the convention that $(a_1, b_1) = (1, 0)$ if $r = 0$. In \cite{fefferman2024discrete,drouot2020edge}, rational edges are divided into two categories: 
    \begin{itemize}
    \item \emph{Zigzag-type edges}, defined by the condition $a_1 \neq b_1 \mod 3$, are such that the line $\Kv + \R\, \kvh_2$ intersects the lattice $\Kv + \Lambda^*$ without being arbitrarily close to $\Kv' + \Lambda^*$.
    \item \emph{Armchair-type edges}, defined by the condition $a_1 = b_1 \mod 3$, are such that $\Kv + \R\, \kvh_2$ intersects both $\Kv + \Lambda^*$ and $\Kv' + \Lambda^*$.
    \end{itemize}
    We prove that Proposition \ref{prop:veps_neighborhood_K} is consistent with this classification.

    \begin{proposition}\label{prop:veps_neighborhood_K_rational}
    Let $\vvh_1 = \vvv_1 + r \vvv_2$ and $\kvh_2 = -r \kv_1 + \kv_2$, and suppose that $r = b_1 / a_1$, where $b_1 \in \Z$ and $a_1 \in \N$ are coprime, with the convention that $(a_1, b_1) = (1, 0)$ if $r = 0$.
    \begin{enumerate}[label={$(\alph*).$}, ref={$\alph*$}]
        \item \textit{Zigzag-type edges}: If $a_1 \neq b_1 \mod 3$, then for $\veps > 0$ small enough, 
        \begin{equation*}
        \bbL (\veps) = \{\Kv\} \times a_1\, \Z \textAND \gamma_\iiv = 0 \quad \spforall \iiv \in \bbL (\veps).
        \end{equation*}
        In other words, as $\veps \to 0$, the line $\Kv + \R\, \kvh_2$ intersects $\Kv + \Lambda^*$, but is bounded away from $\Kv' + \Lambda^*$.
        \item \textit{Armchair-type edges}: If $a_1 = b_1 \mod 3$, then there exists an integer $m_0$ depending on $a_1$ and $b_1$, such that for $\veps > 0$ small enough,
        \begin{equation*}
        \bbL (\veps) = \Big( \{\Kv\} \times a_1\, \Z \Big) \cup \Big( \{\Kv'\} \times (m_0 + a_1\, \Z) \Big),  \textAND \gamma_\iiv = 0 \quad \spforall \iiv \in \bbL (\veps).
        \end{equation*}
        In other words, for $\veps$ small enough, $\Kv + \R\, \kvh_2$ intersects both $\Kv + \Lambda^*$ and $\Kv' + \Lambda^*$.
    \end{enumerate}
    \end{proposition}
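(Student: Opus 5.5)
The argument rests on one observation: for rational $r$, the quantity $\gamma_\iiv$ defined in \eqref{eq:gamma_Iandlambda_I} takes only finitely many values. So for $\veps$ small, the condition $|\gamma_\iiv|\le\veps$ reduces to $\gamma_\iiv=0$, which is an arithmetic condition on $m$. Write $\iiv=(\Kv_\star,m)$. Recall
\begin{equation*}
\gamma_\iiv = 2\pi\,\modulo{\theta_{\Kv_\star} - m\,b_1/a_1 + \tfrac12} - \pi,
\qquad \theta_{\Kv_\star} := \frac{(\Kv-\Kv_\star)\cdot\vvh_1}{2\pi}.
\end{equation*}
From \eqref{eq:honeycomb_vector_quasimomenta} and $\Kv = 2\pi(\kv_1-\kv_2)/3$ we get $\Kv\cdot\vvv_1 = 2\pi/3$ and $\Kv\cdot\vvv_2=-2\pi/3$. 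Hence $\Kv\cdot\vvh_1 = \tfrac{2\pi}{3}(1-r)$, so that $\theta_{\Kv}=0$ and $\theta_{\Kv'} = \tfrac{2}{3}(1-r) = \tfrac{2(a_1-b_1)}{3a_1}$.

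First, finiteness. Both $\theta_{\Kv_\star}$ and $m b_1/a_1$ lie in $\tfrac{1}{3a_1}\Z$. Therefore $\gamma_\iiv \in 2\pi\big(\tfrac{1}{3a_1}\Z\big) \cap [-\pi,\pi)$, a finite set whose nonzero elements satisfy $|\gamma_\iiv|\ge 2\pi/(3a_1)$. Taking $\veps<2\pi/(3a_1)$, we obtain
\begin{equation*}
\bbL(\veps)=\{\iiv\in\bbL:\gamma_\iiv=0\}.
\end{equation*}
Moreover $\gamma_\iiv=0$ holds exactly when $\theta_{\Kv_\star}-m b_1/a_1\in\Z$. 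This immediately gives the claim $\gamma_\iiv=0$ for all $\iiv\in\bbL(\veps)$ in both cases.

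Next, the case $\Kv_\star=\Kv$. The condition becomes $m b_1/a_1\in\Z$. Since $\gcd(a_1,b_1)=1$, this holds if and only if $a_1\mid m$. (With the convention $(a_1,b_1)=(1,0)$ for $r=0$, every $m$ qualifies, consistent with $a_1\Z=\Z$.) This yields the component $\{\Kv\}\times a_1\Z$ in both parts (a) and (b).

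Then, the case $\Kv_\star=\Kv'$. Multiplying the condition by $3a_1$, it reads
\begin{equation*}
2(a_1-b_1) - 3mb_1 \in 3a_1\Z.
\end{equation*}
Reducing mod $3$ shows that a necessary condition is $3\mid 2(a_1-b_1)$, i.e. $a_1\equiv b_1 \pmod 3$.
\begin{itemize}
\item If $a_1\not\equiv b_1 \pmod 3$, no $m$ works. Then $\{\Kv'\}\times\Z$ does not meet $\bbL(\veps)$, which proves part (a).
\item If $a_1-b_1=3k$, the condition becomes $2k - mb_1\in a_1\Z$. Since $b_1$ is invertible mod $a_1$ (coprimality), this is equivalent to $m\equiv m_0 := 2k\,b_1^{-1} \pmod{a_1}$. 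This gives $\{\Kv'\}\times(m_0+a_1\Z)$ and proves part (b). The degenerate case $a_1=1$ is trivially covered.
\end{itemize}

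Finally, the geometric rephrasings ("intersects $\Kv+\Lambda^*$", "bounded away from $\Kv'+\Lambda^*$") follow from Proposition \ref{prop:veps_neighborhood_K}. The set of $\lvar$ with $\Kv+\lvar\kvh_2\in B^{\Kv'}_\veps$ is empty for small $\veps$ in the zigzag case, while $\gamma_\iiv=0$ means that the line passes exactly through $\Kv_\star+\floK_\iiv$.

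The only step needing care is the mod-$3$ bookkeeping for $\Kv'$: the computation of $\Kv\cdot\vvv_j$ and the choice of the residue $m_0$. Everything else is elementary.
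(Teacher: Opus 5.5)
Your proof is correct and follows the paper's argument: you reduce $|\gamma_\iiv|\le\veps$ to $\gamma_\iiv=0$, solve $mr\in\Z$ for $\Kv$, and for $\Kv'$ solve the linear Diophantine condition $mb_1+na_1=2(a_1-b_1)/3$, using the mod-$3$ obstruction and coprimality. You also make explicit the discreteness of the values of $\gamma_\iiv$ (the threshold $\veps<2\pi/(3a_1)$), which the paper uses only implicitly, and your residue $m_0\equiv 2k\,b_1^{-1}\pmod{a_1}$ is right; with the paper's B\'ezout identity $a_1b_2-a_2b_1=1$ the particular solution is $m=-c\,a_2$ rather than the paper's $m_0=c\,b_2$, a slip that does not affect the existence claim.
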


    \begin{dem}
    \textbf{Step 1.} We first consider the lattice $\Kv + \Lambda^*$, by letting $\iiv = (\Kv, m)$, with $m \in \Z$. From its definition \eqref{eq:gamma_Iandlambda_I}, we have that $\gamma_\iiv = 2\pi \modulo{-mr + 1/2} - \pi$. Therefore, the condition $\iiv \in \bbL (\veps)$, which implies that $\gamma_\iiv$ is arbitrarily small, holds if and only if $mr \in \Z$, that is, for integers $m \in a_1\, \Z$. Thus, for $\veps > 0$ small enough and for both zigzag-type and armchair-type edges, $\{\Kv\} \times a_1\, \Z \subset \bbL (\veps)$ and $\gamma_\iiv = 0$ for $m \in a_1\, \Z$.

    \vspace{1\baselineskip} \noindent
    \textbf{Step 2.} We now focus on $\Kv' + \Lambda^*$ by letting $\iiv = (\Kv', m)$, $m \in \Z$. From \eqref{eq:gamma_Iandlambda_I}, we obtain
    \begin{equation*}
        \gamma_\iiv = 2\pi\vts \modulowoparen{\big( 2/3 - (m + 2/3)\, r  + 1/2 \big)} - \pi.
    \end{equation*}
    Therefore, the condition $\iiv \in \bbL (\veps)$, which implies that $\gamma_\iiv$ is arbitrarily small, holds if and only if $2/3 - (m + 2/3)\, r$ is an integer, say $n \in \Z$. Hence, using the expression $r = b_1 / a_1$, we are reduced to finding all the solutions to the equation
    \begin{equation}\label{eq:veps_neighborhood_K_rational_dem1}
        \textit{Find $(m, n) \in \Z^2$ such that} \qquad m\vts b_1 + n\vts a_1 = 2\, (a_1 - b_1) / 3.
    \end{equation}
    First, we note that this equation has a solution \emph{only if} $(a_1 - b_1) / 3$ is an integer, that is, only if $a_1 = b_1 \mod 3$. The contrapositive is that if $a_1 \neq b_1 \mod 3$, then as $\veps \to 0$, $\gamma_\iiv$ lies outside a neighborhood of $0$, so $\iiv \not\in \bbL (\veps)$ for $\veps$ small enough. This, combined with the first part (on $\Kv + \Lambda^*$), concludes the proof for zigzag-type edges.

    Now we consider armchair-type edges, by assuming that $a_1 = b_1 \mod 3$. Then $c := 2\, (a_1 - b_1) / 3 \in \Z$. Since $a_1$ and $b_1$ are coprime, we have from Bézout's theorem the existence of $(a_2, b_2) \in \Z^2$ such that $a_1\vts b_2 - a_2\vts b_1 = 1$. Moreover, any solution of \eqref{eq:veps_neighborhood_K_rational_dem1} has the form $(m, n) = (c\vts b_2 + d\vts a_1, -c\vts a_2 - d\vts b_1)$ for $d \in \Z$. Thus, for $\veps$ small enough, we deduce that $\iiv = (\Kv', m) \in \bbL (\veps)$ if and only if $m \in m_0 + a_1\, \Z$ with $m_0 := c\vts b_2$, and $\gamma_\iiv = 0$. This, combined with the first part (on $\Kv + \Lambda^*$), concludes the proof for armchair-type edges.
    \end{dem}

    \subsection{The resolvent expansion for rational edges}\label{sec:resolvent_expansion_rational_edges}
    We highlight in this section the fibered structure of $\calH^\delta_{\AUG, \kpar}$ to illustrate how Theorem \ref{thm:resolvent_expansion} relates to earlier works \cite{drouot2020defect} on rational edges.

    Assume that $r = b_1 / a_1 \in \Q$, where $b_1 \in \Z$ and $a_1 \in \N$ are coprime, with the convention that $(a_1, b_1) = (1, 0)$ if $r = 0$. Recall that $A^\delta_\AUG (\xv, s)$ is the augmented coefficient defined by \eqref{eq:def_augmented_potential}. Fixing $s \in \R$ as a parameter, one sees from the periodicity properties \eqref{eq:augmented_potential_property} of $A^\delta_\AUG (\xv, s)$ that $\xv \mapsto A^\delta_\AUG (\xv, s)$ is $\Z (a_1\, \vvh_1)$--periodic. Define the cylinder $\Sigma := \R^2 / \Z (a_1 \vvh_1)$, and let $L^2_{\kpar} (\Sigma)$ denote the space of functions $u \in L^2_{\LOC} (\R^2)$ such that $\smash{\euler^{- \icplx\vts \kpar\vts (\kvh_1 \cdot \xv)}\, u \in L^2 (\Sigma)}$. Such functions are $\kpar$-pseudo-periodic in the $(a_1 \vvh_1)$--direction and decay in the transverse direction:
    \begin{equation*}
    \spforall u \in L^2_{\kpar} (\Sigma), \quad u(\xv + a_1\, \vvh_1) = \euler^{\vts\icplx\vts \kpar}\, u(\xv), \quad \zeta \mapsto u (\xv + \zeta\, \vvh_2) \in L^2 (\R), \quad \xv \in \R^2,
    \end{equation*}
    Then Theorem \ref{thm:resolvent_expansion} reduces to:
    \begin{theorem}\label{thm:resolvent_expansion_rational}
    Let $\vvh_1 = \vvv_1 + r \vvv_2$, and assume that $r = b_1 / a_1$ where $b_1 \in \Z$ and $a_1 \in \N$ are coprime, with the convention that $(a_1, b_1) = (1, 0)$ if $r = 0$. Let $\Kv_\star \in \{\Kv, \Kv'\}$ and consider the operator $\calT_{\Kv_\star} \in \scrL(L^2_{\kpar} (\Sigma), L^2(\R; \C^2))$ and its adjoint
    \begin{align*}
        \spforall u \in L^2_{\kpar} (\Sigma), \qquad \calT_{\Kv_\star}\, u(\zeta) &:= \int_0^{1} (\overline{\Phi^{\Kv}}\, u)\vts (\vts \yrm_1\vts (a_1 \vvh_1) + \zeta\vts \vvh_2 \vts)\, d\yrm_1,
        \rets
        \spforall \alpha \in L^2(\R; \C^2), \qquad \calT^*_{\Kv_\star}\, \alpha (\xv) &\phantom{:}= \Phi^{\Kv_\star} (\xv)^\transp\, \alpha (\kvh_2 \cdot \xv).
    \end{align*}
    From these operators, define $\calJ_{\delta, \Kv_\star} (\mu) := \calU_\delta\, \calT_{\Kv_\star}\, \euler^{- \icplx\vts \delta \mu\vts \kvh_1 \cdot \xv}$ where $\euler^{- \icplx\vts \delta \mu\vts \kvh_1 \cdot \xv}$ is identified with the multiplication operator. Moreover, let $T_a: \alpha \in L^2(\R) \mapsto \alpha(\cdot - a)$ and $\eta  > 0$. Under the assumptions of Theorem \ref{thm:resolvent_expansion}, there exist constants $\delta_0, \mu_0 > 0$ such that if
    \begin{equation*}
        \delta \in (0, \delta_0), \quad \mu \in (-\mu_0, \mu_0), \quad \min_{\Kv_\star \in \{\Kv, \Kv'\}} \Big[ \dist \big(z,\; \spec \calD^{\Kv_\star} (\mu) \big) \Big] > \eta,
    \end{equation*}
    then setting $\kpar = \Kv \cdot \vvh_1 + \delta \mu$, the operator $\calH^\delta_{\AUG, \kpar} - E_D - \delta\, z$ is invertible on $L^2_{\kpar} (\cylaug)$, and the following holds: 
    \begin{enumerate}[label={$(\alph*).$}, ref={$\alph*$}]
        \item \textit{Zigzag-type edges}: if $a_1 \neq b_1 \mod 3$, then
        \begin{multline} \label{eq:resolvent_expansion_zigzag}
        \bigg( \frac{\calH^\delta_{\AUG, \kpar} - E_D}{\delta} - z \bigg)^{-1} = \int^\oplus_\R \calJ_{\delta, \Kv} (\mu)^*\, T^*_{\delta\vts s}\, \big( \calD^{\Kv} (\mu) - z \big)^{-1}\, T_{\delta\vts s}\, \calJ_{\delta, \Kv} (\mu)\; ds 
        + \calO( \delta^{1/4} )
        \end{multline}
        \item \textit{Armchair-type edges}: if $a_1 = b_1 \mod 3$, then
        \begin{multline} \label{eq:resolvent_expansion_armchair}
        \bigg( \frac{\calH^\delta_{\AUG, \kpar} - E_D}{\delta} - z \bigg)^{-1}\\
        = \sum_{\Kv_\star \in \{\Kv, \Kv'\}} \int^\oplus_\R \calJ_{\delta, \Kv_\star} (\mu)^*\, T^*_{\delta\vts s}\, \big( \calD^{\Kv_\star} (\mu) - z \big)^{-1}\, T_{\delta\vts s}\, \calJ_{\delta, \Kv_\star} (\mu)\; ds 
        + \calO( \delta^{1/4} ),\ 
        \end{multline}
    \end{enumerate}
    where the error terms in \eqref{eq:resolvent_expansion_zigzag} and \eqref{eq:resolvent_expansion_armchair} are with respect to the $\scrL(L^2_{\kpar} (\cylaug), H^2_{\kpar, \xv} (\cylaug))$ norm. 
    \end{theorem}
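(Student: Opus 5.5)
The plan is to specialize Theorem \ref{thm:resolvent_expansion} using the explicit description of $\bbL(\delta^{3/4})$ in Proposition \ref{prop:veps_neighborhood_K_rational}, and then to rewrite the block-diagonal leading term as a direct integral in $s$.

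First, for $\delta$ small, Proposition \ref{prop:veps_neighborhood_K_rational} gives $\gamma_\iiv = 0$ for all $\iiv \in \bbL(\delta^{3/4})$. Hence every block $\calD^{\Kv_\iiv}(\mu + \delta^{-1}\gamma_\iiv)$ appearing in the leading term equals $\calD^{\Kv_\iiv}(\mu)$. Moreover, $\Kv_\iiv$ takes only the value $\Kv$ in the zigzag case, and the values $\Kv, \Kv'$ in the armchair case. Consequently $\spec \calD^{\vts\delta}(\mu)$, restricted to the relevant indices, is $\bigcup_{\Kv_\star} \spec \calD^{\Kv_\star}(\mu)$, so the hypothesis on $z$ implies the hypothesis of Theorem \ref{thm:resolvent_expansion}. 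That theorem then yields invertibility together with the expansion
\[
\sum_{\Kv_\star}\ \sum_{m \in m_{\Kv_\star} + a_1\Z} \euler^{\icplx \delta\mu \kvh_1\cdot \xv}\, \calJ^*_{\delta,(\Kv_\star,m)}\, (\calD^{\Kv_\star}(\mu) - z)^{-1}\, \calJ_{\delta,(\Kv_\star,m)}\, \euler^{-\icplx \delta\mu \kvh_1\cdot \xv} + \calO(\delta^{1/4}).
\]
The error is first obtained in $\scrL(L^2)$. To upgrade it to $\scrL(L^2, H^2_{\kpar,\xv})$, I would revisit the Schur complement argument: $\Pi_\NEAR$ is smoothing in $\xv$ by Proposition \ref{prop:smoothing_near_projectors}, and $\bfD^{-1}$ maps into $H^2_{\kpar,\xv}$. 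Alternatively, I would argue via the identity $(\calH - E_D - \delta z)^{-1} = (\calH^0 + c)^{-1}\bigl[\Id + (\text{lower order})(\calH - E_D - \delta z)^{-1}\bigr]$.

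Second, I would identify each summand over $m$ with a fiber of the direct integral. For rational $r$, the augmented space $L^2_{\kpar}(\cylaug)$ fibers over $s$: since $A^\delta_\AUG(\xv,s)$ is $\Z(a_1\vvh_1)$-periodic in $\xv$ for fixed $s$, the restriction $F \mapsto F(\cdot, s)$ lands in $L^2_{\kpar}(\Sigma)$, up to the bookkeeping of the lattice identifications. I would compute $\calT_{(\Kv_\star,m)} F$ from \eqref{eq:averaging_operator}. The phase $\euler^{\icplx(\lvar_\iiv + \Kv\cdot\vvv_2)s}$ with $\lvar_\iiv = (\Kv_\star - \Kv)\cdot \vvv_2 + 2\pi m$ combines with the $\av_2$-periodicity, and the sum over $m \in a_1\Z$ of the modes $\euler^{2\pi\icplx m s}$ acts as a Fourier series in $s$ on the period $1/a_1$ cell. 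Poisson summation should then convert $\sum_m \calJ^*_{\delta,\iiv}(\cdots)\calJ_{\delta,\iiv}$ into an integral over $s$ of the 2D operators $\calJ_{\delta,\Kv_\star}(\mu)^*(\cdots)\calJ_{\delta,\Kv_\star}(\mu)$ acting at height $s$. The slow variable $\zeta = \delta\kvh_2\cdot(\xv + s\vvv_2) = \delta\kvh_2\cdot\xv + \delta s$ produces the translation $T_{\delta s}$, which commutes past $\calU_\delta$ and conjugates the Dirac resolvent.

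I expect this second step to be the main obstacle, namely the exact bookkeeping of phases and normalizations. This includes the $\Z(a_1\vvh_1)$-cell versus $\Omega$ in $\calT_{\Kv_\star}$, the constant $m_0$ for the $\Kv'$ sublattice, and verifying that the Poisson summation yields exactly $\int^\oplus ds$ with no leftover constants. I would check the identity on test functions of the form $\calT^*_\iiv g$ and then extend it by density.
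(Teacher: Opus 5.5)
Your overall route is the paper's proof. You use Proposition \ref{prop:veps_neighborhood_K_rational} to get $\gamma_\iiv=0$ on $\bbL(\delta^{3/4})$ and to identify that index set, apply Theorem \ref{thm:resolvent_expansion}, and turn the sum over $m$ of $\calJ^*_{\delta,\iiv}(\calD^{\Kv_\iiv}(\mu)-z)^{-1}\calJ_{\delta,\iiv}$ into the direct integral by a Fourier-series identity.

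The step that would fail is your upgrade of the error to $\scrL(L^2_{\kpar}(\cylaug),H^2_{\kpar,\xv}(\cylaug))$: the leading term itself is not bounded into $H^2_{\kpar,\xv}(\cylaug)$. For $F\in L^2_{\kpar}(\cylaug)$, the profiles $\calJ_{\delta,\iiv}F$ are merely in $L^2(\R;\C^2)$, and the Dirac resolvent gains only one derivative. The map $\calJ^*_{\delta,\iiv}g(\xv,s)=\delta^{1/2}\varphi_\iiv(\xv,s)^\transp g(\delta\kvh_2\cdot(\xv+s\vvv_2))$ converts $\zeta$-derivatives into $\xv$-derivatives. So second $\xv$-derivatives of the output involve $\delta^2 g''$ with $g$ only in $H^1$. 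Concretely, test on $F=\calJ^*_{\delta,\iiv}h$ with $\|h\|=1$ and $\widehat h$ concentrated near a frequency $N\to\infty$. At fixed $\delta$, the $\iiv$-th contribution then has $H^2_{\kpar,\xv}$-norm of order $\delta^2N$.

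Since the true resolvent does map into $H^2_{\kpar,\xv}(\cylaug)$, the difference cannot be small in that norm. Neither of your mechanisms helps. The $\xv$-smoothing from $\Pi_\NEAR$ (Proposition \ref{prop:smoothing_near_projectors}) and from $\chi(\delta^{1/4}D_\zeta)$ is exactly what Lemma \ref{lem:technical_res} and Proposition \ref{prop:dominant_term_without_near_projectors} discard, at a cost controlled only in $\scrL(L^2_{\kpar}(\cylaug))$. Factoring through $(\calH^0+c)^{-1}$ controls the resolvent, not the leading term. The paper's proof makes no such upgrade: it proves only the identity of leading terms and inherits the $\frac{1}{\eta}\calO(\delta^{1/4})$ error in $\scrL(L^2_{\kpar}(\cylaug))$ from Theorem \ref{thm:resolvent_expansion}. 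That is the version you should aim for.

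Two smaller repairs are also needed.
\begin{itemize}
\item The hypothesis of Theorem \ref{thm:resolvent_expansion} involves the full operator $\calD^{\vts\delta}(\mu)$, including blocks with $\gamma_\iiv\neq 0$. For rational $r$ those $\gamma_\iiv$ take finitely many nonzero values, so $|\mu+\delta^{-1}\gamma_\iiv|\gtrsim\delta^{-1}$. Proposition \ref{prop:pties_Dirac_operator} then puts their spectra in $\{|E|\gtrsim\delta^{-1}\}$, away from the compact set $S$. This is what justifies your phrase ``restricted to the relevant indices''.
\item Functions of the form $\calT^*_\iiv g$ are not dense in $L^2_{\kpar}(\cylaug)$, since the $\calT_\iiv$ have a large common kernel. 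Verify the identity on a genuinely dense class instead, as the paper does with $\euler^{-\icplx\Kv\cdot(\xv+s\vvv_2)}F\in\scrC^\infty_0(\cylaug)$.
\end{itemize}
For that verification, the phase bookkeeping you are worried about is absorbed by \eqref{eq:dominant_term_rational_2}. After the substitution $\yv\mapsto\yv+\zeta\vvv_2$ in $\calT_\iiv$, one uses
\[
\sum_{m\in a_1\Z}\euler^{2\pi\icplx ms}\int_\Omega\euler^{2\pi\icplx m\kvh_2\cdot\yv}u(\yv)\,d\yv=\int_0^1u(\yrm_1\,a_1\vvh_1-s\vvh_2)\,d\yrm_1
\]
for $\Lambda$-periodic $u$. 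This holds because $2\pi a_1\kvh_2\in\Lambda^*$ and $|\Omega|=1$. The $\Kv'$ family with $m\in m_0+a_1\Z$ reduces to the same identity after factoring out $\euler^{2\pi\icplx m_0(s+\kvh_2\cdot\yv)}$.
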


    \begin{dem}
    For simplicity, we present the calculations for zigzag-type edges and for $\mu = 0$, the extension to armchair-type edges or to $\mu \neq 0$ being straightforward. By Proposition \ref{prop:veps_neighborhood_K_rational}, for zigzag-type edges, one has $\bbL (\delta^{3/4}) = \{\Kv \} \times a_1 \Z$ and $\gamma_\iiv = 0$ for $\iiv \in \bbL (\delta^{3/4})$ and for $\delta$ small enough. In particular, from \eqref{eq:mode_for_averaging_operator}, $\varphi_{(\Kv, m)} (\xv, s) := \euler^{\vts \icplx\vts (\Kv \cdot \vvv_2 + 2\pi m)\vts s}\, \Phi^{\Kv} (\xv)$ for $m \in a_1 \Z$. Our goal is to show that in this case, the dominant term in \eqref{eq:resolvent_expansion_Haug} coincides with the dominant term in \eqref{eq:resolvent_expansion_zigzag}, that is,
    \begin{equation}\label{eq:dominant_terms_equality_rational_edges}
        \sum_{\iiv \in \{\Kv \} \times a_1 \Z} \calJ^*_{\delta, \iiv}\; \big(\vts \calD^\Kv (0) - z \vts\big)^{-1}\; \calJ^{}_{\delta, \iiv} = \int^\oplus_\R \calT^*_\Kv\, \calU^*_\delta\, T^*_{\delta\vts s}\, \big( \calD^{\Kv} (0) - z \big)^{-1}\, T_{\delta\vts s}\, \calU^{}_\delta\, \calT^{}_\Kv\; ds.
    \end{equation}
    Fix $m \in a_1 \Z$. By a density argument, since the operators in \eqref{eq:dominant_terms_equality_rational_edges} are bounded, it is sufficient to show \eqref{eq:dominant_terms_equality_rational_edges} for functions $F \in L^2_{\Kv \cdot \vvh_1} (\cylaug)$ such that $\euler^{- \icplx\, \Kv \cdot \xv}\, F \in \scrC^\infty_0 (\cylaug)$. For any $\zeta \in \R$, using the change of variables $\yv \mapsto \yv + \zeta \vvv_2$ in the definition \eqref{eq:averaging_operator} of $\calT_\iiv$ leads to
    \begin{equation*}
        \calT_{(\Kv, m)}\, F (\zeta) = \int_{\Omega - \zeta \vvv_2} (\vts\overline{\varphi_{(\Kv, m)}}\, F\vts)\vts (\yv + \zeta \vvv_2, -\kvh_2 \cdot \yv)\, d\yv = \int_{\Omega} (\vts\overline{\varphi_{(\Kv, m)}}\, F\vts)\vts (\yv + \zeta \vvv_2, -\kvh_2 \cdot \yv)\, d\yv,
    \end{equation*}
    where the last equality follows from the $\Lambda$--periodicity of the integrand with respect to $\yv$. We apply the operator $\calU^*_\delta\, (\calD^{\Kv} (0) - z)^{-1}\, \calU^{}_\delta$, interchange it with the integral over $\Omega$, and use the simplified expression for $\varphi_{(\Kv, m)}$ to obtain
    \begin{align*}
        \calU^*_\delta\, (\calD^{\Kv} (0) &- z)^{-1}\, \calU^{}_\delta\, \calT_{(\Kv, m)}\, F (\zeta) = \int_\Omega \euler^{\icplx\vts 2\pi m\vts \kvh_2 \cdot \yv}\, g(\yv, \zeta)\, d\yv, \textWHERE 
        \rets
        g(\yv, \cdot) &:= \euler^{\icplx\vts (\Kv \cdot \vvv_2)\, \kvh_2 \cdot \yv}\; \big[\vts \calU^*_\delta\, (\calD^{\Kv} (0) - z)^{-1}\, \calU^{}_\delta \vts\big]\, f (\yv, \cdot), \textWITH 
        \retss
        f(\yv, \zeta) &:= \overline{\Phi^{\Kv} (\yv + \zeta \vvv_2)}\, F (\yv + \zeta \vvv_2, -\kvh_2 \cdot \yv).
    \end{align*}
    Applying $\calT^*_{(\Kv, m)}$ and using $\calJ^{}_{\delta, (\Kv, m)} = \calU^{}_\delta\, \calT^{}_{(\Kv, m)}$, we find that the left-hand side of \eqref{eq:dominant_terms_equality_rational_edges} is given by
    \begin{multline}
        \!\!\!\sum_{\iiv \in \{\Kv \} \times a_1 \Z} \!\!\!\!\calT^*_\iiv\, \calU^*_\delta\, (\calD^{\Kv} (0) - z)^{-1}\, \calU^{}_\delta\, \calT^{}_\iiv\, F\vts (\xv, s) = \sum_{m \in a_1 \Z} \calT^*_{(\Kv, m)}\, \calU^*_\delta\, (\calD^{\Kv} (0) - z)^{-1}\, \calU^{}_\delta\, \calT^{}_{(\Kv, m)}\, F\vts (\xv, s)
        \\
        = \euler^{\vts \icplx\vts (\Kv \cdot \vvv_2)\vts s}\, \Phi^{\Kv} (\xv)^\transp \sum_{m \in a_1 \Z} \euler^{\vts \icplx\vts 2\pi m\vts s}\, \int_\Omega \euler^{\vts \icplx\vts 2\pi m\vts \kvh_2 \cdot \yv}\, g(\yv, \kvh_2 \cdot (\xv + s \vvv_2))\, d\yv.\label{eq:dominant_term_rational_1}
    \end{multline}
    To further reformulate \eqref{eq:dominant_term_rational_1}, we will rely on the following Fourier series identity: since the vector $2\pi\vts a_1 \kvh_2 = 2\pi (-b_1 \kv_1 + a_1 \kv_2)$ belongs to $\Lambda^*$ and $|\Omega| = |\vvv_1 \wedge \vvv_2| = 1$, for $u \in L^2(\R^2 / \Lambda)$,
    \begin{equation}\label{eq:dominant_term_rational_2}
        \sum_{m \in a_1\, \Z} \euler^{\vts \icplx\vts 2\pi m \vts s}\, \int_\Omega \euler^{\vts \icplx\vts 2\pi m\vts \kvh_2 \cdot \yv}\, u (\yv)\, d\yv = \int_0^{1} u(\yrm_1\vts (a_1 \vvh_1) - s\vts \vvh_2)\, d\yrm_1, \quad s \in \R.
    \end{equation}
    In fact, this identity is easily verified for $u(\yv) = \euler^{\vts \icplx\vts 2\pi\vts \kv \cdot \yv}$, $\kv \in \Lambda^*$; it extends to trigonometric polynomials by linearity, and then to elements of $L^2(\R^2 / \Lambda)$ by density. 

    We now note that $\zeta \mapsto g(\yv, \zeta) \in L^2 (\R)$ and $\yv \mapsto g(\yv, \zeta) \in L^2 (\R^2 / \Lambda)$, the latter following from the definition \eqref{eq:def_L2aug} of $L^2_{\Kv \cdot \vvh_1} (\cylaug)$ and from the fact that $\Phi^{\Kv} \in L^2_{\Kv}$. This allows us to apply the identity \eqref{eq:dominant_term_rational_2} to $u: \yv \mapsto g(\yv, \kvh_2 \cdot (\xv + s \vvv_2))$ so that \eqref{eq:dominant_term_rational_1} becomes
    \begin{align}
        \sum_{\iiv \in \{\Kv \} \times a_1 \Z} &\calT^*_\iiv\, \calU^*_\delta\, (\calD^{\Kv} (0) - z)^{-1}\, \calU^{}_\delta\, \calT^{}_\iiv\, F\vts (\xv, s) \nonumber
        \\
        &= \euler^{\vts \icplx\vts (\Kv \cdot \vvv_2)\vts s}\, \Phi^{\Kv} (\xv)^\transp\, \Big[\vts \int_0^{1} g\vts \big(\yrm_1\vts (a_1 \vvh_1) - s\vts \vvh_2,\, \kvh_2 \cdot (\xv + s \vvv_2)\big)\, d\yrm_1 \vts\Big] \nonumber
        \\
        &= \Phi^{\Kv} (\xv)^\transp\, \bigg\{\vts \calU^*_\delta\, (\calD^{\Kv} (0) - z)^{-1}\, \calU^{}_\delta \, \Big[\vts \int_0^1 f\vts (\yrm_1\vts (a_1 \vvh_1) - s\vts \vvh_2,\, \cdot\vts)\, d\yrm_1 \vts\Big]\, \big(\kvh_2 \cdot (\xv + s \vvv_2) \big) \vts\bigg\} \nonumber
        \\
        &= \bigg\{\vts \calT^*_{\Kv}\, T^*_s\, \calU^*_\delta\, (\calD^{\Kv} (0) - z)^{-1}\, \calU^{}_\delta\, \Big[\vts \int_0^1 f\vts (\yrm_1\vts (a_1 \vvh_1) - s\vts \vvh_2,\, \cdot\vts)\, d\yrm_1 \vts\Big] \vts\bigg\}\, (\xv).\label{eq:dominant_term_rational_3}
    \end{align}
    Finally, using the definition of $f$,
    \begin{align}
        \int_0^1 f\vts (\yrm_1\vts (a_1 \vvh_1) - s\vts \vvh_2,  \zeta)\, d\yrm_1 &= \int_0^1 \overline{\Phi^{\Kv} (\yrm_1\vts (a_1 \vvh_1) + (\zeta - s)\vts \vvv_2)}\, F (\yrm_1\vts (a_1 \vvh_1) + (\zeta - s)\vts \vvv_2, s)\, d\yrm_1 \nonumber
        \retss
        &= \big[\vts T^{}_s\, \calT^{}_\Kv F (\cdot\,, s) \vts\big] (\zeta). \label{eq:dominant_term_rational_4}
    \end{align}
    By combining \eqref{eq:dominant_term_rational_3} and \eqref{eq:dominant_term_rational_4}, we find that the left-hand side of \eqref{eq:dominant_terms_equality_rational_edges} is given by 
    \begin{equation*}
        \sum_{\iiv \in \{\Kv \} \times a_1 \Z} \calT^*_\iiv\, \calU^*_\delta\, (\calD^{\Kv} (0) - z)^{-1}\, \calU^{}_\delta\, \calT^{}_\iiv\, F\vts (\cdot\,, s) = \calT^*_{\Kv}\, \calU^*_\delta\, T^*_{\delta\vts s}\, (\calD^{\Kv} (0) - z)^{-1}\, T^{}_{\delta\vts s}\, \calU^{}_\delta\, \calT^{}_\Kv F (\cdot\,, s),
    \end{equation*}
    where the right-hand side in the above equality is exactly the right-hand side in \eqref{eq:dominant_terms_equality_rational_edges}. Therefore, the proof of \eqref{eq:dominant_terms_equality_rational_edges}, or equivalently, Theorem \ref{thm:resolvent_expansion_rational}, is complete.
    \end{dem}

    \vspace{1\baselineskip} \noindent
    Theorem \ref{thm:resolvent_expansion_rational} can be established more directly by exploiting the fibered nature of $\calH^\delta_{\AUG, \kpar}$, together with results on rational edges. In fact, for $s \in \R$ and $\kpar \in [-\pi, \pi)$, consider the two-dimensional Schrödinger operator
    \begin{equation*}
    \calH^\delta_{\kpar} (s) := -\nabla \cdot A^\delta_\AUG (\xv, s)\, \nabla + V (\xv)\quad \textnormal{acting on $L^2_{\kpar} (\Sigma)$}.
    \end{equation*}
    The operator $\smash{\calH^\delta_{\kpar} (s)}$, $s \in \R$, is a dislocation of $\smash{\calH^\delta_{\kpar} (0) = -\nabla \cdot A^\delta (\xv)\, \nabla + V(\xv)}$, whose eigenpairs define edge states; see also Remark \ref{rmk:quasiperiodic_nature_domain_wall_potential_a}. In particular, given the dependence of $A^\delta_\AUG$ in $s$, the results in \cite{fefferman2016edge,lee2019elliptic,drouot2019characterization,drouot2020edge} can be applied to $\smash{\calH^\delta_{\kpar} (s)}$ by simply replacing the domain wall $\kappa (\zeta)$ with $\kappa(\zeta + \delta\vts s)$. Furthermore, by definition \eqref{eq:def_L2aug} of $\smash{L^2_{\kpar} (\cylaug)}$,
    \begin{multline*}
    L^2_{\kpar} (\cylaug) = \int^\oplus_\R L^2_{\kpar} (\Sigma)\, ds \textAND \calH^\delta_{\AUG, \kpar} = \int^\oplus_\R \calH^\delta_{\kpar} (s)\, ds, \quad \textnormal{that is,}
    \rets
    \spforall F \in L^2_{\kpar} (\cylaug), \quad \xv \mapsto F(\xv, s) \in L^2_{\kpar} (\Sigma) \textAND (\vts \calH^\delta_{\AUG, \kpar}\, F\vts)\vts (\cdot\,, s) = \calH^\delta_{\kpar} (s)\, F(\cdot\,, s).
    \end{multline*}
    In particular, applying the resolvent expansion from \cite{drouot2020edge} to each $\calH^\delta_{\kpar} (s)$, $s \in \R$, leads directly to Theorem \ref{thm:resolvent_expansion_rational}, with \cite{drouot2020edge} even yielding an $\calO(\delta^{1/3})$--error term instead of $\calO(\delta^{1/4})$. We refer to Remark \ref{rmk:technical_exponent} for an explanation of this difference in error orders.

    \section{Application of Theorem \ref{thm:resolvent_expansion} to the functional calculus}\label{sec:functional_calculus}
    \noindent
    Theorem \ref{thm:resolvent_expansion} allows us to expand functions of $\delta^{-1}\, (\calH^\delta_{\AUG, \kpar} - E_D)$ using the Helffer-Sjöstrand formula \cite[Proposition 7.2]{helffer1989equation}, \cite{davies1995functional}: for $w \in \scrC^\infty_0 (\R)$, and for a self-adjoint operator $\calA$ on a Hilbert space $\scrX$,
    \begin{equation}\label{eq:helffer_sjostrand_formula}
    w(\calA) = \frac{1}{\icplx\vts 2\pi}\, \int_\C \frac{\partial \widetilde{w}}{\partial \overline{z}} (z)\, (\calA - z)^{-1}\, dz \wedge d\overline{z},
    \end{equation}
    with $\partial_{\overline{z}} := (\vts \partial_x + \icplx\, \partial_y \vts) / 2$ for $z = x + \icplx\vts y$, and where $\widetilde{w}$ is an \emph{almost analytic extension} of $w$, namely $\widetilde{w} \in \scrC^\infty_0 (\C)$, $\widetilde{w}|_\R = w$, and for any $N \geq 0$, there exists a constant $C_N > 0$ such that
    \begin{equation}\label{eq:almost_analytic_extension_bound}
    \Big|\vts \frac{\partial \widetilde{w}}{\partial \overline{z}} (z) \vts\Big| \leq C_N\, \| w^{(N + 1)} \|_\infty\, |\Imag z|^N, \quad \spforall z \in \C.
    \end{equation}
    Using the Helffer-Sjöstrand formula, Theorem \ref{thm:resolvent_expansion} leads to the following result.

    \begin{corollary}\label{cor:functional_calculus_estimates}
    Let $w \in \scrC^\infty_0 (\R)$ with $\operatorname{supp} w \subset (-\theta_\GAP, \theta_\GAP)$. For $\eta_0, \mu_0 > 0$ and $N \in \N$, there exists a constant $C > 0$ such that for $\eta \in (0, \eta_0)$, one can find $\delta_0 \equiv \delta_0 (\eta, \mu_0, N) > 0$ with the following estimate: For any $\delta \in (0, \delta_0)$, $\mu \in [-\mu_0, \mu_0]$ and $\kpar = \Kv \cdot \vvh_1 + \delta \mu$,
    \begin{multline}\label{eq:HS_estimate}
        \bigg\|\vts w \vts \Big(\vts \frac{\calH^\delta_{\AUG, \kpar} - E_D}{\delta} \vts\Big) - \sum_{\iiv \in \bbL (\delta^{3/4})} \sum_{j = -N}^N w\vts \big( z_j (\mu + \delta^{-1} \gamma_\iiv) \big)\; \Psi^{\delta, (0)}_{\AUG, \iiv, j} \hyperref[item:notation_otimes]{\otimes} \Psi^{\delta, (0)}_{\AUG, \iiv, j} \vts\bigg\|_{\scrL(L^2_{\kparsubscript} (\cylaug))}
        \\
        \leq C\, \eta^N\, \| w^{(N+1)}\|_\infty.
    \end{multline}
    Here, $z_j(\muhat)$ is an eigenvalue of $\calD^{\Kv_\iiv} (\muhat)$, and $\Psi^{\delta, (0)}_{\AUG, \iiv, j}$ is introduced in (\ref{eq:approximate_augmented_eigenfunction_1}, \ref{eq:approximate_augmented_eigenfunction_2}) as an $\calO(\delta)$--approximate eigenfunction of $\smash{\calH^\delta_{\AUG, \kpar}}$ with the approximate eigenvalue $E_D + \delta\vts z_j(\mu + \delta^{-1} \gamma_\iiv)$; see also Section \ref{sec:recap_multiscale}.
    \end{corollary}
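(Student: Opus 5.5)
We will apply the Helffer--Sjöstrand formula \eqref{eq:helffer_sjostrand_formula} to the self-adjoint operator $\calA_\delta := \delta^{-1}(\calH^\delta_{\AUG, \kpar} - E_D)$ and to the block-diagonal operator $\calD^{\vts\delta}(\mu)$. We first reduce to the conjugated operator $\calH^\delta_{\AUG,\Kv\cdot\vvh_1}(\mu)$ of \eqref{eq:conjugated_augmented_operator}, which amounts to conjugating by the unitary multiplication $\euler^{\icplx\delta\mu\kvh_1\cdot\xv}$. We then fix an almost analytic extension $\widetilde w$ supported in a compact set $S\subset\C$. We split $S$ into the region $|\Imag z|\le\eta$ and the region $|\Imag z|>\eta$. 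On the first region we bound both resolvents crudely by $|\Imag z|^{-1}$ and use \eqref{eq:almost_analytic_extension_bound} with exponent $N+1$. This gives a contribution $\lesssim \int_{|\Imag z|\le\eta} |\Imag z|^{N}\,\|w^{(N+1)}\|_\infty\, dA\lesssim \eta^{N+1}\|w^{(N+1)}\|_\infty$ (with constants depending on $\operatorname{supp}\widetilde w$), for both $w(\calA_\delta)$ and for the model term.

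On the region $|\Imag z|>\eta$ we have $\dist(z,\spec\calD^{\vts\delta}(\mu))\ge|\Imag z|>\eta$, so Theorem \ref{thm:resolvent_expansion} applies uniformly there. We replace $(\calA_\delta-z)^{-1}$ by
\[
\euler^{\icplx\delta\mu\kvh_1\cdot\xv}\calJ^*_\delta\mathds 1_{\bbL(\delta^{3/4})}(\calD^{\vts\delta}(\mu)-z)^{-1}\mathds 1_{\bbL(\delta^{3/4})}\calJ_\delta\euler^{-\icplx\delta\mu\kvh_1\cdot\xv},
\]
and the error is $\eta^{-1}\calO(\delta^{1/4})$. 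Integrated against $\partial_{\bar z}\widetilde w$, this error becomes $C\eta^{-1}\delta^{1/4}$, which is at most $\eta^N\|w^{(N+1)}\|_\infty$ once $\delta<\delta_0(\eta,\mu_0,N)$. For the part of the model integral with $|\Imag z|\le\eta$ we reverse the crude bound from the first paragraph. Since $\calJ_\delta$ is uniformly bounded by Proposition \ref{prop:pties_averaging_operator}, Helffer--Sjöstrand applied to $\calD^{\vts\delta}(\mu)$ then gives
\[
w(\calA_\delta)=\euler^{\icplx\delta\mu\kvh_1\cdot\xv}\calJ^*_\delta\mathds 1_{\bbL(\delta^{3/4})}\,w(\calD^{\vts\delta}(\mu))\,\mathds 1_{\bbL(\delta^{3/4})}\calJ_\delta\euler^{-\icplx\delta\mu\kvh_1\cdot\xv}+\calO(\eta^{N}\|w^{(N+1)}\|_\infty).
\]

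It remains to identify $w(\calD^{\vts\delta}(\mu))$. Because $\calD^{\vts\delta}(\mu)$ is block diagonal, $w(\calD^{\vts\delta}(\mu))$ is also block diagonal with blocks $w(\calD^{\Kv_\iiv}(\mu+\delta^{-1}\gamma_\iiv))$. By Proposition \ref{prop:pties_Dirac_operator}, the essential spectrum of each block lies in $\R\setminus(-\theta_\GAP(\muhat),\theta_\GAP(\muhat))\subset\R\setminus(-\theta_\GAP,\theta_\GAP)$, and $\operatorname{supp}w\subset(-\theta_\GAP,\theta_\GAP)$. Hence each block equals $\sum_j w(z_j(\muhat))\,\alpha_j\otimes\alpha_j$, the sum running over the simple eigenvalues with normalized eigenfunctions $\alpha_j=\alpha^{\Kv_\iiv}_j(\cdot;\muhat)$. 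Here $N$ denotes the index count of Proposition \ref{prop:pties_Dirac_operator}; I would rename the corollary's $N$ if these differ, and eigenvalues outside $\operatorname{supp}w$ contribute zero anyway. Sandwiching with $\calJ^*_{\delta,\iiv}$ and using \eqref{eq:approximate_augmented_eigenfunction_2}, each term becomes $(\calJ^*_{\delta,\iiv}\alpha_j)\otimes(\calJ^*_{\delta,\iiv}\alpha_j)$, conjugated by the phase. Up to the normalization $\delta^{1/2}$ implicit in \eqref{eq:approximate_augmented_eigenfunction_2}, this is exactly $\Psi^{\delta,(0)}_{\AUG,\iiv,j}\otimes\Psi^{\delta,(0)}_{\AUG,\iiv,j}$; the normalization should be read so that the vector $\calJ^*_{\delta,\iiv}\alpha_j$ has unit norm, by \eqref{eq:J_left_inverse_is_J_star}.

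The main obstacle is the uniformity of the Helffer--Sjöstrand bound for $w(\calD^{\vts\delta}(\mu))$ over the infinite family of blocks. The concern is that $\mu+\delta^{-1}\gamma_\iiv$ ranges over an unbounded set, so the blocks are not uniformly bounded operators. This is harmless for the argument, however, because only $L^2$ resolvent bounds of size $|\Imag z|^{-1}$ are used, and those hold for every self-adjoint block. A second, minor point is that the sum over $\iiv$ of the rank-one terms converges in operator norm only as a sandwich of the bounded operator $w(\calD^{\vts\delta})$ between the uniformly bounded $\calJ_\delta$ and $\calJ_\delta^*$. We will therefore write the estimate in that sandwiched form and never sum the rank-one terms term by term.
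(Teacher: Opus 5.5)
Your proposal is correct and follows the paper's proof. The paper also splits the Helffer--Sjöstrand integral at $|\Imag z|=\eta$ into three pieces: the two strip contributions $\calN_1,\calN_3$ are bounded via $|\Imag z|^{-1}$ and \eqref{eq:almost_analytic_extension_bound}, and the remaining piece $\calN_2$ via Theorem~\ref{thm:resolvent_expansion} with $\delta_0\le\eta^{4(N+1)}$; it then reads off $w(\calD^{\vts\delta}(\mu))$ blockwise from the pure point spectrum in the gap. Two small remarks: with \eqref{eq:almost_analytic_extension_bound} as stated, the strip contribution is $\lesssim\|w^{(N+1)}\|_\infty\int_{-\eta}^{\eta}|y|^{N-1}\,dy\lesssim\eta^{N}\|w^{(N+1)}\|_\infty$ (your $\eta^{N+1}$ would cost $\|w^{(N+2)}\|_\infty$), which is exactly what is needed; and your points about the doubly used index $N$ and the $\delta^{-1/2}$ in \eqref{eq:approximate_augmented_eigenfunction_2} (so the rank-one terms must be read as $\delta\,\Psi^{\delta,(0)}_{\AUG,\iiv,j}\otimes\Psi^{\delta,(0)}_{\AUG,\iiv,j}$) are genuine corrections that the paper's proof passes over.
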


    \begin{dem}
        We begin with the Helffer-Sjöstrand formula \eqref{eq:helffer_sjostrand_formula} with $\calA := \delta^{-1}\, (\calH^\delta_{\AUG, \kpar} - E_D)$:
        \begin{equation*}
            w \vts \Big(\vts \frac{\calH^\delta_{\AUG, \kpar} - E_D}{\delta} \vts\Big) = \frac{1}{\icplx\vts 2\pi}\, \int_\C \frac{\partial \widetilde{w}}{\partial \overline{z}} (z)\, \Big(\vts \frac{\calH^\delta_{\AUG, \kpar} - E_D}{\delta} - z \vts\Big)^{-1}\, dz \wedge d\overline{z}. 
        \end{equation*}
        We would now like to apply Theorem \ref{thm:resolvent_expansion}; however, because the resolvent expansion holds only when $z$ is bounded away from the spectrum of $\calD^\delta (\mu)$, which can be the full real axis, we use an approximation argument. More precisely, following closely \cite[Step $3$ in the proof of Theorem 3.7]{shapiro2022tight}, for a self-adjoint operator $\calA$ and for $\eta > 0$, we define
        \begin{equation*}
            w_\eta \vts (\calA) := \frac{1}{\icplx\vts 2\pi}\, \int_{\{z' \in \C\, /\, |\Imag z'| > \eta\}} \frac{\partial \widetilde{w}}{\partial \overline{z}} (z)\, (\vts \calA - z \vts)^{-1}\, dz \wedge d\overline{z}.
        \end{equation*}
        (We emphasize that $w_\eta (\calA)$ is simply a notation, rather than the result of applying a function to $\calA$.) Next, we write
        \begin{equation*}
            w \vts \Big(\vts \frac{\calH^\delta_{\AUG, \kpar} - E_D}{\delta} \vts\Big) - \euler^{\vts\icplx\vts \delta \mu \kvh_1 \cdot \xv}\, \calJ^*_\delta\; \mathds{1}_{\bbL (\delta^{3/4})}\; w (\calD^{\vts \delta} (\mu))\; \mathds{1}_{\bbL (\delta^{3/4})}\; \calJ^{}_\delta\, \euler^{-\icplx\vts \delta \mu \kvh_1 \cdot \xv} = \calN_1 + \calN_2 + \calN_3,
        \end{equation*}
        where
        \begin{align*}
            \calN_1 &:= w \vts \Big(\vts \frac{\calH^\delta_{\AUG, \kpar} - E_D}{\delta} \vts\Big) - w_\eta \vts \Big(\vts \frac{\calH^\delta_{\AUG, \kpar} - E_D}{\delta} \vts\Big),
            \retss
            \calN_2 &:= w_\eta \vts \Big(\vts \frac{\calH^\delta_{\AUG, \kpar} - E_D}{\delta} \vts\Big) - \euler^{\vts\icplx\vts \delta \mu \kvh_1 \cdot \xv}\, \calJ^*_\delta\; \mathds{1}_{\bbL (\delta^{3/4})}\; w_\eta (\calD^{\vts \delta} (\mu))\; \mathds{1}_{\bbL (\delta^{3/4})}\; \calJ^{}_\delta\, \euler^{-\icplx\vts \delta \mu \kvh_1 \cdot \xv},
            \retss
            \calN_3 &:= \euler^{\vts\icplx\vts \delta \mu \kvh_1 \cdot \xv}\, \calJ^*_\delta\; \mathds{1}_{\bbL (\delta^{3/4})}\; \big[\vts w_\eta (\calD^{\vts \delta} (\mu)) - w (\calD^{\vts \delta} (\mu))\vts\big]\; \mathds{1}_{\bbL (\delta^{3/4})}\; \calJ^{}_\delta\, \euler^{-\icplx\vts \delta \mu \kvh_1 \cdot \xv}.
        \end{align*} 
        We first bound $\calN_1$ and $\calN_3$. In what follows, $z = x + iy$. Using the Helffer-Sjöstrand formula and the definition of $w_\eta (\calA)$, one obtains
        {
        \allowdisplaybreaks
        \begin{align}
            &\|\vts \calN_1 \vts\|_{\scrL(L^2_{\kparsubscript} (\cylaug))} = \bigg\|\vts \frac{1}{\icplx\vts 2\pi} \int_{\{z' \in \C\, /\, |\Imag z'| \leq \eta\}} \frac{\partial \widetilde{w}}{\partial \overline{z}} (z)\, \Big(\vts \frac{\calH^\delta_{\AUG, \kpar} - E_D}{\delta} - z \vts\Big)^{-1}\, dz \wedge d\overline{z} \vts\bigg\|_{\scrL(L^2_{\kparsubscript} (\cylaug))} \nonumber
            \retss
            &\leq \frac{1}{2\pi} \int_{\{z' \in \C\, /\, |\Imag z'| \leq \eta\}\, \cap\, \operatorname{supp} \widetilde{w}} \bigg| \frac{\partial \widetilde{w}}{\partial \overline{z}} (z)\bigg|\, \bigg\| \Big(\vts \frac{\calH^\delta_{\AUG, \kpar} - E_D}{\delta} - z \vts\Big)^{-1} \bigg\|_{\scrL(L^2_{\kparsubscript} (\cylaug))}\, dz \wedge d\overline{z} \nonumber
            \retss
            &\lesssim \frac{C\, C_N}{2\pi}\, \| w^{(N + 1)} \|_\infty\, \int_{-\eta}^\eta |y|^N \frac{1}{|y|}\, dy \lesssim \| w^{(N + 1)} \|_\infty\, \eta^N, \label{eq:cor_H_S_est_1}
        \end{align}
        uniformly in $\delta$, where the last inequalities follow from the the bound \eqref{eq:almost_analytic_extension_bound} on $\partial_{\overline{z}}\, \widetilde{w}$ and the resolvent estimate $\|(\delta^{-1}\, (\calH^\delta_{\AUG, \kpar} - E_D) - z)^{-1}\| \leq |\Imag z|^{-1} = |y|^{-1}$. Similarly, since $\calJ_\delta$ is bounded uniformly in $\delta$ (Proposition \ref{eq:square_summability_estimate}) we find that}
        \begin{equation}
            \|\vts \calN_3 \vts\|_{\scrL(L^2_{\kparsubscript} (\cylaug))} \lesssim \| w^{(N + 1)} \|_\infty\, \eta^N. \label{eq:cor_H_S_est_2}
        \end{equation}
        We now bound $\calN_2$. By the triangle inequality,
        \begin{equation*}
            \|\vts \calN_2 \vts\|_{\scrL(L^2_{\kparsubscript} (\cylaug))} \leq \frac{1}{2\pi}\, \sup_{|\Imag z| > \eta} \|\scrP^\delta (z)\|_{\scrL(L^2_{\kparsubscript} (\cylaug))} \int_{\{z' \in \C\, /\, |\Imag z'| > \eta\}} \bigg| \frac{\partial \widetilde{w}}{\partial \overline{z}} (z)\bigg|\, dz \wedge d\overline{z},
        \end{equation*}
        where 
        \begin{equation*}
            \scrP^\delta (z) := \bigg(\vts \frac{\calH^\delta_{\AUG, \kpar} - E_D}{\delta} - z \vts\bigg)^{-1}\!\! - \euler^{\vts\icplx\vts \delta \mu \kvh_1 \cdot \xv}\, \calJ^*_\delta\; \mathds{1}_{\bbL (\delta^{3/4})}\; \big(\vts \calD^{\vts \delta} (\mu) - z \vts\big)^{-1}\; \mathds{1}_{\bbL (\delta^{3/4})}\; \calJ^{}_\delta\, \euler^{-\icplx\vts \delta \mu \kvh_1 \cdot \xv}.
        \end{equation*}
        If $z$ satisfies $\Imag z \geq \eta$, then its distance to the spectrum of $\calD^{\vts \delta} (\mu)$ is bounded from below by $1/\eta$. Thus, by Theorem \ref{thm:resolvent_expansion}, there exists $\delta_\star (\eta, \mu_0) > 0$ such that for any $\delta \in (0, \delta_\star (\eta, \mu_0))$,
        \begin{equation*}
            \sup_{|\Imag z| > \eta} \|\scrP^\delta (z)\|_{\scrL(L^2_{\kparsubscript} (\cylaug))} \leq C\, \eta^{-1}\, \delta^{1/4}.
        \end{equation*}
        It follows that
        \begin{align}
            \|\vts \calN_2 \vts\|_{\scrL(L^2_{\kparsubscript} (\cylaug))} &\leq \frac{C}{2\pi}\, \eta^{-1}\, \delta^{1/4}\, \int_{\{z' \in \C\, /\, |\Imag z'| > \eta\}\, \cap\, \operatorname{supp} \widetilde{w}} \bigg| \frac{\partial \widetilde{w}}{\partial \overline{z}} (z)\bigg|\, dz \wedge d\overline{z} \nonumber
            \retss
            &\hspace{-1.5cm}\leq \frac{C C_N}{2\pi}\, \| w^{(N + 1)} \|_\infty\, \eta^{-1}\, \delta^{1/4}\, \int_{\{z' \in \C\, /\, |\Imag z'| > \eta\}\, \cap\, \operatorname{supp} \widetilde{w}} |\Imag z|^N\, dz \wedge d\overline{z} \quad \textnormal{by \eqref{eq:almost_analytic_extension_bound}}\nonumber
            \retss
            &\hspace{-1.5cm}\lesssim \| w^{(N + 1)} \|_\infty\, \eta^{-1}\, \delta^{1/4}.\label{eq:cor_H_S_est_3}
        \end{align}
        Combining the estimates \eqref{eq:cor_H_S_est_1}--\eqref{eq:cor_H_S_est_3}, we find that for $\delta \in (0, \delta_\star (\eta, \mu_0))$,
        \begin{multline}
            \bigg\|\vts w \vts \Big(\vts \frac{\calH^\delta_{\AUG, \kpar} - E_D}{\delta} \vts\Big) - \euler^{\vts\icplx\vts \delta \mu \kvh_1 \cdot \xv}\, \calJ^*_\delta\; \mathds{1}_{\bbL (\delta^{3/4})}\; w (\calD^{\vts \delta} (\mu))\; \mathds{1}_{\bbL (\delta^{3/4})}\; \calJ^{}_\delta\, \euler^{-\icplx\vts \delta \mu \kvh_1 \cdot \xv} \vts\bigg\|_{\scrL(L^2_{\kparsubscript} (\cylaug))}
            \\
            \lesssim \| w^{(N+1)}\|_\infty\, (\eta^N + \eta^{-1}\, \delta^{1/4}).\label{eq:cor_H_S_exp}
        \end{multline}
        Finally, setting $\delta_0 := \min (\eta^{4(N+1)}, \delta_\star (\eta, \mu_0))$ so that $\eta^{-1}\, \delta^{1/4} \leq \eta^N$ for any $\delta \in (0, \delta_0)$, it follows that the right-hand side of \eqref{eq:cor_H_S_exp} is bounded by $\| w^{(N+1)}\|_\infty\, \eta^N$ whenever $\delta \in (0, \delta_0)$.

        To finish, we rewrite the second term on the left-hand side of \eqref{eq:cor_H_S_exp}. Because $w$ is compactly supported in $(-\theta_\GAP, \theta_\GAP)$ and the spectrum of $\calD^\delta (\mu)$ in that interval is pure point,
        \begin{equation*}
            w\vts (\calD^{\vts \delta} (\mu)) = \sum_{j = -N}^N %
            \begin{pmatrix}
            \ddots & & (0)
            \\
            & w(z_j (\mu + \delta^{-1} \gamma_\iiv))\, \alpha^{\Kv_\iiv} (\cdot; \mu + \delta^{-1} \gamma_\iiv) \otimes \alpha^{\Kv_\iiv} (\cdot; \mu + \delta^{-1} \gamma_\iiv) &  
            \\
            (0) & & \ddots
            \end{pmatrix}.
        \end{equation*}
        Combining this expression with the definition \eqref{eq:approximate_augmented_eigenfunction_2} of $\Psi^{\delta, (0)}_{\AUG, \iiv, j}$ shows that the second term on the left-hand side of \eqref{eq:cor_H_S_exp} coincides with the second term on the left-hand side of \eqref{eq:HS_estimate}. This completes the proof.
    \end{dem}

    \section{The partial Fourier transform, proof of Lemma \ref{lem:directional_Fourier}}
    \label{sec:directional_Fourier}
    \noindent
    The goal of this section is to prove Lemma \ref{lem:directional_Fourier}.

    \begin{dem}
        \textit{Point $(a)$. Continuity of $\scrF_\Kv$}. Let $F \in L^2_{\Kv \cdot \vvh_1} (\cylaug)$ with $\euler^{- \icplx \vts \Kv \cdot (\xv + s \vvv_2)}\; F \in \scrC^\infty_0 (\cylaug)$. The calculations following the statement of the lemma show that $\scrF_\Kv F(\cdot\,, \lvar)$ is $\Kv + \lvar \kvh_2$--pseudo-periodic with respect to $\Lambda$ for any $\lvar \in \R$. Furthermore, by the Plancherel formula for the one-dimensional Fourier transform in $s$,
        \begin{equation*}
            \int_\R |F(\xv, s)|^2\, ds = \int_\R |\scrF_\Kv F(\xv, \lvar)|^2\, d\lvar, \quad \xv \in \Omega.
        \end{equation*}
        Integrating this relation over $\xv \in \Omega$ yields $\|F\|_{L^2(\Omega \times \R)} = \|\scrF_\Kv F\|_{L^2(\Omega \times \R)}$, which is a continuity estimate for smooth functions. Since $\scrC^\infty_0 (\cylaug)$ is dense in $L^2(\cylaug)$, it follows that $\scrF_\Kv$ extends uniquely as a continuous map from $L^2_{\Kv \cdot \vvh_1} (\cylaug)$ to $L^2(\R_\lvar; L^2_{\Kv + \lvar \kvh_2})$.

        \vspace{1\baselineskip} \noindent
        \textit{Point $(b)$ -- $(e)$} The expression \eqref{eq:directional_Fourier-adjoint} for the adjoint $\scrF^*_{\Kv}$ follows directly from the definition of the adjoint of the one-dimensional Fourier transform in $s$. \eqref{eq:directional_Plancherel} follows by applying the Plancherel formula for the one-dimensional Fourier transform in $s$ to $F (\xv, \cdot), G(\xv, \cdot) \in L^2(\R)$, and by integrating the result over $\xv \in \Omega$. The relations \eqref{eq:inverse_directional_Fourier} follow from the unitarity nature of the one-dimensional Fourier transform in $s$. The commutation property \eqref{eq:Fourier_commutes_with_x_functions} is direct for smooth functions; it extends to $L^2_{\Kv \cdot \vvh_1} (\cylaug)$ by density of $\scrC^\infty_0 (\cylaug)$ in $L^2 (\cylaug)$. 

        \vspace{1\baselineskip} \noindent
        \textit{Point $(d)$}. Although \eqref{eq:Fourier_commutes_with_differential_operators} can be easily verified for smooth functions, its proof in the general case requires more care. For $F \in H^1_{\Kv \cdot \vvh_1, \xv} (\cylaug)$, we first show that \eqref{eq:Fourier_commutes_with_differential_operators} holds in a weak sense. For any test function $\widetilde{G} (\xv, \lvar) \in \scrC^\infty_0 (\Omega \times \R)$, derivation in the sense of distributions combined with $\scrF_\Kv\vts F \in L^2(\Omega \times \R)$ leads to
        \begin{align*}
            \big\langle\vts \widetilde{G},\, \nabla_\xv\, \scrF_\Kv\vts F \vts\big\rangle_{[\scrC^\infty_0 (\Omega \times \R_\lvar)]',\vts \scrC^\infty_0 (\Omega \times \R_\lvar)} &:= -\big\langle\vts \nabla_\xv\, \widetilde{G},\, \scrF_\Kv\vts F \vts\big\rangle_{[\scrC^\infty_0 (\Omega \times \R_\lvar)]',\vts \scrC^\infty_0 (\Omega \times \R_\lvar)}
            \\
            &= -\int_\R \int_\Omega \overline{\nabla_\xv\, \widetilde{G} (\xv, \lvar)}\, \scrF_\Kv\vts F (\xv, \lvar) \, d\xv d\lvar
            \\
            &= -\int_\R \int_\Omega \overline{\scrF_\Kv^*\, \nabla_\xv\, \widetilde{G} (\xv, s)}\, F (\xv, s) \, d\xv ds,
            \intertext{by the Plancherel-like formula \eqref{eq:directional_Plancherel} and the relations \eqref{eq:inverse_directional_Fourier}. Since $\widetilde{G}$ is a smooth function, the integral involved in the definition \eqref{eq:directional_Fourier-adjoint} of $\scrF_\Kv^*$ can be interchanged with $\nabla_\xv$, to obtain $\scrF_\Kv^*\, \nabla_\xv\, \widetilde{G} = \nabla_\xv\, \scrF_\Kv^*\, \widetilde{G}$. As a consequence,}
            \big\langle\vts \widetilde{G},\, \nabla_\xv\, \scrF_\Kv\vts F \vts\big\rangle_{[\scrC^\infty_0 (\Omega \times \R_\lvar)]',\vts \scrC^\infty_0 (\Omega \times \R_\lvar)} &= -\int_\R \int_\Omega \overline{\nabla_\xv\, \scrF_\Kv^* \widetilde{G} (\xv, s)}\, F (\xv, s) \, d\xv ds
            \\
            &= -\big\langle\vts \nabla_\xv\, \scrF_\Kv^* \widetilde{G},\, F \vts\big\rangle_{[\scrC^\infty_0 (\Omega \times \R_s)]',\vts \scrC^\infty_0 (\Omega \times \R_s)}
            \\
            &=: \big\langle\vts \scrF_\Kv^*\, \widetilde{G},\, \nabla_\xv\, F \vts\big\rangle_{[\scrC^\infty_0 (\Omega \times \R_s)]',\vts \scrC^\infty_0 (\Omega \times \R_s)},
            \intertext{where the last equality is obtained using derivation in the sense of distributions. Given that $\nabla_\xv\, F (\xv, s) \in L^2(\Omega \times \R)$, the above duality product can be written as an integral over $\Omega \times \R$. Then, reusing the Plancherel-like formula \eqref{eq:directional_Plancherel} and the relations \eqref{eq:inverse_directional_Fourier}, we obtain} 
            \big\langle\vts \widetilde{G},\, \nabla_\xv\, \scrF_\Kv\vts F \vts\big\rangle_{[\scrC^\infty_0 (\Omega \times \R_\lvar)]',\vts \scrC^\infty_0 (\Omega \times \R_\lvar)} &= \int_\R \int_\Omega \overline{\widetilde{G} (\xv, \lvar)}\, \scrF_\Kv\vts \nabla_\xv\, F (\xv, \lvar)\, d\xv d\lvar
            \\
            &=\big\langle\vts \widetilde{G},\, \scrF_\Kv\vts \nabla_\xv\, F \vts\big\rangle_{[\scrC^\infty_0 (\Omega \times \R_\lvar)]',\vts \scrC^\infty_0 (\Omega \times \R_\lvar)}.
        \end{align*}
        In other words, \eqref{eq:Fourier_commutes_with_differential_operators} holds in the sense of distributions. But since $\scrF_\Kv \nabla_\xv F \in L^2(\Omega \times \R)$, it follows that $\nabla_\xv\, \scrF_\Kv\vts F \in L^2(\Omega \times \R)$, and \eqref{eq:Fourier_commutes_with_differential_operators} holds in $L^2(\Omega \times \R)$.
    \end{dem}

    \printbibliography
\end{document}